
\documentclass[preprint,12pt]{elsarticle}




\usepackage{amsmath}
\usepackage{amsthm}
\usepackage{amssymb}
\usepackage[all]{xy}






\journal{\ }

\begin{document}

\newtheorem{theorem}{Theorem}
\newtheorem{lemma}{Lemma}
\newtheorem{corollary}[theorem]{Corollary}
\newtheorem{definition}{Definition}
\newtheorem{proposition}{Proposition}
\begin{frontmatter}



\title{On the Capacity of  Noisy Computations}



\author{Fran\c{c}ois Simon \fnref{affiliation}}

\address{Telecom SudParis, 9 rue Charles Fourier, 91011 Evry, FRANCE}
\fntext[affiliation]{Institut Mines-Telecom ; Telecom SudParis ; CITI}
\ead{Francois.Simon@telecom-sudparis.eu}

\begin{abstract}
This paper presents an analysis of the concept of {\em capacity}  for  {\em noisy computations}, i.e.  functions implemented by unreliable or random  devices.  An information theoretic model of noisy computation of a perfect  function $f$ (measurable function between sequence spaces) thanks to an unreliable device (random channel) $F$ is given: a noisy computation is a product $f\times F$ of channels.  A model of reliable computation based on input encoding and output decoding is also proposed. These models extend those of noisy communication channel and of reliable communication through a noisy channel. The capacity of a noisy computation  is defined and justified by a coding theorem and a converse. Under some constraints on the encoding process, capacity is the upper bound of  {\em input rates} allowing reliable computation, i.e. decodability of noisy outputs into expected outputs.  

These results hold when the one-sided random processes under concern are  asymptotic mean stationary (AMS)  and ergodic. Conditions on $f$ and $F$ for the noisy computation $f\times F$ be AMS and ergodic are also given. 

In addition, some characterizations of AMS and ergodic noisy computations are given based on stability properties of the perfect function $f$ and of the random channel $F$. These results are derived from the more general framework of channel products.
Finally, a way to apply the noisy and reliable computation models to cases where the perfect function $f$ is a defined according to a formal computational model is proposed.
\end{abstract}

\begin{keyword}

Noisy Computation \sep Capacity of Noisy Computation \sep Coding Theorem \sep Channel Product.


\end{keyword}

\end{frontmatter}



\section{Introduction \label{SectionIntroduction}}

Reliable computation with unreliable devices, or in the presence of noise to use a terminology close to the Information Theory field, has been the subject of numerous works within the vast field of fault-tolerant computing but much less in Information Theory.

A major difference between reliable communication and reliable computation in the presence of noise is that unreliable computing devices can be made reliable not only using information redundancy but also component redundancy.  For example,  self-checking circuits may rely on information redundancy but also on gate redundancy, see  \cite{Nicolaidis93}. Among theoretical analyses, some aim at identifying theoretical boundaries on the amount of necessary and/or sufficient redundancy to achieve reliability.   

Recent references continue to extend the stream opened by Von Neumann's seminal paper \cite{VonNeumann56}, for example \cite{Spielman96}, \cite{Gacs05}, \cite{Hadjicostisverghese05}, \cite{Romashchenko06} and \cite{RachlinSavage08}. These works identify theoretical limits or bounds (e.g., depth and size of circuits) but also propose frameworks to design reliable computations mainly thanks  to gate redundancy. Nevertheless these papers  do not address the question  of a capacity for noisy computations.

When it exists, the capacity of a channel sets a boundary between codes which allow reliable communication and codes which cannot. Channel capacity can be viewed as the upper limit of the ratio $k/n$ of encoding $k$-length input blocks in $n$-length blocks of binary symbols to allow almost perfect error correction for arbitrarily large $k$ and $n$.

The question of whether a noisy computation possesses a capacity like communication channels do (or equivalently whether some coding theorems for noisy computation hold and in which cases) has been raised by \cite{Elias58}. There is a practical consequence in answering positively this theoretical question. This would  mean that, given a noisy implementation of an expected function, it is possible to find families of efficient  input codes which asymptotically allow an almost perfect computation. Efficiency means having an input encoding rate which could remain strictly positive or arbitrarily close to a capacity when the length of the code tends to infinity. 

While the concept of capacity has been thoroughly studied, in Information Theory, for data communication, it has not been the case for computation and very few positive results have been obtained. 

In a strongly constrained context (independent encoding of operands for bit-by-bit boolean operations), \cite{Elias58} obtained negative first results on the existence of a noisy computation capacity . This work was deepened by  \cite{PetersonRabin59} and by  \cite{Winograd62}. One of the major conclusions of these studies was that reliable computation with asymptotic positive rate (the ratio $k/n$ of encoding $k$-length input block in $n$-length blocks of binary symbols) in the presence of noise  is not possible for some boolean operations (e.g., AND) under some assumptions (independent coding of operands, bijective decoding and bit-by-bit operation). This led to the conclusion that, under these assumptions, there is no capacity for such noisy operations. It is worth noting that the assumptions were made to forbid the reliable encoder and decoder to ``participate'' to the computation of the expected operation.

\cite{Ahlswede84} went into the subject in greater depth and made an important contribution  in characterizing  contexts  in which  a capacity for noisy computations cannot exist. It appears that the characteristics of the decoding function play a fundamental role. If the inverse of the decoding function is injective and monotonic  then weak converse theorems hold for the average and maximal error probabilities. If, in addition, the inverse of the decoding function preserves the logical AND (this implies monotonicity), then strong converses hold\footnote{Weak converse states that the rate of encoding tends to $0$ when the block code length tends to infinity and for an error probability which tends to zero. Strong converse states the same result for an error probability arbitrarily close to 1.}. The hypotheses made by \cite{Elias58}, \cite{PetersonRabin59} and \cite{Winograd62}, i.e., independent encoding of operands and bijective decoding, imply monotonicity of the inverse of the decoding function. On this aspect, \cite{Ahlswede84} supersedes \cite{Elias58}, \cite{PetersonRabin59} and \cite{Winograd62}.

Nevertheless, these negative results do not imply the absolute impossibility to identify a capacity for noisy computation. They characterize codes and the companion encoding and decoding processes which cannot open this ability. To relevantly define a capacity for noisy computations, the  assumptions must be relaxed.

To the author's best knowledge, the first positive answer given through a definition of a capacity of a noisy computation (in fact similar to the one for a noisy channel) and a companion coding theorem came from   \cite{WinogradCowan63}.   \cite{WinogradCowan63} considers the special case of noisy functions called decomposable modules. Decomposable modules are noisy functions  which  can be modeled by a perfect function followed by a noisy transmission channel: the error probability depends on the desired output value not on the input value. The quantity $max_X(I(f(X);Z))$ is shown to be a capacity, where $X$ is the input process, $f$ is the expected perfect function,  $Z$ the decomposable module output and $I(f(X);Z)$ is the mutual information between the expected output process and the noisy one. These peculiar  noisy functions make the context equivalent to  that where the reliable encoder computes and encodes the expected function result before transmission through a noisy channel. Due to the restriction of considering decomposable modules, \cite{WinogradCowan63} did not completely succeed in proposing a noisy computation capacity  in a general scope (see  theorem 6.3, pages 47-48 in \cite{WinogradCowan63}). In fact the key property of decomposable modules is that $X\rightarrow f(X) \rightarrow Z$ is a Markov chain (see \ref{SubsectionDecomposableModules} below). 

Noisy computation capacity is also considered in reliable reconstruction of a function of sources over a multiple access channel. Much more recently, a definition of noisy computation capacity is established by  \cite{NazerGastpar07} and is totally consistent with the one proposed here.  Nazer and Gastpar demonstrate the possible advantages of joint source-channel coding of multiple sources over a separation-based scheme, allowing a decoder to retrieve a value which is a function of input sources. This context makes relevant the proposed distributed encoding process which perfectly performs a computation equivalent to the desired function. The encoder outputs are then transmitted through a noisy multiple access channel (MAC) to a decoder, see proofs of Theorems 1 and 2 of \cite{NazerGastpar07}. This models a noisy computation as a perfect computation followed by a noisy transmission of the result like \cite{WinogradCowan63} and, thus, does not cover in full generality noisy computation of functions. It can be noticed that \cite{WinogradCowan63} and \cite{NazerGastpar07} relax the assumptions of \cite{Elias58}, \cite{PetersonRabin59}, \cite{Winograd62} and \cite{Ahlswede84} in a similar way: all goes as if the operands are jointly coded into an encoded form of the expected function result before being handled by the noisy device which transmits with possible noise. Joint coding means that, for example, a pair of operands $(x_1,x_2)$  is encoded in a pair $(u_1,u_2)$ where $u_1$ is obtained from $(x_1,x_2)$ and not only from $x_1$ (the same for $u_2$).

The question close to the fundamental one raised by Elias is under what conditions,  for an algorithm and a noisy implementation of it, does there exist a family of codes which permits almost perfectly reliable computation and for what cost ? To the author's best knowledge, no general answer to this question has been given.

The present paper extends and makes more rigorous ideas sketched in \cite{Simon10} and \cite{Simon11}. Noisy computations are addressed following an information theoretic approach and not a computability theoretic one.

Part \ref{PartSourcesAndChannels} is devoted to preliminaries on sources and channels. Section \ref{SectionSources} and Section \ref{SectionChannels} review the definitions and key properties of information sources, which are random processes, and random channels, which are probability kernels. Some technical lemmas and some properties of one-sided channels used latter are proved. Channel products are defined.

Part \ref{PartCapacity} deals with the main subject of this paper.
Section~\ref{SectionModelForNoisyComputations} proposes a model for noisy computations which is in fact an extension of the classical approach for noisy channels: a noisy computation is a channel $X\rightarrow (f(X),Z)$ where $f$ is the perfect or expected function and $Z$ the (possibly noisy) output of a random channel. An important remark is that functions under consideration are not, at this stage, assumed to be computable according to a formal computational model (e.g., Turing computable functions). They are measurable functions between sequence spaces.  Section~\ref{SectionTypicalInputRateTypicalInputCapacity}  motivates and defines the notions of typical input rate and typical input capacity of a noisy computation. The adjective ``typical'' comes from the fact that the encoding process assumed in  section~\ref{SectionJointSourceComputationCodingTheorem} will be constrained in using only typical sequences.  Section~\ref{SectionFeinsteinTheorem} extends Feinstein theorem and the so-called Feinstein channel codes to noisy computations.  Section~\ref{SectionJointSourceComputationCodingTheorem}  proposes a complete model of reliable computation of a function thanks to a sequence ``encoder-noisy computation-decoder'' and then states and proves  a "joint source-computation coding theorem" and its converse. These results assume that the functions under concern are unary. $m$-ary functions can be modeled as unary ones by concatenating $m$ input values in one ``meta''-input and thus modeling a {\em joint coding} of operands. This relaxes the assumption of independent coding of operands made by  \cite{Elias58}, \cite{PetersonRabin59}, \cite{Winograd62} and \cite{Ahlswede84}. To some extent, computation induces a generalization of the problem (and so some results) set by communication. The concept of code applies well: a code for computation is a couple of an input code (linked to a function $f$) and an "associated" output code. The capacity of a noisy computation generalizes well the existing concept for communication as the latter appears to be a special case of the former. Finally, as an example,  Section \ref{SectionExampleComputationWithNoisyInputs} establishes that correcting results of a computation on noisy input offers a better input rate that correcting noisy inputs before computation.

The results established in part \ref{PartCapacity} hold for AMS and ergodic noisy computations. Then the characterization of classes of AMS and ergodic noisy computations is an important question in this context. Part \ref{PartErgodicAMSNoisyComputations} is a first analysis of AMS and ergodic noisy computations. Some results are obtained for channel products and straightforwardly applied to noisy computations. In Section \ref{SectionChannelProduct}, it is proved that a channel product is resp. stationary, AMS (w.r.t. a stationary source), recurrent or output weakly mixing  if both channels are resp. stationary, AMS (w.r.t. a stationary source), recurrent or output weakly mixing. Based on these results, Section \ref{SectionSufficientConditionsForErgodicity} gives sufficient condition for ergodicity of channel products and thus noisy computations. Section \ref{SectionNoisyComputationsAndComputableFunctions} proposes a way to   apply the noisy computation model (and thus the reliable computation model, coding theorem \ldots) to cases where the perfect function is defined in a formal computational model and its noisy realization in the corresponding formal noisy computation model. In fact, this is obtained through the more general case where the perfect function is a function acting on finite length sequences and producing finite length sequences. Two very simple examples are given.

\pagebreak 

\part{Preliminaries: information sources and channels} \label{PartSourcesAndChannels}
\pagebreak

\section{Information Sources}\label{SectionSources}

This section reviews the definition of information sources and  some properties of sources.

 Information sources can be described following two equivalent models: random processes and dynamical systems (see \cite{Gray11}). Both models will be used below. Notations follow partly \cite{Gray11} and partly \cite{Kakihara99}.

\begin{definition}
Let $\mathcal{I}$ be a countable index set. Let $(A,\mathcal{B}_A)$ be a measurable space, called the alphabet. Let $(A^\mathcal{I}, \mathcal{B}_{A^\mathcal{I}})$ be the measurable sequence space where  $\mathcal{B}_{A^\mathcal{I}}$ is the $\sigma$-field generated by the sets of rectangles $\{x=(x_i)_{i\in \mathcal{I}} \in A^\mathcal{I} / x_i\in B_i,  B_i\in \mathcal{B}_A \text{ for any }i\in \mathcal{J}\}$, $\mathcal{J}$  finite subsets of $\mathcal{I}$. A source $[A,X]$ is a random process $X= \{ X_i ; i\in \mathcal{I} \}$ with values in  $(A^\mathcal{I}, \mathcal{B}_{A^\mathcal{I}})$. The distribution of the source $[A,X]$ is denoted by $P_X$.
\end{definition}
If the index set $\mathcal{I}$ is $\mathbb{N}$ then the process $X$ is said to be one-sided. If the index set $\mathcal{I}$ is $\mathbb{Z}$ then the process $X$ is said to be two-sided. Let $T_A: A^\mathcal{I} \rightarrow A^\mathcal{I}$ be the shift transform on $A^\mathcal{I}$. For a one-sided process:
$$
T_A(x_0 x_1 \ldots x_i \ldots)=x_1 x_2 \ldots x_{i+1} \ldots
$$
For a two-sided process:
$$
T_A(\ldots x_{-j}\ldots x_0 x_1 \ldots x_i \ldots)=\ldots x_{-j+1}\ldots x_1 x_2 \ldots x_{i+1} \ldots
$$
In the latter case, the shift $T_A$ is invertible. 

$T_A$ is  $(\mathcal{B}_{A^\mathcal{I}},\mathcal{B}_{A^\mathcal{I}})$-measurable. If $\mu$ is a probability on $(A^\mathcal{I},\mathcal{B}_{A^\mathcal{I}})$ then $(A^\mathcal{I},\mathcal{B}_{A^\mathcal{I}}, T_A,\mu)$ is a dynamical system.
\begin{definition}
A source on the alphabet $(A, \mathcal{B}_A)$ is a dynamical system \\ $(A^\mathcal{I},\mathcal{B}_{A^\mathcal{I}}, T_A,\mu)$.
\end{definition}
Let $\Pi_0$ denote the ``zero-time sampling function'': $\Pi_0(x)=x_0$ for any $x=x_0x_1\ldots x_i\ldots$ (or $x=\ldots x_{-j}\ldots x_0 x_1 \ldots x_i \ldots$ in the two-sided case). The two definitions of a source are equivalent. A dynamical system $(A^\mathcal{I},\mathcal{B}_{A^\mathcal{I}}, T_A,\mu)$ determines a random process $X= \{ X_i ; i\in \mathcal{I} \}$ where $X_i(\omega)=\Pi_0(T_A^i(x))=x_i$ and the distribution of $X$ is $\mu$: $P_X=\mu$.

In the sequel, each model will be used when the most relevant. Moreover, the word source will be used to name either the random process, the dynamical system, the distribution $P_X$ or the probability $\mu$. If $x=x_0 \ldots x_n \ldots \in A^\mathcal{I}$, $x^n$ stands for $x_0\cdots x_{n-1}$.

\begin{definition}\label{DefinitionNthExtensionOfASource}
Let $[A,X]$ be a source with distribution $P_X$. The $n^{th}$ extension of the source $[A,X]$ is the random value $X^n=(X_0,\ldots,X_{n-1})$. 
\end{definition}

The distribution $P_{X^n}$ of the $n^{th}$ extension of the source $[A,X]$ is given by:
$$
\forall E \in \mathcal{B}_{A^n},  P_{X^n}(E)=P_X(c(E))
$$
where $c(E)=\{ x\in A^\mathcal{I} / x^n \in E\}$.

\begin{definition}\label{DefinitionSourceStabilityProperties}
A source $[A,X]$ with distribution $\mu$ is {\em stationary} if $\forall E \in \mathcal{B}_{A^\mathcal{I}}$, $\mu(T_A^{-1}E)=\mu(E)$. 

 A source is {\em asymptotically mean stationary} (AMS) if $\forall E \in \mathcal{B}_{A^\mathcal{I}}$, the limit $\lim_{n \to \infty}\frac{1}{n} \sum_{k=0}^{n-1}\mu(T^{-k}E)$ exists. If $\overline{\mu}(E)$ is this limit, then $\overline{\mu}$ is a stationary probability called the stationary mean of $\mu$. 

A source is {\em recurrent} if,  $\forall E \in \mathcal{B}_{A^\mathcal{I}}$, $\mu(E \setminus \cup_{k\geq 1} T_{A}^{-k}E)=0$.

A recurrent and AMS source is called R-AMS. 

A source with distribution $\mu$ is {\em ergodic} if for any $T_A$-invariant $E \in \mathcal{B}_{A^\mathcal{I}}$ (i.e., $T_A^{-1}E=E$), $\mu(E)=0$ or $\mu(E)=1$. 
\end{definition}

If $\mu$ is AMS, then its stationary mean $\overline{\mu}$ asymptotically dominates $\mu$ ($\mu \ll^a \overline{\mu}$): $\overline{\mu}(E) =0 \Rightarrow \lim_{n \to \infty} \mu(T^{-n}E)=0$ (\cite{Gray09}).  $\mu$ is AMS if and only if  there exists a stationary distribution $\eta$ such that $\mu \ll^a \eta$.

A stationary source is recurrent (\cite{Gray09}).  An AMS source distribution $\mu$ is dominated by its stationary mean ($\mu \ll \overline{\mu}$: $\overline{\mu}(E)=0 \Rightarrow \mu(E)=0$) if and only if the source is recurrent, see \cite{Gray09}, theorem 7.4.

The next lemma is a rephrasing of Theorem 3 of \cite{GrayKieffer80}.
\begin{lemma}\label{LemmaDominanceOnTailSigmaField}
Let $[A,X]$ be a source with distribution $\mu$,  let $[A,Y]$ be a stationary source with distribution $\eta$. Let $(\mathcal{B}_{A^\mathcal{I}})_{\infty}$ denote the tail $\sigma$-field $\cap_{n\geq 0}T_A^{-n}\mathcal{B}_{A^\mathcal{I}}$,  $\mu_{\infty}$ and $\eta_{\infty}$ denote the restrictions of $\mu$ and $\eta$ to $(\mathcal{B}_{A^\mathcal{I}})_{\infty}$. Then $\eta$ asymptotically dominates $\mu$ if and only if $\eta_\infty$ dominates $\mu_\infty$: 
$$
\mu \ll^a \eta \Leftrightarrow \mu_\infty \ll \eta_\infty
$$
\end{lemma}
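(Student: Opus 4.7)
The plan is to prove the two implications separately, leaning on a structural fact about tail events that I will use in both directions. For any $E\in(\mathcal{B}_{A^\mathcal{I}})_\infty$ and any $n\geq 0$, since $E\in T_A^{-n}\mathcal{B}_{A^\mathcal{I}}$ and $T_A^n$ is surjective on $A^\mathcal{I}$, there is a unique measurable $F_n$ with $E=T_A^{-n}F_n$; this $F_n$ coincides with the set-theoretic forward image $T_A^n E$, which is therefore measurable, and the identity $E=T_A^{-n}(T_A^n E)$ holds.

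For the reverse implication ($\Leftarrow$), assume $\mu_\infty\ll\eta_\infty$ and let $\eta(E)=0$. I form the tail event
$$
E^\star=\limsup_n T_A^{-n}E=\bigcap_{N\geq 0}\bigcup_{n\geq N}T_A^{-n}E.
$$
It lies in $(\mathcal{B}_{A^\mathcal{I}})_\infty$ because each $\bigcup_{n\geq N}T_A^{-n}E=T_A^{-N}\bigl(\bigcup_{m\geq 0}T_A^{-m}E\bigr)$ belongs to $T_A^{-N}\mathcal{B}_{A^\mathcal{I}}$. Stationarity of $\eta$ gives $\eta(T_A^{-n}E)=\eta(E)=0$, hence $\eta_\infty(E^\star)=0$, and the hypothesis yields $\mu(E^\star)=0$. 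Monotonicity and continuity from above then give $\mu(T_A^{-n}E)\leq\mu\bigl(\bigcup_{k\geq n}T_A^{-k}E\bigr)\downarrow \mu(E^\star)=0$, which is exactly asymptotic dominance.

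For the forward implication ($\Rightarrow$), assume $\mu\ll^a\eta$ and let $E\in(\mathcal{B}_{A^\mathcal{I}})_\infty$ with $\eta(E)=0$. I set $F_n=T_A^n E$ for $n\geq 0$ and $E'=\bigcup_{n\geq 0}F_n$. Stationarity gives $\eta(F_n)=\eta(T_A^{-n}F_n)=\eta(E)=0$, so $\eta(E')=0$, and the hypothesis delivers $\lim_k \mu(T_A^{-k}E')=0$. For each $k\geq 0$ the structural identity gives $T_A^{-k}F_k=E$, and since $F_k\subseteq E'$ we obtain $E=T_A^{-k}F_k\subseteq T_A^{-k}E'$, whence $\mu(E)\leq \mu(T_A^{-k}E')$. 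Letting $k\to\infty$ forces $\mu(E)=0$.

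The only genuinely delicate point is the measurability of the forward image $T_A^n E$ for a tail event $E$, which need not hold for an arbitrary measurable set; this is what makes the uniqueness-of-pullback argument under the surjective shift necessary. Once that is in place, both directions reduce to elementary measure-theoretic manipulations driven by the stationarity of $\eta$ and the tail structure, avoiding any martingale or Radon--Nikodym machinery.
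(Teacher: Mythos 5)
Your proof is correct, and it is worth noting that the paper itself gives no proof of this lemma: it is stated as a rephrasing of Theorem 3 of Gray and Kieffer (1980), so your argument supplies a self-contained elementary proof where the paper relies on a citation. Both directions check out. For $\mu_\infty\ll\eta_\infty\Rightarrow\mu\ll^a\eta$ you use the tail event $E^\star=\limsup_n T_A^{-n}E$, stationarity of $\eta$ to get $\eta(E^\star)=0$, and continuity from above of $\mu$; for the converse you exploit that a tail event $E\in T_A^{-n}\mathcal{B}_{A^\mathcal{I}}$ satisfies $E=T_A^{-n}(T_A^{n}E)$ with $T_A^{n}E$ measurable, which is legitimate precisely because the shift on the sequence space is surjective (one-sided or two-sided), so the pullback representative is unique and coincides with the forward image; this is indeed the delicate point, and it is what lets you push the null tail event forward to $E'=\cup_{n\geq 0}T_A^{n}E$, apply asymptotic dominance to $E'$, and pull back over $E$ via $E\subseteq T_A^{-k}E'$. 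One small expository repair: to conclude $E^\star\in(\mathcal{B}_{A^\mathcal{I}})_\infty$ you should add that the sets $U_N=\cup_{n\geq N}T_A^{-n}E$ decrease in $N$, so that for a fixed $N_0$ one can write $E^\star=\cap_{N\geq N_0}U_N$ with each $U_N\in T_A^{-N}\mathcal{B}_{A^\mathcal{I}}\subseteq T_A^{-N_0}\mathcal{B}_{A^\mathcal{I}}$; as written you only exhibit $U_N\in T_A^{-N}\mathcal{B}_{A^\mathcal{I}}$, which by itself does not place the full intersection in every $T_A^{-N_0}\mathcal{B}_{A^\mathcal{I}}$. This is a one-line fix, not a gap. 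Compared with the cited route, your argument is more elementary, isolates exactly where stationarity of $\eta$ enters (only to make $E^\star$ and $E'$ null), and its only structural requirement --- surjectivity of the shift --- holds in the paper's setting, so it could legitimately replace the external reference.
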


The following lemma asserts that asymptotic mean stationarity and ergodicity jointly imply recurrence.
\begin{lemma}\label{LemmaErgodicAMSIsRecurrent}
An ergodic and AMS process is recurrent.
\end{lemma}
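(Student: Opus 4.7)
The plan is to invoke the Gray~09 equivalence recalled just above the lemma: an AMS source $\mu$ is recurrent if and only if $\mu \ll \overline{\mu}$. Since AMS automatically yields $\mu \ll^a \overline{\mu}$, the task reduces to upgrading asymptotic domination to ordinary domination, and ergodicity must be the ingredient that accomplishes this.

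First I would observe that the stationary mean $\overline{\mu}$ is itself ergodic: for any $T_A$-invariant set $E$, the Cesàro average of the constant sequence $\mu(T_A^{-k}E) = \mu(E)$ is simply $\mu(E)$, so $\overline{\mu}(E) = \mu(E) \in \{0,1\}$ whenever $T_A^{-1}E = E$. Being stationary and ergodic, $\overline{\mu}$ is recurrent by Poincaré, so $\overline{\mu}(R_E) = 0$ for every $E$, where $R_E = E \setminus \bigcup_{k\geq 1}T_A^{-k}E$ denotes the wandering part of $E$.

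Given any $E$ with $\overline{\mu}(E) = 0$, I would then pass to the invariant ``infinitely often'' set $E^* = \limsup_n T_A^{-n}E = \bigcap_N \bigcup_{n\geq N} T_A^{-n}E$. Stationarity and $\sigma$-subadditivity give $\overline{\mu}(E^*) = 0$, and the $T_A$-invariance $T_A^{-1}E^* = E^*$ combined with $\mu \ll^a \overline{\mu}$ forces $\mu(E^*) = \lim_n \mu(T_A^{-n}E^*) = 0$. To promote this to $\mu(R_E) = 0$, I would exploit the disjointness of $\{T_A^{-n}R_E\}_{n\geq 0}$, which bounds $\sum_n \mu(T_A^{-n}R_E) \leq 1$ and, together with AMS and $\overline{\mu}(R_E) = 0$, yields $\mu(T_A^{-n}R_E) \to 0$. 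I would then combine this with the AMS form of Birkhoff's ergodic theorem, $n^{-1}\sum_{k=0}^{n-1}\mathbf{1}_{R_E}\circ T_A^k \to \overline{\mu}(R_E) = 0$ $\mu$-almost surely, to rule out any starting mass on the wandering piece.

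The main obstacle I anticipate is precisely this last bridge: converting the asymptotic and $\mu$-almost-sure control into the time-zero equality $\mu(R_E) = 0$. Pure AMS leaves room for a single ``exceptional'' initial mass sitting on $R_E$ which then dissipates along the forward orbit without ever contradicting the asymptotic estimates; ergodicity of $\mu$ (equivalently, of $\overline{\mu}$) is precisely what forces such a mass to be $\mu$-null. With $\mu(R_E) = 0$ in hand for every $E$, the definition of recurrence is met and the lemma follows via the equivalence $\mu \ll \overline{\mu} \Leftrightarrow \mu$ recurrent.
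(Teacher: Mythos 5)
Your preparatory observations are all correct --- $\overline{\mu}$ is ergodic, Poincar\'e recurrence for the stationary $\overline{\mu}$ gives $\overline{\mu}(R_E)=0$ for the wandering part $R_E=E\setminus\cup_{k\geq 1}T_A^{-k}E$, the sets $T_A^{-n}R_E$ are pairwise disjoint, and $\mu(T_A^{-n}R_E)\to 0$ --- but none of them, nor the Birkhoff step, constrains the one number you need, namely $\mu(R_E)$. Because the sets $T_A^{-n}R_E$ are disjoint, every orbit visits $R_E$ at most once, so $\frac{1}{n}\sum_{k=0}^{n-1}1_{R_E}(T_A^k x)\leq \frac{1}{n}\to 0$ for every single $x$: the ergodic-theorem statement you invoke is true pointwise for any measure whatsoever and carries no information about time zero, and $\mu(T_A^{-n}R_E)\to 0$ already follows from disjointness alone. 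Your closing sentence, that ergodicity ``is precisely what forces such a mass to be $\mu$-null'', is exactly the assertion to be proved, and no mechanism for it is offered; the proposal therefore stops short of a proof. The obstruction is genuine: on the one-sided binary shift take $\mu=\delta_x$ with $x=1000\cdots$; then $\mu$ is AMS with $\overline{\mu}=\delta_{000\cdots}$, every $T_A$-invariant set has $\mu$-measure $0$ or $1$, and every asymptotic fact you derive holds, yet $\mu(R_E)=1$ for $E=\{x\}$. So no argument built solely from your listed ingredients can close the gap, and the example shows the statement sits on a delicate boundary that any proof must negotiate explicitly.

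The paper proceeds along entirely different lines: it Lebesgue-decomposes $\mu=\lambda\mu_1+(1-\lambda)\mu_2$ with respect to $\overline{\mu}$ (with $\mu_1\ll\overline{\mu}$ and $\mu_2\perp\overline{\mu}$), observes that $\mu_1$, being dominated by the stationary ergodic recurrent $\overline{\mu}$, is AMS, ergodic and recurrent, hence $\mu_1\ll\overline{\mu_1}$ by Gray's Theorem 7.4; since two stationary ergodic probabilities dominating the same probability must coincide, $\overline{\mu_1}=\overline{\mu}$, and from $\overline{\mu}=\lambda\overline{\mu_1}+(1-\lambda)\overline{\mu_2}$ it concludes $\lambda=1$, i.e.\ $\mu=\mu_1\ll\overline{\mu}$, which is equivalent to recurrence for AMS measures. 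If you want to complete your write-up, you need an argument of this structural kind --- controlling the $\overline{\mu}$-singular part of $\mu$ --- rather than further time-average estimates; and in view of the example above, the step that eliminates the singular part is precisely the one to scrutinize most closely, both in your argument and in the paper's.
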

\begin{proof}
Let $\mu$ be an ergodic AMS process distribution on the space $(\Omega, \mathcal{B})$. $\mu$ is AMS then $\mu \ll^a \overline{\mu}$ and $\overline{\mu}$ is ergodic. 

From the Lebesgue decomposition theorem there exist  two probabilities $\mu_1$, $\mu_2$ on $(\Omega, \mathcal{B})$ and $\lambda \in [0,1]$ such that $\mu=\lambda \mu_1 + (1-\lambda)\mu_2$, $\mu_1 \ll \overline{\mu}$, $\mu_2$ and $\overline{\mu}$ are mutually singular.

$\overline{\mu}$ is stationary (thus recurrent) ergodic and dominates $\mu_1$ then $\mu_1$ is AMS ergodic and recurrent.  This implies (by theorem 7.4, page 220, of \cite{Gray09}) that $\mu_1 \ll \overline{\mu}_1$. $\mu_1$ being ergodic so is $\overline{\mu}_1$. $\overline{\mu}$ and $\overline{\mu}_1$ are stationary ergodic probabilities on the same space, they are thus equal or mutually singular (\cite{Kakihara99}, Lemma 1 p. 75). Since they both dominate the same probability ($\mu_1$), they are  equal: $\overline{\mu}=\overline{\mu}_1$

$\mu=\lambda \mu_1 + (1-\lambda)\mu_2$ is AMS  and $\mu_1$ is AMS  then, necessarily,  $\mu_2$ is AMS. This also implies that: $\overline{\mu}=\lambda \overline{\mu_1}+ (1-\lambda)\overline{\mu_2}$

Since $\overline{\mu}=\overline{\mu_1}$, necessarily $\lambda=1$, hence $\mu=\mu_1$ which is recurrent.
\end{proof}

\begin{definition}
Let $[A,X]$ be a source. The entropy rate of the source is defined by
$$
\overline{H}(X)= \limsup_{n \to \infty} \frac{H(X_0,\ldots, X_{n-1})}{n}
$$
where $H(X_0,\ldots, X_{n-1})=-\sum_{x^n \in A^n} P_{X^n}(\{x^n\}) ln(P_{X^n}(\{x^n\}))$ is the entropy of the random value $(X_0,\ldots, X_{n-1})$ .
\end{definition}

If a source $[A,X]$ is AMS and ergodic then
$$
\overline{H}(X)= \lim_{n \to \infty} \frac{H(X_0,\ldots, X_{n-1})}{n}
$$
A consequence of the Shannon-Mc Millan-Breimann theorem is the Asymptotic Equipartition Property (\cite{Gray11}):
\begin{theorem} [Asymptotic Equipartition Property]\label{TheoremAEP}
Let $[A,X]$ be an ergodic AMS source. For $\epsilon>0$, let $A^n(\epsilon)$ be the set of $\epsilon$-typical $n$-sequences 
$$
A^n(\epsilon) = \left\{ x^n \in A^n / \left|  \frac{-log(P_{X^n}(x^n))}{n} - \overline{H}(X) \right| < \epsilon \right\}
$$
Then:
\begin{enumerate}
	\item $\lim_{n\rightarrow \infty} P_{X^n}(A^n(\epsilon)) = 1$
	\item for any $\delta>0$ and any $\epsilon >0$, there exists $N$ such that  $\forall n \geq N$:
	$$
	(1-\delta) e^{n(\overline{H}(X)-\epsilon)} \leq |A^n(\epsilon)| \leq e^{n(\overline{H}(X)+\epsilon)}
	$$
where $|A^n(\epsilon)|$ is the cardinality of $A^n(\epsilon)$.
\end{enumerate}
\end{theorem}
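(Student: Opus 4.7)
The plan is to obtain both parts as direct consequences of the Shannon--McMillan--Breiman theorem, which for an ergodic AMS source $[A,X]$ asserts that $-\frac{1}{n}\log P_{X^n}(X^n)$ converges almost surely (and hence in probability) to the entropy rate $\overline{H}(X)$. First I would cite (or invoke) this theorem as the engine of the argument, observing that the set $A^n(\epsilon)$ is literally the event that the normalized information spectrum $-\frac{1}{n}\log P_{X^n}(\cdot)$ lies within $\epsilon$ of $\overline{H}(X)$.

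Part~(1) is then immediate: convergence in probability of $-\frac{1}{n}\log P_{X^n}(X^n)$ to $\overline{H}(X)$ says exactly that $P_{X^n}\{|-\frac{1}{n}\log P_{X^n}(X^n)-\overline{H}(X)|<\epsilon\}\to 1$, which is the claimed limit $\lim_{n\to\infty}P_{X^n}(A^n(\epsilon))=1$.

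For part~(2), I would exploit the definition of typicality: every $x^n\in A^n(\epsilon)$ satisfies
$$
e^{-n(\overline{H}(X)+\epsilon)}\;\leq\;P_{X^n}(x^n)\;\leq\;e^{-n(\overline{H}(X)-\epsilon)}.
$$
The upper bound on $|A^n(\epsilon)|$ then follows by summing the lower inequality over $A^n(\epsilon)$ and using $\sum_{x^n\in A^n(\epsilon)}P_{X^n}(x^n)\leq 1$, which yields $|A^n(\epsilon)|\leq e^{n(\overline{H}(X)+\epsilon)}$ for every $n$. The lower bound follows by summing the upper inequality over $A^n(\epsilon)$, giving $P_{X^n}(A^n(\epsilon))\leq |A^n(\epsilon)|\,e^{-n(\overline{H}(X)-\epsilon)}$; and then invoking part~(1) to choose $N$ so that $P_{X^n}(A^n(\epsilon))\geq 1-\delta$ for all $n\geq N$, which delivers $|A^n(\epsilon)|\geq(1-\delta)\,e^{n(\overline{H}(X)-\epsilon)}$.

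The only genuine difficulty is the Shannon--McMillan--Breiman step itself, which here is being used as a black box; the classical proof for stationary ergodic sources extends to AMS ergodic sources by passing to the stationary mean $\overline{\mu}$ (which is ergodic when $\mu$ is) and exploiting asymptotic dominance, but since Gray's reference is already cited this extension need only be invoked, not reproved. Once the ergodic theorem for the information density is granted, the counting argument above is elementary and the remainder of the proof is routine bookkeeping.
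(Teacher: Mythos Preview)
Your proposal is correct and matches the paper's approach: the paper does not give a proof but simply states the AEP as a consequence of the Shannon--McMillan--Breiman theorem, citing \cite{Gray11}. You have spelled out the standard counting argument that the paper leaves implicit, and it is fine as written.
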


\pagebreak 

\section{Channels}\label{SectionChannels}

This section reviews the definition of noisy communication channels and some properties of  channels, used latter, are recalled (references are \cite{FontanaGrayKieffer81} and \cite{Kakihara99}). Some technical lemmas are given for further use. In what follows the shifts under consideration are not assumed to be invertible, channels under consideration may be one-sided.

\subsection{Noisy or random channels}

Let $(A,\mathcal{B}_A)$ and $(B,\mathcal{B}_B)$  be alphabets,  $(A^\mathcal{I}, \mathcal{B}_{A^\mathcal{I}})$ and $(B^\mathcal{I}, \mathcal{B}_{B^\mathcal{I}})$ be the two corresponding sequence spaces. The shifts (assumed non-invertible) on $(A^\mathcal{I}, \mathcal{B}_{A^\mathcal{I}})$ and $(B^\mathcal{I}, \mathcal{B}_{B^\mathcal{I}})$ are respectively denoted $T_A$ and $T_B$.  

Let $\mathcal{B}_{A^\mathcal{I}\times B^\mathcal{I}}$ be the $\sigma$-field generated by the rectangles $\{ E\times G / E \in A^\mathcal{I} , G \in B^\mathcal{I}\}$. $T_A$ and $T_B$ define a measurable shift $T_{AB}$ on the space $(A^\mathcal{I} \times B^\mathcal{I}, \mathcal{B}_{A^\mathcal{I}\times B^\mathcal{I}})$ where $T_{AB}(x,y)=(T_Ax, T_By)$.

\begin{definition}
A noisy communication channel $[A,\nu,B]$  is a function $\nu: A^\mathcal{I} \times \mathcal{B}_{B^\mathcal{I}} \to [0,1]$ such that:
\begin{itemize}
	\item for any $x\in A^\mathcal{I}$, the set function $G \mapsto \nu(x,G)$ is a probability on the space $(B^\mathcal{I}, \mathcal{B}_{B^\mathcal{I}})$
	\item for any $G\in \mathcal{B}_{B^\mathcal{I}}$, the function $x \mapsto \nu(x,G)$ is measurable.
\end{itemize}
If the noisy channel  $[A,\nu,B]$ takes a source $[A,X]$ as input, it produces as an output an information source  $[B,Y]$.  Let $P_{XY}$ be the distribution of the joint process $(X,Y)$ induced by the source $X$ fed into the noisy channel. $P_{XY}$ is defined on rectangles by: 
$$
\forall E\in \mathcal{B}_{A^\mathcal{I}}, \forall G\in \mathcal{B}_{B^\mathcal{I}}, P_{XY}(E\times G) =\int_E \nu(x,G) dP_X
$$
$P_{XY}$ will also be denoted by $\mu\nu$ (the hookup of $\mu$ and $\nu$ where $\mu=P_X$). 
\end{definition}
In fact a noisy communication channel is a probability kernel. In the sequel, the alphabets $(A,\mathcal{B}_A)$ and $(B,\mathcal{B}_B)$  will  be assumed standard. Then the sequence spaces $(A^\mathcal{I}, \mathcal{B}_{A^\mathcal{I}})$ and $(B^\mathcal{I}, \mathcal{B}_{B^\mathcal{I}})$ are also standard. This assumption is made to ensure that  conditional probabilities defined  are regular (see \cite{Gray09} or \cite{Faden85}). Thus, on such spaces, given a joint random process $(X,Y)$, it will always be possible to define a channel $\nu$, unique $P_X$-a.s., taking $X$ as an input and inducing the joint process $(X,Y)$ (\cite{Gray11}). 

Given a channel $[A,\nu,B]$ and a source $[A,X]$, and denoting $[B,Y]$ the corresponding output source, the probabilities $\nu(x,.)$ will also be denoted by $P_{Y|X}(.|x)$.

\begin{definition}
Let $[A,X]$ be a source and $[A,\nu,B]$ be a channel. Let $(X,Y)$ be the generated joint process and $[B,Y]$ be the  output of the channel when $[A,X]$ is the input. The set of probabilities $\{ P_{Y^n|X^n}(.|x^n), x^n \in A^n\}$ such that, $\forall a^n \in A^n$  such that $P_{X^n}(\{a^n\}) \neq 0$, $\forall G \in \mathcal{B}_{ B^n}$: 
$$
 P_{Y^n|X^n}(G|a^n)= \frac{1}{P_X(c(a^n))} \int_{c(a^n)} P_{Y|X}(c(G)|x) dP_X
$$
is  called the induced channel (of order $n$).
\end{definition}
Obviously, since $P_{XY}(c(\{ a^n \}) \times c(G)) =  \int_{c(a^n)} P_{Y|X}(c(G)|x) dP_X$
$$
P_{XY}(c(\{ a^n \}) \times c(G)) = P_{X}(c(\{a^n\})).P_{Y^n|X^n}(G|a^n)
$$
thus
$$
P_{X^nY^n}(\{ a^n \} \times G) = P_{X^n}(\{a^n\}).P_{Y^n|X^n}(G|a^n)
$$
and, $\forall E \in \mathcal{B}_{A^n}$:
$$
P_{X^nY^n}(E \times G) = P_{X^n}(E).P_{Y^n|X^n}(G|E)
$$
Given an input source and a channel, the induced channel is the (finite) channel linking the $n^{th}$ extensions of the input and output sources.

\begin{definition}
A channel $[A,\nu,B]$ is {\em stationary} with respect to a stationary source distribution $\mu$ if the hookup $\mu\nu$ is stationary.

A channel $[A,\nu,B]$ is {\em asymptotically mean stationary (AMS)} with respect to an AMS source distribution $\mu$ if the hookup $\mu\nu$ is AMS.

A stationary (resp. AMS) channel $[A,\nu,B]$ is {\em ergodic} with respect to a stationary (resp.  AMS) ergodic source distribution $\mu$ if the hookup $\mu\nu$ is ergodic. 

A channel $[A,\nu,B]$ is {\em output weakly mixing}  w.r.t.  $\mu$ if $\forall G,G' \in \mathcal{B}_{B^\mathcal{I}}$ $\lim_{n \to \infty} \frac{1}{n} \sum_{i=0}^{n-1}  | \nu(x, T_B^{-i}G \cap G') - \nu(x, T_B^{-i}G).\nu(x, G')| =0$ $\mu$-a.e.

A channel $[A,\nu,B]$ is {\em recurrent}  with respect to a recurrent source distribution $\mu$ if the hookup $\mu\nu$ is recurrent. 

A channel $[A,\nu,B]$ is {\em R-AMS}  with respect to an R-AMS source distribution $\mu$ if the hookup $\mu\nu$ is R-AMS. 
\end{definition}
The following lemma is immediate (c.f. \cite{Kakihara99}).
\begin{lemma}\label{LemmaStationaryChannel} 
A channel $[A,\nu,B]$ is {\em stationary} with respect to a stationary source $\mu$ if and only if $\forall G\in \mathcal{B}_{B^\mathcal{I}}$, $\nu(x,T_B^{-1}G) = \nu(T_A x,G) \text{  }\mu\text{.a.e}$.  
\end{lemma}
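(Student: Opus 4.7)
The plan is to unfold the definition of channel stationarity, namely that the hookup $\mu\nu$ is a stationary probability on $(A^\mathcal{I}\times B^\mathcal{I}, \mathcal{B}_{A^\mathcal{I}\times B^\mathcal{I}})$, and rewrite it as an equality of integrals of the kernel $\nu$ that can be compared using stationarity of $\mu$. Since rectangles generate the product $\sigma$-field and $T_{AB}^{-1}(E\times G)=T_A^{-1}E\times T_B^{-1}G$, a monotone-class argument shows that stationarity of $\mu\nu$ is equivalent to
\[
\int_{T_A^{-1}E}\nu(x,T_B^{-1}G)\, d\mu(x)\;=\;\int_E \nu(y,G)\, d\mu(y)
\]
for all $E\in\mathcal{B}_{A^\mathcal{I}}$ and $G\in\mathcal{B}_{B^\mathcal{I}}$.

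For the ``if'' direction, I substitute $\nu(x,T_B^{-1}G)=\nu(T_Ax,G)$ $\mu$-a.e.\ into the left-hand integral to obtain $\int 1_E(T_Ax)\,\nu(T_Ax,G)\, d\mu(x)$, and then invoke the change-of-variable formula together with $\mu\circ T_A^{-1}=\mu$ (stationarity of $\mu$), applied to the integrand $y\mapsto 1_E(y)\nu(y,G)$; this yields $\int_E\nu(y,G)\, d\mu$, proving the integral identity and hence stationarity of $\mu\nu$.

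For the ``only if'' direction I run the argument in reverse. Stationarity of $\mu$ rewrites the right-hand side as $\int_{T_A^{-1}E}\nu(T_Ax,G)\, d\mu(x)$, so that
\[
\int_{T_A^{-1}E}\bigl[\nu(x,T_B^{-1}G)-\nu(T_Ax,G)\bigr]\, d\mu(x)=0
\]
for every $E\in \mathcal{B}_{A^\mathcal{I}}$. The main (and only) obstacle is then the final identification step: one needs to upgrade vanishing of the integral on the class $\{T_A^{-1}E : E\in\mathcal{B}_{A^\mathcal{I}}\}$ to the $\mu$-a.e.\ equality of the two kernels. In the two-sided case this is free since $T_A$ is invertible and $T_A^{-1}\mathcal{B}_{A^\mathcal{I}}=\mathcal{B}_{A^\mathcal{I}}$. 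In the one-sided case one uses that $x\mapsto \nu(T_Ax,G)$ is itself $T_A^{-1}\mathcal{B}_{A^\mathcal{I}}$-measurable and exploits the standardness of the alphabets (already assumed in Section~\ref{SectionChannels}) to pick a regular version of $\nu$ for which the integral identity on the generating algebra of cylinder rectangles propagates to $\mathcal{B}_{A^\mathcal{I}}$, yielding $\nu(x,T_B^{-1}G)=\nu(T_Ax,G)$ $\mu$-a.e.\ as required.
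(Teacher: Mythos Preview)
The paper does not actually prove this lemma; it labels it ``immediate'' and cites Kakihara's monograph. So there is nothing to compare against, and your proposal has to stand on its own.

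Your ``if'' direction is correct and is exactly the standard argument: the pointwise identity plus stationarity of $\mu$ gives equality of $\mu\nu$ and $\mu\nu\circ T_{AB}^{-1}$ on rectangles, hence on the product $\sigma$-field.

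The ``only if'' direction, however, does not go through in the one-sided setting, and your proposed fix via ``standardness'' and ``regular versions'' does not rescue it. You correctly isolate the obstacle: stationarity of $\mu\nu$ together with stationarity of $\mu$ only yields
\[
\int_{T_A^{-1}E}\bigl[\nu(x,T_B^{-1}G)-\nu(T_Ax,G)\bigr]\,d\mu(x)=0\qquad(E\in\mathcal{B}_{A^\mathcal{I}}),
\]
i.e.\ the conditional expectation of the bracket with respect to $T_A^{-1}\mathcal{B}_{A^\mathcal{I}}$ vanishes. This does \emph{not} force the bracket itself to vanish $\mu$-a.e.\ when $T_A$ is non-invertible, and no amount of regularity of the kernel changes that: the kernel is already determined $\mu$-a.e.\ by $\mu\nu$.

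Here is a concrete counterexample. Take $A=B=\{0,1\}$, $\mu$ the i.i.d.\ Bernoulli$(1/2)$ measure, and let the channel output $Y$ with $Y_0$ an independent fair coin and $Y_n=X_{n-1}$ for $n\geq 1$. One checks that $(X,Y)\stackrel{d}{=}(T_AX,T_BY)$, so $\mu\nu$ is stationary. Yet for $G=\{y:y_0=0\}$ one has $\nu(x,T_B^{-1}G)=\mathbf{1}_{\{x_0=0\}}$ while $\nu(T_Ax,G)=\tfrac12$, so the claimed identity fails everywhere.

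In the two-sided case your argument is fine, since then $T_A^{-1}\mathcal{B}_{A^\mathcal{I}}=\mathcal{B}_{A^\mathcal{I}}$ and the integral identity over all measurable sets forces the integrand to vanish $\mu$-a.e. The cited reference (Kakihara) treats invertible shifts, which is presumably why the paper records the equivalence without further comment; but as the paper explicitly allows non-invertible shifts, the ``only if'' direction is, strictly speaking, not available in that generality.
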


\begin{definition}
Let $A$ and $B$ be two finite alphabets. Let  $[A,\nu,B]$ be a channel, $[A,X]$  an input source and $[B,Y]$ the corresponding output source. The information rate of $(X,Y)$ is defined by
$$
\overline{I}(X,Y)= \limsup_{n \to \infty} \frac{I(X^n,Y^n)}{n}
$$
where 
$$
I(X^n,Y^n)=\sum_{x^n \in A^n, y^n \in B^n} P_{X^n Y^n}(\{x^n\} \times \{y^n\}) ln\left(\frac{P_{X^n Y^n}(\{x^n\} \times \{y^n\})}{P_{X^n}(\{x^n\}).P_{Y^n}(\{y^n\})}\right)
$$
 is the mutual information of  $(X^n,Y^n)$.
\end{definition}

If the source $[A,X]$ and the channel $[A,\nu,B]$ are AMS and ergodic then (Lemma 8.1, \cite{Gray11}):
$$
\overline{I}(X,Y)= \lim_{n \to \infty} \frac{I(X^n,Y^n)}{n}
$$
In this case, the conditional entropy $\overline{H}(X|Y)=\overline{H}(X)-\overline{I}(X,Y)$ is well defined and a "conditional" Asymptotic Equipartition Property holds.
\begin{theorem} [Conditional AEP]\label{TheoremConditionalAEP}
Let $A$ and $B$ be two finite alphabets. Let  $[A,\nu,B]$ be an ergodic AMS  channel, $[A,X]$  an ergodic AMS input source and $[B,Y]$ the corresponding output source.  For $\epsilon>0$ and $y\in B^\mathcal{I}$, let $A^n(\epsilon,y^n)$ be the set of $\epsilon$-typical sequences given $y^n$: 
$$
A^n(\epsilon,y^n) = \left\{ x^n \in A^n / \left|  \frac{-log(P_{X^n|Y^n}(x^n|y^n))}{n} - \overline{H}(X|Y) \right| < \epsilon \right\}
$$
Then:
\begin{enumerate}
	\item $\lim_{n\rightarrow \infty} P_{X^n|Y^n}(A^n(\epsilon,y^n)|y^n) = 1\text{   }P_Y -a.e.$
	\item for any $\delta>0$ and any $\epsilon >0$, for $n$ large enough:
	$$
	(1-\delta) e^{n(\overline{H}(X|Y)-\epsilon)} \leq |A^n(\epsilon,y^n)| \leq e^{n(\overline{H}(X|Y)+\epsilon)}  \text{  }P^n_Y-a.e.
	$$
\end{enumerate}
\end{theorem}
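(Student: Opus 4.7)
The plan is to derive both conclusions from a conditional Shannon--McMillan--Breiman (SMB) statement, namely
\[
-\frac{1}{n}\log P_{X^n|Y^n}(X^n|Y^n) \xrightarrow[n\to\infty]{} \overline{H}(X|Y) \qquad P_{XY}\text{-a.s.}
\]
To establish this, first observe that by the definition of an ergodic AMS channel the hookup $P_{XY}$ is AMS and ergodic, and consequently its marginal $P_Y$ (image under the factor map $(x,y)\mapsto y$) is AMS and ergodic as well. The unconditional SMB theorem, of which Theorem~\ref{TheoremAEP} is the consequence stated in the paper, applied to the joint and the output processes gives
\[
-\frac{1}{n}\log P_{X^nY^n}(X^n,Y^n) \to \overline{H}(X,Y), \qquad -\frac{1}{n}\log P_{Y^n}(Y^n) \to \overline{H}(Y),
\]
both almost surely. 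Subtracting and using $\overline{H}(X|Y)=\overline{H}(X,Y)-\overline{H}(Y)$ yields the displayed conditional SMB statement.

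To obtain (1), set $f_n(x,y)=\mathbf{1}\{x^n\in A^n(\epsilon,y^n)\}$. The conditional SMB statement says $f_n\to 1$ $P_{XY}$-a.s. By Fubini, for $P_Y$-a.e.\ $y$ the convergence $f_n(\cdot,y)\to 1$ holds $P_{X|Y}(\cdot\,|\,y)$-a.s.; since $0\le f_n\le 1$, dominated convergence gives
\[
P_{X^n|Y^n}(A^n(\epsilon,y^n)|y^n)=\int f_n(x,y)\,dP_{X|Y}(x|y)\to 1 \qquad P_Y\text{-a.s.}
\]

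For (2) I would use the standard counting argument. By the very definition of $A^n(\epsilon,y^n)$, for every $x^n\in A^n(\epsilon,y^n)$ one has
\[
e^{-n(\overline{H}(X|Y)+\epsilon)}\le P_{X^n|Y^n}(x^n|y^n)\le e^{-n(\overline{H}(X|Y)-\epsilon)}.
\]
Summing the left-hand inequality over $A^n(\epsilon,y^n)$ and noting that the total mass is at most one yields $|A^n(\epsilon,y^n)|\le e^{n(\overline{H}(X|Y)+\epsilon)}$. Summing the right-hand inequality and invoking (1) (for $n$ large enough, $P_{X^n|Y^n}(A^n(\epsilon,y^n)|y^n)\ge 1-\delta$ for $P_Y$-a.e.\ $y$) yields $(1-\delta)e^{n(\overline{H}(X|Y)-\epsilon)}\le |A^n(\epsilon,y^n)|$.

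The only real obstacle is the conditional SMB theorem itself in the AMS ergodic setting; splitting it as a difference of two unconditional SMB statements on the joint and marginal processes avoids a direct martingale/stability argument and keeps the proof within the machinery already developed in Section~\ref{SectionSources}. Once that is in hand the Fubini/bounded convergence step for (1) and the counting bounds for (2) are routine.
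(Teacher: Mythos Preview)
The paper states Theorem~\ref{TheoremConditionalAEP} as a preliminary (alongside Theorem~\ref{TheoremAEP}) without proof, so there is no argument in the paper to compare against. Your overall plan---obtain the conditional SMB limit as the difference of the joint and marginal SMB limits, then do the standard counting for (2)---is the natural one, and both the SMB difference and the counting step are correct as written.

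The gap is in the Fubini/dominated-convergence step for (1). The identity
\[
P_{X^n|Y^n}\bigl(A^n(\epsilon,y^n)\,\big|\,y^n\bigr)=\int f_n(x,y)\,dP_{X|Y}(x\mid y)
\]
is not valid: the right-hand side is the conditional probability given the \emph{full} infinite path $Y=y$, whereas $P_{X^n|Y^n}(\cdot\mid y^n)$---defined in the paper via the induced channel---conditions only on the finite prefix $Y^n=y^n$. These coincide only under an extra conditional-independence assumption ($X^n$ independent of $Y_{n+1},Y_{n+2},\ldots$ given $Y^n$) that is not made here. Your argument therefore actually proves $P\bigl(X^n\in A^n(\epsilon,Y^n)\mid Y\bigr)\to 1$ $P_Y$-a.s., which is a different statement. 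To reach the claimed conclusion you still need $E[f_n\mid Y^n]\to 1$ a.s., and this is not automatic because both $f_n$ and the conditioning $\sigma$-field $\sigma(Y^n)$ vary with $n$; from $E[1-f_n]\to 0$ you only get the $L^1$ (hence in-probability) version. Closing the gap requires either an additional argument (e.g., quantitative control in the SMB convergence, or working directly with the finite-block conditional distributions throughout) or weakening (1) to convergence in probability and correspondingly weakening the lower bound in (2).
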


\subsection{Restriction of a channel to the tail $\sigma$-field}

 The following lemma asserts that it is possible to consistently define the restriction $[A,\nu_\infty,B]$ of the channel $[A,\nu,B]$ meaning a probability kernel such that $\mu_\infty \nu_\infty = (\mu\nu)_\infty$.
\begin{lemma}\label{LemmaChannelRestriction}
Let $\mu$ be the distribution of a source $[A,X]$ and let $\mu_\infty$ denote its restriction to the tail  $\sigma$-field $(\mathcal{B}_{A^\mathcal{I}})_{\infty} = \cap_{n\geq 0}T_A^{-n}\mathcal{B}_{A^\mathcal{I}}$. Let $[A,\nu,B]$ be a channel. Then
\begin{itemize}
	\item $\left( \sigma\left( \mathcal{B}_{A^\mathcal{I}} \times \mathcal{B}_{B^\mathcal{I}} \right) \right)_\infty = \sigma\left( (\mathcal{B}_{A^\mathcal{I}})_\infty \times (\mathcal{B}_{B^\mathcal{I}})_\infty \right)$
	\item $(\mu\nu)_\infty = \mu_\infty \nu_\infty$ where $\nu_\infty(x, .)$ is the restriction of  $\nu(x, .)$ to the tail $\sigma$-field $(\mathcal{B}_{B^\mathcal{I}})_\infty$ $\mu_\infty$-a.e.
\end{itemize}
\end{lemma}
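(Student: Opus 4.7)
The plan is to handle the two bullets in turn: first establish the $\sigma$-field equality, then obtain $\nu_\infty$ by disintegration on standard Borel spaces and match it pointwise with the restriction of $\nu$.

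For the first bullet, I would start from the fact that the product shift acts coordinatewise, $T_{AB}(x,y)=(T_A x,T_B y)$, hence $T_{AB}^{-1}(E\times G)=T_A^{-1}E\times T_B^{-1}G$ on rectangles. The standard relation $T^{-1}\sigma(\mathcal{E})=\sigma(T^{-1}\mathcal{E})$ then gives
\[
T_{AB}^{-n}\sigma(\mathcal{B}_{A^\mathcal{I}}\times\mathcal{B}_{B^\mathcal{I}})=\sigma\bigl(T_A^{-n}\mathcal{B}_{A^\mathcal{I}}\times T_B^{-n}\mathcal{B}_{B^\mathcal{I}}\bigr)
\]
for every $n$, so $(\sigma(\mathcal{B}_{A^\mathcal{I}}\times\mathcal{B}_{B^\mathcal{I}}))_\infty=\bigcap_{n\ge 0}\sigma(T_A^{-n}\mathcal{B}_{A^\mathcal{I}}\times T_B^{-n}\mathcal{B}_{B^\mathcal{I}})$. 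The inclusion $\sigma((\mathcal{B}_{A^\mathcal{I}})_\infty\times(\mathcal{B}_{B^\mathcal{I}})_\infty)\subseteq(\sigma(\mathcal{B}_{A^\mathcal{I}}\times\mathcal{B}_{B^\mathcal{I}}))_\infty$ is immediate, since a rectangle of tail events lies in each $T_A^{-n}\mathcal{B}_{A^\mathcal{I}}\otimes T_B^{-n}\mathcal{B}_{B^\mathcal{I}}$. The reverse inclusion I would address by a monotone class (good sets) argument on rectangles surviving in the intersection, invoking the standard Borel hypothesis to reduce to countable generating families in which tail-stability on each factor forces a surviving rectangle to be a rectangle of individual tail events.

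For the second bullet, part 1 lets me view $(\mu\nu)_\infty$ as a probability on $\sigma((\mathcal{B}_{A^\mathcal{I}})_\infty\times(\mathcal{B}_{B^\mathcal{I}})_\infty)$ whose $A^\mathcal{I}$-marginal is $\mu_\infty$. Since the sequence spaces are standard Borel, the disintegration theorem produces a probability kernel $\nu_\infty$ from $(A^\mathcal{I},(\mathcal{B}_{A^\mathcal{I}})_\infty)$ to $(B^\mathcal{I},(\mathcal{B}_{B^\mathcal{I}})_\infty)$, unique up to $\mu_\infty$-equivalence, satisfying $(\mu\nu)_\infty=\mu_\infty\nu_\infty$. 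To identify it with the restriction of $\nu(x,\cdot)$, for any tail $G\in(\mathcal{B}_{B^\mathcal{I}})_\infty$ and tail $E\in(\mathcal{B}_{A^\mathcal{I}})_\infty$ I would compute
\[
\int_E\nu_\infty(x,G)\,d\mu_\infty(x)=(\mu\nu)_\infty(E\times G)=(\mu\nu)(E\times G)=\int_E\nu(x,G)\,d\mu(x),
\]
using that $E\times G$ lies in the tail $\sigma$-field of the product and that $\mu$ and $\mu_\infty$ agree on tail events. Since this holds for every tail $E$, uniqueness of the conditional kernel on $(\mathcal{B}_{A^\mathcal{I}})_\infty$ yields $\nu_\infty(x,G)=\nu(x,G)$ for $\mu_\infty$-a.e.\ $x$, which is the asserted pointwise restriction.

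The main obstacle lies in the nontrivial inclusion of part 1: for arbitrary decreasing sub-$\sigma$-fields $\mathcal{F}_n,\mathcal{G}_n$ one only has $(\cap_n\mathcal{F}_n)\otimes(\cap_n\mathcal{G}_n)\subseteq\cap_n(\mathcal{F}_n\otimes\mathcal{G}_n)$, possibly strictly. The equality needed here must exploit that the sub-$\sigma$-fields arise as shift preimages on standard product sequence spaces; I would lean on the countably generated nature of standard Borel $\sigma$-fields and on the fact that $T_{AB}^{-n}$ gives an isomorphic copy of the whole product $\sigma$-field to push the monotone class argument through. Once part 1 is in place, part 2 is a routine application of disintegration combined with the single integration identity above.
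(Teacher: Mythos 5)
Your part~1 stops exactly where the proof has to happen. You correctly reduce the claim to the inclusion $\bigcap_{n\geq 0}\sigma\bigl(T_A^{-n}\mathcal{B}_{A^\mathcal{I}}\times T_B^{-n}\mathcal{B}_{B^\mathcal{I}}\bigr)\subseteq\sigma\bigl((\mathcal{B}_{A^\mathcal{I}})_\infty\times(\mathcal{B}_{B^\mathcal{I}})_\infty\bigr)$, and you yourself point out that the analogous exchange of product and countable intersection fails for general decreasing sub-$\sigma$-fields; but the promised ``monotone class argument on rectangles surviving in the intersection, leaning on countable generation'' is never carried out, and as stated it does not close: a generic element of the intersection of the product $\sigma$-fields is not a rectangle, and the whole content of the claim is that it can nevertheless be captured by rectangles with tail-measurable sides --- which is what a good-sets argument would have to presuppose, not deliver. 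The special structure the paper exploits, and which your sketch never uses, is that the inverse image of a rectangle under $T_{AB}$ is again a rectangle, so the generating semi-algebra $\mathcal{G}=\mathcal{B}_{A^\mathcal{I}}\times\mathcal{B}_{B^\mathcal{I}}$ satisfies $T_{AB}^{-(i+1)}\mathcal{G}\subset T_{AB}^{-i}\mathcal{G}$; the paper then works with this decreasing family of generating classes, identifies $\bigcap_{i}T_{AB}^{-i}\mathcal{G}$ with the class of rectangles whose sides are individual tail events, and only then applies $\sigma(\cdot)$. Without an argument at the level of the generating rectangles (or some substitute), your part~1 is a statement of the difficulty rather than a proof.

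Your part~2 is also shakier than it needs to be. The disintegration theorem you invoke requires the target $\sigma$-field to be standard (countably generated); the tail $\sigma$-field $(\mathcal{B}_{B^\mathcal{I}})_\infty$ is not, so the existence of a kernel from $(A^\mathcal{I},(\mathcal{B}_{A^\mathcal{I}})_\infty)$ to $(B^\mathcal{I},(\mathcal{B}_{B^\mathcal{I}})_\infty)$ does not follow from that theorem, and your identification $\nu_\infty(x,G)=\nu(x,G)$ is obtained $\mu_\infty$-a.e.\ only for each fixed $G$, with no countable generating family available to glue the exceptional null sets into a single one. The paper sidesteps all of this: it simply \emph{defines} $\nu_\infty(x,\cdot)$ as the restriction of $\nu(x,\cdot)$ ($\mu_\infty$-a.e.), observes that by part~1 the two probabilities $(\mu\nu)_\infty$ and $\mu_\infty\nu_\infty$ live on the same $\sigma$-field, and checks that they agree on the generating semi-algebra $(\mathcal{B}_{A^\mathcal{I}})_\infty\times(\mathcal{B}_{B^\mathcal{I}})_\infty$, hence coincide. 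Your integral identity $\int_E\nu(x,G)\,d\mu=(\mu\nu)(E\times G)$ on tail rectangles is exactly that check, so the extra disintegration layer buys nothing and is where your technical trouble enters.
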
 
\begin{proof}
Let $\mathcal{G}=\mathcal{B}_{A^\mathcal{I}} \times \mathcal{B}_{B^\mathcal{I}}$. Since $T_A$ and $T_B$ are respectively  $(\mathcal{B}_{A^\mathcal{I}},\mathcal{B}_{A^\mathcal{I}})$-measurable and  $(\mathcal{B}_{B^\mathcal{I}},\mathcal{B}_{B^\mathcal{I}})$-measurable , $\forall i \in \mathbb{N}$, $T_{AB}^{-(i+1)}\mathcal{G} \subset T_{AB}^{-(i)}\mathcal{G}$ and, thus,  $\sigma(T_{AB}^{-(i+1)}\mathcal{G}) \subset \sigma(T_{AB}^{-(i)}\mathcal{G})$. Then $\forall n \in \mathbb{N}$
$$
\sigma(T_{AB}^{-n}\mathcal{G})= \cap_{i=0}^{n}  \sigma(T_{AB}^{-i}\mathcal{G}) = \sigma(\cap_{i=0}^{n}  T_{AB}^{-i}\mathcal{G})
$$
This implies that $\cap_{i\geq 0}  \sigma(T_{AB}^{-i}\mathcal{G}) = \sigma(\cap_{i\geq 0}  T_{AB}^{-i}\mathcal{G})$, thus 
$$
\left( \sigma\left( \mathcal{B}_{A^\mathcal{I}} \times \mathcal{B}_{B^\mathcal{I}} \right) \right)_\infty = \sigma\left( (\mathcal{B}_{A^\mathcal{I}})_\infty \times (\mathcal{B}_{B^\mathcal{I}})_\infty \right)
$$
Defining $\nu_\infty(x, .)$ as the restriction of  $\nu(x, .)$ to the tail $\sigma$-field $(\mathcal{B}_{B^\mathcal{I}})_\infty$ for $\mu_\infty$-almost $x\in A^\mathcal{I}$, $(\mu\nu)_\infty$ and $\mu_\infty \nu_\infty$ are probabilities defined on the same $\sigma$-field $\left( \sigma\left( \mathcal{B}_{A^\mathcal{I}} \times \mathcal{B}_{B^\mathcal{I}} \right) \right)_\infty$ which coincide on the generating semi-algebra $(\mathcal{B}_{A^\mathcal{I}})_\infty \times (\mathcal{B}_{B^\mathcal{I}})_\infty$. They are thus equal.
\end{proof}

\subsection{Recurrent channels}

The following lemma gives necessary and sufficient conditions for a channel to be recurrent.
\begin{lemma}\label{LemmaRecurrentChannel2}
Let $[A,X]$ be a source with distribution $\mu$ and $[A,\nu,B]$ a channel.  Let $\mathcal{G}=\{ F \times G / F\in \mathcal{B}_{A^\mathcal{I}}, G\in \mathcal{B}_{B^\mathcal{I}} \}$. Then
\begin{itemize}
	\item the dynamical system $(A^\mathcal{I}\times B^\mathcal{I}, \mathcal{B}_{A^\mathcal{I}\times B^\mathcal{I}}, T_{AB} , \mu\nu)$ is recurrent if and only if for any $F\times G \in \mathcal{G}$, $\mu\nu(F\times G \setminus \cup_{k\geq 1}T_{AB}^{-k}F\times G)=0$.

	\item $[A,\nu,B]$ is recurrent with respect to  $\mu$ if and only if $\mu$ is recurrent and $\nu(x,.)$ is recurrent $\mu$-a.e.
\end{itemize}

\end{lemma}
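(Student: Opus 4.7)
My plan for the first bullet is to reduce recurrence of $\mu\nu$ on the full $\sigma$-algebra to recurrence on a generating semi-algebra. The forward direction is immediate since $\mathcal{G}\subseteq \mathcal{B}_{A^\mathcal{I}\times B^\mathcal{I}}$. For the converse, I introduce the collection $\mathcal{R}$ of $E\in\mathcal{B}_{A^\mathcal{I}\times B^\mathcal{I}}$ satisfying $\mu\nu(E\setminus\bigcup_{k\geq 1}T_{AB}^{-k}E)=0$. Using the elementary inclusion
\[
(E_1\cup E_2)\setminus \bigcup_{k\geq 1} T_{AB}^{-k}(E_1\cup E_2) \;\subseteq\; \bigcup_{i=1,2}\Bigl(E_i\setminus \bigcup_{k\geq 1} T_{AB}^{-k}E_i\Bigr),
\]
$\mathcal{R}$ is stable under finite unions, so if $\mathcal{R}\supseteq\mathcal{G}$ then $\mathcal{R}$ contains the field of finite disjoint unions of rectangles. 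Extension of this generating field to the full $\sigma$-algebra then follows from the standard monotone-class / approximation argument for recurrence (as in Gray), exploiting continuity of $\mu\nu$ from below along countable unions.

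For the forward implication of the second bullet, I specialize the first bullet to one-coordinate rectangles. Choosing $E=F\times B^\mathcal{I}$, the identity $T_{AB}^{-k}(F\times B^\mathcal{I})=T_A^{-k}F\times B^\mathcal{I}$ and Fubini yield $\mu(F\setminus\bigcup_k T_A^{-k}F)=0$ for every $F$, so $\mu$ is recurrent. Choosing $E=A^\mathcal{I}\times G$ and applying Fubini gives $\int\nu(x,G\setminus\bigcup_k T_B^{-k}G)\,d\mu(x)=0$, so the recurrence condition for $\nu(x,\cdot)$ at $G$ holds for $\mu$-a.e.\ $x$, with a $G$-dependent null set. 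To upgrade this to $\nu(x,\cdot)$ being recurrent $\mu$-a.e., I use the standardness of $(B,\mathcal{B}_B)$: $\mathcal{B}_{B^\mathcal{I}}$ is countably generated, so a countable generating field $\mathcal{F}_0$ exists, and the countable union of the corresponding null sets yields a single $\mu$-null set $N$ off which $\nu(x,\cdot)$ satisfies recurrence on all of $\mathcal{F}_0$; applying the first bullet to $\nu(x,\cdot)$ (viewed as a measure on $B^\mathcal{I}$) delivers recurrence on the full $\sigma$-algebra.

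The backward implication is the main technical obstacle. Assuming $\mu$ recurrent and $\nu(x,\cdot)$ recurrent $\mu$-a.e., I want to check recurrence of $\mu\nu$ on each rectangle $F\times G$, using the first bullet. Set $H=(F\times G)\setminus\bigcup_k T_{AB}^{-k}(F\times G)$. Fubini gives $\mu\nu(H)=\int_F\nu(x,H_x)\,d\mu(x)$, where $H_x=G\setminus\bigcup_{k\in K(x)}T_B^{-k}G$ and $K(x)=\{k\geq 1:T_A^k x\in F\}$. Recurrence of $\mu$ ensures $K(x)\neq\emptyset$ $\mu$-a.e.\ in $F$, but the recurrence of $\nu(x,\cdot)$ only controls the full union $\bigcup_{k\geq 1}T_B^{-k}G$, not the restricted one indexed by $K(x)$. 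My strategy is first to iterate the recurrence conditions for $\mu$ and for $\nu(x,\cdot)$, applied successively to the sets $F\cap\bigcup_k T_A^{-k}F$ and $G\cap\bigcup_k T_B^{-k}G$, to upgrade ``one return'' to ``infinitely many returns''; then to reduce the problem to the tail $\sigma$-algebra via Lemma~\ref{LemmaChannelRestriction}, where the precise index set $K(x)$ becomes irrelevant and only its cofinality in $\mathbb{N}$ matters. This last reduction, reconciling the random index set $K(x)$ with the recurrence of $\nu(x,\cdot)$ so that $H_x$ becomes $\nu(x,\cdot)$-null for $\mu$-a.e.\ $x$, is the delicate step and the principal obstacle in the proof.
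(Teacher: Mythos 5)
Your handling of the first bullet and of the forward half of the second bullet is essentially the paper's: the paper proves the first bullet exactly by showing the class of sets satisfying the recurrence condition is stable under countable unions of rectangles (hence contains countable unions of field elements) and then approximating an arbitrary $O$ and its complement from above by countable covers drawn from the generating field; your "monotone-class / approximation" phrase is loose (the class is not closed under complements, so only the outer-approximation argument works), but the intent matches. Your merging of the $G$-dependent null sets through a countable generating field, followed by the single-measure version of the first bullet, is a legitimate elaboration of the direction the paper dismisses as "obvious."

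The genuine gap is the implication that $\mu$ recurrent together with $\nu(x,\cdot)$ recurrent $\mu$-a.e.\ gives recurrence of $\mu\nu$ — the direction the paper actually proves in detail and which you explicitly leave open. Your Fubini reduction is correct: you need $\nu\bigl(x,\,G\setminus\bigcup_{k\in K(x)}T_B^{-k}G\bigr)=0$ for $\mu$-a.e.\ $x\in F$. But the repair you sketch does not close it: upgrading "one return" to "infinitely many returns" only makes $K(x)$ and the return-time set of $y$ to $G$ both infinite, which does not produce a common element; and the reduction via Lemma~\ref{LemmaChannelRestriction} is not available, since $F\times G$ is not a tail event and recurrence of $\mu\nu$ is not a property of $(\mu\nu)_\infty$, so the claim that "only the cofinality of $K(x)$ matters" is unjustified. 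The paper sidesteps the common-return-time issue entirely: it bounds $\mu\nu\bigl(F\times G\setminus\bigcup_{k\geq 1}T_{AB}^{-k}(F\times G)\bigr)$ by splitting the no-return event according to which coordinate fails (the $A$-coordinate never re-enters $F$, the $B$-coordinate never re-enters $G$, or both), and each piece is annihilated by the recurrence of $\mu$, of $\nu(x,\cdot)$, or of both, applied separately, so no simultaneous return is ever invoked. (The mixed failure patterns across different $k$ that worry you are indeed treated rather briskly by the paper's displayed equality, so your concern is not frivolous; but in your write-up the decisive step is simply not carried out, so the proof is incomplete.)
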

\begin{proof}
Assume that  for any $F\times G \in \mathcal{G}$, $\mu\nu(F\times G \setminus \cup_{k\geq 1}T_{AB}^{-k}(F\times G) )=0$. Let $\mathcal{B}'$ be the set $\{ O \in \mathcal{B}_{A^\mathcal{I} \times B^\mathcal{I}} / \mu\nu(O \setminus \cup_{k\geq 1} T_{AB}^{-k}O) = 0 \}$.  Let $(O_i)_i$ be a countable family of elements of $\mathcal{B'}$. Then 
$$
\mu\nu( (\cup_{i\geq 0} O_i) \setminus (\cup_{k\geq 1} T_{AB}^{-k} (\cup_{i\geq 0} O_i))  = \mu\nu( (\cup_{i\geq 0} O_i) \setminus (\cup_{i\geq 0} \cup_{k\geq 1}   T_{AB}^{-k} O_i))
$$
Given four sets $\Theta$, $\Lambda$, $\Gamma$ and $\Delta$, $(\Theta \cup \Lambda) \setminus (\Gamma \cup \Delta) \subset (\Theta \setminus \Gamma) \cup (\Lambda \setminus \Delta)$, then:
$$
 \mu\nu( (\cup_{i\geq 0} O_i) \setminus (\cup_{k\geq 1} T_{AB}^{-k} (\cup_{i\geq 0} O_i)) \leq \mu\nu( \cup_{i\geq 0} (O_i \setminus \cup_{k\geq 1}   T_{AB}^{-k} O_i) )
$$
$$
\mu\nu( (\cup_{i\geq 0} O_i) \setminus (\cup_{k\geq 1} T_{AB}^{-k} (\cup_{i\geq 0} O_i))   \leq \sum_{i\geq 0} \mu\nu( O_i \setminus \cup_{k\geq 1}   T_{AB}^{-k} O_i) = 0
$$
Thus $\cup_{i\geq 0} O_i \in \mathcal{B}'$. Then $\mathcal{B}'$ is stable by countable union.

Let $\mathcal{F}$ be the field generated by $\mathcal{G}$. Any element of the field $\mathcal{F}$ is a finite union of rectangles. Then any countable union of field elements is a countable union of rectangles. Then, from above, for any element $R$ of $\mathcal{F}$, $\mu\nu(R \setminus \cup_{k\geq 1}T_{AB}^{-k}R)=0$.

Let $O\in \mathcal{B_{A^{\mathcal{I}}\times B^{\mathcal{I}}}}$. 
 The probability $\mu\nu$ on  $( A^{\mathcal{I}}\times B^{\mathcal{I}} , \mathcal{B_{A^{\mathcal{I}}\times B^{\mathcal{I}}}})$ is the extension of  the set function $\mu\nu$ on the field generated by the rectangles and verifies (\cite{Gray09}):
$$
\mu\nu(O)=\inf_{(R_i)_{i\geq 0}/ O\subset \cup_{i\geq 0}R_i}\mu\nu(\cup_{i\geq 0}R_i)
$$
where the families  $(R_i)_{i\geq 0}$ are countable covers of $O$ made of elements of the field generated by the rectangles (see \cite{Gray09}). Let $\epsilon >0$, then there exist  countable families of field elements $(R_i)_{i\geq 0}$ and $(R'_i)_{i\geq 0}$ respectively covering $O$ and $O^c$ such that:
$$
\mu\nu(\cup_{i\geq 0}R_i)-\frac{\epsilon}{2} < \mu\nu(O) \leq \mu\nu(\cup_{i\geq 0}R_i) 
$$
$$
\mu\nu(\cup_{i\geq 0}R'_i)-\frac{\epsilon}{2} < \mu\nu(O^c) \leq \mu\nu(\cup_{i\geq 0}R'_i) 
$$Let $\alpha=\cup_{i\geq 0}R_i$ and $\beta=\cup_{i\geq 0}R'_i$.  Obviously $T_{AB}^{-k}\beta^c \subset T_{AB}^{-k}O \subset T_{AB}^{-k}\alpha$ for any $k$. Then:
$$
O\setminus \cup_{k\geq 1} T_{AB}^{-k}O \subset   \alpha \setminus \cup_{k\geq 1} T_{AB}^{-k}\beta^c \subset (\beta^c \setminus \cup_{k\geq 1} T_{AB}^{-k}\beta^c) \cup (\alpha \setminus \beta^c)
$$
$\beta^c$ is a countable union of  elements of the field generated by rectangles then $\mu\nu(\beta^c \setminus \cup_{k\geq 1} T_{AB}^{-k}\beta^c)=0$. Moreover, $\mu\nu(\alpha \setminus \beta^c) < \epsilon$. Then $\mu\nu(O\setminus \cup_{k\geq 1} T_{AB}^{-k}O)<\epsilon$. This holds for any $\epsilon>0$, then 
$$
\mu\nu(O\setminus \cup_{k\geq 1} T_{AB}^{-k}O)=0
$$
Then the dynamical system $(A^\mathcal{I}\times B^\mathcal{I}, \mathcal{B}_{A^\mathcal{I}\times B^\mathcal{I}}, T_{AB} , \mu\nu)$  is recurrent. The reciprocal is obvious.

Assume that $\mu$ is recurrent and $\nu(x,.)$ is recurrent $\mu$-a.e. Let $F \times G \in \mathcal{G}$.
\begin{eqnarray}
\mu\nu\left( F\times G \setminus \cup_{k\geq 1}T_{AB}^{-k}F\times G \right) & = &  \mu\nu\left( F\times G \cap  \bigcap_{k\geq 1}(T_{AB}^{-k}(F\times G) )^c \right) \nonumber \\
										& = & \mu\nu\left( F\times G \cap \bigcap_{k\geq 1}(T_{A}^{-k}F^c \times T_{B}^{-k} G) \right) \nonumber \\
										&     & +\mu\nu\left( F\times G \cap \bigcap_{k\geq 1}(T_{A}^{-k}F \times T_{B}^{-k} G^c \right) \nonumber \\
										&     &  + \mu\nu\left( F\times G \cap \bigcap_{k\geq 1}(T_{A}^{-k}F^c \times T_{B}^{-k} G^c) \right) \nonumber
\end{eqnarray}

Since $\mu$ is recurrent:
$$
\mu\nu\left( F\times G \cap \bigcap_{k\geq 1}(T_{A}^{-k}F^c \times T_{B}^{-k} G) \right) \leq \mu( F \cap \bigcap_{k\geq 1} T_A^{-k}F^c) = 0
$$
Since $\nu(x,.)$ is recurrent $\mu$-a.e. :
\begin{multline}
\mu\nu\left( F\times G \cap \bigcap_{k\geq 1}(T_{A}^{-k}F \times T_{B}^{-k} G^c \right) \leq \nonumber \\
		\mu\nu(A^\mathcal{I}\times (G \cap_{k\geq 1} T_{B}^{-k} G^c))= \int \nu(x, G \cap_{k\geq 1} T_{B}^{-k} G^c) d\mu = 0 \nonumber
\end{multline}
and
\begin{multline}
\mu\nu\left( F\times G \cap \bigcap_{k\geq 1}(T_{A}^{-k}F^c \times T_{B}^{-k} G^c) \right) = \nonumber \\
		\mu\nu( (F \cap_{k\geq 1} T_{A}^{-k} F^c) \times  (G \cap_{k\geq 1} T_{B}^{-k} G^c))= \nonumber \\
			\int_{F \cap_{k\geq 1} T_{A}^{-k} F^c } \nu(x, G \cap_{k\geq 1} T_{B}^{-k} G^c) d\mu=0 \nonumber
\end{multline}
then $\mu\nu\left( F\times G \setminus \cup_{k\geq 1}T_{AB}^{-k}F\times G \right)=0$.
From above, $[A,\nu,B]$ is recurrent w.r.t. $\mu$. The reciprocal is obvious.
\end{proof}

\subsection{AMS Channels}

Let $\mu$ be the distribution of an AMS source with stationary mean $\overline{\mu}$: $\mu\ll^a \overline{\mu}$. Let $\nu$ be an AMS channel. Then $\mu\nu$ is AMS: $\mu\nu \ll^a \overline{\mu\nu}$. The "input marginal" of $\overline{\mu\nu}$ is $\overline{\mu}$, hence there exists a (unique modulo $\overline{\mu}$) channel $\overline{\nu}_\mu$ stationary w.r.t $\overline{\mu}$ such that $\overline{\mu\nu}=\overline{\mu}\ \overline{\nu}_\mu$. 
 The channel $[A,\overline{\nu}_{\mu},B]$ such that $\overline{\mu\nu} = \overline{\mu}\ \overline{\nu}_\mu$ is the {\em stationary mean} of the AMS channel $\nu$ with respect to the AMS source distribution $\mu$.

The following lemma shows that the stationary mean of a channel is made of AMS probabilities and that the channel made by the stationary means of these AMS probabilities is stationary. The lemma gives also relationships, based on asymptotic dominance, between the corresponding hookups.

\begin{lemma}\label{LemmaChannelOfStationaryMeans}
Let $\mu$ be the distribution of an AMS source and  $[A,\nu,B]$ be a channel AMS w.r.t. $\mu$. Let $\overline{\nu}_\mu$ be the stationary mean of $\nu$ w.r.t. $\mu$.
\begin{enumerate}
	\item The probability $\overline{\nu}_\mu(x,.)$ is AMS $\overline{\mu}$-a.e.
	\item Let $S\overline{\nu}_\mu$ be the channel defined by 
		$$
		 S\overline{\nu}_\mu(x,G)=\lim_{n\to \infty} \frac{1}{n} \sum_{i=0}^{n-1} \overline{\nu}_\mu(x,T_B^{-i}G) \text{  }\overline{\mu}\text{-a.e.}
		$$
	$S\overline{\nu}_\mu$ is stationary w.r.t. $\overline{\mu}$ and
	\begin{enumerate}
			\item $\mu \nu \ll^a \overline{\mu}\ \overline{\nu}_\mu$
			\item $\mu \nu \ll^a \overline{\mu}S\overline{\nu}_\mu$
			\item $ \overline{\mu}\ \overline{\nu}_\mu \ll \overline{\mu}S\overline{\nu}_\mu$
		\end{enumerate}
\end{enumerate}
\end{lemma}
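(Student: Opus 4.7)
The plan is to exploit stationarity of $\overline{\nu}_\mu$ with respect to $\overline{\mu}$ to rewrite each Ces\`aro average $\frac{1}{n}\sum_{i=0}^{n-1}\overline{\nu}_\mu(x,T_B^{-i}G)$ as a Birkhoff average on the stationary system $(A^{\mathcal{I}},\overline{\mu},T_A)$, to lift the resulting pointwise convergence to a bona fide stationary channel $S\overline{\nu}_\mu$, and to derive the dominance properties (a)--(c) from recurrence of the stationary hookup $\overline{\mu}\,\overline{\nu}_\mu$.

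For point~1, Lemma~\ref{LemmaStationaryChannel} gives $\overline{\nu}_\mu(x,T_B^{-i}G)=\overline{\nu}_\mu(T_A^i x,G)$ $\overline{\mu}$-a.e.\ for each fixed $G\in\mathcal{B}_{B^{\mathcal{I}}}$, so Birkhoff's ergodic theorem applied to the bounded function $f_G(x)=\overline{\nu}_\mu(x,G)$ yields existence of the limit $S\overline{\nu}_\mu(x,G)$ $\overline{\mu}$-a.e.\ for each $G$. Standardness of $(B^{\mathcal{I}},\mathcal{B}_{B^{\mathcal{I}}})$ furnishes a countable generating field $\mathcal{A}$, and discarding a single $\overline{\mu}$-null set gives a finitely additive $[0,1]$-valued set function $G\mapsto S\overline{\nu}_\mu(x,G)$ on $\mathcal{A}$. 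Identifying $S\overline{\nu}_\mu(x,G)$ $\overline{\mu}$-a.e.\ with $E[f_G\mid \mathcal{J}_{T_A}](x)$, where $\mathcal{J}_{T_A}$ is the $T_A$-invariant $\sigma$-field, and using a regular conditional probability for $\mathcal{J}_{T_A}$ together with the dominated convergence theorem, furnishes countable additivity on $\mathcal{A}$ and Carath\'eodory extension to a probability on $\mathcal{B}_{B^{\mathcal{I}}}$. A monotone class argument then extends the Ces\`aro identity from $\mathcal{A}$ to the full $\sigma$-field, so $\overline{\nu}_\mu(x,\cdot)$ is AMS with stationary mean $S\overline{\nu}_\mu(x,\cdot)$ for $\overline{\mu}$-a.e.\ $x$.

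Stationarity of $S\overline{\nu}_\mu$ w.r.t.\ $\overline{\mu}$ is checked via Lemma~\ref{LemmaStationaryChannel}: the Ces\`aro average in $G$ is invariant under $T_B^{-1}$ up to an $O(1/n)$ boundary term, giving $S\overline{\nu}_\mu(x,T_B^{-1}G)=S\overline{\nu}_\mu(x,G)$, and Birkhoff forces $T_A$-invariance of the limit, giving $S\overline{\nu}_\mu(T_A x,G)=S\overline{\nu}_\mu(x,G)$; the two identities combine into the stationarity condition. Property~(a) is immediate from the definition of the stationary mean channel: $\overline{\mu\nu}=\overline{\mu}\,\overline{\nu}_\mu$, and any AMS probability is asymptotically dominated by its stationary mean.

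For~(c), the stationary hookup $\overline{\mu}\,\overline{\nu}_\mu$ is recurrent, so Lemma~\ref{LemmaRecurrentChannel2} forces $\overline{\nu}_\mu(x,\cdot)$ to be recurrent $\overline{\mu}$-a.e. Combined with the AMS property from point~1, theorem~7.4 of \cite{Gray09} promotes asymptotic dominance into absolute dominance: $\overline{\nu}_\mu(x,\cdot)\ll S\overline{\nu}_\mu(x,\cdot)$ $\overline{\mu}$-a.e.\ If $\overline{\mu}S\overline{\nu}_\mu(E)=\int S\overline{\nu}_\mu(x,E_x)\,d\overline{\mu}=0$, nonnegativity gives $S\overline{\nu}_\mu(x,E_x)=0$ $\overline{\mu}$-a.e., hence $\overline{\nu}_\mu(x,E_x)=0$ $\overline{\mu}$-a.e., and integrating back yields $\overline{\mu}\,\overline{\nu}_\mu(E)=0$. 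Property~(b) then follows by chaining (c) into (a): $\overline{\mu}S\overline{\nu}_\mu(E)=0\Rightarrow\overline{\mu}\,\overline{\nu}_\mu(E)=0\Rightarrow\lim_n \mu\nu(T_{AB}^{-n}E)=0$. The main obstacle is the extension step in point~1: Birkhoff produces the Ces\`aro limit $\overline{\mu}$-a.e.\ only for each \emph{fixed} $G$, whereas AMS requires existence of the limit for \emph{every} $G\in\mathcal{B}_{B^{\mathcal{I}}}$ with countable additivity, and it is here that standardness of the alphabets, a countable generating field, and the regular conditional probability together with the conditional DCT do the genuine work.
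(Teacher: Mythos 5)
Your argument is correct in outline and, for most of the statement, follows the paper's own route: the paper proves the companion Lemma~\ref{LemmaStationaryChannelAMSkernel} (a channel stationary w.r.t.\ a stationary source has AMS kernels, a stationary mean channel, and satisfies $\mu\nu\ll\mu S\nu$) and applies it verbatim to $\overline{\nu}_\mu$, which is stationary w.r.t.\ $\overline{\mu}$; your handling of point~1 via Lemma~\ref{LemmaStationaryChannel} and the pointwise ergodic theorem, of the stationarity of $S\overline{\nu}_\mu$, of (a) (definition of the stationary mean channel plus asymptotic dominance of an AMS measure by its stationary mean), and of (b) (chaining (c) into (a)) matches the paper. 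You genuinely diverge on (c): the paper works on the tail $\sigma$-field, showing $(\overline{\mu}\,\overline{\nu}_\mu)_\infty\ll(\overline{\mu}S\overline{\nu}_\mu)_\infty$ through sections and Lemma~\ref{LemmaDominanceOnTailSigmaField}, then using stationarity of $\overline{\mu}\,\overline{\nu}_\mu$ to upgrade $\ll^{a}$ to $\ll$; you instead use recurrence of the stationary hookup, the converse half of Lemma~\ref{LemmaRecurrentChannel2}, and Theorem~7.4 of \cite{Gray09} to get $\overline{\nu}_\mu(x,\cdot)\ll S\overline{\nu}_\mu(x,\cdot)$ $\overline{\mu}$-a.e.\ and then integrate over sections. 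That route is valid (it needs your point~1 to certify that $S\overline{\nu}_\mu(x,\cdot)$ really is the stationary mean of $\overline{\nu}_\mu(x,\cdot)$) and is arguably more direct than the paper's tail-$\sigma$-field computation.

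The one step to repair is the extension argument in point~1. The ergodic theorem gives the Ces\`aro limit $\overline{\mu}$-a.e.\ for each fixed $G$; passing to a countable generating field and discarding a single null set is right, but the closing ``monotone class argument'' does not extend the Ces\`aro identity to all of $\mathcal{B}_{B^{\mathcal{I}}}$: the class of sets $G$ for which $\lim_n\frac{1}{n}\sum_{i=0}^{n-1}\overline{\nu}_\mu(x,T_B^{-i}G)$ exists and equals the Carath\'eodory extension is closed under complements and finite disjoint unions, but for a monotone limit $G_k\uparrow G$ one only obtains a one-sided $\liminf$ bound (and complementation returns the same one-sided bound), so neither a monotone-class nor a Dynkin argument closes it; setwise convergence on a generating field does not in general propagate to the generated $\sigma$-field. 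What is needed is either a generating-field criterion for asymptotic mean stationarity on standard spaces, or the detour the paper uses elsewhere (e.g.\ in Lemma~\ref{LemmaStabilityofDeterministicChannel}): exhibit a stationary probability dominating $\overline{\nu}_\mu(x,\cdot)$ on the tail $\sigma$-field and invoke Lemma~\ref{LemmaDominanceOnTailSigmaField}. In fairness, the paper's own appeal to Vitali--Hahn--Saks at the same spot presupposes setwise convergence for every $G$ simultaneously and is no more rigorous than your sketch; but as literally written, your monotone-class step is the one place where the argument would fail.
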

Proof of Lemma \ref{LemmaChannelOfStationaryMeans} relies on the following lemma.
\begin{lemma}\label{LemmaStationaryChannelAMSkernel}
Let $\mu$ be the distribution of a stationary source and  $[A,\nu,B]$ a channel stationary w.r.t. $\mu$. Then
\begin{enumerate}
	\item $\nu(x,.)$ is an AMS probability $\mu$-a.e.
	\item let $S\nu$ be the channel defined by $S\nu(x,G) =  \lim_{n \to \infty}\frac{1}{n}\sum_{i=0}^{n-1}\nu(x,T_B^{-i}G)$ when $\nu(x,.)$ is AMS and $S\nu(x,G) = \nu(x,G)$ else. Then $[A,S\nu,B]$ is stationary w.r.t. $\mu$.
	\item $\mu\nu \ll \mu S\nu$
\end{enumerate}
\end{lemma}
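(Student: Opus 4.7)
The plan is to handle the three assertions in sequence, with Birkhoff's pointwise ergodic theorem as the workhorse for Parts 1 and 2, and with Lemma~\ref{LemmaRecurrentChannel2} together with Theorem~7.4 of \cite{Gray09} (cited in Section~\ref{SectionSources}: an AMS probability is dominated by its stationary mean if and only if it is recurrent) handling Part~3. Throughout, the essential translation device is Lemma~\ref{LemmaStationaryChannel} and its iterate $\nu(x, T_B^{-i}G) = \nu(T_A^i x, G)$ $\mu$-a.e., which turns backward shifts on the output side into forward shifts on the input side.

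For Part~1, fix $G \in \mathcal{B}_{B^\mathcal{I}}$; the function $x \mapsto \nu(x, G)$ is bounded and measurable on the stationary system $(A^\mathcal{I}, \mathcal{B}_{A^\mathcal{I}}, T_A, \mu)$, so Birkhoff gives $\mu$-a.e.\ convergence of $\frac{1}{n}\sum_{i=0}^{n-1}\nu(T_A^i x, G)$, hence of $\frac{1}{n}\sum_{i=0}^{n-1}\nu(x, T_B^{-i}G)$ after applying Lemma~\ref{LemmaStationaryChannel}. Using the standing assumption that $(B, \mathcal{B}_B)$ is standard, pick a countable generating field $\mathcal{F}_0$ of $\mathcal{B}_{B^\mathcal{I}}$ and take the union of the exceptional null sets across $G \in \mathcal{F}_0$ to obtain a single $\mu$-null set $N$ outside of which the Cesaro limit exists for every $G \in \mathcal{F}_0$. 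On $\mathcal{F}_0$ the limit is a finitely additive pointwise limit of probabilities, and the standard-space hypothesis then permits a unique extension to a countably additive probability $S\nu(x, \cdot)$ on $\mathcal{B}_{B^\mathcal{I}}$, identifying $\nu(x, \cdot)$ as AMS with stationary mean $S\nu(x, \cdot)$.

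For Part~2, a one-index shift in the Cesaro sum (discrepancy $O(1/n)$) gives $S\nu(x, T_B^{-1}G) = S\nu(x, G)$, while the standard $T_A$-invariance of Birkhoff limits, applied to $x \mapsto \nu(x, G)$ and transferred via Lemma~\ref{LemmaStationaryChannel}, gives $S\nu(T_A x, G) = S\nu(x, G)$ $\mu$-a.e. Combining the two, $S\nu(x, T_B^{-1}G) = S\nu(T_A x, G)$ $\mu$-a.e., which by Lemma~\ref{LemmaStationaryChannel} is exactly the stationarity of $[A, S\nu, B]$ w.r.t.\ $\mu$. For Part~3, stationarity of $\mu$ and of $\nu$ w.r.t.\ $\mu$ make the hookup $\mu\nu$ stationary, hence recurrent; Lemma~\ref{LemmaRecurrentChannel2} then forces $\nu(x, \cdot)$ to be recurrent for $\mu$-a.e.\ $x$. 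Combined with the AMS property from Part~1, Theorem~7.4 of \cite{Gray09} yields $\nu(x, \cdot) \ll S\nu(x, \cdot)$ for $\mu$-a.e.\ $x$. A Fubini-type section argument then lifts this to $\mu\nu \ll \mu S\nu$: if $\mu S\nu(E) = 0$ then $S\nu(x, E_x) = 0$ $\mu$-a.e., hence $\nu(x, E_x) = 0$ $\mu$-a.e., whence $\mu\nu(E) = 0$.

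The main obstacle is the extension step at the end of Part~1: upgrading ``for each $G$, $\mu$-a.e.\ Cesaro convergence'' into ``for $\mu$-a.e.\ $x$, Cesaro convergence for every $G$'' and then extending the resulting finitely additive limit on a generating field to a countably additive probability on all of $\mathcal{B}_{B^\mathcal{I}}$. This is where the standard-space assumption on $(B, \mathcal{B}_B)$ is genuinely indispensable; the other steps are essentially formal, once the stationary mean $S\nu(x, \cdot)$ is in hand as a bona fide probability.
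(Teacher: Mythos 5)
Your proof is correct, and its Parts 1--2 essentially coincide with the paper's: the pointwise ergodic theorem applied to $x \mapsto \nu(x,G)$, transferred to output shifts via Lemma~\ref{LemmaStationaryChannel}, then the one-index Cesaro shift together with $T_A$-invariance of Birkhoff limits to get $S\nu(x,T_B^{-1}G)=S\nu(T_Ax,G)$ $\mu$-a.e., i.e.\ stationarity of $[A,S\nu,B]$. Two differences are worth recording. In Part 1 the paper obtains countable additivity of the limit by invoking the Vitali--Hahn--Saks theorem, while you go through a countable generating field and the standard-space (countable extension) property; your version handles the $G$-dependent null sets explicitly, though the step ``Cesaro convergence on a standard generating field implies AMS'' is itself a nontrivial fact (Gray--Kieffer) that you only gesture at --- acceptable here, since the paper's own justification is no more detailed. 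The substantive divergence is Part 3: you use stationarity of the hookup $\mu\nu$ to get recurrence, the ``only if'' half of Lemma~\ref{LemmaRecurrentChannel2} to get recurrence of the kernels $\nu(x,\cdot)$ $\mu$-a.e., Theorem~7.4 of \cite{Gray09} fiberwise to conclude $\nu(x,\cdot) \ll S\nu(x,\cdot)$ $\mu$-a.e., and then a Fubini/section argument (which is exactly the easy direction of Lemma~\ref{LemmaChannelDominanceEquivHookupDominance}) to lift this to $\mu\nu \ll \mu S\nu$. The paper instead argues on the tail $\sigma$-field: from $\nu(x,\cdot) \ll^a S\nu(x,\cdot)$ and Lemma~\ref{LemmaDominanceOnTailSigmaField} applied fiberwise it gets $(\mu\nu)_\infty \ll (\mu S\nu)_\infty$, applies Lemma~\ref{LemmaDominanceOnTailSigmaField} again to the hookups to get $\mu\nu \ll^a \mu S\nu$, and only at the end uses stationarity of $\mu\nu$ to upgrade asymptotic dominance to dominance. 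Your route buys genuine fiberwise dominance directly, at the cost of invoking the recurrence machinery; the paper's route avoids recurrence entirely but uses the tail-$\sigma$-field dominance lemma twice. Both arguments are sound and rest only on results established before this lemma.
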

\begin{proof}
\begin{enumerate}
	\item Since $\nu$ is stationary w.r.t. $\mu$, by Lemma \ref{LemmaStationaryChannel} (p. \pageref{LemmaStationaryChannel}):
$$
\forall n\in \mathbb{N}, \frac{1}{n} \sum_{i=0}^{n-1} \nu(x,T_B^{-i}G) = \frac{1}{n} \sum_{i=0}^{n-1} \nu(T_A^{i}x, G) \text{   }\mu\text{-a.e.}
$$
$\mu$ is stationary thus, thanks to the pointwise ergodic theorem, $S\nu(x,G)= \lim_{n \to \infty}\frac{1}{n} \sum_{i=0}^{n-1} \nu(T_A^{i}x,G)$ exists and $S\nu(x,G) = S\nu(T_A x,G)$ $\mu$-a.e. 

Then $S\nu(x,G)=\lim_{n \to \infty}\frac{1}{n} \sum_{i=0}^{n-1} \nu( x,T_B^{-i}G)$ exists and, by the Vitali-Hahn-Saks theorem, is a (stationary) probability. 

	\item From 1.): 
$$
S\nu(T_A x,G)=S\nu( x,G)=S\nu( x,T_B^{-1}G) \text{  }\mu\text{-a.e.}
$$
Then, thanks to Lemma \ref{LemmaStationaryChannel} (p. \pageref{LemmaStationaryChannel}), $S\nu$ is stationary w.r.t. $\mu$
	\item Let $E \in \left( \sigma\left( \mathcal{B}_{A^\mathcal{I}} \times \mathcal{B}_{B^\mathcal{I}} \right) \right)_\infty$ such that $(\mu S\nu)_\infty (E)=0$. 

Then $S\nu(x,E_x)_\infty=0$ $\mu$-a.e. where $E_x \in (\mathcal{B}_{B^\mathcal{I}})_\infty$ is the section of $E$ at $x$. Since $\nu(x,.) \ll^a S\nu(x,.)$ and since $S\nu(x,.)$ is stationary, by Lemma \ref{LemmaDominanceOnTailSigmaField} (p. \pageref{LemmaDominanceOnTailSigmaField}), $\nu(x,.)_\infty \ll S\nu(x,.)_\infty$. Then
$$
\nu(x,E_x)_\infty=0 \text{  }\mu\text{-a.e.}
$$
This implies that $(\mu\nu)_\infty(E)=0$. Then $(\mu\nu)_\infty \ll (\mu S\nu)_\infty$. From 2.), $\mu S\nu$ is stationary then, by  Lemma \ref{LemmaDominanceOnTailSigmaField} (p. \pageref{LemmaDominanceOnTailSigmaField}), $\mu\nu \ll^a \mu S\nu$. $\mu\nu$ is stationary then $\mu\nu \ll \mu S\nu$.
\end{enumerate}
\end{proof}

\begin{proof}[Proof of Lemma \ref{LemmaChannelOfStationaryMeans}]
By definition $\overline{\nu}_\mu$ is stationary w.r.t. $\overline{\mu}$. Then, thanks to Lemma \ref{LemmaStationaryChannelAMSkernel} (p. \pageref{LemmaStationaryChannelAMSkernel}):
\begin{itemize}
	\item  $\overline{\nu}_\mu(x,.)$ is AMS $\overline{\mu}$-a.e.
	\item $S\overline{\nu}_\mu$ is stationary w.r.t. $\overline{\mu}$
	\item $ \overline{\mu}\ \overline{\nu}_\mu \ll \overline{\mu}S\overline{\nu}_\mu$
\end{itemize}
$\nu$ is AMS w.r.t. the AMS probability $\mu$ thus  $\mu \nu \ll^a \overline{\mu}\ \overline{\nu}_\mu$.

Since $\mu \nu \ll^a \overline{\mu}\ \overline{\nu}_\mu$ and $ \overline{\mu}\ \overline{\nu}_\mu \ll \overline{\mu}S\overline{\nu}_\mu$:
$$ 
\mu \nu \ll^a \overline{\mu}S\overline{\nu}_\mu
$$

\end{proof}

The following lemma  is Lemma 2 of \cite{FontanaGrayKieffer81}. It also holds for  restrictions of sources and channels to the tail $\sigma$-fields.
\begin{lemma}\label{Lemma2Fontana}
Let $\mu$ and $\eta$ be the distributions of two sources on the same alphabet $A$. Then,  for any channel $[A,\nu,B]$ 
$$
\mu \ll \eta \Rightarrow \mu\nu \ll \eta\nu
$$
\end{lemma}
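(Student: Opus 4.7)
The plan is to reduce the claim to the fact that the hookup can be written as a Fubini-type integral against the kernel, and then exploit absolute continuity pointwise on sections.

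First I would extend the defining identity of the hookup from rectangles to all measurable sets, i.e., establish that for every $F \in \mathcal{B}_{A^\mathcal{I}\times B^\mathcal{I}}$
\[
\mu\nu(F) \;=\; \int_{A^\mathcal{I}} \nu(x, F_x)\, d\mu(x),
\]
where $F_x = \{y \in B^\mathcal{I} : (x,y) \in F\}$ is the section of $F$ at $x$. This is standard for probability kernels on standard alphabets: the class $\mathcal{M}$ of $F$ for which the right-hand side is well defined (i.e.\ $x\mapsto \nu(x,F_x)$ is measurable) and agrees with $\mu\nu(F)$ contains the measurable rectangles by the very definition of $\mu\nu$; it is closed under finite disjoint unions and under monotone limits by dominated/monotone convergence, so a monotone class (equivalently $\pi$-$\lambda$) argument gives $\mathcal{M} = \mathcal{B}_{A^\mathcal{I}\times B^\mathcal{I}}$.

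Given the integral representation, the proof of the first assertion is immediate. Suppose $F \in \mathcal{B}_{A^\mathcal{I}\times B^\mathcal{I}}$ satisfies $\eta\nu(F) = 0$. Then $\int \nu(x,F_x)\,d\eta(x) = 0$ and since the integrand is nonnegative, $\nu(x,F_x) = 0$ for $\eta$-almost every $x$. The set $N = \{x : \nu(x,F_x) > 0\}$ is therefore $\eta$-null, hence by the hypothesis $\mu \ll \eta$ also $\mu$-null, which yields
\[
\mu\nu(F) \;=\; \int \nu(x,F_x)\, d\mu(x) \;=\; 0.
\]
Thus $\mu\nu \ll \eta\nu$.

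For the restriction version it suffices to invoke Lemma \ref{LemmaChannelRestriction}: we have $(\mu\nu)_\infty = \mu_\infty\nu_\infty$ and $(\eta\nu)_\infty = \eta_\infty\nu_\infty$ with $\nu_\infty(x,\cdot)$ the restriction of $\nu(x,\cdot)$ to $(\mathcal{B}_{B^\mathcal{I}})_\infty$, so applying the argument above on the measurable space $(A^\mathcal{I}\times B^\mathcal{I},(\sigma(\mathcal{B}_{A^\mathcal{I}}\times \mathcal{B}_{B^\mathcal{I}}))_\infty)$ with kernel $\nu_\infty$ and base distributions $\mu_\infty \ll \eta_\infty$ produces $(\mu\nu)_\infty \ll (\eta\nu)_\infty$. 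I expect the only nontrivial point in the whole argument to be the section-measurability of $x\mapsto \nu(x,F_x)$ used in the Fubini-type identity; this is the usual monotone class step and is where standardness of the alphabets (ensuring regular conditional probabilities) is implicitly used.
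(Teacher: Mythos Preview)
Your argument is correct. The paper does not actually supply a proof of this lemma: it simply records it as Lemma~2 of \cite{FontanaGrayKieffer81} and remarks that the same statement carries over to the tail-$\sigma$-field restrictions. Your section/Fubini argument is exactly the standard route and is almost certainly what the cited reference does; your handling of the tail version via Lemma~\ref{LemmaChannelRestriction} is also the intended reduction.
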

The following  lemma gathers Lemma 3 and Lemma 4 of \cite{FontanaGrayKieffer81}. It states that "hookup dominance" is equivalent to "channel dominance". It also holds for  restrictions of sources and channels to the tail $\sigma$-fields.
\begin{lemma}\label{LemmaChannelDominanceEquivHookupDominance}
Let $[A,X]$ be a source with distribution $\mu$. Let  $[A,\nu,B]$ and $[A,\nu',B]$ be two arbitrary channels. Then 
$$
\nu(x,.) \ll \nu'(x,.) \text{ }\mu\text{-a.e. } \Leftrightarrow \mu\nu \ll \mu\nu'
$$
\end{lemma}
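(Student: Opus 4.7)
My plan is to prove the two implications separately. The forward direction ($\nu(x,\cdot)\ll \nu'(x,\cdot)$ $\mu$-a.e.\ implies $\mu\nu \ll \mu\nu'$) is the easy one; the reverse direction is the one requiring care, and I expect the main obstacle there to be going from a ``$G$-by-$G$'' dominance statement (with a $\mu$-null set depending on $G$) to a single null set that works for all $G$ simultaneously.

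For the forward direction, I would take $E\in \sigma(\mathcal{B}_{A^\mathcal{I}}\times \mathcal{B}_{B^\mathcal{I}})$ with $\mu\nu'(E)=0$ and use the standardness of the alphabets to disintegrate: writing $E_x = \{y: (x,y)\in E\}$ for the $x$-section (which is measurable by standard measurability-of-sections arguments),
\[
0=\mu\nu'(E)=\int \nu'(x,E_x)\,d\mu(x),
\]
so $\nu'(x,E_x)=0$ for $\mu$-a.e.\ $x$. By the assumed $\mu$-a.e.\ channel dominance, $\nu(x,E_x)=0$ for $\mu$-a.e.\ $x$, and integrating gives $\mu\nu(E)=0$.

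For the reverse direction, I would start by testing hookup dominance against rectangles. Fix any $G\in \mathcal{B}_{B^\mathcal{I}}$ and set $F_G=\{x:\nu'(x,G)=0,\ \nu(x,G)>0\}$; this is measurable in $x$ since $\nu(\cdot,G)$ and $\nu'(\cdot,G)$ are. Then $\mu\nu'(F_G\times G)=\int_{F_G}\nu'(x,G)\,d\mu=0$, so by assumption $\mu\nu(F_G\times G)=0$. But on $F_G$ the integrand $\nu(x,G)$ is strictly positive, forcing $\mu(F_G)=0$. This yields $\mu$-a.e.\ dominance \emph{at the fixed set $G$}, which is the part the hookup hypothesis gives us cleanly.

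The main obstacle is then to upgrade this to a single exceptional $\mu$-null set. Since $(B^\mathcal{I},\mathcal{B}_{B^\mathcal{I}})$ is standard, $\mathcal{B}_{B^\mathcal{I}}$ is countably generated; pick a countable algebra $\mathcal{A}$ generating it and set $N=\bigcup_{G\in\mathcal{A}} F_G$, so that $\mu(N)=0$. To extend the implication ``$\nu'(x,G)=0 \Rightarrow \nu(x,G)=0$'' from $\mathcal{A}$ to all of $\mathcal{B}_{B^\mathcal{I}}$ for each fixed $x\notin N$, I would go through the Radon--Nikodym derivative $h=d(\mu\nu)/d(\mu\nu')$ furnished by the hookup dominance. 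Testing against rectangles $F\times G$ and disintegrating $\mu\nu'$ along $\nu'(x,\cdot)$ yields
\[
\int_F \nu(x,G)\,d\mu(x)=\int_F\!\int_G h(x,y)\,\nu'(x,dy)\,d\mu(x)
\]
for every $G\in\mathcal{B}_{B^\mathcal{I}}$; using the countable generating algebra $\mathcal{A}$, there is a single $\mu$-null set off of which the identity $\nu(x,G)=\int_G h(x,y)\,\nu'(x,dy)$ holds for all $G\in\mathcal{A}$, and a monotone-class / uniqueness-of-extension argument promotes it to all $G\in\mathcal{B}_{B^\mathcal{I}}$. This exhibits $h(x,\cdot)$ as the Radon--Nikodym density of $\nu(x,\cdot)$ with respect to $\nu'(x,\cdot)$ for $\mu$-a.e.\ $x$, which is exactly $\nu(x,\cdot)\ll \nu'(x,\cdot)$ $\mu$-a.e.
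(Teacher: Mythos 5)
Your proof is correct. Note first that the paper itself gives no proof of this lemma: it is stated as a combination of Lemmas 3 and 4 of Fontana, Gray and Kieffer (1981), so there is no internal argument to compare against, and what you have written is a self-contained reconstruction of the standard disintegration argument behind that reference. The forward direction is fine as stated (the identity $\mu\nu(E)=\int\nu(x,E_x)\,d\mu$ for general $E$ in the product $\sigma$-field follows from the rectangle definition by a monotone class argument, which is the only unstated step). In the reverse direction you correctly identify the real difficulty: the rectangle test with $F_G=\{x:\nu'(x,G)=0,\ \nu(x,G)>0\}$ only gives a null set per $G$, and a.e.\ dominance on a countable generating algebra does not by itself imply absolute continuity on the full $\sigma$-field for fixed $x$. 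Your fix --- exhibiting $h=d(\mu\nu)/d(\mu\nu')$, deriving $\nu(x,G)=\int_G h(x,y)\,\nu'(x,dy)$ $\mu$-a.e.\ for each $G$ in a countable generating algebra of the standard space $\mathcal{B}_{B^\mathcal{I}}$, and then using uniqueness of extension for the two finite measures (both have total mass $1$ off the exceptional set, since the whole space lies in the algebra) --- is exactly the right route, and it yields $h(x,\cdot)$ as a version of $d\nu(x,\cdot)/d\nu'(x,\cdot)$ for $\mu$-a.e.\ $x$. The preliminary $F_G$ computation is then logically redundant, but harmless as motivation.
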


The following lemma gives a necessary and sufficient condition for a channel to be AMS w.r.t a stationary source.
\begin{lemma}\label{LemmaNecessarySufficientConditionAMSChannel}
Let $[A,X]$ be a stationary source with distribution $\mu$. A channel $[A,\nu,B]$ is AMS w.r.t. to $\mu$ if and only if there exists a  channel $\overline{\nu}_\mu$ stationary w.r.t. $\mu$ such that $\nu(x,.) \ll^a S\overline{\nu}_\mu(x,.)$ $\mu$-a.e. where $S\overline{\nu}_\mu(x,.)$ is the stationary mean of $\overline{\nu}_\mu(x,.)$. 
\end{lemma}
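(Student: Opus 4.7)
The plan is to prove both implications by the same chain of reductions: translate asymptotic dominance between hookups into asymptotic dominance between the channel kernels at each $x$ (and back), by combining Lemma~\ref{LemmaDominanceOnTailSigmaField}, Lemma~\ref{LemmaChannelRestriction} and Lemma~\ref{LemmaChannelDominanceEquivHookupDominance} applied to the tail $\sigma$-fields.

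For the forward direction, assume $\nu$ is AMS w.r.t.\ the stationary source $\mu$. Since $\overline{\mu}=\mu$, the stationary mean furnishes a channel $\overline{\nu}_\mu$ stationary w.r.t.\ $\mu$ with $\mu\nu\ll^{a}\mu\,\overline{\nu}_\mu$. By Lemma~\ref{LemmaChannelOfStationaryMeans}, $\overline{\nu}_\mu(x,\cdot)$ is AMS $\mu$-a.e., the channel $S\overline{\nu}_\mu$ is stationary w.r.t.\ $\mu$, and $\mu\nu\ll^{a}\mu\,S\overline{\nu}_\mu$. Applying Lemma~\ref{LemmaDominanceOnTailSigmaField} to the two hookups (both now comparable to a stationary probability) yields $(\mu\nu)_\infty\ll(\mu\,S\overline{\nu}_\mu)_\infty$. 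Lemma~\ref{LemmaChannelRestriction} identifies these restrictions with $\mu_\infty\,\nu_\infty$ and $\mu_\infty\,(S\overline{\nu}_\mu)_\infty$ respectively, so Lemma~\ref{LemmaChannelDominanceEquivHookupDominance} (in its tail $\sigma$-field version) gives $\nu(x,\cdot)_\infty\ll S\overline{\nu}_\mu(x,\cdot)_\infty$ for $\mu$-almost every $x$. Because $S\overline{\nu}_\mu(x,\cdot)$ is the stationary mean of $\overline{\nu}_\mu(x,\cdot)$, it is a stationary probability for $\mu$-almost every $x$, so one last application of Lemma~\ref{LemmaDominanceOnTailSigmaField} (pointwise in $x$) upgrades this tail-dominance to $\nu(x,\cdot)\ll^{a}S\overline{\nu}_\mu(x,\cdot)$ $\mu$-a.e., which is the desired conclusion.

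For the reverse direction, suppose $\overline{\nu}_\mu$ is stationary w.r.t.\ $\mu$ and $\nu(x,\cdot)\ll^{a}S\overline{\nu}_\mu(x,\cdot)$ $\mu$-a.e. By Lemma~\ref{LemmaStationaryChannelAMSkernel}, $S\overline{\nu}_\mu$ is stationary w.r.t.\ $\mu$; in particular $S\overline{\nu}_\mu(x,\cdot)$ is stationary $\mu$-a.e. and $\mu\,S\overline{\nu}_\mu$ is stationary. Running the chain of reductions backwards: Lemma~\ref{LemmaDominanceOnTailSigmaField} (pointwise) turns the hypothesis into $\nu(x,\cdot)_\infty\ll S\overline{\nu}_\mu(x,\cdot)_\infty$ $\mu$-a.e., Lemma~\ref{LemmaChannelDominanceEquivHookupDominance} applied on the tail $\sigma$-field gives $\mu_\infty\,\nu_\infty\ll\mu_\infty\,(S\overline{\nu}_\mu)_\infty$, Lemma~\ref{LemmaChannelRestriction} rewrites this as $(\mu\nu)_\infty\ll(\mu\,S\overline{\nu}_\mu)_\infty$, and Lemma~\ref{LemmaDominanceOnTailSigmaField} applied to the stationary hookup $\mu\,S\overline{\nu}_\mu$ concludes $\mu\nu\ll^{a}\mu\,S\overline{\nu}_\mu$. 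Being asymptotically dominated by a stationary probability, $\mu\nu$ is AMS, i.e.\ $\nu$ is AMS w.r.t.\ $\mu$.

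The only delicate point is verifying that the tail-$\sigma$-field versions of Lemmas~\ref{LemmaChannelRestriction} and~\ref{LemmaChannelDominanceEquivHookupDominance} really do apply here; this is the step I expect to require the most care, since Lemma~\ref{LemmaDominanceOnTailSigmaField} needs the dominating measure to be stationary, and we must invoke it twice at different levels (once on the hookups, once pointwise on the kernels). Stationarity of $S\overline{\nu}_\mu(x,\cdot)$ $\mu$-a.e., which comes from identifying it as a stationary mean, is exactly what unlocks the pointwise application and makes the two directions symmetric.
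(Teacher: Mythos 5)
Your proof is correct and follows essentially the same route as the paper's: both directions use Lemma~\ref{LemmaChannelOfStationaryMeans} (resp.\ Lemma~\ref{LemmaStationaryChannelAMSkernel}) to get stationarity of $\mu\,S\overline{\nu}_\mu$, then shuttle between hookup dominance and pointwise kernel dominance via Lemma~\ref{LemmaDominanceOnTailSigmaField}, Lemma~\ref{LemmaChannelRestriction} and the tail version of Lemma~\ref{LemmaChannelDominanceEquivHookupDominance}. Your explicit flagging of the tail-$\sigma$-field identifications is exactly the step the paper handles the same way, so no gap remains.
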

\begin{proof}
Assume that $\nu$ is AMS w.r.t. $\mu$. Then there exists a stationary channel $\overline{\nu}_\mu$ such that $\mu\nu \ll^a \mu\overline{\nu}_\mu$ and (thanks to Lemma \ref{LemmaChannelOfStationaryMeans}, p. \pageref{LemmaChannelOfStationaryMeans}) such that $\mu\nu \ll^a \mu S\overline{\nu}_\mu$. Then, since $\mu S\overline{\nu}_\mu$ is stationary, thanks to Lemma \ref{LemmaDominanceOnTailSigmaField} (p. \pageref{LemmaDominanceOnTailSigmaField}), $\mu_\infty\nu_\infty \ll \mu_\infty(S\overline{\nu}_\mu)_\infty$. By Lemma \ref{LemmaChannelDominanceEquivHookupDominance} (p. \pageref{LemmaChannelDominanceEquivHookupDominance}), $(\nu(x,.))_\infty \ll (S\overline{\nu}_\mu(x,.))_\infty$ $\mu$-a.e. Since $S\overline{\nu}_\mu(x,.)$ is stationary, by Lemma \ref{LemmaDominanceOnTailSigmaField} (p. \pageref{LemmaDominanceOnTailSigmaField}), $\nu(x,.) \ll^a S\overline{\nu}_\mu(x,.)$ $\mu$-a.e.

Assume that $\nu(x,.) \ll^a S\overline{\nu}_\mu(x,.)$ $\mu$-a.e.. Then,  by Lemma \ref{LemmaDominanceOnTailSigmaField} (p. \pageref{LemmaDominanceOnTailSigmaField}), $(\nu(x,.))_\infty \ll (S\overline{\nu}_\mu(x,.))_\infty$ $\mu$-a.e. By Lemma \ref{LemmaChannelDominanceEquivHookupDominance} (p. \pageref{LemmaChannelDominanceEquivHookupDominance}), $\mu_\infty\nu_\infty \ll \mu_\infty(S\overline{\nu}_\mu)_\infty$. Since $\mu S\overline{\nu}_\mu$ is stationary (thanks to Lemma \ref{LemmaStationaryChannelAMSkernel}, p. \pageref{LemmaStationaryChannelAMSkernel}), by Lemma \ref{LemmaDominanceOnTailSigmaField} (p. \pageref{LemmaDominanceOnTailSigmaField}), $\mu\nu \ll^a \mu S\overline{\nu}_\mu$.  Then $\mu\nu$ is AMS.
\end{proof}

Lemma \ref{LemmaNecessarySufficientConditionAMSChannel} can also be stated as:
\begin{lemma}\label{LemmaNecessarySufficientConditionAMSChannel2}
Let $[A,X]$ be a stationary source with distribution $\mu$. A channel $[A,\nu,B]$ is AMS w.r.t. to $\mu$ if and only if $\nu(x,.)$ is AMS $\mu$-a.e.
\end{lemma}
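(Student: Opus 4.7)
The forward direction follows directly from the two preceding lemmas. If $\nu$ is AMS w.r.t.\ $\mu$, Lemma~\ref{LemmaNecessarySufficientConditionAMSChannel} produces a channel $\overline{\nu}_\mu$ stationary w.r.t.\ $\mu$ with $\nu(x,.) \ll^a S\overline{\nu}_\mu(x,.)$ $\mu$-a.e.; Lemma~\ref{LemmaStationaryChannelAMSkernel} then guarantees that $S\overline{\nu}_\mu(x,.)$ is a stationary probability on $B^\mathcal{I}$ for $\mu$-a.e.\ $x$. Asymptotic dominance by a stationary probability is exactly the AMS property for probabilities (as recalled in Section~\ref{SectionSources}), so $\nu(x,.)$ is AMS $\mu$-a.e.

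For the converse my plan is again to invoke Lemma~\ref{LemmaNecessarySufficientConditionAMSChannel}: I will exhibit a channel $\overline{\nu}_\mu$ stationary w.r.t.\ $\mu$ such that $\nu(x,.) \ll^a S\overline{\nu}_\mu(x,.)$ $\mu$-a.e. Let $\tilde{\nu}(x,G) = \lim_n \frac{1}{n}\sum_{i=0}^{n-1}\nu(x,T_B^{-i}G)$ denote the pointwise stationary mean of $\nu(x,.)$; by hypothesis it exists, $\nu(x,.) \ll^a \tilde{\nu}(x,.)$ and $\tilde{\nu}(x,T_B^{-1}G) = \tilde{\nu}(x,G)$, all $\mu$-a.e. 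The naive choice $\overline{\nu}_\mu = \tilde{\nu}$ is not in general stationary w.r.t.\ $\mu$ in the sense of Lemma~\ref{LemmaStationaryChannel}, because there is no reason for $\tilde{\nu}(T_A x,G) = \tilde{\nu}(x,G)$ $\mu$-a.e.

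To force the missing $T_A$-invariance I would symmetrize by conditioning on the $T_A$-invariant $\sigma$-field $\mathcal{I}_A$ of $(A^\mathcal{I}, \mathcal{B}_{A^\mathcal{I}}, T_A, \mu)$, setting
$$
\overline{\nu}_\mu(x,G) = E_\mu[\,\tilde{\nu}(\cdot,G) \mid \mathcal{I}_A\,](x),
$$
picked as a regular version, which exists because the alphabets are standard. Then $\overline{\nu}_\mu(\cdot,G)$ is $T_A$-invariant $\mu$-a.e., so $\overline{\nu}_\mu(T_A x,G) = \overline{\nu}_\mu(x,G)$ $\mu$-a.e.; and since $\tilde{\nu}(\cdot,T_B^{-1}G) = \tilde{\nu}(\cdot,G)$ $\mu$-a.e., the conditional expectations agree, giving $\overline{\nu}_\mu(x,T_B^{-1}G) = \overline{\nu}_\mu(x,G)$. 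Combining these two equalities yields the criterion $\overline{\nu}_\mu(x,T_B^{-1}G) = \overline{\nu}_\mu(T_A x,G)$ of Lemma~\ref{LemmaStationaryChannel}, so $\overline{\nu}_\mu$ is stationary w.r.t.\ $\mu$; and because each fiber $\overline{\nu}_\mu(x,.)$ is itself stationary, $S\overline{\nu}_\mu(x,.) = \overline{\nu}_\mu(x,.)$ $\mu$-a.e.

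What remains is $\nu(x,.) \ll^a \overline{\nu}_\mu(x,.)$ $\mu$-a.e. Since $\tilde{\nu}(x,.)$ and $\overline{\nu}_\mu(x,.)$ are both stationary, asymptotic dominance between them coincides with ordinary dominance, and transitivity of $\ll^a$ through $\ll$ reduces the goal to $\tilde{\nu}(x,.) \ll \overline{\nu}_\mu(x,.)$ $\mu$-a.e. Fix $G$: the set $A_G = \{x : \overline{\nu}_\mu(x,G) = 0\}$ belongs to $\mathcal{I}_A$, and the defining property of conditional expectation forces $\int_{A_G}\tilde{\nu}(y,G)\,d\mu = \int_{A_G}\overline{\nu}_\mu(y,G)\,d\mu = 0$, so $\tilde{\nu}(x,G) = 0$ for $\mu$-a.e.\ $x\in A_G$. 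The main obstacle, and the step that requires the most measure-theoretic care, is to upgrade this for-each-$G$ null statement into the uniform-in-$G$ statement $\tilde{\nu}(x,.) \ll \overline{\nu}_\mu(x,.)$ $\mu$-a.e.: I would fix a countable algebra generating $\mathcal{B}_{B^\mathcal{I}}$ (available because $B$ is standard), discard the countable union of exceptional $\mu$-null sets, and extend dominance from the generating algebra to the full $\sigma$-field by a monotone-class argument applied to the regular version of $\overline{\nu}_\mu(x,.)$. Once $\nu(x,.) \ll^a \overline{\nu}_\mu(x,.)$ $\mu$-a.e.\ is in hand, Lemma~\ref{LemmaNecessarySufficientConditionAMSChannel} concludes that $\nu$ is AMS w.r.t.\ $\mu$.
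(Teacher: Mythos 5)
Your forward direction follows the route the paper intends (Lemma~\ref{LemmaNecessarySufficientConditionAMSChannel} combined with Lemma~\ref{LemmaStationaryChannelAMSkernel}), and you are also right that the converse is not a mere restatement: the fiberwise stationary mean $S\nu$ has no reason to satisfy $S\nu(T_Ax,G)=S\nu(x,G)$, so it need not be a channel stationary w.r.t.\ $\mu$. The problem is that your repair does not work, and the failure is not the measure-theoretic upgrade you flag as delicate: the intermediate statement you aim at is false. Take $A=B=\{0,1\}$, $\mu$ the i.i.d.\ fair-coin measure (stationary and ergodic), $q(x)=\sum_{i\geq 0}x_i2^{-(i+1)}$, and $\nu(x,\cdot)=\mathrm{Bern}(q(x))^{\otimes\mathbb{N}}$, the i.i.d.\ Bernoulli$(q(x))$ measure on $B^{\mathbb{N}}$. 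Every fiber is stationary, hence AMS, so the hypothesis of the converse holds and $\tilde{\nu}=\nu$. Since $\mu$ is ergodic, the $T_A$-invariant $\sigma$-field is $\mu$-trivial, so your $\overline{\nu}_\mu(x,\cdot)$ equals the mixture $m=\int\nu(z,\cdot)\,d\mu(z)=\int_0^1\mathrm{Bern}(c)^{\otimes\mathbb{N}}dc$ for $\mu$-a.e.\ $x$. But the shift-invariant set $L_{q(x)}=\{y:\lim_n n^{-1}\sum_{i<n}y_i=q(x)\}$ satisfies $\nu(x,L_{q(x)})=1$ while $m(L_{q(x)})=0$ (the mixing distribution is non-atomic); invariance of $L_{q(x)}$ then rules out not only $\tilde{\nu}(x,\cdot)\ll\overline{\nu}_\mu(x,\cdot)$ but also the weaker $\nu(x,\cdot)\ll^a S\overline{\nu}_\mu(x,\cdot)$, for $\mu$-a.e.\ $x$. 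So the channel you construct never verifies the hypothesis of Lemma~\ref{LemmaNecessarySufficientConditionAMSChannel}, and the converse collapses. The structural reason is that conditioning on the invariant $\sigma$-field averages the fiber stationary means over an ergodic component, and an individual fiber is typically carried by an invariant set that is null for that average; averaging destroys exactly the fiberwise dominance you need.

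Two further points. Even in cases where the target dominance happens to hold, the proposed extension device is unsound: absolute continuity verified set-by-set on a countable generating algebra does not propagate to the generated $\sigma$-field (mutually singular measures may share no null sets from a generating algebra), so no monotone-class argument can close that step; one needs densities or a uniform $\epsilon$--$\delta$ form of absolute continuity. For comparison with the paper: the paper gives no separate proof of this lemma, presenting it as a rephrasing of Lemma~\ref{LemmaNecessarySufficientConditionAMSChannel}; your per-$G$ conditional-expectation computation is correct as far as it goes, but the symmetrization over the invariant $\sigma$-field is not a viable way to supply the missing stationary channel, so the converse remains unproved in your write-up.
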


\subsection{Channel product}

Given two channels with the same input alphabet, it is possible to define a product of channels as follows.
\begin{definition}\label{DefinitionChannelProduct}
Let $[A,\nu_1,B]$ and $[A,\nu_2,C]$ be two channels. The {\em product} of $\nu_1$ and $\nu_2$ is the channel $[A,\nu_1\times \nu_2,B\times C]$ defined by, $\forall x \in A^\mathcal{I}$, $\forall G\in \mathcal{B}_{B^\mathcal{I}}$ and $\forall H\in \mathcal{B}_{C^\mathcal{I}}$:
$$
\nu_1\times \nu_2(x, G \times H)=\nu_1(x,G).\nu_2(x,H)
$$
\end{definition}
Remark: if $[A,X]$ is an input source to the channel product $[A,\nu_1\times \nu_2,B\times C]$ and $[B \times C , (Y,Z)]$ the corresponding output source, then $Y \to X \to Z$ is a Markov chain.

The following lemma states that the product of  channel restrictions to the tail $\sigma$-fields is the restriction of the channel product.

\begin{lemma}\label{LemmaProductOfTailIsTailOfProduct}
Let $[A,\nu_1,B]$ and $[A,\nu_2,C]$ be two channels. Then, for any $x\in A^\mathcal{I}$, $(\nu_1\times \nu_2(x,.))_\infty = (\nu_1(x,.))_\infty \times (\nu_2(x,.))_\infty$ i.e. the restriction of $[A,\nu_1\times \nu_2,B\times C]$ to the tail $\sigma$-field $\left( \sigma\left( \mathcal{B}_{B^\mathcal{I}} \times \mathcal{B}_{C^\mathcal{I}} \right) \right)_\infty$ is the probability kernel  $[A,(\nu_1)_\infty \times (\nu_2)_\infty,B\times C]$:
$$
(\nu_1\times \nu_2)_\infty = (\nu_1)_\infty \times (\nu_2)_\infty
$$
\end{lemma}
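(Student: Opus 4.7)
The plan is to reduce the identity to checking equality of two probability measures on the same $\sigma$-field by verifying they coincide on a generating $\pi$-system of measurable rectangles. Fix $x\in A^\mathcal{I}$ throughout; the claim is really about the two probabilities $(\nu_1\times\nu_2)(x,\cdot)_\infty$ and $(\nu_1)_\infty(x,\cdot)\times(\nu_2)_\infty(x,\cdot)$ on the space $B^\mathcal{I}\times C^\mathcal{I}$.

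First I would invoke the reasoning already used for Lemma \ref{LemmaChannelRestriction} applied to the shift $T_{BC}$: the commutation between countable intersection and the $\sigma$-operation (which holds because $T_{BC}^{-(i+1)}(\mathcal{B}_{B^\mathcal{I}}\times\mathcal{B}_{C^\mathcal{I}})\subset T_{BC}^{-i}(\mathcal{B}_{B^\mathcal{I}}\times\mathcal{B}_{C^\mathcal{I}})$) yields
$$
\bigl(\sigma(\mathcal{B}_{B^\mathcal{I}}\times\mathcal{B}_{C^\mathcal{I}})\bigr)_\infty = \sigma\bigl((\mathcal{B}_{B^\mathcal{I}})_\infty\times(\mathcal{B}_{C^\mathcal{I}})_\infty\bigr).
$$
This ensures both sides of the desired identity are defined on exactly the same $\sigma$-field, so their equality is a meaningful statement.

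Next I would check agreement on the generating semi-algebra of rectangles $G\times H$ with $G\in(\mathcal{B}_{B^\mathcal{I}})_\infty$ and $H\in(\mathcal{B}_{C^\mathcal{I}})_\infty$. Since $(\mathcal{B}_{B^\mathcal{I}})_\infty\subset\mathcal{B}_{B^\mathcal{I}}$ and similarly for $C$, such a rectangle is a legitimate measurable rectangle in $\sigma(\mathcal{B}_{B^\mathcal{I}}\times\mathcal{B}_{C^\mathcal{I}})$, so Definition \ref{DefinitionChannelProduct} gives
$$
(\nu_1\times\nu_2)_\infty(x,G\times H)=(\nu_1\times\nu_2)(x,G\times H)=\nu_1(x,G)\,\nu_2(x,H),
$$
and this equals $(\nu_1)_\infty(x,G)\,(\nu_2)_\infty(x,H)=\bigl((\nu_1)_\infty(x,\cdot)\times(\nu_2)_\infty(x,\cdot)\bigr)(G\times H)$ by definition of the restrictions and of the product measure on a rectangle.

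Finally, since the class of such rectangles is a $\pi$-system generating $\sigma((\mathcal{B}_{B^\mathcal{I}})_\infty\times(\mathcal{B}_{C^\mathcal{I}})_\infty)$ and both measures in question are probabilities, I would conclude by the standard uniqueness-of-extension argument (as already used at the end of the proof of Lemma \ref{LemmaChannelRestriction}) that the two probabilities coincide on the entire tail $\sigma$-field, which is the asserted identity. There is no substantive obstacle here: the previous lemma has already done the $\sigma$-field-identification work, and the remainder is a routine rectangles-to-$\sigma$-field extension.
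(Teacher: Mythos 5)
Your proposal is correct and follows essentially the same route as the paper's proof: agreement of the two probabilities on the semi-algebra of rectangles $G\times H$ with $G\in(\mathcal{B}_{B^\mathcal{I}})_\infty$, $H\in(\mathcal{B}_{C^\mathcal{I}})_\infty$, extension to the generated $\sigma$-field by uniqueness, and identification of that $\sigma$-field with $\bigl(\sigma(\mathcal{B}_{B^\mathcal{I}}\times\mathcal{B}_{C^\mathcal{I}})\bigr)_\infty$ via Lemma~\ref{LemmaChannelRestriction}. You merely spell out the rectangle computation and the $\pi$-system argument a bit more explicitly than the paper does.
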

\begin{proof}
For any $x\in A^\mathcal{I}$, $(\nu_1\times \nu_2(x,.))_\infty$ and $(\nu_1(x,.))_\infty \times (\nu_2(x,.))_\infty$ coincide on the semi-algebra of rectangles $G\times H \in \left( \mathcal{B}_{B^\mathcal{I}}\right)_\infty \times \left( \mathcal{B}_{C^\mathcal{I}} \right)_\infty$. Then they coincide on $\sigma \left( \left( \mathcal{B}_{B^\mathcal{I}}\right)_\infty \times \left( \mathcal{B}_{C^\mathcal{I}} \right)_\infty \right)$. 

Thanks to Lemma \ref{LemmaChannelRestriction} (p. \pageref{LemmaChannelRestriction}), $\sigma \left( \left( \mathcal{B}_{B^\mathcal{I}}\right)_\infty \times \left( \mathcal{B}_{C^\mathcal{I}} \right)_\infty \right) = \left( \sigma \left( \mathcal{B}_{B^\mathcal{I}} \times  \mathcal{B}_{C^\mathcal{I}} \right) \right)_\infty $
\end{proof}

\pagebreak 

\part{Capacity of a noisy computation}\label{PartCapacity}

\pagebreak

\section{Model for noisy computations \label{SectionModelForNoisyComputations}}

The model of noisy computation proposed in this section is not linked to a noisy version of a peculiar formal computational model (e.g., boolean circuits made of noisy boolean gates, \cite{VonNeumann56}). Functions considered here are "general" measurable functions from a sequence space to a sequence space and their noisy realizations are viewed as random channels. Section \ref{SectionNoisyComputationsAndComputableFunctions} is devoted to the question of the instantiation of the noisy computation model to formal computational models and to their noisy versions.

The  input  and output sequences of a computation will be modeled as random processes with values in sequence spaces built from measurable alphabets. Since the purpose is to handle computations which could be processes evolving from an initial state,  the time shifts will not be assumed invertible. The processes considered are thus one-sided. The alphabets $(A,\mathcal{B}_A)$, $(B,\mathcal{B}_B)$ and $(C,\mathcal{B}_C)$, considered below, are assumed to be standard measurable spaces.
Let $(A^\mathcal{I}, \mathcal{B}_{A^\mathcal{I}})$, $(B^\mathcal{I}, \mathcal{B}_{B^\mathcal{I}})$ and $(C^\mathcal{I}, \mathcal{B}_{C^\mathcal{I}})$ be the corresponding  measurable sequence spaces. Let $[A,X]$ be a source. 

A noisy computing device is modeled as a noisy channel $[A,\nu,C]$: given an input sequence $x$, the output sequence $z$ is not deterministically determined but, due to possible errors, randomly produced. The noisy computing device  takes $X$ as input and produces as an output a random process $Z$ on $(C^\mathcal{I}, \mathcal{B}_{C^\mathcal{I}})$. The hookup $P_{XZ}= P_X\nu$ represents the actual noisy calculation of the device operating on a given input flow represented by $X$.

 In the sequel, the noisy channel/computing device will  be denoted by $F$ and, with an abuse of notation, the output $Z$ by $F(X)$. Thus $\nu(x,.)$ will be denoted by $P_{F(X)|X}(.|x)$.

The perfect computation is represented by a $(\mathcal{B}_{A^\mathcal{I}},\mathcal{B}_{B^\mathcal{I}})$-measurable function  $f :  A^\mathcal{I}   \rightarrow  B^\mathcal{I}$ (the expected function).  $f$ defines a deterministic channel  $\left\{   P_{f(X)|X}(. |x), x\in A^\mathcal{I}\right\}$ such that, $\forall G \in \mathcal{B}_{B^\mathcal{I}}$,  $P_{f(X)|X}(G|x) = 1_{f^{-1}(G)}(x)$ $P_X$-a.e. (\cite{Gray11}) where $1_E$ denotes the characteristic function of a set $E$ .

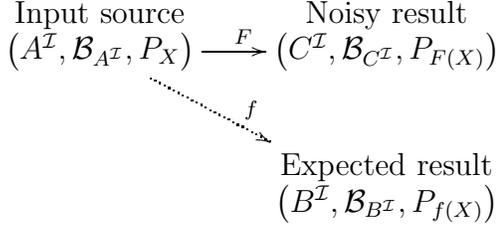
\begin{figure}[htbp]
	\begin{center}
	\xymatrix{
    		\overset{\txt{Input source }}{ \left(  A^{\mathcal{I}} , \mathcal{B}_{A^\mathcal{I}}, P_X  \right) }	\ar  [r] ^F 	\ar@{.>} [dr] ^f &   \overset{\txt{Noisy result  }}{\left( C^{\mathcal{I}} , \mathcal{B}_{C^\mathcal{I}}, P_{F(X)} \right)} \\
																		&   \overset{\txt{Expected result}}{\left( B^{\mathcal{I}} , \mathcal{B}_{B^\mathcal{I}}, P_{f(X)} \right)} 
 	 }
	\end{center}
	\caption{Model for Noisy Computation} \label{FigNoisyComputationModel}
\end{figure}

Since $f$ is a function, $F(X) \rightarrow X \rightarrow f(X)$ is a Markov chain. Then the measurable  functions $x \mapsto P_{f(X)F(X)}(E|x)$, $E  \in \mathcal{B}_{B^\mathcal{I}\times C^\mathcal{I}}$, are such that, $\forall G  \in \mathcal{B}_{B^\mathcal{I}}$ and $\forall H  \in \mathcal{B}_{C^\mathcal{I}}$:
$$
P_{f(X)F(X)|X}(G\times H|x)= P_{f(X)|X}(G|x).P_{F(X)|X}(H|x)
$$
The collection of probabilities $\left\{ P_{f(X)F(X)}(.|x), x \in A^\mathcal{I}\right\}$ defines a channel product which will be denoted $[A,f\times F,B\times C]$ and called the noisy computation of $f$ by $F$.  

\begin{definition}\label{DefinitionNoisyComputation}
 Let $\mathcal{I}$ be a countable index set. Let $(A^\mathcal{I}, \mathcal{B}_{A^\mathcal{I}})$, $(B^\mathcal{I}, \mathcal{B}_{B^\mathcal{I}})$ and $(C^\mathcal{I}, \mathcal{B}_{C^\mathcal{I}})$  be standard sequence spaces. Let $f: A^\mathcal{I} \rightarrow B^\mathcal{I}$ be a $(\mathcal{B}_{A^\mathcal{I}},\mathcal{B}_{B^\mathcal{I}})$-measurable function and $[A,F,C]$ a channel. The channel product $[A,f\times F,B\times C]$ is called the {\em noisy computation} of $f$ by $F$. $[[A,X];[A,f\times F,B\times C]]$ is called the {\em  hookup} of the source $[A,X]$ and of the noisy computation $[A,f\times F,B\times C]$. It determines the process $(X,f(X),F(X))$.
\end{definition}

Asymptotic mean stationarity (or stationarity) and ergodicity of processes are key properties for the Asymptotic Equipartition Property (AEP) to hold  (\cite{Girardin05}). Thus coding theorems generally rely on these properties. Asymptotic mean stationarity and ergodicity will be also fundamental properties for noisy computations.

\begin{definition}
A noisy computation $[A,f\times F,B\times C]$ is stationary (respectively AMS) if, for any stationary (respectively AMS) source $[A,X]$, the processes $(X,f(X))$, $(X,F(X))$ and $(f(X),F(X))$ are stationary (respectively AMS).
\end{definition}

\begin{definition}
A stationary (resp. AMS) noisy computation $[A,f\times F,B\times C]$ is ergodic if, for any stationary (resp. AMS) and ergodic source $[A,X]$, the processes $(X,f(X))$, $(X,F(X))$ and $(f(X),F(X))$ are ergodic.
\end{definition}

The following lemma is a straightforward application of the conditional AEP to the special case where $Y=f(X)$.
\begin{lemma}\label{LemmaConditionAEPForFunctions}
Let $A$ and $B$ be two finite alphabets. Let $f: A^\mathcal{I} \rightarrow B^\mathcal{I}$ be a $(\mathcal{B}_{A^\mathcal{I}},\mathcal{B}_{B^\mathcal{I}})$-measurable function such that $[A,f,B]$ is an ergodic AMS channel. Let $[A,X]$  be an ergodic AMS input source and $[B,Y]$ the corresponding output source.  For $\epsilon>0$ and $y\in B^\mathcal{I}$, let $A^n(\epsilon,y^n)$ be the set of $\epsilon$-typical sequences given $y^n$ such that: 
$$
A^n(\epsilon,y^n)= \left\{ x^n \in (f^{-1}(y))^n / \left|  \frac{-log(P_{X^n|Y^n}(x^n|y^n))}{n} - \overline{H}(X^n|Y^n) \right| < \epsilon \right\}
$$
where $(f^{-1}(y))^n$ is the set of prefixes (of length $n$) of inverse images of $y$ by $f$.

Then:
\begin{enumerate}
	\item $\lim_{n\rightarrow \infty} P_{X^n|Y^n}(A^n(\epsilon,y^n)|y^n) = 1\text{   }P_Y -a.e.$
	\item for any $\delta>0$ and any $\epsilon >0$, for $n$ large enough:
	$$
	(1-\delta) e^{n(\overline{H}(X|Y)-\epsilon)} \leq |A^n(\epsilon,y^n)| \leq e^{n(\overline{H}(X|Y)+\epsilon)}  \text{  }P^n_Y-a.e.
	$$
\end{enumerate}
\end{lemma}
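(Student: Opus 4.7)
The plan is to invoke the Conditional AEP (Theorem \ref{TheoremConditionalAEP}) with the output source specialized to $Y = f(X)$ and verify that its hypotheses are met in our setting. The lemma's assumptions say that $[A,X]$ is an ergodic AMS source and that the deterministic channel $[A,f,B]$ is an ergodic AMS channel. Hence, by the definition of AMS/ergodic channels, the hookup $(X,f(X))$ is AMS and ergodic, so by Lemma 8.1 of \cite{Gray11}, the information rate $\overline{I}(X;f(X))$ exists as a genuine limit, and therefore the conditional entropy rate $\overline{H}(X|Y)=\overline{H}(X)-\overline{I}(X;f(X))$ is well defined. Thus every hypothesis of Theorem \ref{TheoremConditionalAEP} is satisfied, and it produces, for each $\epsilon>0$, a typical set $\tilde{A}^n(\epsilon,y^n)\subseteq A^n$ defined by the entropy condition alone, with $\lim_{n\to\infty}P_{X^n|Y^n}(\tilde{A}^n(\epsilon,y^n)|y^n)=1$ and with the exponential cardinality bounds holding $P_Y$-almost-everywhere.

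Next, I would argue that intersecting $\tilde{A}^n(\epsilon,y^n)$ with $(f^{-1}(y))^n$ yields the same set up to null events under $P_{X^n|Y^n}(\cdot|y^n)$. The crucial observation is that $Y=f(X)$ is deterministic: the joint distribution $P_{X^nY^n}$ is supported on pairs $(x^n,y^n)$ such that $x^n$ is the $n$-prefix of at least one $x\in A^{\mathcal{I}}$ with $f(x)^n=y^n$. For $P_Y$-almost every $y$, the conditional distribution $P_{X^n|Y^n}(\cdot|y^n)$ is the image under prefix truncation of $P_{X|Y}(\cdot|y)$, and the latter is concentrated on $f^{-1}(y)$ since $f$ is a function. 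Consequently $P_{X^n|Y^n}(\cdot|y^n)$ is supported on $(f^{-1}(y))^n$, $P_Y$-almost-everywhere, so the restriction built into the statement of the lemma removes only zero-probability sequences from $\tilde{A}^n(\epsilon,y^n)$.

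Given this, both conclusions transfer verbatim from Theorem \ref{TheoremConditionalAEP}: the probability statement $\lim_{n\to\infty}P_{X^n|Y^n}(A^n(\epsilon,y^n)|y^n)=1$ holds $P_Y$-almost-everywhere because we only removed a $P_{X^n|Y^n}$-null set, and the upper cardinality bound is inherited since $A^n(\epsilon,y^n)\subseteq \tilde{A}^n(\epsilon,y^n)$. For the lower bound, the same "removal of a null set" argument shows that the probability mass $(1-\delta)$ carried by $\tilde{A}^n(\epsilon,y^n)$ is in fact carried by $A^n(\epsilon,y^n)$, and combining this with the per-element upper bound $e^{-n(\overline{H}(X|Y)-\epsilon)}$ from the entropy condition forces $|A^n(\epsilon,y^n)|\geq(1-\delta)e^{n(\overline{H}(X|Y)-\epsilon)}$.

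The main obstacle I anticipate is precisely the bookkeeping hinted at above: the notation $(f^{-1}(y))^n$ refers to $n$-prefixes of exact preimages of the entire sequence $y$, whereas the natural support of $P_{X^n|Y^n}(\cdot|y^n)$ a priori consists of prefixes of sequences whose $f$-image merely agrees with $y$ on its first $n$ coordinates. Reconciling these two descriptions for $P_Y$-almost every $y$, via the regular conditional probability $P_{X|Y}(\cdot|y)$ concentrated on $f^{-1}(y)$ (which requires the standard alphabet hypothesis already invoked earlier), is the only nontrivial measure-theoretic step; once this is in place, the rest is a transcription of Theorem \ref{TheoremConditionalAEP}.
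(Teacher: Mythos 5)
Your overall strategy is the same as the paper's: specialize the conditional AEP (Theorem \ref{TheoremConditionalAEP}) to $Y=f(X)$ (the hypotheses are indeed met, since the channel $[A,f,B]$ and the source are AMS and ergodic, so $\overline{H}(X|Y)$ is well defined), and then argue that restricting the conditionally typical set to $(f^{-1}(y))^n$ is asymptotically harmless; your transfer of the two conclusions (upper bound by inclusion, lower bound from the surviving conditional mass together with the per-element bound $e^{-n(\overline{H}(X|Y)-\epsilon)}$) is fine. The genuine gap is in the one step you yourself flag as nontrivial. Your claim that $P_{X^n|Y^n}(\cdot\,|y^n)$ is the prefix-truncation of $P_{X|Y}(\cdot\,|y)$, hence exactly supported on $(f^{-1}(y))^n$ up to null sets, is false in general: conditioning on $Y^n=y^n$ conditions only on the output cylinder $c(y^n)$, so $P_{X^n|Y^n}(\cdot\,|y^n)$ averages the conditional laws over \emph{all} output sequences $y'$ with $y'^n=y^n$, and it can charge prefixes of preimages of such $y'\neq y$ which are not prefixes of any exact preimage of the full sequence $y$. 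So ``the restriction removes only zero-probability sequences'' is not a valid description of the step, and the proposed reconciliation via the regular conditional probability $P_{X|Y}(\cdot\,|y)$ conflates conditioning on $Y$ with conditioning on $Y^n$.

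What is actually needed --- and what the paper's proof invokes --- is only the asymptotic statement $\lim_{n\to\infty}P_{X^n|Y^n}\bigl((f^{-1}(y))^n\,\big|\,y^n\bigr)=1$ for $P_Y$-almost every $y$; with it, your inclusion and mass arguments go through verbatim, since the removed conditional mass is $o(1)$ and for any $\delta>0$ the set $A^n(\epsilon,y^n)$ eventually carries mass at least $1-\delta$. You neither state nor establish this weaker fact; the exact-support identity you substitute for it does not hold, so the proof as written has a hole precisely there. To close it one must argue the limit directly: for instance, since $(f^{-1}(y))^n$ contains all $n$-prefixes of $f^{-1}(y)$, one has $P_{X^n|Y^n}((f^{-1}(y))^n|y^n)\ge P_Y(\{y\})/P_Y(c(y^n))\to 1$ whenever $y$ is a $P_Y$-atom, and the non-atomic part requires a martingale-type convergence argument (or an extra structural hypothesis on $f$ such as weak causality); note that the paper itself asserts this limit without proof, so spelling it out would in fact go beyond what the published argument records.
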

\begin{proof}
The lemma follows immediately from the conditional AEP and from the fact that
$$
\lim_{n\to \infty} P_{X^n|Y^n}(\{ x^n \in  (f^{-1}(y))^n\}|y^n)=1
$$
\end{proof}

It should be noticed that  $\overline{H}(f(X)|X)=0$ and then
$$
\overline{H}(X)=\overline{H}(X|f(X))+\overline{H}(f(X))
$$
and
$$
\overline{I}(X,f(X)) = \overline{H}(f(X))
$$

Let $f$ be a measurable function  such that $f^n : x^n \mapsto (f(x))^n$ is a $(\mathcal{B}_{A^n},\mathcal{B}_{B^n} )$-measurable function for any large $n$. This means that $\forall x, x' \in A^\mathcal{I}$ such that $x^n=x'^n$,  $(f(x))^n=(f(x'))^n$. In other words $(f^{-1}(y))^n=(f^n)^{-1}(y^n)$, for large $n$. Such functions will be called weakly causal functions.

\begin{definition}
A $(\mathcal{B}_{A^\mathcal{I}},\mathcal{B}_{B^\mathcal{I}})$-measurable function $f:A^\mathcal{I} \rightarrow B^\mathcal{I}$ is weakly causal if there exists $N_f \in \mathbb{N}$ such that $\forall n\geq N_f$, $\forall x, x' \in A^\mathcal{I}$ such that $x^n=x'^n$, then $(f(x))^n=(f(x'))^n$ and $f^n : x^n \mapsto (f(x))^n$ is a $(\mathcal{B}_{A^n},\mathcal{B}_{B^n} )$-measurable function. 
\end{definition}

\pagebreak 

\section{Typical input Rate of a Hookup and Typical Input Capacity of a Noisy Computation \label{SectionTypicalInputRateTypicalInputCapacity}}

In this section, the definition of the typical  input rate of a hookup and the definition of the typical input capacity of a noisy computation are given. In a next section, this rate will be shown to characterize the rate at which a source should produce typical inputs for a random channel in order to allow decodability of the output. Decodability means ability to recover the desired function results by decoding. Assumptions and notations used throughout this section are:
 \begin{itemize}
	\item the alphabets $A$, $B$ and $C$ are assumed finite, the $\sigma$-fields $\mathcal{B}_A$, $\mathcal{B}_B$  and $\mathcal{B}_C$ are the sets of subsets of the alphabets $A$, $B$ and $C$
	\item the noisy computation $[A,f \times F,B\times C]$ and the source $[A,X]$ are assumed AMS and ergodic
	\item the function $f$ is assumed weakly causal.
\end{itemize}

Defining a notion of capacity for noisy computations, following the classical approach for communication channels,  boils down to characterizing the maximum number of {\em input} $n$-sequences that can be selected in order to allow an asymptotically perfect correction/decoding process. The definition of capacity proposed below can be informally justified as follows.  

Assume that only sequences $x^n$ conditionally typical given $y^n$, typical sequence of $f^n(X^n)$, are allowed as inputs ($n$ large). Assume also that only typical sequences $z^n$ of $F^n(X^n)$ which are conditionally typical given any allowed input $x^n$ ($x^n$ conditionally typical given a typical $y^n$) are retained for decoding, the others are simply ignored. Then such a $z^n$ is conditionally typical given $y^n$. 

Hence, the set of possible output values given $x^n\in (f^n)^{-1}(y^n)$ submitted to decoding has ``almost''  $e^{n\overline{H}(F(X)|f(X))}$ elements (thanks to the conditional AEP). Thus the number of non-overlapping such sets built from the set of  $e^{n\overline{H}(F(X))}$ typical sequences $z^n$ is at most $e^{n\overline{H}(F(X))}/e^{n(\overline{H}(F(X)|f(X)))}$. 

Each inverse image $(f^n)^{-1}(y^n)$ offers $e^{n\overline{H}(X|f(X))}$  different possible input $n$-sequences. So among the $e^{n\overline{H}(X)}$ possible input $n$ sequences,  at most $e^{n(\overline{H}(X|f(X))+\overline{H}(F(X))-\overline{H}(F(X)|f(X)))}$ of them can be used in order to reach unambiguous decodability. 

Since $\overline{H}(F(X))-\overline{H}(F(X)|f(X)))=\overline{I}(F(X),f(X))$, this leads to define  the capacity of the noisy computation $[A,f \times F,B\times C]$ as:
$$
C_f(F)  =  \sup_X [ \overline{H}(X|f(X)) + \overline{I}(F(X),f(X))] \nonumber
$$
$\overline{I}(F(X),f(X))=\overline{H}(f(X))-\overline{H}(f(X)|F(X)))$ and since $f$ is a function, $\overline{H}(X|f(X))+\overline{H}(f(X))=\overline{H}(X)$. Then:
$$
C_f(F) = \sup_X \left( \overline{H}(X) - \overline{H}(f(X)|F(X))\right)
$$

\begin{definition} [Typical input rate]
Let $[A,f \times F,B\times C]$ be an AMS and ergodic noisy computation on finite alphabets and let $[A,X]$ be an AMS and ergodic source. Assume that $f$ is a $(\mathcal{B}_{A^\mathcal{I}},\mathcal{B}_{B^\mathcal{I}} )$-measurable weakly causal function. $\forall n \geq N_f$, the {\em $n^{th}$ order typical input rate} of the hookup $[[A,X] ; [A, f \times F , B\times C] ]$ is:
$$
B^{n}(X^n,f^n,F^n)=\frac{1}{n}\left[ H(X^n) - H(f^n(X^n)|F^n(X^n))\right]
$$
The typical input rate of the hookup $[[A,X] ; [A, f \times F , B\times C]]$ is:
$$
\overline{B}(X,f,F)=\overline{H}(X) - \overline{H}(f(X)|F(X))
$$
\end{definition}

Let $\mathfrak{P}_{ae}(A^\mathcal{I})$ denote the set of ergodic and AMS probability measures on $(A^\mathcal{I}, \mathcal{B}_{A^\mathcal{I}})$. 

\begin{definition} [Typical input capacity]
Let $[A,f \times F,B\times C]$ be an AMS and ergodic noisy computation on finite alphabets. Assume that $f$ is a $(\mathcal{B}_{A^\mathcal{I}},\mathcal{B}_{B^\mathcal{I}} )$-measurable weakly causal function. The typical input capacity (for AMS and ergodic sources) of the noisy computation $[A,f \times F,B\times C]$ is 
$$
C_f(F) = \sup_{P_X\in \mathfrak{P}_{ae}(A^\mathcal{I}) } \overline{B}(X,f,F)
$$
\end{definition}

The equivalent expression 
$$
C_f(F) =\sup_{P_X\in \mathfrak{P}_{ae}(A^\mathcal{I}) }\left( \overline{H}(X) - \overline{H}(f(X)|F(X)) \right)
$$
 immediately shows that this capacity boils down to the usual channel capacity when $f$ is a bijection (in which case $\overline{H}(f(X)|F(X))=\overline{H}(X|F(X))$. 

The adjective typical is motivated by the forthcoming joint source-channel coding theorem: the encoding process will be constrained to encode typical sequences of a source by typical sequences of a source ``achieving'' capacity. This constraint will be discussed in section~\ref{SectionJointSourceComputationCodingTheorem}.

$f$ being a function, $\overline{H}(f(X)|F(X)) \leq \overline{H}(X|F(X))$ thus $C_f(F)$ is greater than the {\em channel capacity} of the noisy function $F$ (considered as a {\em communication channel}). This is totally consistent with the intuitive interpretation of equivocation: as computation ``burns'' information (i.e. $\overline{H}(f(X))\leq \overline{H}(X)$), the amount of information that  must be added to a noisy result to retrieve the input of the function is likely to be larger than the amount of information that must added to a noisy result to obtain the correct one.

From the definition of $C_f(F)$, the limit cases can be briefly discussed:
\begin{itemize}
	\item if the computation by $F$ is perfect almost surely (i.e. there is a deterministic way of retrieving $f(x)$ from $F(x)$ almost everywhere) then $\overline{H}(f(X)|F(X))=0$  and thus $C_f(F)=\sup_{P_X\in \mathfrak{P}_{ae}(A^\mathcal{I}) } \overline{H}(X)$. In other words, almost any possible input sequence allows  decodability
	\item if the computation is totally noisy meaning that the noisy output $F(X)$ and the expected output $f(X)$ are independent for any $X$, then $\overline{H}(f(X)|F(X))=\overline{H}(f(X))$ and $C_f(F) = \sup_{P_X\in \mathfrak{P}_{ae}(A^\mathcal{I}) } \overline{H}(X|f(X)) $ ; the joint source-noisy computation coding theorem will show that this means that to allow decodability it is necessary to restrict inputs to sequences belonging to a unique $f^{-1}(y_0)$ ;  only one output sequence $y_0$  can be reliably computed i.e. only a constant function can be reliably computed with such a totally noisy apparatus
	\item if $C_f(F)=0$ then, for any $X$, $\overline{H}(X|f(X)) + \overline{I}(f(X) ; F(X)) =0$ ; this implies that both $\overline{H}(X|f(X))=0$ and $\overline{I}(f(X) ; F(X)) =0$ ; the latter condition is equivalent to the preceding case (totally noisy computation) and the former condition means that $f$ is injective almost everywhere.
\end{itemize}

\pagebreak 

\section{Extension of Feinstein Theorem to Noisy Computations\label{SectionFeinsteinTheorem}}

In this section,  extensions of Feinstein lemma and theorem covering noisy computations are given.  Based on the model given above, the existence of Feinstein  codes for noisy computations is derived from classical results dealing with communication channels. While a Feinstein code for a channel is a set of pairs $(x_i,\Gamma_i)$ where $x_i$ is an input sequence and $\Gamma_i$ a set of output sequences (the decodability region), a Feinstein code for a noisy computation will be defined as a set of pairs $(A_i,\Gamma_i)$ where $A_i$ is a set of input sequences and $\Gamma_i$ the associated decodability region.

\subsection{Feinstein lemma for noisy computations}

If $B$ is countable, for any $y\in B^{\mathcal{I}}$, the set $\left\{y \right \}$ is measurable (countable intersection of cylinders $c(y_i), i=1,\cdots$ if $y=y_1y_2\ldots$) and  $f^{-1}(y)$ is measurable as $f$ is measurable. This justifies the assumptions on the alphabets made by the following lemmas and corollaries. 

Feinstein lemma for noisy computation will be stated in a form consistent with the one of Feinstein lemma for communication channels given in  \cite{Gray11} (Lemma 14.1) and assumptions close to \cite{Gray11} : alphabets $A$ and $C$ are standard, $B$ is countable and standard.

In the framework of the noisy computation model, from now on, the joint probabilities $P_{XF(X)}$ and $P_{f(X)F(X)}$ are assumed to be dominated by (i.e., absolutely continuous w.r.t) the corresponding product probabilities: $P_{XF(X)} \ll P_{X} \times P_{F(X)}$ and $P_{f(X)F(X)}\ll P_{f(X)} \times P_{F(X)}$.

\begin{lemma} [Feinstein lemma for noisy computation] \label{LemmaFeinsteinLemmaForNoisyComputation}
Let $A$ and $C$ be standard alphabets and $B$ be a countable standard alphabet. Let $[A,f \times F,B\times C]$ be a noisy computation. Let $[A,X]$ be a source. Assume that $P_{f(X)F(X)} \ll P_{f(X)}\times P_{F(X)}$. Let $\psi_{f(X)F(X)}$ be the Radon-Nikodym derivative $\frac{dP_{f(X)F(X)}}{d(P_{f(X)}\times P_{F(X)})} $ and $i_{f(X)F(X)}=ln(\psi_{f(X)F(X)})$. Then $\forall M \in \mathbb{N}$,  $\forall a>0$ such that 
$$
M e^{-a} + P_{f(X)F(X)}(i_{f(X)F(X)}\leq a) < \frac{1}{4}
$$
 there exist a set $\left\{ y_k \in B^\mathcal{I} , k=1\cdots M \right\}$, a collection of measurable disjoint sets $\Gamma_k, k=1,...,M$ members of $\mathcal{B}_{C^{\mathcal{I}}}$ and a collection of measurable sets $A_k \subset f^{-1}(y_k), k=1,\cdots, M$ members of $\mathcal{B}_{A^{\mathcal{I}}}$ such that:
$$
\forall k=1, \cdots, M \text{  } P_{X|f(X)}(A_k|y_k)>1-\lambda 
$$
and
$$
\forall x\in A_k \text{  }  P_{F(X)|X}\left( \Gamma_k^c | x  \right) \leq \epsilon 
$$
for any $\epsilon \in ]0,\frac{1}{2}[$ and $\lambda \in ]0,\frac{1}{2}[$ such that $\epsilon\lambda = M e^{-a} + P_{f(X)F(X)}(i_{f(X)F(X)}\leq a)$
\end{lemma}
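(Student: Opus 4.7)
The plan is to reduce the statement to the classical Feinstein lemma applied to the induced "channel" from $f(X)$ to $F(X)$, and then use the Markov chain $F(X) \to X \to f(X)$ (which holds because $f$ is a deterministic function) to pass from statements about codewords $y_k$ to statements about input sets $A_k \subset f^{-1}(y_k)$.

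First, since $B$ is countable standard and $C$ is standard, the hypothesis $P_{f(X)F(X)} \ll P_{f(X)} \times P_{F(X)}$ ensures that $\psi_{f(X)F(X)}$ and $i_{f(X)F(X)}$ are well defined, and there exists a regular conditional probability $P_{F(X)|f(X)}$ playing the role of a channel from $B^\mathcal{I}$ to $C^\mathcal{I}$. I would apply the classical Feinstein lemma (Lemma 14.1 of \cite{Gray11}) to this conditional, using the threshold $\epsilon\lambda$ in place of the usual single error parameter: since by hypothesis $M e^{-a} + P_{f(X)F(X)}(i_{f(X)F(X)} \le a) = \epsilon\lambda < \tfrac{1}{4}$, this yields codewords $y_1,\dots,y_M \in B^\mathcal{I}$ and pairwise disjoint decoding sets $\Gamma_1,\dots,\Gamma_M \in \mathcal{B}_{C^\mathcal{I}}$ such that $P_{F(X)|f(X)}(\Gamma_k^c \mid y_k) \le \epsilon\lambda$ for each $k$.

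Second, for each $k$ I would define
$$
A_k \;=\; \{\, x \in f^{-1}(y_k) \;:\; P_{F(X)|X}(\Gamma_k^c \mid x) \le \epsilon \,\}.
$$
The set $f^{-1}(y_k)$ is measurable because $\{y_k\}$ is (as $B$ is countable standard), and $x \mapsto P_{F(X)|X}(\Gamma_k^c\mid x)$ is measurable because $F$ is a probability kernel, so $A_k \in \mathcal{B}_{A^\mathcal{I}}$. The bound $P_{F(X)|X}(\Gamma_k^c\mid x) \le \epsilon$ then holds on $A_k$ by construction.

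Third, I would use the fact that $f(X)$ is a deterministic function of $X$ to get the Markov property $F(X) \to X \to f(X)$, which gives
$$
P_{F(X)|f(X)}(\Gamma_k^c \mid y_k) \;=\; \int_{f^{-1}(y_k)} P_{F(X)|X}(\Gamma_k^c \mid x)\, dP_{X|f(X)}(x \mid y_k) \;\le\; \epsilon\lambda.
$$
Markov's inequality applied to the nonnegative function $x \mapsto P_{F(X)|X}(\Gamma_k^c\mid x)$ under the probability $P_{X|f(X)}(\cdot \mid y_k)$ (supported on $f^{-1}(y_k)$) yields $P_{X|f(X)}(f^{-1}(y_k) \setminus A_k \mid y_k) \le \lambda$, hence $P_{X|f(X)}(A_k \mid y_k) \ge 1-\lambda$, and in fact strictly by the strict inequality $\epsilon\lambda < 1/4$ assumed in the hypothesis.

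The only real obstacle is bookkeeping: verifying that the version of the classical Feinstein lemma being invoked really delivers the maximal (per-codeword) error bound $\epsilon\lambda$ rather than only an average bound, and checking measurability of $A_k$; everything else is a direct consequence of the Markov chain and Markov's inequality.
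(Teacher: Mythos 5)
Your route is genuinely different from the paper's. The paper proves the lemma from scratch by running Feinstein's maximal-code argument directly on triples $(y_k,A_k,\Gamma_k)$: it grows a maximal family, invokes the auxiliary Lemma~\ref{LemmaMaximalFeinsteinFamily} to lower-bound $P_{F(X)|f(X)}\bigl(\bigcup_k \Gamma_k \mid y_0\bigr)$ off the code, and then derives the contradiction from the information-density counting bounds ($\Theta<1-\epsilon\lambda$, $\Delta\le n e^{-a}$). You instead reduce to the classical one-shot Feinstein lemma applied to the induced channel from $f(X)$ to $F(X)$ (the cascade of the reverse channel $P_{X|f(X)}$ with $F$ --- the same channel the paper uses in Lemma~\ref{LemmaFeinsteinLemmaForNoisyComputationMarkovChain} for decomposable modules, except that you only need the always-valid Markov chain $F(X)\to X\to f(X)$), and then split the per-codeword error $\epsilon\lambda$ into the pair $(\epsilon,\lambda)$ via Markov's inequality. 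This is shorter and cleaner, and your worry about the classical lemma is unfounded: the form in \cite{Gray11} (Lemma 14.1) and \cite{Kadota70} does give the maximal, per-codeword bound $Me^{-a}+P_{f(X)F(X)}(i_{f(X)F(X)}\le a)=\epsilon\lambda$, with no stationarity assumptions.

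There is, however, one genuine gap: control over where the codewords $y_k$ land. Your Markov step converts $P_{X|f(X)}\bigl(\{x: P_{F(X)|X}(\Gamma_k^c\mid x)>\epsilon\}\mid y_k\bigr)\le\lambda$ into $P_{X|f(X)}(A_k\mid y_k)>1-\lambda$ only if $P_{X|f(X)}(f^{-1}(y_k)\mid y_k)=1$, and the disintegration identity $P_{F(X)|f(X)}(\cdot\mid y)=\int P_{F(X)|X}(\cdot\mid x)\,dP_{X|f(X)}(x\mid y)$ holds only for $P_{f(X)}$-a.e.\ $y$ unless you \emph{define} the channel as this cascade kernel. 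Both can fail on a $P_{f(X)}$-null set; in particular, an off-the-shelf application of Feinstein's lemma could in principle return a codeword with $f^{-1}(y_k)=\emptyset$, for which no admissible $A_k$ exists, and the classical statement gives you no handle on the location of the $y_k$. The fix is routine but must be stated: take the channel to be the cascade kernel (so the identity holds for every $y$) and restrict the codeword search to the measurable full-measure set $\tilde B=\{y: P_{X|f(X)}(f^{-1}(y)\mid y)=1\}$, rerunning the maximal-code selection relative to $\tilde B$ --- this restriction is exactly what the paper's Lemma~\ref{LemmaMaximalFeinsteinFamily} is formulated with an arbitrary $\tilde B$ to permit. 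Finally, the strict inequality $P_{X|f(X)}(A_k\mid y_k)>1-\lambda$ does not come from $\epsilon\lambda<\frac14$ as you claim; it comes from the strict form of Markov's inequality: if the bad set had conditional mass at least $\lambda>0$, integrating the error (which exceeds $\epsilon$ there) would force $P_{F(X)|f(X)}(\Gamma_k^c\mid y_k)>\epsilon\lambda$, contradicting the Feinstein bound.
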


Before proving the extension of  Feinstein lemma for noisy computation,  the following lemma is needed. Its proof is given in \ref{SectionProofsOfLemmas}:

\begin{lemma}\label{LemmaMaximalFeinsteinFamily}
Let $A$ and $C$ be standard alphabets and $B$ be a countable standard alphabet. Let $[A,f \times F,B\times C]$ be a noisy computation. Let $[A,X]$ be a source. Let $\epsilon>0$ and $\lambda>0$. Let $\tilde{B}$ be a measurable subset of $B^{\mathcal{I}}$. Assume there exist a set $\left\{ y_k \in \tilde{B} , k=1\cdots M \right\}$, a collection of measurable disjoint sets $\Gamma_k, k=1,...,M$ members of $\mathcal{B}_{C^{\mathcal{I}}}$ and a collection of measurable sets $A_k \subset f^{-1}(y_k), k=1,\cdots, M$ members of $\mathcal{B}_{A^{\mathcal{I}}}$ such that:
$$
\forall k=1, \cdots, M \text{  } P_{X|f(X)}(A_k|y_k)>1-\lambda \text{  and  } \forall x\in A_k \text{  }  P_{F(X)|X}\left( \Gamma_k^c | x  \right) \leq \epsilon 
$$
If the set $\left\{ y_k \in \tilde{B} , k=1\cdots M \right\}$ is maximal, meaning there do not  exist any other $y_{M+1}\in \tilde{B}$, $\Gamma_{M+1}$ and $A_{M+1}$ complying with the given properties, then:
\begin{enumerate}
	\item $\forall y_0\in \tilde{B}\setminus \{y_1,\cdots,y_M\}$, $ P_{F(X)|f(X)}\left( \bigcup_{k=1}^M \Gamma_k | y_0  \right)  > \lambda\epsilon$ \label{Statement1LemmaMaximalFeinsteinFamily}
	\item $P_{F(X)}\left( \bigcup_{k=1}^M \Gamma_k  \right)  > min( (1-\lambda)(1-\epsilon) , \lambda\epsilon).P_{f(X)}(\tilde{B})$ \label{Statement2LemmaMaximalFeinsteinFamily}
	\item $\tilde{B}\neq \emptyset \Rightarrow M\geq 1$ \label{Statement3LemmaMaximalFeinsteinFamily}
\end{enumerate}
\end{lemma}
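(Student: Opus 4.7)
The plan is to establish the three statements in sequence, using maximality as a contradiction hypothesis for (1), building (2) from (1) by conditioning on $f(X)$, and proving (3) by exhibiting a trivial one-element candidate family.

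For (1), fix $y_0 \in \tilde{B} \setminus \{y_1, \ldots, y_M\}$ and tentatively extend the family by setting $\Gamma_0 = \bigl( \bigcup_{k=1}^M \Gamma_k \bigr)^c$, which is automatically disjoint from every $\Gamma_k$, and
$$
A_0 = \bigl\{ x \in f^{-1}(y_0) \,:\, P_{F(X)|X}(\Gamma_0^c \mid x) \leq \epsilon \bigr\}.
$$
Measurability of $A_0$ follows from measurability of the kernel $x \mapsto P_{F(X)|X}(\Gamma_0^c \mid x)$ and from measurability of $f^{-1}(y_0)$ (since $\{y_0\}$ is measurable in the countable standard space $B^\mathcal{I}$). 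If one had $P_{X|f(X)}(A_0 \mid y_0) > 1 - \lambda$, the triple $(y_0, A_0, \Gamma_0)$ would extend the family, contradicting maximality. Hence $P_{X|f(X)}\bigl(f^{-1}(y_0) \setminus A_0 \,\big|\, y_0\bigr) \geq \lambda$ (using that $P_{X|f(X)}(f^{-1}(y_0) \mid y_0) = 1$ since $f$ is a function), and on this complement $P_{F(X)|X}(\Gamma_0^c \mid x) > \epsilon$ strictly, so integration yields
$$
P_{F(X)|f(X)}\Bigl( \bigcup_{k=1}^M \Gamma_k \,\Big|\, y_0 \Bigr) \;\geq\; \int_{f^{-1}(y_0) \setminus A_0} P_{F(X)|X}(\Gamma_0^c \mid x)\, dP_{X|f(X)}(\cdot \mid y_0) \;>\; \epsilon \lambda .
$$

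For (2), condition on $f(X)$ and restrict the integration domain to $\tilde{B}$. On each $y_k$, combine the hypothesis $P_{X|f(X)}(A_k \mid y_k) > 1 - \lambda$ with $P_{F(X)|X}(\Gamma_k \mid x) \geq 1 - \epsilon$ for $x \in A_k$ to get $P_{F(X)|f(X)}(\Gamma_k \mid y_k) > (1-\lambda)(1-\epsilon)$, and then use disjointness of the $\Gamma_j$ to pass from $\Gamma_k$ to $\bigcup_j \Gamma_j$. On $\tilde{B} \setminus \{y_1, \ldots, y_M\}$, apply statement (1). Taking the minimum of the two lower bounds pointwise and integrating against $P_{f(X)}$ restricted to $\tilde{B}$ gives the stated bound. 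For (3), if $\tilde{B} \neq \emptyset$ pick any $y \in \tilde{B}$ and note that the singleton family with $\Gamma_1 = C^\mathcal{I}$ and $A_1 = f^{-1}(y)$ is admissible, since $\Gamma_1^c = \emptyset$ gives $P_{F(X)|X}(\Gamma_1^c \mid x) = 0 \leq \epsilon$ and $P_{X|f(X)}(f^{-1}(y) \mid y) = 1 > 1 - \lambda$; maximality then forces $M \geq 1$.

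The main subtle point is preserving strictness in (1): $A_0$ must be defined with the non-strict condition $\leq \epsilon$, so that its complement inside $f^{-1}(y_0)$ carries the strict inequality $> \epsilon$ for the integrand, which in turn yields the strict product $\lambda \epsilon$ after integration; this strictness must then be carried into (2) via the pointwise minimum. A secondary technical check is that extending the family by $(y_0, A_0, \Gamma_0)$ is genuinely admissible — disjointness is immediate from the choice $\Gamma_0 = (\bigcup_k \Gamma_k)^c$, and measurability is built into the construction.
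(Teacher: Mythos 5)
Your proposal is correct and takes essentially the same route as the paper: the paper merely factors the maximality argument into an auxiliary lemma, but it uses the same set $A_0=\{x\in f^{-1}(y_0)\,:\,P_{F(X)|X}(\bigcup_{k=1}^M\Gamma_k\mid x)\le\epsilon\}$ with $\Gamma_0=(\bigcup_k\Gamma_k)^c$, the same Markov-chain identity $P_{F(X)|f(X)}(\cdot\mid y)=\int P_{F(X)|X}(\cdot\mid x)\,dP_{X|f(X)}(x\mid y)$ to pass to the conditional bound, the same decomposition over $\{y_1,\dots,y_M\}$ versus $\tilde{B}\setminus\{y_1,\dots,y_M\}$ for statement (2), and the same trivial triple $(y,f^{-1}(y),C^{\mathcal I})$ for statement (3). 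Your handling of strictness (complement of measure $\ge\lambda>0$ combined with the strict pointwise bound $>\epsilon$) is a clean treatment of the boundary case that the paper's auxiliary lemma states with a strict $>\lambda$.
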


\begin{proof}[Proof of Lemma~\ref{LemmaFeinsteinLemmaForNoisyComputation}]
This proof is derived from the proof of  Feinstein lemma given in \cite{Gray11} (Lemma 14.1).

Let $a>0$, $M$ a positive integer, $\epsilon \in ]0,\frac{1}{2}[$ and $\lambda=\in ]0,\frac{1}{2}[$ be such that 
$$
\epsilon\lambda = M e^{-a} + P_{f(X)F(X)}(i_{f(X)F(X)}\leq a)
$$

Let $G=\{(y,z)\in B^\mathcal{I}\times C^\mathcal{I} / i_{f(X)F(X)}(y,z) > a \}$.  $P_{f(X)F(X)}(G^c)\leq\epsilon\lambda <1$ then  $P_{f(X)F(X)}(G) \geq 1-\epsilon\lambda>0$.

Pick up a $y_1\in B^\mathcal{I}$ and determine a $\mathcal{B_{A^\mathcal{I}}}$-measurable $A_1 \subset f^{-1}(y_1)$ and a $\mathcal{B_{C^\mathcal{I}}}$-measurable $\Gamma_1$ such that:
$$
 P_{X|f(X)}(A_1|y_1)>1-\lambda \text{  and  }\forall x\in A_1 \text{  }  P_{F(X)|X}\left( \Gamma_1^c | x  \right) \leq \epsilon 
$$
Such a  $(y_1,A_1,\Gamma_1)$ exists, for instance $(y_1, f^{-1}(y_1), C^\mathcal{I})$. Recursively, pick up (if any) a $y_i$ and determine a $\mathcal{B_{A^\mathcal{I}}}$-measurable $A_i \subset f^{-1}(y_i)$ and a $\mathcal{B_{C^\mathcal{I}}}$-measurable $\Gamma_i$ such that:
\begin{eqnarray}
P_{X|f(X)}(A_i|y_i) & > & 1-\lambda \nonumber \\
\text{and  }\forall x\in A_i \text{  }  P_{F(X)|X}\left( \Gamma_i^c | x  \right) & \leq&  \epsilon \nonumber\\
\text{and  }\forall j<i,  \Gamma_i \bigcap \Gamma_j & = & \emptyset  \nonumber
\end{eqnarray}
until $B^\mathcal{I}$ is exhausted. Let $n$ be the largest possible number of selected $y_i$ and assume that $n<M$. 

The  selected set $\left\{ (y_k, A_k,\Gamma_k), k=1,\cdots,n \right \}$ is maximal. By Lemma~\ref{LemmaMaximalFeinsteinFamily}:
\begin{eqnarray}
\forall y \notin\{y_1,\cdots,y_n\},  P_{F(X)|f(X)}\left( \bigcup_{k=1}^M \Gamma_k  | y \right) &  > & \lambda\epsilon \nonumber \\
\Rightarrow P_{F(X)|f(X)}\left( (\bigcup_{k=1}^M \Gamma_k)^c  | y \right)   <   1 - \lambda\epsilon & &  \label{EqUpperBoundFLemmaNoisyComputation}
\end{eqnarray}

$f(X)\rightarrow X\rightarrow F(X)$ is a Markov Chain, then $\forall k=1,\cdots,n$:
\begin{eqnarray}
P_{F(X)|f(X)}(\Gamma_k  | y_k ) & = & \int P_{F(X)|X}(\Gamma_k  | x) dP_{X|f(X)}(x|y_k) \nonumber \\
				     & \geq  &  \int_{A_k} P_{F(X)|X}(\Gamma_k  | x) dP_{X|f(X)}(x|y_k)  >  (1-\lambda)(1-\epsilon) \nonumber
\end{eqnarray}
which implies $P_{F(X)|f(X)}\left(  (\bigcup_{k=1}^M \Gamma_k)  | y_k \right)> (1-\lambda)(1-\epsilon)$ and thus 
\begin{equation}
P_{F(X)|f(X)}\left(  (\bigcup_{k=1}^M \Gamma_k)^c  | y_k \right) \leq 1-(1-\lambda)(1-\epsilon) \label{EqLowerBoundFLemmaNoisyComputation}
\end{equation}

$P_{f(X)F(X)}(G)=\Delta + \Theta$ where $\Delta=P_{f(X)F(X)}\left(G \bigcap (B^\mathcal{I} \times \bigcup_{k=1}^M \Gamma_k) \right)$ and $\Theta= P_{f(X)F(X)}\left(G \bigcap (B^\mathcal{I} \times \bigcup_{k=1}^M \Gamma_k)^c \right)$.

It holds that:
$$
\Theta  \leq  P_{f(X)F(X)}\left(B^\mathcal{I} \times \bigcup_{k=1}^M \Gamma_k)^c \right) = \int P_{F(X)|f(X)}\left(  (\bigcup_{k=1}^M \Gamma_k)^c  | y \right) dP_{f(X)}
$$
Thus
\begin{multline}
\Theta	 \leq  \int_{\{y_1,\cdots,y_n \}} P_{F(X)|f(X)}\left(  (\bigcup_{k=1}^M \Gamma_k)^c  | y \right) dP_{f(X)} +  \nonumber  \\
						 					\int_{\{y_1,\cdots,y_n \}^c} P_{F(X)|f(X)}\left(  (\bigcup_{k=1}^M \Gamma_k)^c  | y \right) dP_{f(X)} \nonumber
\end{multline}
By (\ref{EqUpperBoundFLemmaNoisyComputation}),  (\ref{EqLowerBoundFLemmaNoisyComputation}) and the fact that $\epsilon \in ]0,\frac{1}{2}[$ and $\lambda \in ]0,\frac{1}{2}[$: 
$$
\Theta	 < \int_{\{y_1,\cdots,y_n \}} (1-(1-\epsilon)(1-\lambda)) dP_{f(X)} + \int_{\{y_1,\cdots,y_n \}^c} (1-\epsilon\lambda) dP_{f(X)}
$$
\begin{equation}
\Rightarrow \Theta <  1-\epsilon\lambda \label{EqThetaUpperBound}
\end{equation}
$$
\Delta = P_{f(X)F(X)}\left(G \bigcap (B^\mathcal{I} \times \bigcup_{k=1}^M \Gamma_k) \right) \leq P_{f(X)F(X)}(G) =  \int_G dP_{f(X)F(X)}
$$
Since, on G, $i_{f(X)F(X)}(y,z)>a$ or equivalently $\frac{\psi_{f(X)F(X)}}{e^a}>1$:
$$
\Delta	 \leq  \int_G \frac{\psi_{f(X)F(X)}(y,z)}{e^a} dP_{f(X)F(X)}  \leq  e^{-a} \int \psi_{f(X)F(X)} dP_{f(X)F(X)}
$$
But $\Delta = \int \Delta d(P_{f(X)}\times P_{F(X)})$
thus, since $\psi_{f(X)F(X)} = \frac{dP_{f(X)F(X)}}{d(P_{f(X)}\times P_{F(X)})} $:
\begin{multline}
\Delta  \leq  e^{-a} \int \int \psi_{f(X)F(X)} dP_{f(X)F(X)}d(P_{f(X)}\times P_{F(X)}) \nonumber \\
						 \leq  e^{-a} \int \int  dP_{f(X)F(X)} dP_{f(X)F(X)} =  e^{-a}  \nonumber
\end{multline}

\begin{equation}
\Rightarrow \Delta \leq  n.e^{-a} \label{EqDeltaUpperBound}
\end{equation}

From (\ref{EqThetaUpperBound}) and (\ref{EqDeltaUpperBound}), $P_{f(X)F(X)}(G)=\Delta+\Theta < 1-\epsilon\lambda + ne^{-a}$. But $\epsilon\lambda = Me^{-a}+P_{f(X)F(X)}(G^c)=Me^{-a}+1-P_{f(X)F(X)}(G)$ thus $P_{f(X)F(X)}(G)=1-\epsilon\lambda + Me^{-a}$. This leads to $n>M$ contradicting the assumption $n<M$. Then $n = M$.
\end{proof}

{\bf Remark:} since $Me^{-a}<1$, $M$ is necessarily strictly lower than $e^{a}$.

\subsection{Decomposable modules}\label{SubsectionDecomposableModules}

In the peculiar case where $X\rightarrow f(X) \rightarrow F(X)$ is a Markov chain (decomposable modules defined in \cite{WinogradCowan63} fulfill this assumption), a straightforward lemma covering noisy computations can be directly derived from the generalized Feinstein lemma, \cite{Gray11} (Lemma 14.1, page 362) or \cite{Kadota70}. This assumption allows to decompose the channel $[A,F,C]$ as a cascade of the channels $[A,f,B]$ and $[B,f_X^{-1}F,C]$ where $[B,f_X^{-1},A]$ is the reverse channel linking $f(X)$ to $X$ for a fixed source $[A,X]$.

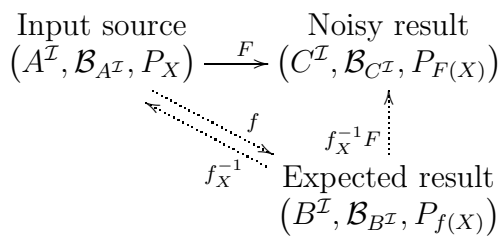
\begin{figure}[htbp]
	\begin{center}
	\xymatrix{
    		\overset{\txt{Input source }}{ \left(  A^{\mathcal{I}} , \mathcal{B}_{A^\mathcal{I}}, P_X  \right) }	\ar  [r] ^F 	\ar@{.>} [dr] ^f &   \overset{\txt{Noisy result  }}{\left( C^{\mathcal{I}} , \mathcal{B}_{C^\mathcal{I}}, P_{F(X)} \right)} \\
																		&   \overset{\txt{Expected result}}{\left( B^{\mathcal{I}} , \mathcal{B}_{B^\mathcal{I}}, P_{f(X)} \right)} \ar@{.>} [u]^{f_X^{-1}F} \ar@<1ex>@{.>} [ul] ^{f_X^{-1}}
 	 }
	\end{center}
	\caption{Model for Winograd and Cowan 's Decomposable Module} \label{FigMarkovNoisyComputationModel}
\end{figure}

\begin{lemma} \label{LemmaFeinsteinLemmaForNoisyComputationMarkovChain}
Let $A$ and $C$ be standard alphabets and  $B$ be a countable standard alphabet. Let $[A,f \times F,B\times C]$ be a noisy computation and $[A,X]$ a source such that $X\rightarrow f(X) \rightarrow F(X)$ is a Markov Chain. Let $\psi_{f(X)F(X)}$ be the Radon-Nikodym derivative $\frac{dP_{f(X)F(X)}}{d(P_{f(X)}\times P_{F(X)})} $ and $i_{f(X)F(X)}=ln(\psi_{f(X)F(X)})$. Then $\forall M \in \mathbb{N}$,  $\forall a>0$, there exist a set $\left\{ y_k \in B^{\mathcal{I}} , k=1\cdots M \right\}$ and a collection of measurable disjoint sets $\Gamma_k, k=1,...,M$ members of $\mathcal{B}_{C^{\mathcal{I}}}$ such that $\forall k=1, \cdots, M$, $\forall x\in f^{-1}(y_k)$:
$$
P_{F(X)|X}\left( \Gamma_k^c | x  \right) \leq M e^{-a} + P_{f(X)F(X)}(i_{f(X)F(X)}\leq a)
$$ 
\end{lemma}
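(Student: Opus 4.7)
The plan is to reduce the statement to the classical (generalized) Feinstein lemma for communication channels, exploiting the Markov hypothesis $X \to f(X) \to F(X)$. Under this hypothesis, the conditional distribution $P_{F(X)|X}(\cdot|x)$ depends on $x$ only through $f(x)$, so the effective noisy channel from the input of $F$ to its output is the reverse-composed channel $[B, f_X^{-1}F, C]$ depicted in Figure~\ref{FigMarkovNoisyComputationModel}, with input source $[B, f(X)]$ and output source $[B, F(X)]$. The problem of picking well-separated outputs thus becomes a standard channel coding problem on this reduced channel.

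Concretely, I would first apply the generalized Feinstein lemma (Lemma 14.1 of \cite{Gray11}) to the channel $[B, \nu, C]$ defined by $\nu(y, \Gamma) = P_{F(X)|f(X)}(\Gamma \mid y)$ with source distribution $P_{f(X)}$. The hypothesis $P_{f(X)F(X)} \ll P_{f(X)} \times P_{F(X)}$ made earlier in the paper guarantees that the information density $i_{f(X)F(X)} = \ln \psi_{f(X)F(X)}$ is well defined. For any $M$ and $a > 0$, the classical lemma furnishes $y_1,\ldots,y_M \in B^\mathcal{I}$ together with pairwise disjoint measurable sets $\Gamma_1,\ldots,\Gamma_M \in \mathcal{B}_{C^\mathcal{I}}$ such that, for each $k$,
$$
P_{F(X)|f(X)}\bigl(\Gamma_k^c \mid y_k\bigr) \;\leq\; M e^{-a} + P_{f(X)F(X)}\bigl(i_{f(X)F(X)} \leq a\bigr).
$$
Note that, unlike the non-Markov version (Lemma~\ref{LemmaFeinsteinLemmaForNoisyComputation}), there is no splitting constant $\epsilon \lambda$ here, because we no longer need to pay a price for picking a large inner subset $A_k \subset f^{-1}(y_k)$: the Markov property will make the bound uniform over the whole fibre.

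The second step is to transfer the bound from $P_{F(X)|f(X)}(\cdot \mid y_k)$ to $P_{F(X)|X}(\cdot \mid x)$ for all $x$ in the fibre $f^{-1}(y_k)$. By the Markov property $X \to f(X) \to F(X)$, for $\mathcal{B}_{C^\mathcal{I}}$-measurable $\Gamma$ we have
$$
P_{F(X)|X}(\Gamma \mid x) \;=\; P_{F(X)|f(X)}\bigl(\Gamma \mid f(x)\bigr) \qquad P_X\text{-a.e. } x.
$$
Applied with $\Gamma = \Gamma_k^c$ and $x \in f^{-1}(y_k)$, this identity gives exactly the claimed bound. The minor technical obstacle is null sets in the above identity: the equality holds $P_X$-a.e., which translates to $P_{X|f(X)}(\cdot \mid y_k)$-a.e. on each fibre $f^{-1}(y_k)$. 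If one insists on the bound holding for \emph{every} $x \in f^{-1}(y_k)$, one should either fix a regular version of the conditional probability on which the Markov identity is pointwise, or weaken the quantifier to ``$P_{X|f(X)}(\cdot \mid y_k)$-a.e. $x \in f^{-1}(y_k)$.'' Either reading makes the conclusion immediate from the two steps above.

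The main (and only) obstacle is thus the careful handling of the regular conditional probability so that the Markov identity holds on all relevant fibres; the inequality itself is just the classical Feinstein bound translated through the Markov chain, with the Markov property eliminating the need for the auxiliary subsets $A_k$ and the loss factor $\epsilon \lambda$ that were unavoidable in Lemma~\ref{LemmaFeinsteinLemmaForNoisyComputation}.
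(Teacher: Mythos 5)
Your proposal is correct and follows essentially the same route as the paper: apply the classical (Kadota/Gray) Feinstein lemma to the reduced channel $f(X)\rightarrow F(X)$ with input distribution $P_{f(X)}$, then use the Markov chain $X\rightarrow f(X)\rightarrow F(X)$ to transfer the bound to every $x\in f^{-1}(y_k)$. The version issue you flag is resolved in the paper exactly by your first suggested fix: $P_{F(X)|X}(\cdot|x)$ is taken as the cascade $\int_{B^{\mathcal{I}}} P_{F(X)|f(X)}(\cdot|y)\,dP_{f(X)|X}(y|x)$ with the pointwise deterministic channel $P_{f(X)|X}(G|x)=1_{f^{-1}(G)}(x)$, and countability of $B$ makes $\{y_k\}$ measurable so the integral concentrates on $y_k$ for $x\in f^{-1}(y_k)$.
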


\begin{proof}
$f$ is a function then $ f(X) \rightarrow X \rightarrow F(X)$ is a Markov chain so the channel $f(X)\rightarrow F(X)$ is  a cascade of the channels $f(X)\rightarrow X$, denoted $[B,f_X^{-1},A]$, and $X\rightarrow F(X)$ i.e. $[A,F,C]$. This cascade is the channel $[B,f_X^{-1}F, C]$.

Let $M \in \mathbb{N}$ and $a>0$. By Feinstein lemma (\cite{Kadota70}) applied to the channel $[B,f_X^{-1}F, C]$, there exist  a set $\left\{ y_k \in B^{\mathcal{I}} , k=1\cdots M \right\}$ and a collection of measurable disjoint sets $\Gamma_k, k=1,...,M$ members of $\mathcal{B}_{C^{\mathcal{I}}}$ such that, $\forall k=1, \cdots, M$:
$$
 P_{F(X)|f(X)}\left( \Gamma_k^c | y_k  \right) \leq M e^{-a} + P_{f(X)F(X)}(i_{f(X)F(X)}\leq a)
$$ 
By assumption $X\rightarrow f(X) \rightarrow F(X)$ is also a Markov chain so the channel $F$ is also a cascade of the channels ($[A,f,B]$ and $[B,f_X^{-1}F,C]$). Then:
$$
\forall H\in \mathcal{B}_{C^{\mathcal{I}}} \text{  } P_{F(X)|X}(H|x) = \int_{B^{\mathcal{I}}} P_{F(X)|f(X)}(H|y) dP_{f(X)|X}(y|x)
$$
Thus, for any $k=1,\cdots,M$:
$$
 P_{F(X)|X}(\Gamma_k^{c}|x) = \int_{B^{\mathcal{I}}} P_{F(X)|f(X)}(\Gamma_k^{c}|y) dP_{f(X)|X}(y|x)
$$
$\left\{y_k \right \}$ is measurable then:
\begin{multline}
 P_{F(X)|X}(\Gamma_k^{c}|x)  = \int_{\left\{y_k \right \}} P_{F(X)|f(X)}(\Gamma_k^{c}|y) dP_{f(X)|X}(y|x) \nonumber \\
				 + \int_{B^{\mathcal{I}}-\left\{y_k \right \}} P_{F(X)|f(X)}(\Gamma_k^{c}|y) dP_{f(X)|X}(y|x) \nonumber
\end{multline}
\begin{multline}
 P_{F(X)|X}(\Gamma_k^{c}|x) \leq \nonumber \\
	  P_{f(X)|X}\left(  \left\{y_k \right \} |x \right).\left(  M e^{-a} + P_{f(X)F(X)}(i_{f(X)F(X)}\leq a)  \right) \nonumber \\
				 + P_{f(X)|X}\left(  B^{\mathcal{I}} - \left\{y_k \right \} |x \right) \nonumber
\end{multline}
If $x\in f^{-1}(y_k)$ then $P_{f(X)|X}\left(  \left\{y_k \right \} |x \right)=1$ and $P_{f(X)|X}\left(  B^{\mathcal{I}} - \left\{y_k \right \} |x \right)=0$, hence $\forall x\in f^{-1}(y_k)$:
$$
P_{F(X)|X}\left( \Gamma_k^c | x_k  \right) \leq M e^{-a} + P_{f(X)F(X)}(i_{f(X)F(X)}\leq a)
$$
\end{proof}

\subsection{Feinstein codes  and Extension of Feinstein theorem to noisy computations}

The alphabets $A$, $B$ and $C$ are now assumed {\em finite} and the noisy computation $[A,f \times F,B\times C]$ AMS and ergodic. This implies that the input rate $\overline{B}(X,f,F)$ is well defined for any AMS and ergodic source $X$. The assumption about alphabets is consistent, for example, with the case of digital computation i.e. if  $F$ represents a unreliable digital computer and $f$ a computable function.

\begin{definition} [Feinstein code for noisy computation] \label{DefinitionFeinsteinCodeForNoisyComputation}
Let $A$, $B$ and $C$  be  finite alphabets. Let $[A,f \times F,B\times C]$ be a noisy computation such that $f$ is a measurable weakly causal function. Let $n\geq N_f$. 

A $[M,n,\epsilon,\lambda]$-Feinstein code for the hookup  $[[A,X^n];[A,f^n \times F^n,B\times C]]$ is a set  $\left\{(A^n_i, \Gamma^n_i)\in \mathcal{B}_{A^n}\times \mathcal{B}_{C^n}, i=1,\cdots,M \right\}$ such that:
\begin{enumerate}
	\item $\Gamma_i \cap \Gamma_j = \emptyset$, $i\neq j$
	\item $\forall i=1,\cdots,M$, $P_{F^n(X^n)|X^n}(\Gamma_i^c|x^n) \leq \epsilon$ for any $x^n\in A^n_i$
	\item $\forall i=1,\cdots,M$, there exists $y_i \in B^n$ such that $A^n_i \subset  (f^n)^{-1}(y_i)$ and $P_{X^n|f^n(X^n)}(A^n_i|y_i)>1-\lambda$ 
\end{enumerate}
$F^n$ $(\epsilon,\lambda)$-reliably computes $f^n$ on the code $\left\{(A^n_i, \Gamma^n_i), i=1,\cdots,M \right\}$ for the $n^{th}$ extension of the source $[A,X]$. $\left\{A^n_i, i=1,\cdots,M \right\}$ is the input code and $\left\{ \Gamma^n_i, i=1,\cdots,M \right\}$ the output code.
\end{definition}

\begin{proposition} [Extension of Feinstein theorem to Noisy Computation] \label{PropositionNoisyComputation}
Let $A$, $B$ and $C$ be finite alphabets. Let $[A,f \times F,B\times C]$ be an AMS and ergodic noisy computation  such that $f$ is a measurable weakly causal function. Let $[A,X]$ be an AMS and ergodic source. 

Then, for any $R<\overline{B}(X,f,F)$, for any $\epsilon>0$ and any $\lambda>0$, there exists  $n(\epsilon,\lambda) \geq N_f$ such that $\forall n > n(\epsilon,\lambda)$, there exists a $[\lfloor e^{n(R-\overline{H}(X|f(X)))} \rfloor,n,\epsilon,\lambda]$-Feinstein code for the hookup $[[A,X^n];[A,f^n \times F^n,B\times C]]$.
\end{proposition}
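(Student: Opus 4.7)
The plan is to apply the Feinstein lemma for noisy computations (Lemma~\ref{LemmaFeinsteinLemmaForNoisyComputation}) to the $n$-th order, treating $A^n$, $B^n$, $C^n$ as finite alphabets of trivial (single-index) sequence spaces. For $n\geq N_f$, weak causality of $f$ ensures that $f^n:A^n\to B^n$ is a well-defined measurable function, so $[A^n,f^n\times F^n,B^n\times C^n]$ is a noisy computation in the sense of Definition~\ref{DefinitionNoisyComputation}. Its hookup with the source $X^n$ induces the joint distribution $P_{f^n(X^n)F^n(X^n)}$, which is dominated by $P_{f^n(X^n)}\times P_{F^n(X^n)}$ by the standing absolute continuity assumption.

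Fix $R<\overline{B}(X,f,F)=\overline{H}(X|f(X))+\overline{I}(f(X);F(X))$ and targets $\epsilon,\lambda\in(0,1/2)$. Since $R-\overline{H}(X|f(X))<\overline{I}(f(X);F(X))$, I would pick $\alpha$ strictly between these two numbers and then set $a_n=n\alpha$ and $M_n=\lfloor e^{n(R-\overline{H}(X|f(X)))}\rfloor$. The core estimate to prove is that for $n$ large enough,
\[
M_n e^{-a_n}+P_{f^n(X^n)F^n(X^n)}\bigl(i_{f^n(X^n)F^n(X^n)}\leq a_n\bigr)\leq \epsilon\lambda .
\]
The first summand is bounded by $e^{n(R-\overline{H}(X|f(X))-\alpha)}$, which tends to $0$ because $\alpha>R-\overline{H}(X|f(X))$. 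For the second summand, the key input is the convergence $\tfrac{1}{n}\,i_{f^n(X^n)F^n(X^n)}\to \overline{I}(f(X);F(X))$ in $P_{f(X)F(X)}$-probability: since the noisy computation and the source are AMS and ergodic, so are the processes $f(X)$, $F(X)$ and $(f(X),F(X))$ on finite alphabets, and applying the AEP (Theorem~\ref{TheoremAEP}) to each of the three marginals yields the claimed convergence once one writes $i_{f^n F^n}=-\ln P_{f^n}-\ln P_{F^n}+\ln P_{f^n F^n}$ via the explicit form of the Radon--Nikodym derivative on the discrete product space. Since $\alpha<\overline{I}(f(X);F(X))$, the probability tends to $0$.

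Once the displayed inequality is secured for $n$ large enough, I apply Lemma~\ref{LemmaFeinsteinLemmaForNoisyComputation} to the $n$-th order noisy computation with parameters $M_n$ and $a_n$: the lemma produces codewords $y_k\in B^n$, disjoint decoding regions $\Gamma_k^n\in\mathcal{B}_{C^n}$, and input sets $A_k^n\subset (f^n)^{-1}(y_k)$ satisfying
\[
P_{X^n|f^n(X^n)}(A_k^n\mid y_k)>1-\lambda \quad\text{and}\quad P_{F^n(X^n)|X^n}(\Gamma_k^{n\,c}\mid x^n)\leq \epsilon \text{ for all } x^n\in A_k^n ,
\]
which is exactly a $[M_n,n,\epsilon,\lambda]$-Feinstein code for the hookup $[[A,X^n];[A,f^n\times F^n,B\times C]]$. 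Choosing $n(\epsilon,\lambda)\geq N_f$ large enough to simultaneously absorb weak causality and both asymptotic estimates closes the argument.

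The main obstacle I expect is the probability convergence of $\tfrac{1}{n}\,i_{f^n(X^n)F^n(X^n)}$: strictly speaking the AEP as stated in Theorem~\ref{TheoremAEP} delivers the entropy rate limits for a single AMS ergodic source, so one must justify that the pair process $(f(X),F(X))$ is itself AMS and ergodic (which is guaranteed by the assumed AMS-ergodicity of the noisy computation) and then combine the three limits $-\tfrac{1}{n}\ln P_{f^n}\to\overline{H}(f(X))$, $-\tfrac{1}{n}\ln P_{F^n}\to \overline{H}(F(X))$, $-\tfrac{1}{n}\ln P_{f^n F^n}\to \overline{H}(f(X),F(X))$ to obtain convergence to $\overline{I}(f(X);F(X))$. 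Everything else reduces to the two asymptotic estimates above plus a direct invocation of Lemma~\ref{LemmaFeinsteinLemmaForNoisyComputation}.
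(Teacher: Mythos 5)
Your proposal is correct and follows essentially the same route as the paper's proof: reduce to Lemma~\ref{LemmaFeinsteinLemmaForNoisyComputation} at block length $n$ with $M_n=\lfloor e^{n(R-\overline{H}(X|f(X)))}\rfloor$ and threshold $a_n=n\alpha$ for $\alpha$ strictly between $R-\overline{H}(X|f(X))$ and $\overline{I}(f(X);F(X))$, then show $M_ne^{-a_n}+P_{f^n(X^n)F^n(X^n)}(i_{f^n(X^n)F^n(X^n)}\leq a_n)\to 0$. The only deviations are harmless: you justify the convergence of $\tfrac1n i_{f^n(X^n)F^n(X^n)}$ to $\overline{I}(f(X);F(X))$ by combining three applications of the AEP (valid, since $(f(X),F(X))$ and its marginals are AMS ergodic on finite alphabets and the entropy rates are limits) where the paper instead invokes the mutual information ergodic theorem in $L^1$, and you leave implicit the small bookkeeping step of passing from the pair $(\epsilon',\lambda')$ with $\epsilon'\lambda'$ exactly equal to the bound, as required in Lemma~\ref{LemmaFeinsteinLemmaForNoisyComputation}, to the target $(\epsilon,\lambda)$ by monotonicity, which the paper spells out.
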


\begin{proof}
The proof is derived from that of  Feinstein theorem given in \cite{Gray11}, chapter 14.
$$
R<\overline{B}(X,f,F) \Rightarrow R'=R-\overline{H}(X|f(X)) < \overline{I}(f(X),F(X))
$$
Let $\delta=\frac{\overline{I}(f(X),F(X))-R'}{2}>0$, $M=\lfloor e^{nR'}\rfloor$ and $a=n(R'+\delta)$. Let $E_n$ be the measurable set $\{ (y,z) \in B^n\times C^n / i_{f^n(X^n)F^n(X^n)}(y,z) \leq a \}$. Then $E_n$ is the set:
$$
\{ (y,z) \in B^n\times C^n / \frac{i_{f^n(X^n)F^n(X^n)}(y,z)}{n}\leq (\overline{I}(f(X),F(X))-\delta)  \} 
$$
Since $X$ and $[A,f \times F,B\times C]$   are AMS and ergodic then $(f(X),F(X))$ is AMS and ergodic. Thus, from the mutual information ergodic theorem (\cite{Gray11}, Theorem 8.1), $\frac{i_{f^n(X^n)F^n(X^n)}}{n}$ converges in $L^1$ to $\overline{I}(f(X),F(X))$. This implies that 
$$
\lim_{n \to \infty}  P_{f^n(X^n)F^n(X^n)} (E_n) = 0
$$
because $\delta>0$. As a consequence, for any $\epsilon > 0$ and $\lambda>0$, a $n \geq N_f$ large enough ensures that:
$$
Me^{-a}+P_{f^n(X^n)F^n(X^n)}(i_{f^n(X^n)F^n(X^n)}\leq a)  < min(\frac{1}{4}, \epsilon \lambda)
$$
$e^{n\delta}\leq Me^{-a}$, thus for any $\epsilon' \in ]0,\frac{1}{2}[$ and  $\lambda' \in ]0,\frac{1}{2}[$ such that $\epsilon' \leq \epsilon$ and $\lambda' \leq \lambda$ and
$$
\lambda'\epsilon' =  e^{-n\delta}+P_{f^n(X^n)F^n(X^n)}(E_n) < min(\frac{1}{4}, \epsilon \lambda)
$$
Lemma~\ref{LemmaFeinsteinLemmaForNoisyComputation} applied to $[[A,X^n];[A,f^n \times F^n,B\times C]]$ implies that there exists a $[M,n,\epsilon',\lambda']$-Feinstein code   $\left\{(A^n_i, \Gamma^n_i)\in \mathcal{B}_{A^n}\times \mathcal{B}_{C^n}, i=1,\cdots,M \right\}$ such that:
\begin{enumerate}
	\item $\Gamma_i \cap \Gamma_j = \emptyset$, $i\neq j$
	\item $P_{F^n(X^n)|X^n}(\Gamma_i^c|x^n) \leq \epsilon' \leq \epsilon $ for any $x^n\in A^n_i$, $i=1,\cdots,M$
	\item $\forall i=1,\cdots,M$, there exists $y_i \in B^n$ such that $A^n_i \subset  (f^n)^{-1}(y_i)$ and $P_{X^n|f^n(X^n)}(A^n_i|y_i)>1-\lambda'>1-\lambda$ 
\end{enumerate} 
\end{proof}

This proposition is a first justification of the definition of the input capacity $C_f(F)$ of a noisy computation $[A,f \times F,B\times C]$.





\pagebreak 

\section{Reliable computation in the presence of noise \label{SectionJointSourceComputationCodingTheorem}}

In this section, an abstract model of reliable computation is proposed. A consequence of Proposition \ref{PropositionNoisyComputation} is that, given a noisy computation $[A, f \times F,B]$, $\epsilon >0$ and $\delta>0$, $F^n$ $(\epsilon, \delta)$-reliably computes $f^n$ ($n$ large enough) only on a strict subset of the domain of $f^n$ (i.e., the input code). Thus, to achieve reliable computation of $g^k$ where $g$ is a weakly causal function on the entire domain of $g^k$ requires to rely on an ancillary weakly causal function $f$ and on encoding and decoding:
\begin{itemize}
	\item an input $k$-sequence of $g^k$ is encoded into an input $n$-sequence of $f^n \times F^n$ where $F$ is the noisy implementation of $f$
	\item this input $n$-sequence is fed into $F^n$ (the induced channel) and this produces an output $n$-sequence
	\item the output $n$-sequence is decoded to give a $k$-sequence which is an estimate of the expected result from $g^k$.
\end{itemize}
It is shown that this approach allows to achieve almost perfect  reliability in computing  $g^k(x^k)$ for large $k$ provided that the encoding rate is strictly lower than the capacity of the noisy computation\footnote{As in the classical reliable communication model, the encoder and decoder will be assumed perfectly reliable. This assumption will be discussed below.}. This is very similar to the case of reliable communication through noisy channels. 

A converse is also proved: if the rate of encoding is strictly greater than the capacity then no code exists to ensure reliable computation.

Moreover the reliable computation model  formally captures  practical approaches: e.g., a regular arithmetic addition $g$ obtained from the noisy actual circuit $F$ implementing $f$ which is an addition acting on residue encoded operands, see \cite{RaoFujiwara89}.

The weakly causal perfect functions $f$ and $g$ are assumed  unary. $m$-ary functions can be modeled as unary ones by concatenating $m$ input values in one ``meta''-input and thus modeling a {\em joint coding} of operands. This relaxes the assumption of independent coding of operands made by  \cite{Elias58}, \cite{PetersonRabin59}, \cite{Winograd62} and \cite{Ahlswede84}.  

\subsection{Model of reliable computation \label{SubsectionReliableComputation}}

 There is a need to constrain the encoding and decoding stages to forbid the model to reduce to (c.f. \cite{Elias58} and \cite{Winograd62}):
\begin{itemize}
	\item either an (assumed perfect) encoder which is equivalent to the expected function followed by an encoding of the result ($g^k(x'^k)$) suitable to transmission through the noisy device (then considered as a noisy transmission channel)
	\item or an encoder which encodes input values for reliable transmission through the noisy device (considered here also as a noisy transmission channel) and a decoder (assumed also reliable) which is equivalent to an (almost perfect) decoding followed by  the perfect expected function.
\end{itemize}
In both cases, the channel coding theorem sets limits on rates within which reliable transmission (and thus computation in these peculiar cases) is possible. Assume that $g$ is non-injective. Then, if $X'$ is a source, $H(g(X')) < H(X')$ (equivalently $H(X'|g(X'))>0$). The first case cannot be under the constraint that the encoding is an injection: a injective encoding cannot compress $X'$ into $g(X')$. The second case can also be avoided if the decoding is also based on an injection: the decoding will injectively associate each decodability region to a given output sequence $g^k(x'^k)$. With this constraint, the decoding cannot be equivalent to $f$ followed by an injective transcoding: the non-injective "part of the job" has to be done by the noisy device associating an input sequence to a decodability region\footnote{\cite{WinogradCowan63} argues that, from an Information Theory point of view, a frontier between computation and transmission can be drawn considering that computation is a ``true'' computation if the entropy is reduced ($H(g(X')) < H(X')$ i.e. $g$ is non injective) and that a bijection is equivalent to perfect transmission (preserving entropy).}.

The proposed model is thus mainly adapted to {\em non-injective} functions. Obviously, it comes down to the classical reliable communication model when $g$ is an injection.

The  model of the complete process to reliably compute prefixes of a weakly causal function $g :A'^{\mathcal{I}} \rightarrow B'^{\mathcal{I}}$ acting on a source $X'$, thanks to a noisy implementation $F$ of a weakly causal function $f: A^{\mathcal{I}} \rightarrow B^{\mathcal{I}}$ is the following:
\begin{itemize}
	\item {\bf encoding}: let $X^n$ be the $n^{th}$ extension of a  source for which  a  Feinstein code $(A^n_i,\Gamma_i)_{i=1,\cdots,M}$ allowing to $(\epsilon,\lambda)$-reliably compute $f^n(X^n)$ by $F^n(X^n)$ (cf Proposition~\ref{PropositionNoisyComputation} and Definition~\ref{DefinitionFeinsteinCodeForNoisyComputation}) is given ; a typical $k$-sequence $x'$ of $X'^k$ is encoded as a {\em typical}\footnote{This important assumption is motivated below.} $n$-sequence of $X^n$ by a  function, say $\mathcal{U}$, such that $\mathcal{U}(x') \in A_i^n$ for some $i=1,\cdots,M$.

If $\mathcal{U}$ is an injection then the encoding is said {\em injective}.

	\item {\bf computation of the noisy function}: $F^n$ is applied to $\mathcal{U}(x')$ producing a typical $n$-sequence $F^n(\mathcal{U}(x'))$ of $F^n(X^n)$ where $F^n(\mathcal{U}(x'))$ belongs to a given $\Gamma_i$ (with probability greater than $1-\epsilon$)
	\item {\bf decoding}: the first step is to associate to $F^n(\mathcal{U}(x'))$ the typical $n$-sequence $y_i$ of $f^n(X^n)$ corresponding to $\Gamma_i$, the second step is to apply to $y_i$ a function $\mathcal{V} : \{\mathbf{y_1},\ldots,\mathbf{y_M}\} \rightarrow \{\text{typical k-sequences of }g^k(X'^k)\} $ such that $\mathcal{V}(y_i)=g^k(x')$. The first step of decoding considered as a function from $\{\Gamma_i, i=1,\cdots,M\}$ to $\{y_i, i=1,\cdots,M\}$ is  injective. The characteristics of the second step are discussed below.
\end{itemize}
A decoding error occurs when one obtains a $n$-sequence $y_j$ (or equivalently a  $\Gamma_j$) such that $\widehat{g^k(x')} =  \mathcal{V}(y_j) \neq g^k(x')$.

To be able to define a decoding function $\mathcal{V}$ (i.e, a {\em deterministic} decoding), the encoding function $\mathcal{U}$ has to be such that the typical $n$-sequences of one $A_i^n \subset (f^n)^{-1} (y_i)$ ($ y_i \in \{y_1,\ldots,y_M\}$) are used for encoding typical $k$-sequences of {\em only one} $(g^k)^{-1} (z)$, $z$ typical $k$-sequence of $g^k(X'^k)$. Thus a decoding will be {\em deterministic} if:
$$
f^n(\mathcal{U}(x'_1)) = f^n(\mathcal{U}(x'_2))  \Rightarrow g^k(x'_1) = g^k(x'_2)
$$

\begin{figure}[htbp]
	$$
	(g^k)^{-1}(z)= \left \{ \begin{array}{c}  x'_{1} \\ \vdots \\ \vdots\\ \vdots \\  x'_{m}\end{array}\right \} \overset{\mathcal{U}}{\rightarrow}
	\left\{
              \begin{array}{l}
	          A_i^n= \left \{ \begin{array}{c}  x_{i_1} \\ \vdots \\ x_{i_N}\end{array}\right \} \overset{F^n}{\underset{1-\epsilon}{\rightarrow}} \Gamma_i  \rightarrow y_i  \\
	          \vdots \\
	          A_j^n= \left \{ \begin{array}{c}  x_{j_1} \\ \vdots \\ x_{j_M}\end{array}\right \} \overset{F^n}{\underset{1-\epsilon}{\rightarrow}} \Gamma_j \rightarrow y_j 
              \end{array}
            \right \}
            \overset{\mathcal{V}}{\rightarrow} z
	$$
	\caption{Deterministic decoding} \label{FigDeterministicDecoding}
\end{figure}

It can also be  required that $\mathcal{V}$ be an injection in order to forbid the decoder to be able to reliably compute the (non-injective) function $g^k$. If the function $\mathcal{V}$ is injective then the typical $k$-sequences of a $(g^k)^{-1} (z)$, $z$ typical $k$-sequence of $g^k(X'^k)$, are encoded in typical $n$-sequences of one and only one $A_i^n$ . Thus, if $x'_1$ and $x'_2$ are two typical $k$-sequences of $X'^k$: 
$$
f^n(\mathcal{U}(x'_1)) = f^n(\mathcal{U}(x'_2)) \Leftrightarrow g^k(x'_1) = g^k(x'_2)
$$
In this case the deterministic decoding is said to be {\em injective}, figure \ref{FigInjectiveDecoding}.

\begin{figure}[htbp]
	$$
	(g^k)^{-1}(z)= \left \{ \begin{array}{c}  x'_{1} \\ \vdots  \\ x'_{m}\end{array}\right \} \overset{\mathcal{U}}{\rightarrow}
              \begin{array}{l}
	          A_i^n= \left \{ \begin{array}{c}  x_{i_1} \\ \vdots \\ \vdots \\ x_{i_N}\end{array}\right \} \overset{F^n}{\underset{1-\epsilon}{\rightarrow}} \Gamma_i  \rightarrow y_i  
              \end{array}
            \overset{\mathcal{V}}{\rightarrow} z
	$$
	\caption{Injective decoding} \label{FigInjectiveDecoding}
\end{figure}

The model fulfills the constraints identified above. The encoder implements an injection and thus cannot be the desired function $g$ nor $f$ followed by a encoding for transmission (if $f$ and $g$ are not injective). The same comment applies to an injective decoding step  $\mathcal{V}$.

\begin{definition}\label{DefinitionEncodingDecoding}
 Let $g :A'^{\mathcal{I}} \rightarrow B'^{\mathcal{I}}$ and $f: A^{\mathcal{I}} \rightarrow C^{\mathcal{I}}$ be measurable weakly causal functions. Let $k\geq N_g$ and $n\geq N_f$. Let $A'^k_{\epsilon}$ and $B'^k_{\epsilon}$ be respectively the sets of $\epsilon$-typical $k$-sequences of $A'^k$ and of $B'^k$.
An {\em encoding-decoding $(1-\epsilon)$-compatible with} ($g^k$,$f^n$) is a pair $(\mathcal{U},\mathcal{V})$ where:
\begin{itemize}
	\item $\mathcal{U}: {A'}^k_{\epsilon} \rightarrow A^n$ is a function called the {\em encoding function}
	\item $\mathcal{V}: B^n \rightarrow {B'}^k_{\epsilon}$ is a function called the {\em decoding function} or the {\em deterministic decoding}
	\item $\forall (x'_1,x'_2)\in A'^k_{\epsilon} \times A'^k_{\epsilon}, f^n(\mathcal{U}(x'_1)) = f^n(\mathcal{U}(x'_2))  \Rightarrow g^k(x'_1) = g^k(x'_2)$
\end{itemize}
If $\forall (x'_1,x'_2)\in A'^k_{\epsilon} \times A'^k_{\epsilon}, f^n(\mathcal{U}(x'_1)) = f^n(\mathcal{U}(x'_2))  \Leftrightarrow g^k(x'_1) = g^k(x'_2)$, then the decoding is {\em injective}.
\end{definition}

\subsection{Achievability of an encoding rate}

\begin{definition}   \label{DefinitionEncodingRate}
Let $[A',X']$ be a source on a finite alphabet $A'$. Let $g :A'^{\mathcal{I}} \rightarrow B'^{\mathcal{I}}$ be a measurable weakly causal function and $[A,f \times F,B\times C]$ be a noisy computation such that $f$ is a weakly causal function. 
Let  $R>0$. The encoding rate $R$ is said {\em achievable} if there exists a source $[A,X_0]$ and a sequence  $\left(  \left\{ (A^{n_j}_i , \Gamma^{n_j}_i) , i=1,\cdots, M_j \right\} \right)_{j \geq 1}$ of $(M_j,n_j,\epsilon_j,\lambda_j)$-Feinstein codes for the hookups $[[A^{n_j} , X_0^{n_j}];[A^{n_j},f^{n_j} \times F^{n_j},B^{n_j}\times C^{n_j}]]$, $j\geq 1$, $n_j \geq N_f$,  such that:
\begin{enumerate}
	\item $\lim_{j \rightarrow \infty} \epsilon_j = 0$
	\item $\lim_{j \rightarrow \infty} \lambda_j = 0$
	\item $M_j = \lfloor e^{n_j(R-\overline{H}(X_0|f(X_0))} \rfloor$
	\item there exists an encoding-decoding $(\mathcal{U}_j,\mathcal{V}_j)$ $(1-\epsilon)$-compatible with $(g^{k_j},$ $f^{n_j})$   such that $\mathcal{U}_j(A'^{k_j}_{\epsilon}) \subset \bigcup_{i=1}^{M_j} A^{n_j}_i$ and $R.n_j=\overline{H}(X').k_j$
\end{enumerate}
\end{definition}

{\em Remark:} $\mathbb{Q}$ being dense in $\mathbb{R}$,  $\frac{R}{H(X')}$ is assumed rational.


\begin{proposition} \label{PropositionCoding}
Let $g :A'^{\mathcal{I}} \rightarrow B'^{\mathcal{I}}$ be a measurable weakly causal function defining an AMS and ergodic deterministic channel. Let $[A,f \times F,B\times C]$ be an AMS and ergodic  noisy computation such that $f$ is a weakly causal function.  Let  $[A',X']$ be an AMS and ergodic source. Let $R>0$. Then
$$
R < C_f(F) \Longrightarrow \text{ R is achievable}
$$
Assume that  $\overline{H}(X'|g(X')) \neq 0$. If
$$
R < \sup_{P_X\in \mathfrak{P}_{ae}(A^\mathcal{I})} \left[ min \left( \frac{\overline{H}(X|f(X))}{\overline{H}(X'|g(X'))}.\overline{H}(X') , \overline{B}(X,f,F) \right) \right]
$$
then  R is achievable for an injective decoding.
\end{proposition}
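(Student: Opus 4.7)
The plan is to explicitly construct, for each $j$, a Feinstein code and an encoding-decoding pair $(\mathcal{U}_j,\mathcal{V}_j)$ satisfying Definition~\ref{DefinitionEncodingRate}. The construction rests on three ingredients: Proposition~\ref{PropositionNoisyComputation} (extended Feinstein theorem for noisy computations), the AEP (Theorem~\ref{TheoremAEP}) applied to $X'^{k_j}$ and to $g^{k_j}(X'^{k_j})$, and the conditional AEP for functions (Lemma~\ref{LemmaConditionAEPForFunctions}) applied both to $(X',g(X'))$ and to $(X_0,f(X_0))$.

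For the general claim, I select an AMS ergodic $P_{X_0}\in \mathfrak{P}_{ae}(A^\mathcal{I})$ with $R<\overline{B}(X_0,f,F)$, which exists since $R<C_f(F)$. Pick $R'\in(R,\overline{B}(X_0,f,F))$ and apply Proposition~\ref{PropositionNoisyComputation} to obtain, for $n_j\to\infty$, a $[M_j,n_j,\epsilon_j,\lambda_j]$-Feinstein code with $M_j=\lfloor e^{n_j(R'-\overline{H}(X_0|f(X_0)))}\rfloor$ and $\epsilon_j,\lambda_j\to 0$. Since $\mathbb{Q}$ is dense in $\mathbb{R}$, choose $n_j,k_j\to\infty$ with $R\,n_j=\overline{H}(X')\,k_j$. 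By the AEP, $|A'^{k_j}_\epsilon|$ is of order $e^{k_j\overline{H}(X')}=e^{n_jR}$, the typical $g^{k_j}$ output set has cardinality of order $e^{k_j\overline{H}(g(X'))}$, and each $g^{k_j}$-fiber restricted to $A'^{k_j}_\epsilon$ has cardinality of order $e^{k_j\overline{H}(X'|g(X'))}$. Symmetrically, each block $A^{n_j}_i\subset (f^{n_j})^{-1}(y_i)$ can be restricted to its conditionally typical elements, preserving $P_{X^{n_j}_0|f^{n_j}(X^{n_j}_0)}(A^{n_j}_i|y_i)>1-\lambda_j$, with cardinality of order $e^{n_j\overline{H}(X_0|f(X_0))}$.

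I now describe the encoding. Enumerate the typical fibers of $g^{k_j}$, and assign to each fiber a disjoint ordered list of Feinstein blocks whose total capacity exceeds the fiber size. Define $\mathcal{U}_j(x')$ by placing $x'$ in the next available slot within the block list attached to the fiber $(g^{k_j})^{-1}(g^{k_j}(x'))$; define $\mathcal{V}_j(y)=z$ when $y=y_i$ for a block $A^{n_j}_i$ assigned to $(g^{k_j})^{-1}(z)$, and arbitrarily (into $B'^{k_j}_\epsilon$) otherwise. Since the Feinstein construction produces distinct $y_i$'s, $f^{n_j}(\mathcal{U}_j(x'_1))=f^{n_j}(\mathcal{U}_j(x'_2))$ forces both codewords into the same $A^{n_j}_i$, hence into the same assigned fiber, giving $g^{k_j}(x'_1)=g^{k_j}(x'_2)$ and deterministic decoding. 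The counting to verify is that the total number of blocks consumed, bounded by $N_{f,j}\lceil e^{k_j\overline{H}(X'|g(X'))}/e^{n_j\overline{H}(X_0|f(X_0))}\rceil$ with $N_{f,j}\approx e^{k_j\overline{H}(g(X'))}$, is at most $M_j$. Using $R\,n_j=\overline{H}(X')\,k_j$, the dominant contribution reduces to either $e^{n_j(R-\overline{H}(X_0|f(X_0)))}$ or $e^{k_j\overline{H}(g(X'))}$, and both are strictly less than $M_j=e^{n_j(R'-\overline{H}(X_0|f(X_0)))}$ for $n_j$ large since $R'>R$ absorbs the $o(n_j)$ typicality slack. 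The conclusions on $\epsilon_j,\lambda_j\to 0$ are inherited from Proposition~\ref{PropositionNoisyComputation}.

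For the injective-decoding claim, I pick $X_0$ achieving $R<\min(\overline{H}(X_0|f(X_0))\overline{H}(X')/\overline{H}(X'|g(X')),\overline{B}(X_0,f,F))$, which exists by strict inequality $R<\sup\min(\cdot,\cdot)$. The left term translates, via $R\,n_j=\overline{H}(X')\,k_j$, into $e^{k_j\overline{H}(X'|g(X'))}<e^{n_j\overline{H}(X_0|f(X_0))}$, so each typical fiber of $g^{k_j}$ fits inside a single Feinstein block. I then refine the assignment to use exactly one distinct $A^{n_j}_i$ per fiber; the number of fibers, of order $e^{k_j\overline{H}(g(X'))}=e^{n_jR-k_j\overline{H}(X'|g(X'))}$, remains below $M_j=e^{n_j(R'-\overline{H}(X_0|f(X_0)))}$ because $R'>R$ and $n_j\overline{H}(X_0|f(X_0))>k_j\overline{H}(X'|g(X'))$. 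Consequently $f^{n_j}(\mathcal{U}_j(x'_1))=f^{n_j}(\mathcal{U}_j(x'_2))$ is equivalent to $g^{k_j}(x'_1)=g^{k_j}(x'_2)$, yielding an injective decoding $\mathcal{V}_j$. The main obstacle throughout is the tight counting at the boundary where fiber size is comparable to block size: it is handled by choosing $R'$ strictly between $R$ and $\overline{B}(X_0,f,F)$ so that the ceiling overhead and the $e^{\pm n_j\epsilon}$ fluctuations coming from the AEP are dominated by $e^{n_j(R'-R)}$ for all large $n_j$.
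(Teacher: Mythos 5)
Your construction follows essentially the same route as the paper's proof: choose $X_0$ and $R'\in(R,\overline{B}(X_0,f,F))$, invoke Proposition~\ref{PropositionNoisyComputation} to get Feinstein codes of size $\lfloor e^{n_j(R'-\overline{H}(X_0|f(X_0)))}\rfloor$, prune each block to its conditionally typical sequences, and count $\lceil \alpha_1/\alpha_2\rceil$ blocks per $g^{k_j}$-fiber against the available blocks using $R\,n_j=\overline{H}(X')\,k_j$ and the slack $R'-R$, with the injective case handled by the same fiber-fits-in-one-block condition $e^{k_j\overline{H}(X'|g(X'))}<e^{n_j\overline{H}(X_0|f(X_0))}$. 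The only difference is that you make the fiber-to-block assignment explicit, which the paper leaves implicit; the counting and conclusions are the same.
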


\begin{proof}

Let $(\epsilon_j)_{j\in \mathbb{N}}$ and  $(\lambda_j)_{j\in \mathbb{N}}$ be two sequences of positive real numbers such that $\lim_{j \rightarrow \infty} \epsilon_j = 0$ and $\lim_{j \rightarrow \infty} \lambda_j = 0$. 

Let $R>0$ be such that $R < C_f(F)$. Then there exists a source $[A,X_0]$, $P_{X_0} \in \mathfrak{P}_{ae}(A^\mathcal{I})$, and two positive real numbers $R'$ and $\delta$ such that $R < R'-\delta < R'< \overline{B}(X_0,f,F) \leq C_f(F)$. 

For a fixed $j$, let $\delta'>0$ be such that $\lambda_j-\delta'>0$. From Proposition~\ref{PropositionNoisyComputation}, there exists an integer $n(\epsilon_j,\lambda_j) \geq N_f$ such that, for any $n_j\geq n(\epsilon_j,\lambda_j)$, there exists a $(M'_j,n_j,\epsilon_j,\lambda_j-\delta')$-Feinstein code for the hookup $[[A^{n_j} , X_0^{n_j}];[A^{n_j},f^{n_j} \times F^{n_j},B^{n_j}\times C^{n_j}]]$ where $M'_j=\lfloor e^{n_j(R'-\overline{H}(X_0|f(X_0))} \rfloor$. $n_j$ can be chosen large enough to also have $k_j\geq N_g$. Let $(\tilde{A}^{n_j}_i, \Gamma^{n_j}_i)_{i=1,\cdots,M'_j}$ be this code for a given $n_j$. Then:
$$
P_{X_0^{n_j}|f^{n_j}(X_0^{n_j})}(\tilde{A}^{n_j}_i | y_i) > 1-(\lambda_j-\delta')
$$

Let $\epsilon'>0$. For $n_j$ large enough, the subset $[(f^{n_j})^{-1}(y_i)]_{typ}$ of $n_j$-sequences  $\epsilon'$-conditionally typical given $y_j$ has a conditional probability given $y_i$:
$$
P_{X_0^{n_j}|f^{n_j}(X_0^{n_j})}([(f^{n_j})^{-1}(y_i)]_{typ}| y_i) > 1-\delta'
$$
Thus, setting $A^{n_j}_i = \tilde{A}^{n_j}_i \cap [(f^{n_j})^{-1}(y_i)]_{typ}$:
$$
P_{X_0^{n_j}|f^{n_j}(X_0^{n_j})}(A^{n_j}_i | y_i) > 1-\lambda_j
$$
Obviously, for any $x\in A^{n_j}_i$
$$
P_{F^{n_j}(X_0^{n_j})|X_0^{n_j}}((\Gamma^{n_j}_i)^c | x) \leq \epsilon_j
$$
This implies that $(A^{n_j}_i, \Gamma^{n_j}_i)_{i=1,\cdots,M'_j}$ is a $(M'_j,n_j,\epsilon_j,\lambda_j)$-Feinstein code for the hookup $[[A^{n_j} , X_0^{n_j}];[A^{n_j},f^{n_j} \times F^{n_j},B^{n_j}\times C^{n_j}]]$ and each $A^{n_j}_i$ contains only $\epsilon'$-conditionally typical sequences given $y_i$.

Let $\alpha_1$ the number of $\epsilon'$-typical (given $z$) sequences within $(g^{k_j})^{-1}(z)$ ($z$ typical $k$-sequence). Let $\alpha_2$ the number of $\epsilon'$-typical (given $y_i$) sequences within $A^{n_j}_i$. Thanks to conditional AEP (theorem~\ref{TheoremConditionalAEP}), for $n_j$ (and thus $k_j=\frac{R}{\overline{H}(X')}n_j$) large enough:
$$
(1-\lambda_j)e^{k_j(\overline{H}(X'|g(X'))-\epsilon')}\leq \alpha_1 \leq e^{k_j(\overline{H}(X'|g(X'))+\epsilon')}
$$
and
$$
(1-\lambda_j)e^{n_j(\overline{H}(X_0|f(X_0))-\epsilon')}\leq \alpha_2 \leq e^{n_j(\overline{H}(X_0|f(X_0))+\epsilon')}
$$
An encoding-decoding $(1-\epsilon')$-compatible with $(g^{k_j},f^{n_j})$ using the $(A^{n_j}_i)_{i=1,\ldots,M'_j}$  will require that  the encoding of the conditionally typical sequences of one inverse image $(g^{k_j})^{-1}(z)$ consume a number  (i.e. $\lceil \alpha_1 / \alpha_2 \rceil$) of $A^{n_j}_i$  which is upper bounded by\footnote{The constraint is due to the determinism of the decoding. The encoding may or may not be injective: the constraint allows injective encoding but does not make it mandatory.}:
$$
\left\lceil \frac{e^{{k_j}(\overline{H}(X'|g(X'))+\epsilon')}}{(1-\lambda_j)e^{n_j(\overline{H}(X_0|f(X_0))-\epsilon')}} \right\rceil
$$

If $N$ is the total number of $A^{n_j}_i$ needed to encode all the inverse images $(g^{k_j})^{-1}(z)$ then:
$$
N \leq \left\lceil \frac{e^{k_j(\overline{H}(X'|g(X'))+\epsilon')}}{(1-\lambda_j)e^{n(\overline{H}(X_0|f(X_0))-\epsilon')}} \right\rceil . e^{k_j(\overline{H}(g(X')+\epsilon')}
$$
This leads to:
\begin{eqnarray}
N 	&  \leq & \left[  \frac{e^{k_j(\overline{H}(X'|g(X'))+\epsilon')}}{(1-\lambda_j)e^{n_j(\overline{H}(X_0|f(X_0))-\epsilon')}} + 1 \right] . e^{k_j(\overline{H}(g(X')+\epsilon')} \nonumber \\
	& \leq &  \left[  \frac{e^{k_j(\overline{H}(X'|g(X'))+ \epsilon' + \epsilon'')}}{e^{n_j(\overline{H}(X_0|f(X_0))-\epsilon')}} + 1 \right] . e^{k_j(\overline{H}(g(X')+\epsilon')} \nonumber
\end{eqnarray}
where $e^{k\epsilon''} = 1/(1-\lambda_j)$. For any $\epsilon'''>0$, for $k_j$ and $n_j$ large enough:
$$
 \frac{e^{k_j(\overline{H}(X'|g(X'))+ \epsilon' + \epsilon'')}}{e^{n_j(\overline{H}(X_0|f(X_0))-\epsilon')}} + 1  \leq   \frac{e^{k_j(\overline{H}(X'|g(X'))+ \epsilon' + \epsilon'')}}{e^{n_j(\overline{H}(X_0|f(X_0))-\epsilon')}} . e^{k_j\epsilon'''}
$$
giving:
\begin{eqnarray}
N 	& \leq &  \frac{e^{k_j(\overline{H}(X'|g(X'))+ \epsilon' + \epsilon''+\epsilon''')}}{e^{n_j(\overline{H}(X_0|f(X_0))-\epsilon')}} . e^{k_j(\overline{H}(g(X')+\epsilon')} \nonumber \\
	& \leq &  \frac{e^{k_j(\overline{H}(X')+2\epsilon' + \epsilon''+\epsilon''')}}{e^{n_j(\overline{H}(X_0|f(X_0))-\epsilon')}} \nonumber 
\end{eqnarray}
Since $k_j.\overline{H}(X')=n_j.R$:
\begin{eqnarray}
N	& \leq &  \frac{e^{n_j(R+\frac{\epsilon' + 2\epsilon''+\epsilon'''}{\overline{H}(X')})}}{e^{n_j(\overline{H}(X_0|f(X_0))-\epsilon')}} \nonumber \\
	& \leq &  e^{n_j(R-\overline{H}(X_0|f(X_0))+(2\epsilon'+\epsilon''+\epsilon''')/\overline{H}(X') - \epsilon)} \nonumber
\end{eqnarray}
for suitably chosen $\epsilon'$, $\epsilon''$ and $\epsilon'''$ and large enough $k_j$ and $n_j$. Hence:
\begin{eqnarray}
N 	& \leq &  e^{n_j(R'-\overline{H}(X_0|f(X_0)))} \nonumber \\
	& \leq & \lfloor e^{n_j(R'-\overline{H}(X_0|f(X_0)))} \rfloor \nonumber
\end{eqnarray}
since $N$ is an integer. Thus
$$
N \leq M'_j
$$
In words, there are enough $A^{n_j}_i$  in the Feinstein code to encode the inverse images $(g^{k_j})^{-1}(z^{k_j})$. This closes the first step of the proof.

An injective decoding will be possible if one inverse image $(g^{k_j})^{-1}(z)$ can be encoded using only one $A^{n_j}_i$. For $n_j$ and $k_j$ large enough, a sufficient condition is:
$$
e^{k_j\overline{H}(X'|g(X'))}<e^{n_j\overline{H}(X_0|f(X_0))}
$$
which is equivalent to:
$$
\frac{k_j}{n_j} < \frac{\overline{H}(X_0|f(X_0))}{\overline{H}(X'|g(X'))} 
$$
Since $\frac{k_j}{n_j} = \frac{R}{\overline{H}(X')}$:
$$
R < \frac{\overline{H}(X_0|f(X_0))}{\overline{H}(X'|g(X'))}.\overline{H}(X')
$$
The condition 
$$
R <  \sup_{P_X\in \mathfrak{P}_{ae}(A^\mathcal{I})} \left[ min \left( \frac{\overline{H}(X|f(X))}{\overline{H}(X'|g(X'))}.\overline{H}(X') , \overline{B}(X,f,F) \right) \right] 
$$
 will thus ensure that an injective encoding of $X'$ on a Feinstein code and allowing an injective decoding is possible.
\end{proof}
Proposition \ref{PropositionCoding} does not rely on the assumption of an injective encoding: it also holds for an injective encoding. Thus such an assumption does not narrow the context of Proposition \ref{PropositionCoding}, it makes the model compliant with the constraints expressed above.

The use of conditional typical sequences and of an expected function $f$ is worth some comments. Designing a code is finding a set of pairs $\{(A^n_i,\Gamma_i),i=1,\cdots M\}$ such that the ``image'' of $A^n_i$ (set of typical sequences) by the ``noisy function'' $F^n$ falls within $\Gamma_i$ with high probability. The approach followed here has been to determine this code thanks to a function $f$ called the expected function. A dual approach could be to choose a code and then to determine a function $f$  such that  $A^n_i \subset f^{-1}(y_i)$ for a given collection of typical $n$-sequences $y_i$ and such that the typical $n$-sequences of $X^n$ belonging to $A^n_i$ are conditionally typical given $y_i$. 

Seeking a relevant concept of capacity requires to associate to a code a cost measure  which  is a rate of encoding. For channel coding and fixed block encoding, the rate determines the number of usable input blocks for a given block size. For computation coding, two cardinalities are needed: the number of sets $A^n_i$  and the number of elements in each $A^n_i$.  A convenient consequence of determining the $A^n_i$'s by a function $f^n$ is that, for large $n$, the $A^n_i$'s have almost the same number of well identified elements, that is the $n$-sequences conditionally typical given $y_i$. A unique number, i.e. $\overline{H}(X|f(X))$, gives the common size (i.e. $\simeq e^{n\overline{H}(X|f(X))}$) of  precisely determined subsets  $A^n_i$'s. Moreover, such a subset $A^n_i$ has almost the same probability measure as the entire $(f^n)^{-1}(y_i)$. The method gives a balanced code which is well characterized by a  unique number (the typical input rate) which represents the two cardinalities mentioned above.

\subsection{The converse \label{SubsectionConverseTheorem}}

\begin{proposition} \label{PropositionConverse}
Let $g :A'^{\mathcal{I}} \rightarrow B'^{\mathcal{I}}$ be a measurable weakly causal function defining an AMS and ergodic deterministic channel. Let $[A,f \times F,B\times C]$ be an AMS and ergodic  noisy computation. Assume that $f$ is a weakly causal function. Let  $[A',X']$ be an AMS and ergodic source. Let $R>0$. If $R>C_f(F)$,  $R$ is not achievable.
\end{proposition}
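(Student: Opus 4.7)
The plan is a Fano-type converse. Argue by contradiction: suppose $R>C_f(F)$ is achievable. By Definition~\ref{DefinitionEncodingRate}, there exist an AMS ergodic source $[A,X_0]$, a sequence of $(M_j,n_j,\epsilon_j,\lambda_j)$-Feinstein codes $\{(A^{n_j}_i,\Gamma^{n_j}_i)\}_{i=1}^{M_j}$ with $M_j=\lfloor e^{n_j(R-\overline{H}(X_0|f(X_0)))}\rfloor$ and $\epsilon_j,\lambda_j\to 0$, and compatible encodings $\mathcal{U}_j$ satisfying $\mathcal{U}_j(A'^{k_j}_\epsilon)\subset \bigcup_i A^{n_j}_i$ and $Rn_j=\overline{H}(X')k_j$. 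Setting $\rho:=R-\overline{H}(X_0|f(X_0))$, the hypothesis $R>C_f(F)\geq \overline{B}(X_0,f,F)=\overline{H}(X_0|f(X_0))+\overline{I}(f(X_0);F(X_0))$ rewrites as $\rho>\overline{I}(f(X_0);F(X_0))$, while $\log M_j/n_j\to\rho$. The aim is to show that this is inconsistent with reliable decoding of the codeword index.

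Define $U_j:=\mathcal{U}_j(X'^{k_j})$ and introduce the codeword index $W_j$ by $U_j\in A^{n_j}_{W_j}$. Since the construction of a Feinstein code yields distinct representatives $y_i$ with $A^{n_j}_i\subset (f^{n_j})^{-1}(y_i)$, $W_j$ is a function of $f^{n_j}(U_j)$; similarly, the decoder estimate $\hat W_j$ (the unique index $i$ with $F^{n_j}(U_j)\in \Gamma_i$) is a function of $F^{n_j}(U_j)$, and the defining property of the Feinstein code gives $P(\hat W_j\neq W_j)\leq \epsilon_j\to 0$. Fano's inequality together with two applications of the data-processing inequality along the chain $W_j\to f^{n_j}(U_j)\to F^{n_j}(U_j)\to \hat W_j$ yields
\[
H(W_j)\leq I(f^{n_j}(U_j);F^{n_j}(U_j)) + 1 + \epsilon_j\log M_j.
\]
To lower bound $H(W_j)$, split $H(U_j)=H(W_j)+H(U_j|W_j)$. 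Since each $A^{n_j}_i$ is contained in the $\epsilon'$-conditionally typical set given $y_i$, the conditional AEP (Lemma~\ref{LemmaConditionAEPForFunctions}) gives $H(U_j|W_j)\leq n_j(\overline{H}(X_0|f(X_0))+\epsilon')$. Using either an injective encoding, or the rate-matching identity $Rn_j=\overline{H}(X')k_j$ combined with the Shannon--McMillan--Breiman theorem for $X'$, one obtains $H(U_j)\geq n_j R - o(n_j)$, and hence $H(W_j)\geq n_j(\rho-\epsilon')-o(n_j)$.

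The key remaining estimate is
\[
\limsup_{j\to\infty}\frac{I(f^{n_j}(U_j);F^{n_j}(U_j))}{n_j}\leq \overline{I}(f(X_0);F(X_0)).
\]
The mutual-information ergodic theorem (Lemma~8.1 of \cite{Gray11}) applied to the AMS ergodic hookup $(X_0,f(X_0),F(X_0))$ shows that $I(f^{n_j}(X_0^{n_j});F^{n_j}(X_0^{n_j}))/n_j\to\overline{I}(f(X_0);F(X_0))$, and this limit should transfer to $U_j$ because $U_j$ is supported on $X_0$-typical sequences and the conditional laws $P_{U_j|W_j=i}$ are asymptotically close to $P_{X_0^{n_j}|f^{n_j}(X_0^{n_j})=y_i}$ on the conditionally typical set. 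Combining the two bounds, dividing by $n_j$, and passing to the limit with $\epsilon_j\to 0$ followed by $\epsilon'\to 0$ yields $\rho\leq \overline{I}(f(X_0);F(X_0))$, contradicting $\rho>\overline{I}(f(X_0);F(X_0))$; hence $R$ is not achievable. The main obstacle is precisely this last upper bound: the encoded process $U_j$ depends on $\mathcal{U}_j$ and on the source $X'$, so it is not literally an $n_j$-block sample of $X_0$ but a codeword-restricted version of it, and transferring the ergodic limit from $X_0^{n_j}$ to $U_j$ requires a careful AEP-based comparison of the induced joint distributions of $(f^{n_j}(\cdot),F^{n_j}(\cdot))$ on the $X_0$-typical set, together with a mild uniformity argument on the conditional encoder law within each codeword.
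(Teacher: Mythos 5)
Your overall architecture (Fano plus data processing applied to the codeword index $W_j$) is a legitimate converse template, but the step you yourself flag as ``the main obstacle'' is a genuine gap, not a technicality, and it is precisely where your route diverges from the paper's. The mutual information ergodic theorem controls $\frac{1}{n}I(f^{n}(X_0^{n});F^{n}(X_0^{n}))$ only under the hookup distribution of the design source $X_0$ with the noisy computation; it says nothing about $I(f^{n_j}(U_j);F^{n_j}(U_j))$ once the input law is replaced by the law of the encoded variable $U_j$, which is (roughly) a weighted distribution concentrated on selected codewords inside the conditionally typical sets. Mutual information is not controlled by the support of the input distribution: redistributing mass over a subset of $X_0$-typical sequences can push the block mutual information well above $n_j\,\overline{I}(f(X_0);F(X_0))$ --- this is exactly why capacity is a supremum over input laws and why converse arguments may not simply evaluate the information functional at the code-induced distribution and compare it with its value at the design source. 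Hence your key inequality $\limsup_j \frac{1}{n_j}I(f^{n_j}(U_j);F^{n_j}(U_j))\le \overline{I}(f(X_0);F(X_0))$ is unproven and false in general, and with it the contradiction $\rho\le\overline{I}(f(X_0);F(X_0))$ collapses; no ``mild uniformity argument'' will close this. The paper's proof is built so as to avoid this trap: it never computes information at the code distribution. It first uses the conditional AEP to \emph{count} the number $N$ of distinct sets $A^{n}_i$ (equivalently distinct $y_i$, i.e.\ distinct typical values of $f^{n}(\mathcal{U}(X'^{k}))$) that any deterministic-decoding, typical-sequence, injective encoding of $X'$ must consume, obtaining $N> e^{n(\overline{I}(f(X_0);F(X_0))+\epsilon'')}$ whenever $R>C_f(F)$; it then applies a Fano-type channel converse in which the mutual information $I(f^{n}(X_0^{n});F^{n}(X_0^{n}))$ is evaluated under the $X_0$-hookup, where the mutual information ergodic theorem legitimately yields the limit $\overline{I}(f(X_0);F(X_0))$, so that only about $e^{n\overline{I}(f(X_0);F(X_0))}$ messages can be reliably recovered from $F^{n}(X^{n})$; the two bounds are incompatible.

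Two secondary points. Your bound $H(U_j|W_j)\le n_j(\overline{H}(X_0|f(X_0))+\epsilon')$ assumes each $A^{n_j}_i$ lies inside the conditionally typical set given $y_i$, and your bound $H(U_j)\ge n_jR-o(n_j)$ assumes the encoder is injective on the typical set; neither property is part of Definition~\ref{DefinitionFeinsteinCodeForNoisyComputation} or Definition~\ref{DefinitionEncodingRate}, so they must be imported explicitly from the reliable-computation model of Section~\ref{SectionJointSourceComputationCodingTheorem} (the paper's own converse implicitly uses both through that model). Those assumptions can be repaired by stating them; the information bound at the code-induced distribution cannot, short of reorganizing the argument along the paper's counting lines or proving a genuinely new statement bounding block mutual information uniformly over code distributions.
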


\begin{proof}
$R=\frac{k}{n} \overline{H}(X') > C_f(F)$ then 
$$
\forall X \in \mathfrak{P}_{ae}(A^\mathcal{I}), R=\frac{k}{n} \overline{H}(X') > \overline{H}(X|f(X)) + \overline{I}(f(X);F(X))
$$
Let $ X_0 \in \mathfrak{P}_{ae}(A^\mathcal{I})$. 

Using the same notations (dropping the subscript $j$) as in the proof of Proposition~\ref{PropositionCoding}, if the decoding is deterministic then the encoding of the conditionally typical sequences of an inverse image $(g^k)^{-1}(z^k)$ uses a number of  $A^n_i$ (or equivalently of $(f^n)^{-1}(y_i)$) which is lower bounded by:
$$
\left\lceil \frac{(1-\lambda) e^{k(\overline{H}(X'|g(X'))-\epsilon')}}{e^{n(\overline{H}(X_0|f(X_0))+\epsilon')}} \right\rceil
$$

If $N$ is the total number  $A^n_i$ (or equivalently of inverse images $(f^n)^{-1}(y_i)$ or equivalently the number of typical sequences of $f^n(\mathcal{U}(X'^k))$ ) needed to encode all the inverse images $(g^k)^{-1}(z)$ ($z$ typical $k$-sequences) then:
\begin{eqnarray}
N & \geq &  \left\lceil \frac{(1-\lambda) e^{k(\overline{H}(X'|g(X'))-\epsilon')}}{e^{n(\overline{H}(X_0|f(X_0))+\epsilon')}} \right\rceil .(1-\lambda) e^{k(\overline{H}(g(X'))-\epsilon')} \nonumber \\
    & \geq & e^{ k(\overline{H}(X')-2\epsilon') - n(\overline{H}(X_0'|f(X_0))+\epsilon') } (1-\lambda)^2\nonumber \\
    & \geq & e^{ n.\frac{R}{\overline{H}(X')}(\overline{H}(X')-2\epsilon') - n(\overline{H}(X_0|f(X_0))+\epsilon') } (1-\lambda)^2\nonumber \\
    & \geq & e^{ n.(R- \frac{2\epsilon'}{\overline{H}(X')}-\overline{H}(X_0|f(X_0))+\epsilon') } (1-\lambda)^2 \nonumber \\
    & > & e^{n(\overline{I}(f(X_0 ; F(X_0)))+\epsilon'')} \label{EqNumberOfInverseImages} 
\end{eqnarray}
for suitably chosen $\lambda$, $\epsilon'$ and $\epsilon''$. This is possible to find such  $\lambda$, $\epsilon'$ and $\epsilon''$ since $R - \overline{H}(X_0|f(X_0)) > \overline{I}(f(X_0);F(X_0))$.

Hence, for any $\epsilon''$, if $n$ is large enough, (\ref{EqNumberOfInverseImages}) holds.

 Assume that, thanks to deterministic encoding and decoding of rate $R'$, an ergodic and AMS  random process $Y^l$ (associated to a source $[B,Y]$) is to be obtained from $F^n(X^n)$ (i.e. $R'=\frac{l}{n}\overline{H}(Y)$): 
 $Y^l\rightarrow f^n(X^n)  \rightarrow F^n(X^n)  \rightarrow \widehat{Y^l} $ is a Markov Chain (where $\widehat{Y^l}$ is the "estimate" of $Y^l$ after decoding).

This implies (thanks to Data Processing Inequality) that:
\begin{multline}
I(Y^l ; \widehat{Y^l})  \leq  I(f^n(X^n) ; F^n(X^n)) \nonumber \\
\Rightarrow H(Y^l) - H(Y^l|\widehat{Y^l})  \leq  I(f^n(X^n) ; F^n(X^n)) \nonumber
\end{multline}
Let $P_e(l)=P(\widehat{Y^l} \neq Y^l)$.  By Fano's inequality:
\begin{multline}
H(Y^l) - (H_2(P_e(l))+l.P_e(l).log(|B|))  \leq  I(f^n(X^n) ; F^n(X^n)) \nonumber \\
\Rightarrow \frac{H(Y^l)}{l} \leq \frac{I(f^n(X^n) ; F^n(X^n))}{l} +   \frac{(H_2(P_e(l))+l.P_e(l).log(|B|))}{l}\nonumber \\
\Rightarrow \frac{H(Y^l)}{l} \leq \frac{I(f^n(X^n) ; F^n(X^n))}{n}.\frac{\overline{H}(Y)}{R'} +   \frac{(H_2(P_e(l))+l.P_e(l).log(|B|))}{l}\nonumber
\nonumber
\end{multline}
If the error probability $P_e(l)$ vanishes (i.e. $\lim_{l \to \infty}P_e(l)=0$) then, necessarily:
$$
R' \leq \overline{I}(f(X);F(X))
$$
Hence, given $\epsilon''>0$, for large enough $l$, the number $N_Y(l)$ of typical sequences of the source $[Y,B']$ necessarily verifies
$$
N_Y(l) \leq e^{n + \overline{I}(f(X);F(X)) \epsilon''}
$$
This is nothing else than the classical proof of the converse of the Channel Coding Theorem.

Assuming there exists an injection $\phi$ associating to typical sequences of $f^n(\mathcal{U}(X'^k))$ typical sequences of $f^n(X^n)$, it is always possible to find an injection $\psi$ associating to typical sequences of $f^n(\mathcal{U}(X'^k))$ typical sequences of $Y^l$ and an injection $\delta$ associating to typical sequences of $Y^l$ typical sequences of $Y^l$ $f^n(X^n)$ such that $\psi=\delta \circ \phi$. To obtain an asymptotically perfect estimation of $f^n(\mathcal{U}(X'^k))$, it is necessary to have 
$$
N \leq N_Y(l) \leq e^{n + \overline{I}(f(X);F(X)) \epsilon''}
$$
which is impossible by (\ref{EqNumberOfInverseImages}).

Then, by (\ref{EqNumberOfInverseImages}), the error probability cannot tends to $0$ (the encoding of $X'$ by $X_0$ requires more inverse images $(f^n)^{-1}(y^n)$ or $A_i^n$ than "available").

In this proof, the use of an "ancillary" ergodic AMS source $[B,Y]$ is motivated by the fact that $f^n(\mathcal{U}(X'^k))$ cannot be assumed to be the prefix of an ergodic AMS process $f(\mathcal{U}^*(X'))$.
\end{proof}

\subsection{Reliable computation and noisy encoding \label{SubsectionNoisyEncoding}}

The model of reliable computation assumes that the encoder and the decoder are perfectly reliable. This assumption could be justified by quoting from \cite{Winograd62} {\em ``The computation system [model] was devised for the sole purpose of studying the relation of information theory and reliable automata''}. Moreover, it could be argued that if the complexity of the computation device is of a much greater magnitude than that of the encoder and decoder then the unreliability of the encoder and decoder have almost no impact on the overall reliability of the computation and thus can be neglected. For complex systems, this is quite realistic. 

In any case, it is impossible to overcome the fact that the reliability reached is at the best the reliability of the final decoding device. The only way is to built a intrinsically reliable enough decoder (for example thanks to gate redundancy).  

A noisy encoder is a noisy computation itself and thus can be handled from the point of view of ``cascaded noisy computations''. To allow downstream reliable computation, a noisy encoder has to possess intrinsic performance which can be expressed as follows.

 For any $\epsilon>0$ and $\lambda>0$, it should exist $k(\epsilon,\lambda)$ such that for any $k>k(\epsilon,\lambda)$ and for any $z\in {C'}^k$ there must exist a set $A'_z \subset (g^k)^{-1}(z)$ and a set $A_z \subset \cup_{y\in \mathcal{V}^{-1}(z)}(f^n)^{-1}(y)$ for which:
\begin{eqnarray}
P_{X^n|X'^k}( A_z^c | x' ) & < & \epsilon \text{    }\forall x'\in A'_z \label{EqNoisyEncoding1} \\
P_{X'^k|g(X'^k)}(A'_z | z )& > & 1-\lambda \label{EqNoisyEncoding2}
\end{eqnarray}
This just states that encoding $x' \in (g^k)^{-1}(z)$ by any randomly chosen $x\in\cup_{y\in \mathcal{V}^{-1}(z)}(f^n)^{-1}(y)$ will preserve the input code for the noisy computation and thus allows correction through the decoding. Stated another way, the encoding errors remain compatible with the code of the hookup $[[A,X^n];$ $[A,f^n \times F^n),B\times C]]$.

(\ref{EqNoisyEncoding1}) and (\ref{EqNoisyEncoding2}) characterize $(A_z , A'_z)_{z\in C'^k}$ as a $(M,k,\epsilon,\delta)$-Feinstein code for the hookup  $[[A', X'^{k}] ; [A', (u^k,\mathcal{U} ), A\times A] ]$ where $u$ stands for the perfect encoding function, $\mathcal{U}$ for the noisy one and $M=|{C'}^k|$.

\pagebreak

\section{Example: Computation with noisy inputs \label{SectionExampleComputationWithNoisyInputs}}

In the special case where a computation is noisy due to the input device, it is possible to assert that correcting the  results of a function applied to noisy input will offer a better input rate than correcting inputs before applying the function. This case is simply captured by a noisy computation $[A, f\times F, B\times B)$ where:
\begin{itemize}
\item $f$ is the expected perfect function
\item $[A,F,B]$ is a cascade of a channel $[A,\nu,A]$ and of $[A,f,B]$ (i.e. $F\equiv \nu f$)
\end{itemize}
\begin{figure}[htbp]
	\begin{center}
	\xymatrix{
    		\overset{\txt{Input source }}{ [A,X] }	\ar  [r] ^\nu 	\ar@{.>} [dr] ^f &   \overset{\txt{Noisy input source  }}{[A,Y]} \ar  [r] ^f & \overset{\txt{Noisy result}}{[B,f(Y)]} \\
																		&   \overset{\txt{Expected result}}{[B,f(X)]} 
 	 }
	\end{center}
	\caption{Computation with noisy inputs} \label{FigComputationWithNoisyInputs}
\end{figure}
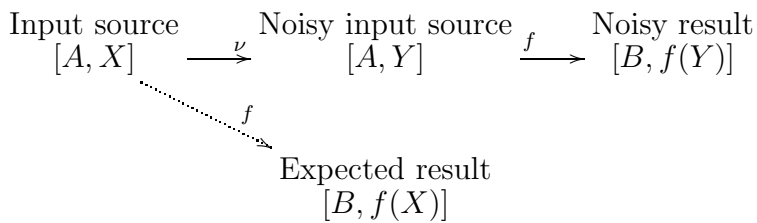
The following lemma asserts that the capacity of the channel $\nu$ is lower or equal than the capacity of the noisy function $f\times (\nu f)$. Then it is possible to obtain a greater input rate for $f\times (\nu f)$ than for $\nu$.
\begin{lemma}\label{LemmaComputationWithNoisyInputs}
Let $[A, f\times F, B\times B]$ be an AMS and ergodic noisy computation such that 
\begin{itemize}
\item $f$ is weakly causal
\item $[A,F,B]$ is a cascade of an AMS and ergodic channel $[A,\nu,A]$ and of $[A,f,B]$ (i.e. $F\equiv \nu f$)
\end{itemize}
Let $C_\nu$ denote the capacity of the channel $[A,\nu,A]$ and $C_f(F)$ denote the capacity of the noisy computation $[A, f\times F, B\times B]$. Then
$$
C_\nu \leq C_f(F)
$$
\end{lemma}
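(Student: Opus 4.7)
The approach is to use the equivalent expression
$$C_f(F)=\sup_X\bigl[\bar H(X\mid f(X))+\bar I(f(X);F(X))\bigr],$$
where the supremum ranges over AMS and ergodic sources on $A$. Since $F=\nu f$, every such input yields $F(X)=f(Y)$ with $Y=\nu(X)$, so $\bar I(f(X);F(X))=\bar I(f(X);f(Y))$. Because $f(X)$ and $f(Y)$ are deterministic functions of $X$ and $Y$ respectively, a chain-rule computation produces the key identity
$$\bar H(X\mid f(X))+\bar I(f(X);f(Y))=\bar I(X;Y)+\bigl[\bar H(X\mid Y)-\bar H(f(X)\mid f(Y))\bigr].$$

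For a given $\epsilon>0$, I would first pick an AMS and ergodic source $X^*$ with $\bar I(X^*;Y^*)\ge C_\nu-\epsilon$, which exists since $C_\nu$ is a supremum over this same class. Substituting $X=X^*$ in the identity, the inequality $C_f(F)\ge C_\nu-\epsilon$ follows as soon as the bracket $\bar H(X^*\mid Y^*)-\bar H(f(X^*)\mid f(Y^*))$ is non-negative. The data processing inequality $\bar I(f(X^*);f(Y^*))\le\bar I(X^*;Y^*)$ supplies the companion bound $\bar H(X^*\mid f(X^*))\ge\bar H(X^*\mid Y^*)-\bar H(f(X^*)\mid f(Y^*))$, showing that the entropy slack $\bar H(X^*\mid f(X^*))$ always dominates the positive part of the bracket, which is the technical lever to absorb any residual deficit.

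The main obstacle is that $\bar H(f(X)\mid f(Y))\le\bar H(X\mid Y)$ is \emph{not} universally true: the map $f$ may collapse values of $Y$ that were useful for distinguishing values of $X$, producing the reverse inequality for certain sources. Hence the proof cannot simply substitute an arbitrary capacity-approaching $X^*$ and must select one carefully. Two favourable regimes close the argument immediately, namely $f$ injective on the essential support of $X^*$ (bracket equal to zero) and $f(X^*)$ essentially constant, in which case $\bar H(f(X^*)\mid f(Y^*))=0$ and $C_f(F)\ge\bar H(X^*)\ge\bar I(X^*;Y^*)$. The hard case lies between these two, and the plan is to argue that among the AMS and ergodic sources approaching $C_\nu$, one can always be chosen that fits one of these regimes or a suitable convex mixture, using the data-processing slack above to control the remaining gap. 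Letting $\epsilon\to 0$ then gives $C_f(F)\ge C_\nu$.
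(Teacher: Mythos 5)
Your proposal and the paper's proof start from the same reduction: after the identity $\overline{H}(X|f(X))+\overline{I}(f(X);f(Y))=\overline{H}(X)-\overline{H}(f(X)|f(Y))$, everything hinges on whether a near-capacity-achieving input $X_0$ for $\nu$ satisfies $\overline{H}(X_0|Y_0)\geq \overline{H}(f(X_0)|f(Y_0))$. At exactly this point your argument stops being a proof: you correctly observe that this is not a data-processing inequality and may fail, you dispose of two easy regimes (injective $f$, essentially constant $f(X^*)$), and you leave the ``hard case'' as a plan (``a suitable convex mixture \ldots{} to control the remaining gap'') with no actual construction. That is a genuine gap; nothing in your sketch shows that a capacity-approaching source lying in one of the favourable regimes exists.

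The paper closes the gap differently: it asserts that $H(X|Y)\geq H(f(X)|f(Y))$ holds for \emph{every} pair of finite-alphabet variables and every $f$, by grouping the sum defining $H(X|Y)$ over the fibres $f^{-1}(b_1)\times f^{-1}(b_2)$ and invoking the log-sum inequality, and then substitutes an arbitrary near-optimal $X_0$ --- precisely the ``simple substitution'' you ruled out. You should know that your doubt is substantive rather than excess caution: the log-sum inequality applied to terms of the form $a_i\log(b_i/a_i)$ yields an \emph{upper} bound, not the lower bound the paper writes, and the asserted inequality is in fact false. Take $A=\{0,1,2,3\}$, let $\nu$ be the noiseless memoryless channel applying, symbol by symbol, the permutation swapping $1$ and $2$, and let $f$ collapse $\{0,1\}$ to one output symbol and $\{2,3\}$ to another; for $X$ i.i.d.\ uniform one has $H(X|Y)=0$ while $f(X)$ and $f(Y)$ are independent fair bits, so $H(f(X)|f(Y))=\log 2$. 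In that example $(f(X_n),F(X_n))$ is a bijective function of $X_n$, whence $\overline{H}(X)-\overline{H}(f(X)|F(X))=\overline{H}(F(X))\leq\log 2$ for every source, so $C_f(F)\leq\log 2<\log 4=C_\nu$. The obstacle you identified is therefore not something a cleverer choice of $X^*$ can circumvent; it is the weak point of the statement itself, and neither your sketch nor the paper's argument gets past it.
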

\begin{proof}
Let $X$ and $Y$ two random variables with values in a finite set $A$. Let $f:A \to B$ a measurable function from $A$ to a finite set $B$. Then
\begin{eqnarray}
H(X|Y) & = & \sum_{(a_1,a_2) \in A^2}  P_{X,Y}(a_1,a_2) log\left( \frac{P_Y(a_2)}{P_{X,Y}(a_1,a_2)}\right) \nonumber \\
	& = & \sum_{(b_1,b_2) \in B^2} \sum_{(a_1,a_2) \in (f^{-1}(b_1) \times f^{-1}(b_2)} P_{X,Y}(a_1,a_2) log \left( \frac{P_Y(a_2)}{P_{X,Y}(a_1,a_2)}\right) \nonumber 
\end{eqnarray}
By the log-sum inequality:
\begin{eqnarray}
H(X|Y) 	& \geq & \sum_{(b_1,b_2) \in B^2} \left( \sum_{(a_1,a_2) \in (f^{-1}(b_1) \times f^{-1}(b_2)} P_{X,Y}(a_1,a_2) \right) . \nonumber\\
	&          &  log\left( \frac{\sum_{(a_1,a_2) \in (f^{-1}(b_1) \times f^{-1}(b_2)} P_Y(a_2)}{\sum_{(a_1,a_2) \in (f^{-1}(b_1) \times f^{-1}(b_2)} P_{X,Y}(a_1,a_2)}\right) \nonumber  \\
	& \geq & \sum_{(b_1,b_2) \in B^2}  P_{f(X),f(Y)}(b_1,b_2)  log\left( \frac{P_{f(Y)}(b_2) \sum_{(a_1) \in (f^{-1}(b_1)} 1}{P_{f(X),f(Y)}(b_1,b_2)}\right) \nonumber  \\
	& \geq & \sum_{(b_1,b_2) \in B^2}  P_{f(X),f(Y)}(b_1,b_2)  log\left( \frac{P_{f(Y)}(b_2)}{P_{f(X),f(Y)}(b_1,b_2)} \right) \nonumber \\
	& \geq & H(f(X)|f(Y)) \nonumber
\end{eqnarray}
Let $[A,X]$ and $[A,Y]$ be two AMS and ergodic sources and $f: A^\mathcal{I} \to B^\mathcal{I}$ a weakly causal measurable function. Then for any $n$ large enough, ${H}(X^n|Y^n) \geq {H}(f^n(X^n)|f^n(Y^n))$. Taking the limit when $n$ tends to infinity:
$$
\overline{H}(X|Y) \geq \overline{H}(f(X)|f(Y))
$$
Let $[A, f\times F, B\times B]$ be an AMS and ergodic noisy computation such that 
\begin{itemize}
\item $f$ is weakly causal
\item $[A,F,B]$ is a cascade of an AMS and ergodic channel $[A,\nu,A]$ and of $[A,f,B]$ (i.e. $F\equiv \nu f$)
\end{itemize}
Let $[A,X]$  be an AMS and ergodic source and let and $[A,Y]$ denote the output of the channel $\nu$ when fed by the source $X$.
\begin{eqnarray}
C_\nu - C_f(F) & = & \sup_X (\overline{H}(X) - \overline{H}(X|Y)) -  \sup_X (\overline{H}(X) - \overline{H}(f(X)|f(Y))) \nonumber \\
		& = & \sup_X (\overline{H}(X) - \overline{H}(X|Y)) + \inf_X (-\overline{H}(X) + \overline{H}(f(X)|f(Y))) \nonumber
\end{eqnarray}
Then, for any source $X'$:
$$
C_\nu - C_f(F) \leq \sup_X (\overline{H}(X) - \overline{H}(X|Y)) -\overline{H}(X') + \overline{H}(f(X')|f(Y')))
$$
Thus, for any $\epsilon >0$, there exists a source $X_0$ such that:
$$
C_\nu - C_f(F) \leq \epsilon +  \overline{H}(X_0) - \overline{H}(X_0|Y_0) -\overline{H}(X_0) + \overline{H}(f(X_0)|f(Y_0)))
$$
Since $\overline{H}(X_0|Y_0) \geq \overline{H}(f(X_0)|f(Y_0)) $, for any $\epsilon >0$
$$
C_\nu - C_f(F) \leq \epsilon
$$
Then $C_\nu \leq C_f(F)$

\end{proof}

As an example, assume that: 
\begin{itemize}
	\item $A$ is a set of binary words of some finite length, $f$ is such that, for all $x\in A^\mathcal{I}$, $f(x)=\phi(x_0)\cdots \phi(x_i)\cdots$ where $\phi$ is a Turing-computable function from $A$ to $B=\phi(A)$ ; then $f$ is obviously stationary and weakly causal
	\item the noisy channel $[A,F,C]$ is made of a stationary and ergodic noisy communication channel $[A,\nu,A]$ (e.g. a memoryless channel) followed by a perfect instance of $f$. I.e. $B=C$ and $[A,F,B]$ is a cascade $\nu f$. 
\end{itemize}
This models an infinite sequence of computations of a computable function $\phi$ on inputs acquired through a noisy channel. Then, thanks to the following lemmas (Lemma \ref{LemmaStationaryErgodicNoisyInputFunction} and Lemma \ref{LemmaStationaryErgodicFunctionChannelProduct}, proved in Appendix), Propositions ~\ref{PropositionNoisyComputation}, ~\ref{PropositionCoding} and ~\ref{PropositionConverse}  and Lemma \ref{LemmaComputationWithNoisyInputs} hold.

\begin{lemma}\label{LemmaStationaryErgodicNoisyInputFunction}
If  $[A,\nu,A]$ is a stationary ergodic channel and $f: A^\mathcal{I} \to B^\mathcal{I}$ is a stationary function  (i.e. $f(T_A x) = T_B f(x)$ for all $x\in A^\mathcal{I}$) then the cascade $[A,\nu f, B]$  is a stationary and ergodic channel $[A,F,B]$ ($F \equiv \nu f$).
\end{lemma}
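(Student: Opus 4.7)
The plan is to verify the two required properties separately: first that the cascade $\nu f$ is well-defined as a channel and stationary with respect to any stationary source $\mu$, and then that it preserves ergodicity. The cascade channel is defined in the natural way by $\nu f(x,G) := \nu(x, f^{-1}(G))$ for $G \in \mathcal{B}_{B^\mathcal{I}}$, which is a probability kernel because $f$ is measurable and $\nu$ is a kernel.

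For stationarity, I would use Lemma \ref{LemmaStationaryChannel}. Since $f$ is stationary, $f \circ T_A = T_B \circ f$, so $f^{-1}(T_B^{-1} G) = T_A^{-1}(f^{-1}(G))$ for every $G \in \mathcal{B}_{B^\mathcal{I}}$. Combining with the stationarity of $\nu$ w.r.t.\ $\mu$, this gives
$$
\nu f(x, T_B^{-1} G) = \nu(x, T_A^{-1}(f^{-1}(G))) = \nu(T_A x, f^{-1}(G)) = \nu f(T_A x, G)\quad \mu\text{-a.e.},
$$
which, by Lemma \ref{LemmaStationaryChannel} again, shows that $\nu f$ is stationary w.r.t.\ $\mu$.

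For ergodicity, let $\mu$ be stationary and ergodic. I would introduce the measurable map $\pi : A^\mathcal{I} \times A^\mathcal{I} \to A^\mathcal{I} \times B^\mathcal{I}$ defined by $\pi(x,y) = (x, f(y))$. Using again $f\circ T_A = T_B \circ f$, one checks $\pi \circ T_{AA} = T_{AB} \circ \pi$, so $\pi$ is a factor map between the two joint shift systems. A direct computation on rectangles shows the pushforward measure coincides with the hookup of $\nu f$:
$$
(\mu\nu)(\pi^{-1}(F \times G)) = (\mu\nu)(F \times f^{-1}(G)) = \int_F \nu(x, f^{-1}(G))\, d\mu = \mu(\nu f)(F \times G).
$$
Since $\mu$ is stationary ergodic and $\nu$ is ergodic w.r.t.\ $\mu$, the hookup $\mu\nu$ is stationary ergodic by hypothesis. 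Factor maps of stationary ergodic systems are stationary ergodic (any $T_{AB}$-invariant set pulls back to a $T_{AA}$-invariant set with the same measure), so $\mu(\nu f)$ is ergodic.

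The only real subtlety is the bookkeeping around the identity $f^{-1}\circ T_B^{-1} = T_A^{-1}\circ f^{-1}$ and verifying that the pushforward of $\mu\nu$ under $\pi$ really is $\mu(\nu f)$; once those are in hand, stationarity follows from Lemma \ref{LemmaStationaryChannel} and ergodicity is a standard factor-map argument. There is no genuine obstacle, since $f$ being a deterministic stationary function makes the cascade structure transparent.
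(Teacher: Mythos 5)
Your proposal is correct and follows essentially the paper's own route: the paper proves ergodicity by pulling back each $T_{AB}$-invariant set $O$ to $\tilde{O}=\{(x,y):(x,f(y))\in O\}$, which is exactly your $\pi^{-1}(O)$ for the factor map $\pi=\mathrm{id}\times f$, showing $\tilde{O}$ is $T_{AA}$-invariant via $f\circ T_A=T_B\circ f$ and that $P_{XF(X)}(O)=P_{XY}(\tilde{O})$, then invoking ergodicity of $\mu\nu$. The only cosmetic difference is in the stationarity half, where you compute $\nu f(x,T_B^{-1}G)=\nu f(T_Ax,G)$ $\mu$-a.e.\ via Lemma \ref{LemmaStationaryChannel} while the paper simply cites the fact that a cascade of stationary channels is stationary.
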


\begin{lemma}\label{LemmaStationaryErgodicFunctionChannelProduct}
If  $f$ is a stationary function  (i.e. $f(T_A x) = T_B f(x)$ for all $x\in A^\mathcal{I}$) and $F$ is stationary and ergodic then $[A,f \times F,B\times C]$ is stationary and ergodic.
\end{lemma}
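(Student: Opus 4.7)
The goal is to verify, for an arbitrary stationary (resp. stationary ergodic) source $[A,X]$ with distribution $\mu$, that the hookup $\mu(f\times F)$ defining the joint process $(X,f(X),F(X))$ is stationary (resp. stationary ergodic); the three pairwise marginals then inherit the property since coordinate projections on $A^\mathcal{I}\times B^\mathcal{I}\times C^\mathcal{I}$ commute with the shift $T_{ABC}=T_A\times T_B\times T_C$.

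For stationarity I would apply Lemma \ref{LemmaStationaryChannel} to $f\times F$: it suffices to check, on the generating semi-algebra of rectangles $G\times H\in\mathcal{B}_{B^\mathcal{I}}\times\mathcal{B}_{C^\mathcal{I}}$, that
$$
(f\times F)(x,T_{BC}^{-1}(G\times H))=(f\times F)(T_Ax,G\times H)\quad \mu\text{-a.e.}
$$
Expanding with Definition \ref{DefinitionChannelProduct} this reads $1_{f^{-1}(T_B^{-1}G)}(x)\,F(x,T_C^{-1}H)=1_{f^{-1}(G)}(T_Ax)\,F(T_Ax,H)$. Stationarity of $f$ gives $T_B\circ f=f\circ T_A$, hence $f^{-1}(T_B^{-1}G)=T_A^{-1}f^{-1}(G)$, so the two indicators coincide pointwise. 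Stationarity of the channel $F$ w.r.t.\ $\mu$ combined with Lemma \ref{LemmaStationaryChannel} supplies the identity $F(x,T_C^{-1}H)=F(T_Ax,H)$ $\mu$-a.e., and the equality follows on rectangles, hence (by a $\pi$--$\lambda$ argument) on the whole product $\sigma$-field.

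For ergodicity, introduce the measurable map
$$
\phi:A^\mathcal{I}\times C^\mathcal{I}\longrightarrow A^\mathcal{I}\times B^\mathcal{I}\times C^\mathcal{I},\qquad \phi(x,z)=(x,f(x),z).
$$
Because $f\circ T_A=T_B\circ f$, $\phi$ intertwines the shifts: $\phi\circ T_{AC}=T_{ABC}\circ\phi$. A direct verification on rectangles shows that $\mu(f\times F)=\phi_*(\mu F)$, since both probabilities assign $\int_E F(x,H)\,d\mu(x)$ to $E\times\{f(x)\in G\}\times H$ intersected appropriately; put differently, $\phi$ pushes the law of $(X,F(X))$ onto the law of $(X,f(X),F(X))$. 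Now if $E\subset A^\mathcal{I}\times B^\mathcal{I}\times C^\mathcal{I}$ is $T_{ABC}$-invariant, then $\phi^{-1}(E)$ is $T_{AC}$-invariant, so $\mu F(\phi^{-1}(E))\in\{0,1\}$ by the assumed ergodicity of $F$ w.r.t.\ $\mu$, and therefore $\mu(f\times F)(E)\in\{0,1\}$.

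The main subtlety, and thus what I would take most care over, is the measure-theoretic identity $\mu(f\times F)=\phi_*(\mu F)$: the channel product is defined only via its values on rectangles $G\times H$, so one must check the identity on that semi-algebra and then invoke uniqueness of extension to $\mathcal{B}_{B^\mathcal{I}\times C^\mathcal{I}}$. Once this is in hand, stationarity and ergodicity of the three marginal processes $(X,f(X))$, $(X,F(X))$, $(f(X),F(X))$ follow by pushing forward along the coordinate projections, which likewise commute with $T_{ABC}$.
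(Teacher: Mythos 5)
Your proof is correct, and it is organized somewhat differently from the paper's. The paper treats the three pairwise processes separately: it obtains stationarity of the triple by showing, via the Markov chain $F(X)\to X\to f(X)$, that the channel $(X,F(X))\to f(X)$ is stationary; it then gets ergodicity of $(X,f(X))$ from the fact that a stationary deterministic channel is ergodic (citing Gray), of $(X,F(X))$ from the hypothesis on $F$, and of $(f(X),F(X))$ by pulling an invariant set $O\subset B^\mathcal{I}\times C^\mathcal{I}$ back through $(x,z)\mapsto(f(x),z)$ and invoking ergodicity of $P_{XF(X)}$. You instead verify the stationarity criterion of Lemma \ref{LemmaStationaryChannel} directly for the product kernel (in effect specializing Proposition \ref{PropositionCNSQuasiStationaryChannelProduct} to a deterministic factor), and for ergodicity you establish the single pushforward identity $\mu(f\times F)=\phi_*(\mu F)$ with $\phi(x,z)=(x,f(x),z)$, which intertwines $T_{AC}$ with $T_{ABC}$ precisely because $f\circ T_A=T_B\circ f$; ergodicity of the whole triple then follows at once from ergodicity of the hookup $\mu F$, and the three pairwise processes required by the paper's definition are recovered through shift-commuting projections. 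The underlying mechanism---using the shift-equivariance of $f$ to carry invariant sets back to the $(X,F(X))$ system---is the same as the paper's argument for $(f(X),F(X))$, but your version is more unified and dispenses with the appeal to the ergodicity of deterministic stationary channels, at the modest price of checking the pushforward identity on rectangles and, in the stationarity step, of noting that the rectangle-dependent null sets are handled either via a countable generating class (the spaces are standard) or, as the paper does, by verifying shift-invariance of the hookup on rectangles and extending by uniqueness of measures.
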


\pagebreak

\part{Ergodic and AMS noisy computations}\label{PartErgodicAMSNoisyComputations}

\pagebreak

\section{Channel Products}\label{SectionChannelProduct}

In this section,  it is proved that a channel product $[A, \nu_1 \times \nu_2, B\times C]$ is respectively stationary, AMS (w.r.t. a stationary source), recurrent or output weakly mixing if and only if both channels $[A,\nu_1,B]$ and $[A,\nu_2,C]$ are respectively stationary, AMS (w.r.t. a stationary source), recurrent or output weakly mixing.  The section ends with corollaries dedicated to noisy computations.

\subsection{Stationary and AMS Channel Products}\label{SubsectionAMSChannelProducts}

\begin{proposition}\label{PropositionCNSQuasiStationaryChannelProduct}
The channel product $[A, \nu_1 \times \nu_2 , B\times C]$ is stationary w.r.t  a stationary source distribution $\mu$ if and only if $[A, \nu_1 , B]$ and $[A,  \nu_2 ,  C]$ are stationary w.r.t $\mu$.
\end{proposition}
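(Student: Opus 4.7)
The plan is to reduce the statement to the pointwise stationarity characterization of Lemma \ref{LemmaStationaryChannel}. Observe first that the shift $T_{BC}$ on the product output space acts coordinatewise so that $T_{BC}^{-1}(G\times H)=T_B^{-1}G\times T_C^{-1}H$ for any measurable rectangle. By Lemma \ref{LemmaStationaryChannel}, $[A,\nu_1\times\nu_2,B\times C]$ is stationary w.r.t.\ $\mu$ iff $(\nu_1\times\nu_2)(x,T_{BC}^{-1}E)=(\nu_1\times\nu_2)(T_Ax,E)$ for all measurable $E$, $\mu$-a.e.\ in $x$, and similarly for $\nu_1$ and $\nu_2$ individually.

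For the \emph{if} direction, suppose $\nu_1$ and $\nu_2$ are each stationary w.r.t.\ $\mu$. Let $N\in\mathcal{B}_{A^\mathcal{I}}$ be a $\mu$-null set off of which both $\nu_1(x,T_B^{-1}G)=\nu_1(T_Ax,G)$ and $\nu_2(x,T_C^{-1}H)=\nu_2(T_Ax,H)$ hold for all $G,H$ (countable generator argument; the alphabets are standard so each $\mathcal{B}_{B^\mathcal{I}}$, $\mathcal{B}_{C^\mathcal{I}}$ is countably generated and one can collect exceptional sets on a countable algebra and take closure). For $x\notin N$, multiplying the two identities and using the definition of the channel product gives
\[
(\nu_1\times\nu_2)(x,T_B^{-1}G\times T_C^{-1}H)=(\nu_1\times\nu_2)(T_Ax,G\times H),
\]
i.e.\ the desired stationarity on the semialgebra of measurable rectangles. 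Both sides, as set functions on $\mathcal{B}_{B^\mathcal{I}\times C^\mathcal{I}}$, are probability measures (the left-hand side is the pushforward of $(\nu_1\times\nu_2)(x,\cdot)$ by $T_{BC}$, the right-hand side is $(\nu_1\times\nu_2)(T_Ax,\cdot)$); since they agree on a generating $\pi$-system containing $B^\mathcal{I}\times C^\mathcal{I}$, uniqueness of measure extension forces agreement on all of $\mathcal{B}_{B^\mathcal{I}\times C^\mathcal{I}}$. Invoking Lemma \ref{LemmaStationaryChannel} again, the product channel is stationary w.r.t.\ $\mu$.

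For the \emph{only if} direction, assume the product channel is stationary w.r.t.\ $\mu$. Specialize the rectangle identity to $H=C^\mathcal{I}$: since $T_C^{-1}C^\mathcal{I}=C^\mathcal{I}$ and $\nu_2(x,C^\mathcal{I})=\nu_2(T_Ax,C^\mathcal{I})=1$, the identity collapses to $\nu_1(x,T_B^{-1}G)=\nu_1(T_Ax,G)$ $\mu$-a.e., for every $G\in\mathcal{B}_{B^\mathcal{I}}$, which is stationarity of $\nu_1$ via Lemma \ref{LemmaStationaryChannel}. The symmetric choice $G=B^\mathcal{I}$ yields stationarity of $\nu_2$.

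The only mildly delicate point is the passage from the rectangle identity to all of $\mathcal{B}_{B^\mathcal{I}\times C^\mathcal{I}}$ in the \emph{if} direction, and more specifically the handling of the $\mu$-a.e.\ exceptional set uniformly in the measurable sets $G,H$; this is routine on standard spaces by restricting to a countable generating algebra and then invoking uniqueness of measure extension, so no substantial obstacle is expected.
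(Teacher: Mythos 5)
Your proof is correct, but it runs at a slightly different level than the paper's. The paper disposes of the \emph{only if} direction by simply noting that the marginals of the stationary hookup $\mu(\nu_1\times\nu_2)$ are $\mu\nu_1$ and $\mu\nu_2$, and proves the \emph{if} direction by working with the joint measure: it writes $\mu(\nu_1\times\nu_2)(E\times G\times H)=\int_E\nu_1(x,G)\nu_2(x,H)\,d\mu$, performs a change of variables using stationarity of $\mu$, applies Lemma~\ref{LemmaStationaryChannel} once to rewrite $\nu_i(T_Ax,\cdot)$ as $\nu_i(x,T^{-1}\cdot)$, and then extends stationarity from the semi-algebra of rectangles $E\times G\times H$ to the full product $\sigma$-field. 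Because the exceptional null sets are absorbed inside an integral for each fixed rectangle, the paper never has to worry about uniformity of the $\mu$-a.e.\ statements in $G$ and $H$. You instead work purely at the kernel level, using Lemma~\ref{LemmaStationaryChannel} as an equivalence in both directions; this forces you to produce a single null set valid for all $G,H$ simultaneously, which you correctly obtain from countable generation of the output $\sigma$-fields (standard alphabets) together with uniqueness of measures agreeing on a generating $\pi$-system, and then you extend the pointwise rectangle identity to all of $\mathcal{B}_{B^\mathcal{I}\times C^\mathcal{I}}$ before reapplying the lemma. Both arguments are sound; yours makes the stationarity of $\mu$ implicit (it enters only through Lemma~\ref{LemmaStationaryChannel}) and yields the stronger pointwise statement $(\nu_1\times\nu_2)(x,T_{BC}^{-1}E)=(\nu_1\times\nu_2)(T_Ax,E)$ off a single null set, while the paper's integration argument is shorter and avoids the uniform-null-set bookkeeping entirely. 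Your \emph{only if} direction via the specializations $H=C^\mathcal{I}$ and $G=B^\mathcal{I}$ is also fine, just less direct than reading off the marginals of the stationary hookup.
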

\begin{proof}
If the channel product is stationary w.r.t $\mu$ then the hookup $\mu(\nu_1 \times \nu_2)$ is stationary. Thus the marginals $\mu\nu_1$ and $\mu\nu_2$ are stationary. Then $[A, \nu_1 , B]$ and $[A,  \nu_2 ,  C]$ are stationary w.r.t $\mu$.

Assume that $[A, \nu_1 , B]$ and $[A,  \nu_2 ,  C]$ are stationary w.r.t $\mu$. $\forall E  \in \mathcal{B}_{A^\mathcal{I}}$, $\forall G  \in \mathcal{B}_{B^\mathcal{I}}$ and $\forall H  \in \mathcal{B}_{C^\mathcal{I}}$:
$$
\mu(\nu_1 \times \nu_2)(E\times G \times H)= \int_E \nu_1(x,G).\nu_2(x,H) d\mu
$$
Since $\mu$ is stationary, by a change of variable:
$$
\mu(\nu_1 \times \nu_2)(E\times G \times H)= \int_{T_A^{-1}E} \nu_1(T_A x,G)\nu_2(T_Ax,H) d\mu
$$
Thanks to lemma~\ref{LemmaStationaryChannel} (p. \pageref{LemmaStationaryChannel})
$$
\mu(\nu_1 \times \nu_2)(E\times G \times H)= \int_{T_A^{-1}E} \nu_1(x,T_B^{-1}G)\nu_2(x,T_C^{-1}H) d\mu
$$
Then $\mu(\nu_1 \times \nu_2)(E\times G \times H)=\mu(\nu_1 \times \nu_2)(T_A^{-1} E\times T_B^{-1}G \times T_C^{-1}H)$. Stationarity on the semi-algebra of rectangles implies stationarity on the product $\sigma$-field (\cite{Kakihara99}, Remark 3, p. 77) then $\mu(\nu_1 \times \nu_2)$ is stationary.
\end{proof}

\begin{proposition}\label{PropositionCNSAMSChannelProduct}
The channel product $[A, \nu_1 \times \nu_2 , B\times C]$ is AMS w.r.t  a stationary source distribution $\mu$ if and only if $[A, \nu_1 , B]$ and $[A,  \nu_2 ,  C]$ are AMS w.r.t $\mu$. In this case, $\mu(\nu_1 \times \nu_2) \ll^a \mu (S\overline{\nu}_{1_\mu} \times S\overline{\nu}_{2_\mu})$.
\end{proposition}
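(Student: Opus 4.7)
Proof plan:

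The forward direction is essentially bookkeeping: if $\mu(\nu_1\times\nu_2)$ is AMS, then so are all its marginals, and in particular the marginals on $A^{\mathcal{I}}\times B^{\mathcal{I}}$ and $A^{\mathcal{I}}\times C^{\mathcal{I}}$, which are $\mu\nu_1$ and $\mu\nu_2$ respectively. Hence $\nu_1$ and $\nu_2$ are AMS w.r.t.\ $\mu$.

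For the reverse direction, the strategy is to exhibit an explicit stationary probability that asymptotically dominates $\mu(\nu_1\times\nu_2)$, namely $\mu(S\overline{\nu}_{1_\mu}\times S\overline{\nu}_{2_\mu})$. First I invoke Lemma~\ref{LemmaChannelOfStationaryMeans} to obtain, for each $i=1,2$, a channel $S\overline{\nu}_{i_\mu}$ stationary w.r.t.\ $\mu$ such that $\mu\nu_i\ll^a\mu S\overline{\nu}_{i_\mu}$. By Proposition~\ref{PropositionCNSQuasiStationaryChannelProduct} applied to these two stationary channels, the product $S\overline{\nu}_{1_\mu}\times S\overline{\nu}_{2_\mu}$ is stationary w.r.t.\ $\mu$, so $\mu(S\overline{\nu}_{1_\mu}\times S\overline{\nu}_{2_\mu})$ is a stationary probability; it therefore suffices to show $\mu(\nu_1\times\nu_2)\ll^a\mu(S\overline{\nu}_{1_\mu}\times S\overline{\nu}_{2_\mu})$.

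The key step is to transfer asymptotic dominance to the tail $\sigma$-field and use the product structure there. Applying Lemma~\ref{LemmaDominanceOnTailSigmaField} to the stationary $\mu S\overline{\nu}_{i_\mu}$ converts $\mu\nu_i\ll^a\mu S\overline{\nu}_{i_\mu}$ into $(\mu\nu_i)_\infty\ll(\mu S\overline{\nu}_{i_\mu})_\infty$, and then Lemma~\ref{LemmaChannelDominanceEquivHookupDominance} (combined with Lemma~\ref{LemmaChannelRestriction} to identify $(\mu\nu_i)_\infty=\mu_\infty(\nu_i)_\infty$) yields the kernel dominance $(\nu_i(x,\cdot))_\infty\ll (S\overline{\nu}_{i_\mu}(x,\cdot))_\infty$ $\mu$-a.e.\ for $i=1,2$. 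At this point I use Lemma~\ref{LemmaProductOfTailIsTailOfProduct} to write $(\nu_1\times\nu_2)_\infty(x,\cdot)=(\nu_1(x,\cdot))_\infty\times(\nu_2(x,\cdot))_\infty$ and similarly for $S\overline{\nu}_{1_\mu}\times S\overline{\nu}_{2_\mu}$; the elementary fact that a product of absolutely continuous probabilities is absolutely continuous w.r.t.\ the product of the dominating probabilities (with Radon--Nikodym derivative equal to the tensor product of the individual densities) gives the kernel dominance $(\nu_1\times\nu_2)_\infty(x,\cdot)\ll(S\overline{\nu}_{1_\mu}\times S\overline{\nu}_{2_\mu})_\infty(x,\cdot)$ $\mu$-a.e.

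Finally, a reversal of the previous chain concludes the proof: Lemma~\ref{LemmaChannelDominanceEquivHookupDominance} upgrades the kernel dominance to $\mu_\infty(\nu_1\times\nu_2)_\infty\ll\mu_\infty(S\overline{\nu}_{1_\mu}\times S\overline{\nu}_{2_\mu})_\infty$, and Lemma~\ref{LemmaDominanceOnTailSigmaField}, applied to the stationary probability $\mu(S\overline{\nu}_{1_\mu}\times S\overline{\nu}_{2_\mu})$, promotes this tail dominance to the asserted $\mu(\nu_1\times\nu_2)\ll^a\mu(S\overline{\nu}_{1_\mu}\times S\overline{\nu}_{2_\mu})$. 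Since the right-hand side is stationary, $\mu(\nu_1\times\nu_2)$ is AMS, completing the proof and delivering simultaneously the dominance statement of the proposition. The main obstacle is not conceptual but notational: one must carefully track the three-way identification of the tail $\sigma$-field of the joint space (via Lemma~\ref{LemmaChannelRestriction}), the tail of the product kernel (via Lemma~\ref{LemmaProductOfTailIsTailOfProduct}), and the repeated use of the equivalence between channel and hookup dominance, to avoid spurious a.e.\ exceptional sets when tensoring the two kernel dominances.
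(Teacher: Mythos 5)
Your proposal is correct and follows essentially the same route as the paper's proof: marginals for the forward direction, stationarity of $S\overline{\nu}_{1_\mu}\times S\overline{\nu}_{2_\mu}$ via Proposition~\ref{PropositionCNSQuasiStationaryChannelProduct}, passage to the tail $\sigma$-field, tensoring the kernel dominances, and returning to asymptotic dominance via Lemmas~\ref{LemmaChannelDominanceEquivHookupDominance} and~\ref{LemmaDominanceOnTailSigmaField}. The only cosmetic difference is that you re-derive the kernel-level dominance $(\nu_i(x,\cdot))_\infty\ll(S\overline{\nu}_{i_\mu}(x,\cdot))_\infty$ from the hookup dominance of Lemma~\ref{LemmaChannelOfStationaryMeans}, whereas the paper cites Lemma~\ref{LemmaNecessarySufficientConditionAMSChannel} directly, whose proof is exactly that chain.
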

Proof of Proposition \ref{PropositionCNSAMSChannelProduct} relies on the following lemma:
\begin{lemma}\label{LemmaProductOfAMSSources}
Let $\mu_1$, $\mu_2$, resp. $\eta_1$, $\eta_2$, be probabilities on a space $(\Omega, \mathcal{B}_\Omega)$, resp. $(\Lambda, \mathcal{B}_\Lambda)$.  Let $\mu_1 \times \eta_1$ and $\mu_2 \times \eta_2$ denote the  product probability on $(\Omega \times \Lambda, \mathcal{B}_{\Omega \times \Lambda})$. Then:
\begin{enumerate}
	\item $\mu_1 \times \eta_1 \ll \mu_2 \times \eta_2$ if and only if $\mu_1 \ll \mu_2$ and $\eta_1 \ll \eta_2$
	\item assuming that $\Omega$ and $\Lambda$ are sequence spaces and that $\mu_2$ and $\eta_2$ are stationary, $\mu_1 \times \eta_1 \ll^a \mu_2 \times \eta_2$ if and only if $\mu_1 \ll^a \mu_2$ and $\eta_1 \ll^a \eta_2$
\end{enumerate}
\end{lemma}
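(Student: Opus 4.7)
I would prove the two parts separately, using part (1) together with the tail-$\sigma$-field machinery of Lemma~\ref{LemmaDominanceOnTailSigmaField} to deduce part (2).

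\textbf{Part (1).} For the ``$\Rightarrow$'' direction, I would use the standard trick that if $\mu_2(E) = 0$ then $(\mu_2 \times \eta_2)(E \times \Lambda) = 0$, hence $(\mu_1 \times \eta_1)(E \times \Lambda) = \mu_1(E) = 0$; the same argument applied to rectangles $\Omega \times F$ gives $\eta_1 \ll \eta_2$. For the ``$\Leftarrow$'' direction, given a measurable $E \subset \Omega \times \Lambda$ with $(\mu_2 \times \eta_2)(E) = 0$, Fubini gives $\eta_2(E_\omega) = 0$ for $\mu_2$-a.e.\ $\omega$, where $E_\omega$ is the section of $E$ at $\omega$. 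Since $\eta_1 \ll \eta_2$, also $\eta_1(E_\omega) = 0$ at those $\omega$, and since $\mu_1 \ll \mu_2$ this holds $\mu_1$-a.e., so $(\mu_1 \times \eta_1)(E) = \int \eta_1(E_\omega)\, d\mu_1 = 0$.

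\textbf{Part (2).} I would reduce to part~(1) via restriction to the tail $\sigma$-field. Since $\mu_2$ and $\eta_2$ are stationary on sequence spaces, so is $\mu_2 \times \eta_2$ (under the product shift $T_\Omega \times T_\Lambda$). Lemma~\ref{LemmaDominanceOnTailSigmaField} then gives
\[
\mu_1 \times \eta_1 \ll^a \mu_2 \times \eta_2 \;\Longleftrightarrow\; (\mu_1 \times \eta_1)_\infty \ll (\mu_2 \times \eta_2)_\infty .
\]
By the same argument as in Lemma~\ref{LemmaProductOfTailIsTailOfProduct} (the tail $\sigma$-field of a product of sequence spaces is generated by the rectangles formed from the tail $\sigma$-fields of each factor, and the two candidate probabilities agree on these generating rectangles), we have $(\mu_1 \times \eta_1)_\infty = (\mu_1)_\infty \times (\eta_1)_\infty$ and similarly for $\mu_2 \times \eta_2$. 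Part~(1) now yields
\[
(\mu_1)_\infty \times (\eta_1)_\infty \ll (\mu_2)_\infty \times (\eta_2)_\infty \;\Longleftrightarrow\; (\mu_1)_\infty \ll (\mu_2)_\infty \text{ and } (\eta_1)_\infty \ll (\eta_2)_\infty .
\]
A final application of Lemma~\ref{LemmaDominanceOnTailSigmaField} to each factor (using that $\mu_2$ and $\eta_2$ are stationary) rewrites the right-hand side as $\mu_1 \ll^a \mu_2$ and $\eta_1 \ll^a \eta_2$, closing the equivalence.

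\textbf{Main obstacle.} The one non-formal step is justifying $(\mu_1 \times \eta_1)_\infty = (\mu_1)_\infty \times (\eta_1)_\infty$ at the level of probabilities rather than channels. This is exactly the content of Lemma~\ref{LemmaProductOfTailIsTailOfProduct} transposed from kernels to probabilities, but one has to check (i) that the tail $\sigma$-field of the product space coincides with the $\sigma$-field generated by tail rectangles (which follows from the commutation of intersection with $\sigma$-generation already shown in Lemma~\ref{LemmaChannelRestriction}), and (ii) that the two probabilities agree on this generating semi-algebra, which is immediate since both assign $\mu_i(F)\,\eta_i(G)$ to a tail rectangle $F \times G$. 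Once this bridge is in place, the rest is a bookkeeping application of Lemmas~\ref{LemmaDominanceOnTailSigmaField} and part~(1).
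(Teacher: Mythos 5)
Your proof is correct and follows essentially the same route as the paper: the Fubini/section argument for the ``$\Leftarrow$'' of part (1), marginals for ``$\Rightarrow$'', and for part (2) the reduction to part (1) on the tail $\sigma$-field via Lemma~\ref{LemmaDominanceOnTailSigmaField} together with the identity $(\mu \times \eta)_\infty = \mu_\infty \times \eta_\infty$ (which the paper draws from Lemma~\ref{LemmaChannelRestriction}, exactly the bridge you describe). The only cosmetic difference is that the paper dispatches the forward implication of part (2) directly by the marginal argument instead of routing it through the tail machinery; both are valid.
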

An obvious consequence is that $\mu_1 \times \eta_1$ is AMS if and only if $\mu_2$ and $\eta_2$ are AMS.
\begin{proof}[Proof of Lemma \ref{LemmaProductOfAMSSources}]
$\mu_1$ and $\eta_1$ (resp. $\mu_2$ and $\eta_2$) are marginals of $\mu_1 \times \eta_1$ (resp. of $\mu_2 \times \eta_2$). It is thus obvious that 
\begin{itemize}
\item $\mu_1 \times \eta_1 \ll \mu_2 \times \eta_2 \Rightarrow \mu_1 \ll \mu_2 \text{ and }\eta_1 \ll \eta_2$
\item $\mu_1 \times \eta_1 \ll^a \mu_2 \times \eta_2 \Rightarrow \mu_1 \ll^a \mu_2 \text{ and }\eta_1 \ll^a \eta_2$
\end{itemize}
Assume  that $\mu_1 \ll \mu_2$ and $\eta_1 \ll \eta_2$. Let $E$ be an event of the product $\sigma$-field $\mathcal{B}_{\Omega \times \Lambda}$ such that $(\mu_2 \times \eta_2)(E)=0$. $\mu_2 \times \eta_2$ is the product of the two probabilities $\mu_2$ and $\eta_2$ then:
$$
(\mu_2 \times \eta_2)(E)=\int \eta_2(E_x) d\mu_2
$$
where $E_x=\{y \in B^\mathcal{I} / (x,y) \in E\}$ denotes the section of $E$ at $x$. $(\mu_2 \times \eta_2)(E)=0$ then 
$$
\eta_2(E_x) = 0 \text{  }\mu_2\text{-a.e.}
$$
 $\mu_1 \ll \mu_2$ and $\eta_1 \ll \eta_2$, this implies that $\eta_1(E_x) = 0 \text{  }\mu_1\text{-a.e.}$
Then 
$$
\mu_1\times \eta_1(E)=\int \eta_1(E_x) d\mu_1=0
$$
 Hence $\mu_1 \times \eta_1 \ll \mu_2 \times \eta_2$.

Assume now that $\mu_2$ and $\eta_2$ are stationary and that $\mu_1 \ll^a \mu_2$ and $\eta_1 \ll^a \eta_2$.  From Lemma \ref{LemmaDominanceOnTailSigmaField} (p. \pageref{LemmaDominanceOnTailSigmaField}), this implies that $\mu_{1_\infty} \ll \mu_{2_\infty} $ and $\eta_{1_\infty} \ll \eta_{2_\infty} $. Thus, from above, $\mu_{1_\infty} \times \eta_{1_\infty} \ll \mu_{2_\infty} \times \eta_{2_\infty}$. Thanks to Lemma \ref{LemmaChannelRestriction} (p. \pageref{LemmaChannelRestriction}), for any probabilities $\mu$ and $\eta$,  $\mu_\infty \times \eta_\infty=(\mu \times \eta)_\infty$. Then, by Lemma \ref{LemmaDominanceOnTailSigmaField} (p. \pageref{LemmaDominanceOnTailSigmaField}) since $\mu_2 \times \eta_2$ is stationary,  $\mu_1 \times \eta_1 \ll^a \mu_2 \times \eta_2$.
\end{proof}
\begin{proof}[Proof of Proposition \ref{PropositionCNSAMSChannelProduct}]
If the channel product is AMS w.r.t $\mu$ then $\mu(\nu_1 \times \nu_2)$ is AMS then the marginals $\mu\nu_1$ and $\mu\nu_2$ are AMS. Then $[A, \nu_1 , B]$ and $[A,  \nu_2 ,  C]$ are AMS w.r.t $\mu$.

Assume that $[A, \nu_1 , B]$ and $[A,  \nu_2 ,  C]$ are AMS w.r.t $\mu$. Let  $\overline{\nu}_{1_\mu}$ and $\overline{\nu}_{2_\mu}$ respectively denote the stationary means of $\nu_1$ and $\nu_2$ w.r.t $\mu$. 

By Lemma \ref{LemmaChannelOfStationaryMeans} (p. \pageref{LemmaChannelOfStationaryMeans}) $S\overline{\nu}_{1_\mu}$ and $S\overline{\nu}_{2_\mu}$ are stationary w.r.t $\mu$. Then, by Proposition~\ref{PropositionCNSQuasiStationaryChannelProduct} (p. \pageref{PropositionCNSQuasiStationaryChannelProduct}), the channel product $S\overline{\nu}_{1_\mu} \times S\overline{\nu}_{2_\mu}$ is stationary w.r.t.  $\mu$, i.e., $\mu (S\overline{\nu}_{1_\mu} \times S\overline{\nu}_{2_\mu})$ is stationary.

Thanks to Proposition \ref{LemmaNecessarySufficientConditionAMSChannel} (p. \pageref{LemmaNecessarySufficientConditionAMSChannel}):
$$
\nu_1(x,.) \ll^a S\overline{\nu}_{1_\mu}(x,.) \text{  and  } \nu_2(x,.) \ll^a S\overline{\nu}_{2_\mu}(x,.) \text{  }\mu\text{-a.e.}
$$
Since, $S\overline{\nu}_{1_\mu}(x,.)$ and $S\overline{\nu}_{2_\mu}(x,.)$ are stationary, by Lemma \ref{LemmaDominanceOnTailSigmaField} (p. \pageref{LemmaDominanceOnTailSigmaField}), this implies that
$$
(\nu_1(x,.))_\infty \ll (S\overline{\nu}_{1_\mu}(x,.))_\infty \text{  and  } (\nu_2(x,.))_\infty \ll (S\overline{\nu}_{2_\mu}(x,.))_\infty \text{  }\mu\text{-a.e.}
$$
By Lemma \ref{LemmaProductOfAMSSources} (p. \pageref{LemmaProductOfAMSSources}):
$$
(\nu_1(x,.))_\infty \times (\nu_2(x,.))_\infty \ll (S\overline{\nu}_{1_\mu}(x,.))_\infty \times (S\overline{\nu}_{2_\mu}(x,.))_\infty \text{  }\mu\text{-a.e.}
$$
Thanks to Lemma \ref{LemmaChannelRestriction} (p. \pageref{LemmaChannelRestriction}), $(\nu_1(x,.))_\infty \times (\nu_2(x,.))_\infty = (\nu_1(x,.) \times \nu_2(x,.))_\infty$ and $(S\overline{\nu}_{1_\mu}(x,.))_\infty \times (S\overline{\nu}_{2_\mu}(x,.))_\infty=(S\overline{\nu}_{1_\mu}(x,.) \times S\overline{\nu}_{2_\mu}(x,.))_\infty$, then, by Lemma \ref{LemmaChannelDominanceEquivHookupDominance} (p. \pageref{LemmaChannelDominanceEquivHookupDominance}):
$$
\mu_\infty(\nu_1 \times \nu_2)_\infty \ll \mu_\infty(S\overline{\nu}_{1_\mu} \times S\overline{\nu}_{2_\mu})_\infty
$$
Thus, by Lemma \ref{LemmaDominanceOnTailSigmaField} (p. \pageref{LemmaDominanceOnTailSigmaField}), since $\mu(S\overline{\nu}_{1_\mu} \times S\overline{\nu}_{2_\mu})$ is stationary:
$$
\mu(\nu_1 \times \nu_2) \ll^a \mu(S\overline{\nu}_{1_\mu} \times S\overline{\nu}_{2_\mu})
$$
Hence $\mu(\nu_1 \times \nu_2)$ is AMS.
\end{proof}

\subsection{Recurrent Channel Product}

\begin{proposition}\label{PropositionProductRecurrent}
Let $[A,\nu_1,B]$ and $[A,\nu_2,C]$ be two channels and $\mu$ be the distribution of a recurrent  source $[A,X]$. The channel product $[A,\nu_1 \times \nu_2,B\times C]$ is recurrent w.r.t $\mu$ if and only if $[A,\nu_1,B]$ and $[A,\nu_2,C]$  are both recurrent w.r.t. $\mu$. 
\end{proposition}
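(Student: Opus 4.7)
My plan is to reduce both directions of the equivalence to applications of Lemma \ref{LemmaRecurrentChannel2}. By that lemma, $[A, \nu_1 \times \nu_2, B \times C]$ is recurrent w.r.t.\ $\mu$ iff $\mu$ is recurrent and the kernel $(\nu_1 \times \nu_2)(x, \cdot) = \nu_1(x, \cdot) \times \nu_2(x, \cdot)$ is recurrent for $\mu$-a.e.\ $x$; likewise each $[A, \nu_i, \cdot]$ is recurrent w.r.t.\ $\mu$ iff $\mu$ is recurrent and $\nu_i(x, \cdot)$ is recurrent $\mu$-a.e. Since $\mu$ is recurrent by hypothesis, the entire proposition reduces to the \emph{pointwise} claim: for $\mu$-a.e.\ fixed $x$, the product $\nu_1(x, \cdot) \times \nu_2(x, \cdot)$ on $B^\mathcal{I} \times C^\mathcal{I}$ is recurrent iff both factors are recurrent.

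The ``only if'' half of the pointwise claim is the ``marginals of a recurrent measure are recurrent'' fact. Taking rectangles $G \times C^\mathcal{I}$ with $G \in \mathcal{B}_{B^\mathcal{I}}$ and using $T_{BC}^{-k}(G \times C^\mathcal{I}) = T_B^{-k} G \times C^\mathcal{I}$, the recurrence condition for the product on such a rectangle becomes $\nu_1(x, G \setminus \cup_{k \geq 1} T_B^{-k} G) = 0$, since $\nu_2(x, C^\mathcal{I}) = 1$. This establishes the recurrence of $\nu_1(x, \cdot)$ on the semi-algebra of rectangles in $B^\mathcal{I}$, and extension to the full $\sigma$-field is by exactly the countable-unions-plus-outer-approximation argument already used in the proof of Lemma \ref{LemmaRecurrentChannel2}. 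The argument for $\nu_2(x, \cdot)$ is symmetric.

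The ``if'' half is the substantive direction, and my key move is to exhibit $\nu_1(x, \cdot) \times \nu_2(x, \cdot)$ as itself a hookup, so that Lemma \ref{LemmaRecurrentChannel2} can be invoked a second time. Fix $x$ at which both $\nu_i(x, \cdot)$ are recurrent. Regard $\nu_1(x, \cdot)$ as a source distribution on $B^\mathcal{I}$, and define the constant channel $[B, \tilde{\nu}_x, C]$ by $\tilde{\nu}_x(b, H) := \nu_2(x, H)$ for every $b \in B^\mathcal{I}$ (measurability in $b$ is trivial since the kernel does not depend on $b$). Then the hookup $\nu_1(x, \cdot)\,\tilde{\nu}_x$ coincides with $\nu_1(x, \cdot) \times \nu_2(x, \cdot)$ on rectangles, hence everywhere; the source $\nu_1(x, \cdot)$ is recurrent; and the kernel $\tilde{\nu}_x(b, \cdot) = \nu_2(x, \cdot)$ is recurrent for every $b$. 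Lemma \ref{LemmaRecurrentChannel2}, applied now on the alphabets $B, C$, therefore yields that this hookup — i.e.\ the desired product — is recurrent.

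No genuine analytic obstacle arises: the superficially delicate issue that two recurrent probabilities on their respective sequence spaces need not obviously have simultaneous return times to a prescribed product rectangle is absorbed into the earlier Lemma \ref{LemmaRecurrentChannel2} via the constant-kernel trick. The only real work is the two-level bookkeeping of two successive applications of that lemma (one on $A$ and $B \times C$ for the reduction, one on $B$ and $C$ for the pointwise product).
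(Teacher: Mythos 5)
Your proof is correct, but its substantive half proceeds differently from the paper's. For the ``if'' direction the paper fixes $x$ with both $\nu_1(x,\cdot)$ and $\nu_2(x,\cdot)$ recurrent and then \emph{re-does} the return-time computation by hand: it splits $G\times H\setminus\bigcup_{i\geq 1}T_{BC}^{-i}(G\times H)$ into three pieces, bounds each by the recurrence of one of the factors, concludes that $(\nu_1\times\nu_2)(x,\cdot)$ is recurrent on rectangles, and then uses the rectangle-to-$\sigma$-field part of Lemma \ref{LemmaRecurrentChannel2} to get recurrence of the product kernel, hence of the product channel. You instead avoid repeating that decomposition by exhibiting $\nu_1(x,\cdot)\times\nu_2(x,\cdot)$ as the hookup of the recurrent source $\nu_1(x,\cdot)$ on $B^{\mathcal{I}}$ with the constant channel $\tilde{\nu}_x(b,\cdot)=\nu_2(x,\cdot)$, whose kernel is recurrent for every $b$, and then invoking the ``if'' half of the second bullet of Lemma \ref{LemmaRecurrentChannel2} a second time on the alphabets $B$ and $C$; the agreement of the hookup with the product on rectangles identifies the two measures. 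This is legitimate (the constant kernel is trivially a channel, and the delicate three-way splitting is indeed already contained in the proof of Lemma \ref{LemmaRecurrentChannel2}), and it buys economy: the proposition becomes pure bookkeeping around two applications of one lemma, whereas the paper's version is self-contained at the cost of duplicating the computation. Two cosmetic remarks: in your ``only if'' step the identity $(\nu_1\times\nu_2)\bigl(x,(G\times C^{\mathcal{I}})\setminus\bigcup_{k\geq 1}T_{BC}^{-k}(G\times C^{\mathcal{I}})\bigr)=\nu_1\bigl(x,G\setminus\bigcup_{k\geq 1}T_B^{-k}G\bigr)$ already holds for \emph{every} $G\in\mathcal{B}_{B^{\mathcal{I}}}$, so the extra outer-approximation extension you mention is superfluous (the paper simply calls this direction obvious); and when combining the two a.e.\ statements in the ``if'' direction you should, as you implicitly do, work on the intersection of the two full-measure sets on which $\nu_1(x,\cdot)$ and $\nu_2(x,\cdot)$ are recurrent.
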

\begin{proof}
The "only if " part is obvious. Assume that $[A,\nu_1,B]$ and  $[A,\nu_2,B]$ are recurrent w.r.t $\mu$. By Lemma \ref{LemmaRecurrentChannel2} (p. \pageref{LemmaRecurrentChannel2}), $\nu_1(x,.)$  and $\nu_2(x,.)$ are recurrent $\mu$-a.e. Then, $\forall G \in \mathcal{B}_{B^\mathcal{I}}$, $\forall H \in \mathcal{B}_{C^\mathcal{I}}$:
\begin{multline}
(\nu_1\times \nu_2)(x,G \times H \setminus \cup_{i\geq 1}(T_{BC}^{-i}G\times H)= \\
	(\nu_1\times \nu_2)(x,G \times H \cap  \bigcap_{i\geq 1}(T_{B}^{-i}G \times T_C^{-i} H^c) \\
	+(\nu_1\times \nu_2)(x,G \times H \cap  \bigcap_{i\geq 1}(T_{B}^{-i}G^c  \times T_C^{-i} H) \\
	+ (\nu_1\times \nu_2)(x,G \times H \cap  \bigcap_{i\geq 1}(T_{B}^{-i}G^c  \times T_C^{-i} H^c) \nonumber
\end{multline}
$$
(\nu_1\times \nu_2)(x,G \times H \cap  \bigcap_{i\geq 1}(T_{B}^{-i}G \times T_B^{-i} H^c) \leq \nu_2(x, H \setminus \cup_{i\geq 1} T_C^{-i} H) =0 \text{  }\mu \text{-a.e.}
$$
since $\nu_2(x,.)$ is recurrent $\mu$-a.e.
$$
(\nu_1\times \nu_2)(x,G \times H \cap  \bigcap_{i\geq 1}(T_{B}^{-i}G^c  \times T_C^{-i} H) \leq \nu_1(x, G \setminus \cup_{i\geq 1} T_B^{-i} G=0) \text{  }\mu \text{-a.e.}
$$
since $\nu_1(x,.)$ is recurrent $\mu$-a.e.
\begin{multline}
(\nu_1\times \nu_2)(x,G \times H \cap  \bigcap_{i\geq 1}(T_{B}^{-i}G^c  \times T_C^{-i} H^c) = \\
	\nu_1(x,G \setminus  \cup_{i\geq 1}(T_B^{-i}G)).\nu_2(x,H \setminus  \cup_{i\geq 1}(T_C^{-i}H)) =0 \text{  }\mu \text{-a.e.} \nonumber
\end{multline}
since $\nu_1(x,.)$ and $\nu_2(x,.)$ are recurrent $\mu$-a.e.

Then the product $(\nu_1\times \nu_2)(x,.)$ is recurrent on rectangles $\mu$-a.e. Then, by Lemma \ref{LemmaRecurrentChannel2} (p. \pageref{LemmaRecurrentChannel2}),  $(\nu_1\times \nu_2)(x,.)$ is recurrent  $\mu$-a.e. Thanks to Lemma \ref{LemmaRecurrentChannel2} (p. \pageref{LemmaRecurrentChannel2}), $[A,\nu_1 \times \nu_2,B\times C]$ is recurrent w.r.t $\mu$.
\end{proof}

\subsection{Output Weakly Mixing Channel Product}

It is shown that the product of output weakly mixing stationary channels is output weakly mixing. This property is of interest because it implies ergodicity for a stationary channel thanks to the following lemma which is Lemma 2.7 of \cite{Gray11}.
\begin{lemma}\label{LemmaOWMImpliesErgordicity}
Let $[A,\nu,B]$ be a channel stationary w.r.t. an ergodic and  stationary source distribution $\mu$. If $[A,\nu,B]$ is output weakly mixing then $[A,\nu,B]$ is ergodic w.r.t. $\mu$
\end{lemma}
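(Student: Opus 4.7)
The plan is to reduce ergodicity of the hookup $\mu\nu$ to a Cesaro convergence statement on the generating semi-algebra of rectangles, then unfold it using the output weakly mixing property together with the stationarity of the channel and the ergodic theorem for $\mu$.

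First I would recall the standard fact that a stationary dynamical system $(A^\mathcal{I}\times B^\mathcal{I},\mathcal{B}_{A^\mathcal{I}\times B^\mathcal{I}},T_{AB},\mu\nu)$ is ergodic if and only if for every $E,E'$ in a generating field one has
$$
\frac{1}{n}\sum_{i=0}^{n-1}\mu\nu(T_{AB}^{-i}E\cap E')\longrightarrow \mu\nu(E)\,\mu\nu(E').
$$
It therefore suffices to check this on rectangles $E=F\times G$ and $E'=F'\times G'$ with $F,F'\in\mathcal{B}_{A^\mathcal{I}}$ and $G,G'\in\mathcal{B}_{B^\mathcal{I}}$, since finite disjoint unions of rectangles form a field generating $\mathcal{B}_{A^\mathcal{I}\times B^\mathcal{I}}$.

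Next I would write, using $T_{AB}^{-i}(F\times G)=T_A^{-i}F\times T_B^{-i}G$ and the hookup formula,
$$
\mu\nu\bigl(T_{AB}^{-i}(F\times G)\cap(F'\times G')\bigr)=\int_{T_A^{-i}F\cap F'}\nu(x,T_B^{-i}G\cap G')\,d\mu(x).
$$
The output weakly mixing hypothesis gives, $\mu$-a.e., $\frac{1}{n}\sum_{i=0}^{n-1}|\nu(x,T_B^{-i}G\cap G')-\nu(x,T_B^{-i}G)\nu(x,G')|\to 0$, and since the summands are bounded by $2$, dominated convergence (applied to the Cesaro averages) lets me replace $\nu(x,T_B^{-i}G\cap G')$ by $\nu(x,T_B^{-i}G)\nu(x,G')$ inside the Cesaro average of the integrals, up to a vanishing error. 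Using the stationarity of the channel via Lemma~\ref{LemmaStationaryChannel}, $\nu(x,T_B^{-i}G)=\nu(T_A^i x,G)$ $\mu$-a.e., so setting $\phi(x)=\mathbf{1}_F(x)\nu(x,G)$ and $\psi(x)=\mathbf{1}_{F'}(x)\nu(x,G')$, the Cesaro average reduces to
$$
\frac{1}{n}\sum_{i=0}^{n-1}\int \phi(T_A^i x)\,\psi(x)\,d\mu(x).
$$

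Now $\mu$ is stationary and ergodic, so the pointwise ergodic theorem yields $\frac{1}{n}\sum_{i=0}^{n-1}\phi(T_A^i x)\to\int\phi\,d\mu$ both $\mu$-a.e.\ and in $L^1(\mu)$; multiplying by the bounded $\psi$ and integrating gives
$$
\frac{1}{n}\sum_{i=0}^{n-1}\int \phi(T_A^i x)\,\psi(x)\,d\mu\longrightarrow \Bigl(\int\phi\,d\mu\Bigr)\Bigl(\int\psi\,d\mu\Bigr)=\mu\nu(F\times G)\,\mu\nu(F'\times G').
$$
Combining with the replacement step proves the desired Cesaro convergence on rectangles, hence on the generating field, hence ergodicity of $\mu\nu$ and therefore of $[A,\nu,B]$ with respect to $\mu$.

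The only real obstacle is the interchange in the second step: one has to justify that the $\mu$-a.e.\ Cesaro vanishing coming from the output weakly mixing property survives integration against $\mathbf{1}_{T_A^{-i}F\cap F'}$ uniformly in $i$. This is why I would invoke the uniform bound $\le 2$ and pass to the Cesaro averaged integrand, where dominated convergence applies directly, rather than trying to exchange limit and integral summand by summand.
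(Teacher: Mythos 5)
Your proof is correct. Note that the paper does not contain its own proof of this lemma: it simply quotes Lemma 2.7 of Gray's book, so there is no in-paper argument to compare against; what you have written is essentially the standard proof of that cited result, made self-contained within the paper's framework. The three ingredients you use all go through: (i) the characterization of ergodicity of the stationary hookup $\mu\nu$ by Cesaro convergence of the correlations $\frac{1}{n}\sum_{i=0}^{n-1}\mu\nu(T_{AB}^{-i}E\cap E')\to\mu\nu(E)\mu\nu(E')$ for $E,E'$ in a generating class; (ii) the replacement of $\nu(x,T_B^{-i}G\cap G')$ by $\nu(x,T_B^{-i}G)\nu(x,G')$, which is legitimate because the Cesaro-averaged error is dominated by $\int \frac{1}{n}\sum_{i=0}^{n-1}|\nu(x,T_B^{-i}G\cap G')-\nu(x,T_B^{-i}G)\nu(x,G')|\,d\mu$, bounded and vanishing $\mu$-a.e. by the output weakly mixing hypothesis, so dominated convergence applies; and (iii) the $L^1$ ergodic theorem for the stationary ergodic $\mu$ applied to $\phi=1_F\,\nu(\cdot,G)$ integrated against the bounded $\psi=1_{F'}\,\nu(\cdot,G')$. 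Two small points are worth making explicit in a final write-up: the identity $\nu(x,T_B^{-i}G)=\nu(T_A^i x,G)$ $\mu$-a.e. is obtained by iterating Lemma \ref{LemmaStationaryChannel}, using stationarity of $\mu$ to keep the exceptional null sets null under $T_A^{-1}$; and the Cesaro-correlation criterion for ergodicity already holds on a generating semi-algebra, so rectangles suffice directly (the extension to finite disjoint unions is in any case immediate by additivity).
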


\begin{proposition}\label{PropositionOWMChannelProduct}
Let $[A,\nu_1,B]$ and $[A,\nu_2,C]$ be   channels stationary w.r.t. a stationary source distribution $\mu$.  $[A,\nu_1 \times \nu_2,B\times C]$ is output weakly mixing w.r.t. $\mu$ if and only if $[A,\nu_1,B]$ and $[A,\nu_2,C]$ are output weakly mixing w.r.t. $\mu$.
\end{proposition}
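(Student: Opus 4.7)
The plan is to prove both implications of the equivalence. The forward direction (``only if'') is essentially immediate: since $(\nu_1\times\nu_2)(x,G\times C^\mathcal{I})=\nu_1(x,G)$ and $T_{BC}^{-i}(G\times C^\mathcal{I})=T_B^{-i}G\times C^\mathcal{I}$, substituting rectangles of the form $G\times C^\mathcal{I}$ and $G'\times C^\mathcal{I}$ into the OWM condition for $\nu_1\times\nu_2$ yields the OWM condition for $\nu_1$, and the symmetric choice $B^\mathcal{I}\times H$, $B^\mathcal{I}\times H'$ yields it for $\nu_2$. The real content is therefore the reverse direction, which I would tackle in two stages: first prove OWM on rectangles, then extend via Dynkin-class arguments to the full product $\sigma$-field.

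For rectangles $E=G_1\times H_1$ and $E'=G_2\times H_2$, the product structure lets me rewrite the quantity inside the absolute value as $a_i c_i - b_i d_i$, where $a_i=\nu_1(x,T_B^{-i}G_1\cap G_2)$, $b_i=\nu_1(x,T_B^{-i}G_1)\nu_1(x,G_2)$, and $c_i,d_i$ are the analogous expressions for $\nu_2$. All four quantities lie in $[0,1]$, so
$$
|a_i c_i - b_i d_i| \leq c_i|a_i-b_i| + b_i|c_i-d_i| \leq |a_i-b_i|+|c_i-d_i|,
$$
and the Cesàro average is bounded by the sum of the Cesàro averages appearing in the OWM conditions for $\nu_1$ and $\nu_2$, both of which vanish $\mu$-a.e.\ by hypothesis.

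To extend from rectangles to the full product $\sigma$-field I would run two successive Dynkin-class arguments. Fix a rectangle $E'$ and let $\mathcal{L}(E')$ be the class of $E$ for which the Cesàro average of
$$
D_i(E,E')=|(\nu_1\times\nu_2)(x,T_{BC}^{-i}E\cap E')-(\nu_1\times\nu_2)(x,T_{BC}^{-i}E)(\nu_1\times\nu_2)(x,E')|
$$
vanishes $\mu$-a.e. A direct computation gives $D_i(E^c,E')=D_i(E,E')$ (closure under complements), and for an increasing sequence $E_k\uparrow E$ the elementary bound $D_i(E,E')\leq D_i(E_k,E')+2(\nu_1\times\nu_2)(x,T_{BC}^{-i}(E\setminus E_k))$ holds. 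By Proposition~\ref{PropositionCNSQuasiStationaryChannelProduct} the product channel is stationary w.r.t.\ $\mu$, so by Lemma~\ref{LemmaStationaryChannel} the error term equals $(\nu_1\times\nu_2)(T_A^ix,E\setminus E_k)$ $\mu$-a.e.\ for each $i$. Birkhoff's pointwise ergodic theorem applied to the bounded function $f_k(x)=(\nu_1\times\nu_2)(x,E\setminus E_k)$ on $(A^\mathcal{I},\mu,T_A)$ gives $\frac{1}{n}\sum_{i=0}^{n-1}f_k(T_A^ix)\to\mathbb{E}[f_k\mid\mathcal{J}](x)$ $\mu$-a.e., where $\mathcal{J}$ denotes the $T_A$-invariant $\sigma$-field. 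Since $f_k\downarrow 0$ pointwise as $k\to\infty$, dominated convergence forces $\mathbb{E}[f_k\mid\mathcal{J}]\to 0$ in $L^1$; extracting a $\mu$-a.e.\ convergent subsequence and invoking a diagonal argument then places $E$ in $\mathcal{L}(E')$. Hence $\mathcal{L}(E')$ is a Dynkin class containing the $\pi$-system of rectangles, so it equals the full product $\sigma$-field. A second, analogous Dynkin argument in the other coordinate completes the extension: with $E$ fixed and $E'_k\uparrow E'$, the corresponding error term $2(\nu_1\times\nu_2)(x,E'\setminus E'_k)$ is independent of $i$ and vanishes pointwise, so no ergodic theorem is required.

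The main obstacle is the first Dynkin step, where the Cesàro average of the shifted error terms must be controlled uniformly as the approximating sets $E_k$ exhaust $E$. The resolution combines three ingredients from the preceding sections---stationarity of the product channel (Proposition~\ref{PropositionCNSQuasiStationaryChannelProduct}), the shift-transfer identity for stationary channels (Lemma~\ref{LemmaStationaryChannel}), and Birkhoff's theorem on $(A^\mathcal{I},\mu,T_A)$---together with a diagonal/subsequence argument that interchanges the $n$ and $k$ limits on a single $\mu$-full set.
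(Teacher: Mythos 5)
Your forward direction and your treatment of rectangles coincide with the paper's: the paper also reduces the rectangle case to the elementary bound $|a_ic_i-b_id_i|\leq |a_i-b_i|+|c_i-d_i|$ (its Lemma~\ref{LemmaDensityConvergence}). Where you genuinely diverge is the passage from rectangles to the full product $\sigma$-field. The paper first extends to the \emph{algebra} generated by the rectangles, using that every algebra element is a finite disjoint union of rectangles and that the defect $D_i$ is subadditive over disjoint unions, and then invokes Lemma~\ref{Lemma81Gray09} (Gray's Lemma 8.1): since the product channel is stationary w.r.t.\ $\mu$ by Proposition~\ref{PropositionCNSQuasiStationaryChannelProduct}, each kernel $\nu_1\times\nu_2(x,\cdot)$ is AMS $\mu$-a.e.\ (Lemma~\ref{LemmaStationaryChannelAMSkernel}), and Gray's lemma upgrades weak mixing on a generating field to weak mixing on the $\sigma$-field. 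You instead reprove that extension by hand, working on the \emph{input} dynamical system $(A^{\mathcal I},\mu,T_A)$: the shift-transfer identity of Lemma~\ref{LemmaStationaryChannel} converts the error term $(\nu_1\times\nu_2)(x,T_{BC}^{-i}(E\setminus E_k))$ into $f_k(T_A^ix)$ and Birkhoff controls its Ces\`aro average by $\mathbb{E}[f_k\mid\mathcal J]$, which tends to $0$ since $f_k\downarrow 0$. This is a legitimate alternative: it buys a self-contained proof (no appeal to Gray's mixing lemma, no explicit use of AMS-ness of the kernels), at the cost of a longer two-stage class argument; the paper's route is shorter because the approximation work is outsourced to the cited lemma. (Incidentally, your detour through $L^1$ convergence and a subsequence is unnecessary: $\mathbb{E}[f_k\mid\mathcal J]$ is monotone decreasing, so it converges $\mu$-a.e.\ to $0$ directly.)

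One bookkeeping step in your class argument is incomplete as stated. You verify that $\mathcal L(E')$ contains the rectangles, is closed under complements, and is closed under increasing unions, and then declare it a Dynkin class. Those three properties do not by themselves constitute a $\lambda$-system: closure under (countable) disjoint unions, equivalently under proper differences, is also required and does not follow formally from complements plus increasing limits. (Equivalently, complements plus monotone limits make $\mathcal L(E')$ a monotone class, but the monotone class theorem then needs the generated \emph{algebra} inside it, not just the rectangles.) The missing ingredient is exactly the paper's algebra step, and it is easy: for disjoint $E_1,E_2$ one has $T_{BC}^{-i}(E_1\cup E_2)=T_{BC}^{-i}E_1\cup T_{BC}^{-i}E_2$ disjoint, hence $D_i(E_1\cup E_2,E')\leq D_i(E_1,E')+D_i(E_2,E')$, so finite disjoint unions stay in $\mathcal L(E')$ (and countable ones follow from your increasing-union step). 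With that line added, both of your Dynkin stages go through, the per-set null sets combine harmlessly since each closure operation involves only countably many sets, and the proof is correct.
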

The proof relies of the following lemmas.
\begin{lemma}\label{LemmaDensityConvergence}
Let $(a_n)$, $(b_n)$, $(c_n)$ and $(d_n)$ be bounded sequences of real numbers such that $\lim_{n \to \infty} \frac{1}{n} \sum_{i=0}^{n-1} |a_i - b_i |=\lim_{n \to \infty} \frac{1}{n} \sum_{i=0}^{n-1} |c_i - d_i |=0$. Then $\lim_{n \to \infty} \frac{1}{n} \sum_{i=0}^{n-1} |a_i.c_i - b_i.d_i |=0$. 
\end{lemma}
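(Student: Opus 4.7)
The plan is to reduce the bound on $|a_i c_i - b_i d_i|$ to a linear combination of the two quantities already known to have vanishing Cesàro means, using only the boundedness of all four sequences. First, I would fix constants $M_a, M_c, M_b, M_d$ bounding the absolute values of $(a_n), (c_n), (b_n), (d_n)$ respectively, and set $M = \max(M_a, M_d)$ (or any symmetric choice).

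Next, the key algebraic identity is the "add and subtract" trick:
\begin{equation*}
a_i c_i - b_i d_i = a_i(c_i - d_i) + d_i(a_i - b_i).
\end{equation*}
Taking absolute values and applying the bounds gives
\begin{equation*}
|a_i c_i - b_i d_i| \leq M_a \, |c_i - d_i| + M_d \, |a_i - b_i|.
\end{equation*}

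Averaging this inequality over $i = 0, \ldots, n-1$ yields
\begin{equation*}
\frac{1}{n}\sum_{i=0}^{n-1}|a_i c_i - b_i d_i| \leq M_a \cdot \frac{1}{n}\sum_{i=0}^{n-1}|c_i - d_i| + M_d \cdot \frac{1}{n}\sum_{i=0}^{n-1}|a_i - b_i|.
\end{equation*}
By hypothesis, each of the two Cesàro means on the right tends to $0$ as $n \to \infty$, so the left-hand side also tends to $0$. There is no real obstacle here; the only care required is to choose the "add and subtract" decomposition so that each factor that survives ($a_i$ and $d_i$) is bounded, which is guaranteed by assumption. The result follows immediately.
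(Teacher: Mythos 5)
Your proof is correct and follows essentially the same route as the paper: an add-and-subtract decomposition of $a_i c_i - b_i d_i$, bounding the surviving factors by the uniform bounds and averaging (the paper subtracts $b_i c_i$, so its surviving factors are $c_i$ and $b_i$ instead of your $a_i$ and $d_i$, but this is an immaterial difference).
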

\begin{proof}
\begin{multline}
\frac{1}{n} \sum_{i=0}^{n-1} |a_i.c_i - b_i.d_i |= \frac{1}{n} \sum_{i=0}^{n-1} |a_i.c_i - b_i.c_i + b_i.c_i - b_i.d_i | \\
	\leq \frac{1}{n} \sum_{i=0}^{n-1} |c_i|.|a_i - b_i | + \frac{1}{n} \sum_{i=0}^{n-1} |b_i|.| c_i - d_i | \nonumber
\end{multline}
$b_i$ , $c_i$ are bounded then the RHS tends to 0 when $n \to \infty$.
\end{proof}

\begin{lemma}\label{Lemma81Gray09}
If a probability $\mu$ is AMS and if $\lim_{n \to \infty} \frac{1}{n} \sum_{i=0}^{n-1} |\mu(F \cap T_A^{-i}G) - \mu(F).\mu(T_A^{-i}G)| =0$ for any sets $F$ and $G$ of a generating field, then $\mu$ is weakly mixing.
\end{lemma}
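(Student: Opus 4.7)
The plan is to extend the weak-mixing bound from the generating field, call it $\mathcal{F}$, to the full $\sigma$-field by a standard approximation argument, where the AMS property is exactly what absorbs the fact that the shift acts on the $G$-coordinate. Let $\overline{\mu}$ denote the stationary mean of $\mu$.

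First, given measurable $F$, $G$ and $\epsilon > 0$, I will invoke the standard field-approximation theorem: since $\mathcal{F}$ generates $\mathcal{B}_{A^{\mathcal{I}}}$ and both $\mu$ and $\overline{\mu}$ are probabilities on that $\sigma$-field, I can pick $F', G' \in \mathcal{F}$ independently so that $\mu(F \triangle F') < \epsilon$ and $\overline{\mu}(G \triangle G') < \epsilon$.

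Next, I will split $|\mu(F \cap T_A^{-i} G) - \mu(F)\mu(T_A^{-i} G)|$ via the triangle inequality into three pieces: a joint-approximation error $|\mu(F \cap T_A^{-i}G) - \mu(F' \cap T_A^{-i}G')|$, the hypothesis term $|\mu(F' \cap T_A^{-i}G') - \mu(F')\mu(T_A^{-i}G')|$, and a product-approximation error $|\mu(F')\mu(T_A^{-i}G') - \mu(F)\mu(T_A^{-i}G)|$. Using
$$
(F \cap T_A^{-i}G) \triangle (F' \cap T_A^{-i}G') \subset (F \triangle F') \cup T_A^{-i}(G \triangle G'),
$$
together with the analogous product bound $|\mu(F')\mu(T_A^{-i}G') - \mu(F)\mu(T_A^{-i}G)| \leq \mu(F \triangle F') + \mu(T_A^{-i}(G \triangle G'))$, each approximation piece is controlled by $\mu(F \triangle F') + \mu(T_A^{-i}(G \triangle G'))$.

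Finally, I will Cesàro-average in $i$. The middle term tends to $0$ by the hypothesis applied to $F', G' \in \mathcal{F}$. Each approximation-error Cesàro average is bounded by $\mu(F \triangle F') + \frac{1}{n}\sum_{i=0}^{n-1} \mu(T_A^{-i}(G \triangle G'))$, which by the defining AMS property of $\mu$ converges to $\mu(F \triangle F') + \overline{\mu}(G \triangle G') < 2\epsilon$. Summing the three pieces yields $\limsup_{n} \frac{1}{n}\sum_{i=0}^{n-1} |\mu(F \cap T_A^{-i}G) - \mu(F)\mu(T_A^{-i}G)| \leq 4\epsilon$, and since $\epsilon$ was arbitrary the limit is $0$, giving weak mixing. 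The main subtlety, rather than a deep obstacle, will be matching the correct measure to each approximation: $F$ appears un-shifted, so $\mu$-closeness of $F'$ suffices, whereas $G$ appears inside $T_A^{-i}$ and it is the AMS-induced Cesàro limit $\overline{\mu}$ that governs the contribution of $G \triangle G'$, forcing the approximation of $G$ to be measured by $\overline{\mu}$ instead of $\mu$.
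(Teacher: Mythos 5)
Your proof is correct. Note, though, that the paper does not actually prove this lemma: its ``proof'' is a one-line citation to Lemma 8.1 of Gray's \emph{Probability, Random Processes and Ergodic Properties} (2009), so you are supplying the standard argument that the paper outsources. Your argument is the expected one for that cited result: approximate $F$ and $G$ by field elements, split by the triangle inequality, control the middle term by the hypothesis on the generating field, and let the AMS property convert the Ces\`aro averages $\frac{1}{n}\sum_{i=0}^{n-1}\mu(T_A^{-i}(G\triangle G'))$ into $\overline{\mu}(G\triangle G')$. The one point that needs care --- that the un-shifted set $F$ is to be approximated in $\mu$ while the shifted set $G$ must be approximated in the stationary mean $\overline{\mu}$ (or, equivalently, in $\tfrac{1}{2}(\mu+\overline{\mu})$) --- is exactly the point you identify and handle correctly, and both approximations are legitimate since $\mu$ and $\overline{\mu}$ live on the same $\sigma$-field generated by the field. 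So there is no gap; your write-up would serve as a self-contained proof where the paper gives only a reference.
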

\begin{proof}
This is Lemma 8.1 of \cite{Gray09} (page 246).
\end{proof}

\begin{proof}[Proof of Proposition \ref{PropositionOWMChannelProduct}]
Assume that $[A,\nu_1 \times \nu_2,B\times C]$ is output weakly mixing. Then, by definition, $\forall E, E' \in \mathcal{B}_{B^\mathcal{I}\times C^\mathcal{I}}$ 
\begin{multline}
\lim_{n \to \infty} \frac{1}{n} \sum_{i=0}^{n-1} \left | \nu_1\times \nu_2(x,T_{BC}^{-i}E \bigcap E') \right. \\
	- \left. \nu_1\times \nu_2(x,T_{BC}^{-i}E).  \nu_1\times \nu_2(x, E')\right | = 0 \nonumber
\end{multline}
Then choosing $E=G \times C^\mathcal{I}$ and $E'=G' \times C^\mathcal{I}$ where $G, G' \in \mathcal{B}_{B^\mathcal{I}}$ (resp. $E= B^\mathcal{I} \times H$ and $E'=B^\mathcal{I} \times H'$ where $H, H' \in \mathcal{B}_{C^\mathcal{I}}$) proves that $\nu_1$ (resp. $\nu_2$) is output weakly mixing.

Assume that $[A,\nu_1,B]$ and $[A,\nu_2,C]$ are output weakly mixing. Then (all statements are $\mu$-a.e.):
$$
\lim_{n \to \infty} \frac{1}{n} \sum_{i=0}^{n-1} \left | \nu_1(x,T_{B}^{-i}G \cap G')- \nu_1(x,T_{B}^{-i}G).  \nu_1(x, G')\right | =0
$$
$$
\lim_{n \to \infty} \frac{1}{n} \sum_{i=0}^{n-1} \left | \nu_2(x,T_{C}^{-i}H \cap H')- \nu_2(x,T_{C}^{-i}H).  \nu_2(x, H')\right | =0 
$$
Then, by Lemma \ref{LemmaDensityConvergence} (p. \pageref{LemmaDensityConvergence}), with 
$$
a_n=\nu_1(x,T_{B}^{-n}G \cap G'),\  b_n=\nu_1(x,T_{B}^{-n}G).  \nu_1(x, G')
$$
and 
$$
c_n=\nu_2(x,T_{C}^{-n}H \cap H'), \ d_n=\nu_2(x,T_{C}^{-i}H).  \nu_2(x, H')
$$
\begin{multline}
\lim_{n \to \infty} \frac{1}{n} \sum_{i=0}^{n-1} \left | \nu_1\times \nu_2(x,T_{BC}^{-i}(G\times H) \bigcap (G' \times H')) \right. \\
	- \left. \nu_1\times \nu_2(x,T_{BC}^{-i}(G\times H)).  \nu_1\times \nu_2(x, (G'\times H'))\right | = 0  \nonumber
\end{multline}
Thus, $\nu_1 \times \nu_2(x,.)$ is weakly mixing on the semi-algebra of the rectangles $\mathcal{B}_{B^\mathcal{I}}\times \mathcal{B}_{C^\mathcal{I}}$. 

Any element of the algebra generated by the rectangles is a finite union of disjoint rectangles. Let $E$ and $E'$ two elements of this algebra. Then
$$
E=\cup_{k=1}^K G_k \times H_k, \ E'=\cup_{l=1}^L G'_l \times H'_l
$$
The unions are disjoint then:
$$
(\nu_1\times \nu_2)(x, (T_{BC}^{-i} E) \cap E')= \sum_{k=1}^K \sum_{l=1}^L  (\nu_1\times \nu_2)(x, (T_{BC}^{-i} G_k \times H_k) \cap  G'_l \times H'_l)
$$
and
\begin{multline}
(\nu_1\times \nu_2)(x, T_{BC}^{-i} E). (\nu_1\times \nu_2)(x,E')= \\
	\sum_{k=1}^K \sum_{l=1}^L  (\nu_1\times \nu_2)(x, T_{BC}^{-i} G_k).(\nu_1\times \nu_2)(x, G'_l \times H'_l) \nonumber
\end{multline}
Then
\begin{multline}
\frac{1}{n}\sum_{i=0}^{n-1} |(\nu_1\times \nu_2)(x, (T_{BC}^{-i} E) \cap E') - (\nu_1\times \nu_2)(x, T_{BC}^{-i} E). (\nu_1\times \nu_2)(x,E') | \leq  \\
	\sum_{k=1}^K \sum_{l=1}^L \frac{1}{n}\sum_{i=0}^{n-1} |(\nu_1\times \nu_2)(x, (T_{BC}^{-i} G_k \times H_k) \cap  G'_l \times H'_l) - \\
		(\nu_1\times \nu_2)(x, T_{BC}^{-i} G_k).(\nu_1\times \nu_2)(x, G'_l \times H'_l)  |  \nonumber
\end{multline}
This obviously implies that $\nu_1 \times \nu_2(x,.)$ is weakly mixing on the algebra generated by  $\mathcal{B}_{B^\mathcal{I}}\times \mathcal{B}_{C^\mathcal{I}}$.

 Moreover, since $\nu_1\times \nu_2$ is a stationary channel (Proposition \ref{PropositionCNSQuasiStationaryChannelProduct}, p. \pageref{PropositionCNSQuasiStationaryChannelProduct}), $\nu_1 \times \nu_2(x,.)$ is AMS $\mu$-a.e. (Lemma \ref{LemmaStationaryChannelAMSkernel}, p. \pageref{LemmaStationaryChannelAMSkernel}), by Lemma \ref{Lemma81Gray09} (p. \pageref{Lemma81Gray09}),  $\nu_1 \times \nu_2(x,.)$ is  weakly mixing on the $\sigma$-field $\mathcal{B}_{B^\mathcal{I} \times C^\mathcal{I}}$.
\end{proof}

\subsection{Noisy computations}

The next corollary immediately follows  from Proposition \ref{PropositionCNSQuasiStationaryChannelProduct} (p. \pageref{PropositionCNSQuasiStationaryChannelProduct}) and Proposition \ref{PropositionCNSAMSChannelProduct} (p. \pageref{PropositionCNSAMSChannelProduct}).
\begin{corollary}\label{CorollaryQSAMSNoisyComputation}
A noisy computation $[A,f\times F, B\times C]$ is AMS (resp. stationary) w.r.t a stationary source distribution $\mu$ if and only if the deterministic channel $[A,f,B]$ and the random channel $[A,F,C]$ are both AMS (resp. stationary) w.r.t $\mu$.
\end{corollary}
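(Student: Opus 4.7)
The plan is to identify the noisy computation $[A,f\times F,B\times C]$ with the channel product of the deterministic channel induced by $f$ and the random channel $F$, and then to transfer the stationarity/AMS conclusions from Proposition~\ref{PropositionCNSQuasiStationaryChannelProduct} and Proposition~\ref{PropositionCNSAMSChannelProduct} via the observation that the required joint distributions are precisely marginals of the hookup $\mu(f\times F)$.

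First I would unpack the definitions. By Definition~\ref{DefinitionNoisyComputation}, the hookup $\mu(f\times F)$ is the distribution of the triple $(X,f(X),F(X))$. Its three two-dimensional marginals are the distributions of $(X,f(X))$, $(X,F(X))$ and $(f(X),F(X))$; and by the channel stationarity/AMS definition, the first two are precisely $\mu f$ and $\mu F$. Hence ``$[A,f,B]$ and $[A,F,C]$ are stationary (resp.\ AMS) w.r.t.\ $\mu$'' is exactly the statement that $\mu f$ and $\mu F$ are stationary (resp.\ AMS).

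For the forward direction, assume $f$ and $F$ are stationary (resp.\ AMS) w.r.t.\ $\mu$. By Proposition~\ref{PropositionCNSQuasiStationaryChannelProduct} (resp.\ Proposition~\ref{PropositionCNSAMSChannelProduct}), the channel product $f\times F$ is stationary (resp.\ AMS) w.r.t.\ $\mu$, i.e.\ $\mu(f\times F)=P_{X,f(X),F(X)}$ is stationary (resp.\ AMS) under the joint shift $T_A\times T_B\times T_C$. Marginals of a stationary probability are stationary, and marginals of an AMS probability are AMS because the Ces\`aro limits $\lim_n\frac{1}{n}\sum_{k=0}^{n-1}\mu(T^{-k}E)$ commute with the projection onto any coordinate subspace. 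Consequently each of $(X,f(X))$, $(X,F(X))$ and $(f(X),F(X))$ is stationary (resp.\ AMS), which is the definition of the noisy computation being stationary (resp.\ AMS).

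For the reverse direction, assume the noisy computation is stationary (resp.\ AMS). Then, by definition, $(X,f(X))$ and $(X,F(X))$ are stationary (resp.\ AMS), i.e.\ $\mu f$ and $\mu F$ are stationary (resp.\ AMS), which is precisely the statement that $[A,f,B]$ and $[A,F,C]$ are stationary (resp.\ AMS) w.r.t.\ $\mu$. The two directions together yield the claimed equivalence. The only non-bookkeeping step is invoking the product-channel propositions, which is already in hand; the potential obstacle -- showing the AMS property descends to two-dimensional marginals -- is immediate since the projection intertwines the product shift with the marginal shift, so the Ces\`aro averages restrict cleanly.
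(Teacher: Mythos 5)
Your proof is correct and follows the paper's own route: the paper obtains this corollary immediately from Proposition~\ref{PropositionCNSQuasiStationaryChannelProduct} and Proposition~\ref{PropositionCNSAMSChannelProduct} by viewing the noisy computation as the channel product $f\times F$, exactly as you do. Your additional remarks on marginals of stationary/AMS measures just make explicit the bookkeeping the paper leaves implicit, so nothing essential differs.
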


The next corollary immediately follows  from Proposition \ref{PropositionProductRecurrent} (p. \pageref{PropositionProductRecurrent}).
\begin{corollary}\label{CorollaryProductRecurrent}
A noisy computation $[A,f\times F, B\times C]$ is recurrent w.r.t.  a recurrent  source distribution $\mu$ if and only if the deterministic channel $[A,f,B]$ and the random channel $[A,F,C]$  are both recurrent w.r.t. $\mu$. 
\end{corollary}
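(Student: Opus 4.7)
The plan is to simply unwind the definition of a noisy computation and invoke Proposition~\ref{PropositionProductRecurrent} directly. By Definition~\ref{DefinitionNoisyComputation}, the noisy computation $[A,f\times F,B\times C]$ is literally the channel product of two channels on the common input alphabet $A$: the deterministic channel $[A,f,B]$, whose probability kernel is $P_{f(X)|X}(G\mid x)=1_{f^{-1}(G)}(x)$, and the random channel $[A,F,C]$ with kernel $\nu(x,H)=P_{F(X)|X}(H\mid x)$. So setting $\nu_1=f$ and $\nu_2=F$ puts us exactly in the hypotheses of Proposition~\ref{PropositionProductRecurrent}.

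Given a recurrent source distribution $\mu$, Proposition~\ref{PropositionProductRecurrent} states that $[A,\nu_1\times\nu_2,B\times C]$ is recurrent w.r.t.\ $\mu$ if and only if each of $[A,\nu_1,B]$ and $[A,\nu_2,C]$ is recurrent w.r.t.\ $\mu$. Applied to the identification above, this is precisely the equivalence claimed in the corollary: $[A,f\times F,B\times C]$ is recurrent w.r.t.\ $\mu$ iff both $[A,f,B]$ and $[A,F,C]$ are recurrent w.r.t.\ $\mu$. There is no obstacle to surmount here—the work was already done at the level of arbitrary channel products in Proposition~\ref{PropositionProductRecurrent} (itself resting on Lemma~\ref{LemmaRecurrentChannel2}), and the corollary is merely the instantiation to the special case where one factor of the product is a deterministic (measurable function) channel. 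No additional verification is required, since the deterministic kernel $x\mapsto 1_{f^{-1}(\cdot)}(x)$ is automatically a valid channel in the sense of Section~\ref{SectionChannels} and thus falls within the scope of the proposition.
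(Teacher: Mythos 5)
Your proof is correct and matches the paper exactly: the paper also obtains this corollary as an immediate instantiation of Proposition~\ref{PropositionProductRecurrent} with $\nu_1=f$ (the deterministic channel induced by the measurable function $f$) and $\nu_2=F$, for a recurrent source distribution $\mu$. No further argument is needed beyond this identification.
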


The next corollary  immediately follows  from Proposition \ref{PropositionOWMChannelProduct} (p. \pageref{PropositionOWMChannelProduct}).
\begin{corollary}\label{CorollaryOWMChannelProduct}
A  noisy computation $[A,f\times F, B\times C]$, stationary w.r.t. a stationary  source distribution $\mu$,  is output weakly mixing w.r.t $\mu$ if and only if  the deterministic channel $[A,f,B]$ and the random channel $[A,F,C]$ are  both output weakly mixing w.r.t. $\mu$.
\end{corollary}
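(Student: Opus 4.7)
The plan is to observe that this corollary is an immediate specialization of Proposition~\ref{PropositionOWMChannelProduct} to the case where one of the two channels in the product is the deterministic channel induced by the measurable function $f$. Recall from Definition~\ref{DefinitionNoisyComputation} and the surrounding discussion in Section~\ref{SectionModelForNoisyComputations} that $[A,f\times F,B\times C]$ is, by construction, the channel product of the deterministic channel $[A,f,B]$ (with kernel $P_{f(X)|X}(G|x)=1_{f^{-1}(G)}(x)$) and the random channel $[A,F,C]$. Thus $[A,f\times F,B\times C]$ fits exactly into the hypotheses of Proposition~\ref{PropositionOWMChannelProduct}, viewed with $\nu_1=f$ and $\nu_2=F$.

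First I would note that the stationarity assumption on the noisy computation $[A,f\times F,B\times C]$ with respect to $\mu$ implies, via Corollary~\ref{CorollaryQSAMSNoisyComputation}, that both the deterministic channel $[A,f,B]$ and the random channel $[A,F,C]$ are stationary with respect to $\mu$. This supplies the stationarity hypothesis required to invoke Proposition~\ref{PropositionOWMChannelProduct}.

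Then I would simply apply Proposition~\ref{PropositionOWMChannelProduct} directly: the channel product $[A,f\times F,B\times C]$ is output weakly mixing with respect to $\mu$ if and only if $[A,f,B]$ and $[A,F,C]$ are both output weakly mixing with respect to $\mu$. Since both directions of the biconditional in the proposition transfer verbatim to the noisy computation context, the corollary is established.

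There is no real obstacle here; the statement is a translation of Proposition~\ref{PropositionOWMChannelProduct} into the vocabulary of noisy computations, so the proof amounts to checking that the definitions of ``noisy computation'' and ``channel product'' coincide in the sense needed, and citing the general result.
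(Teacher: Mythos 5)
Your proposal is correct and follows essentially the same route as the paper, which derives the corollary as an immediate specialization of Proposition~\ref{PropositionOWMChannelProduct} with $\nu_1=f$ and $\nu_2=F$. Your explicit check that stationarity of the noisy computation yields stationarity of both component channels (via Corollary~\ref{CorollaryQSAMSNoisyComputation}, i.e.\ Proposition~\ref{PropositionCNSQuasiStationaryChannelProduct}) is a detail the paper leaves implicit, and it is a sound way to supply the hypothesis of the proposition.
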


\pagebreak

\section{Sufficient Conditions for Ergodicity of Channel Products and Noisy Computations}\label{SectionSufficientConditionsForErgodicity}

The objective is to identify conditions on channels to get an ergodic channel product. It is obvious that ergodicity of $\nu_1$ and $\nu_2$ is a necessary condition for $\nu_1 \times \nu_2$ to be ergodic. It cannot be a sufficient condition since the product of ergodic probabilities may not be ergodic.

 In this section, it is shown that,  if  the product $[A, S\overline{\nu}_{1_{\overline{\mu}}} \times S\overline{\nu}_{2_{\overline{\mu}}}, B\times C]$  is ergodic w.r.t to the stationary mean $\overline{\mu}$ of an AMS source distribution $\mu$, the  product of the two AMS (w.r.t. $\overline{\mu}$) channels $[A, \nu_1,B]$ and $[A, \nu_2,C]$ is ergodic  w.r.t. $\mu$.

It should be noticed that, thanks to Lemma \ref{LemmaErgodicAMSIsRecurrent} (p. \pageref{LemmaErgodicAMSIsRecurrent}), asymptotic mean stationarity and ergodicity jointly imply recurrence and, thus, ergodic AMS sources and channels are ergodic R-AMS. This leads to the fact that the AMS probabilities, channels and (by Proposition \ref{PropositionCSErgodicProduct} below ) channel products under consideration  are dominated and not only asymptotically dominated by their stationary means.

\subsection{Channel products}
\begin{proposition}\label{PropositionCSErgodicProduct}
Let $[A,X]$ be an AMS source with distribution $\mu$. Let $[A,\nu_1,B]$ and $[A,\nu_2,C]$ be two channels AMS w.r.t the stationary mean $\overline{\mu}$ of  $\mu$. Then $\nu_1$, $\nu_2$ and $\nu_1\times \nu_2$ are AMS w.r.t. $\mu$.

If, in addition,  $\mu$ is ergodic (or equivalently $\overline{\mu}$ is ergodic) and if  $[A,S\overline{\nu_1}_{\overline{\mu}} \times S\overline{\nu_2}_{\overline{\mu}},B\times C]$ is ergodic w.r.t. $\overline{\mu}$ then $[A, \nu_1 \times \nu_2, B\times C]$ is ergodic w.r.t. $\mu$ and w.r.t. $\overline{\mu}$ (where $\overline{\nu_1}_{\overline{\mu}}$ and $\overline{\nu_2}_{\overline{\mu}}$ are respectively the stationary means of $\nu_1$ and of $\nu_2$ w.r.t to $\overline{\mu}$).

In this case 
$$
\overline{(\nu_1 \times \nu_2)}_{\mu}=S\overline{\nu_1}_{\mu} \times S\overline{\nu_2}_{\mu} = \overline{(\nu_1 \times \nu_2)}_{\overline{\mu}}=S\overline{\nu_1}_{\overline{\mu}} \times S\overline{\nu_2}_{\overline{\mu}} \text{   }\overline{\mu}\text{-a.e.}
$$

If  $[A,S\overline{\nu_1}_{\overline{\mu}},B]$ and $[A,S\overline{\nu_2}_{\overline{\mu}}, C]$ are output weakly mixing w.r.t. $\overline{\mu}$ then $[A, \nu_1 \times \nu_2, B\times C]$ is ergodic w.r.t. $\mu$. 
\end{proposition}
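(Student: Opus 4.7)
The plan is to proceed in three stages: establish asymptotic mean stationarity with respect to $\mu$, identify the stationary means of the hookups, and then transfer ergodicity from the product of stationary means down to $\mu(\nu_1\times\nu_2)$.

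For the AMS part I would exploit $\mu \ll^a \overline{\mu}$. By Lemma \ref{LemmaDominanceOnTailSigmaField} this yields $\mu_\infty \ll \overline{\mu}_\infty$, so Lemma \ref{Lemma2Fontana} applied to the restriction $(\nu_i)_\infty$ gives $(\mu\nu_i)_\infty \ll (\overline{\mu}\nu_i)_\infty$. Since $\nu_i$ is AMS w.r.t.\ $\overline{\mu}$, the hookup $\overline{\mu}\nu_i$ is AMS and $(\overline{\mu}\nu_i)_\infty \ll (\overline{\overline{\mu}\nu_i})_\infty$ by another application of Lemma \ref{LemmaDominanceOnTailSigmaField}. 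Composing the two dominances and lifting via Lemma \ref{LemmaDominanceOnTailSigmaField} against the stationary $\overline{\overline{\mu}\nu_i}$ yields $\mu\nu_i \ll^a \overline{\overline{\mu}\nu_i}$, hence $\mu\nu_i$ is AMS. The same chain, combined with Proposition \ref{PropositionCNSAMSChannelProduct} (which makes $\nu_1\times\nu_2$ AMS w.r.t.\ $\overline{\mu}$), handles $\mu(\nu_1\times\nu_2)$.

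For the identification of stationary means I would use Lemma \ref{LemmaProductOfTailIsTailOfProduct} to write $((\nu_1\times\nu_2)(x,\cdot))_\infty = (\nu_1(x,\cdot))_\infty \times (\nu_2(x,\cdot))_\infty$. Combined with the pointwise dominance $\nu_i(x,\cdot)\ll^a S\overline{\nu_i}_{\overline{\mu}}(x,\cdot)$ $\overline{\mu}$-a.e.\ supplied by Lemma \ref{LemmaNecessarySufficientConditionAMSChannel}, Lemma \ref{LemmaProductOfAMSSources}, and Lemma \ref{LemmaChannelDominanceEquivHookupDominance}, this gives $\overline{\mu}(\nu_1\times\nu_2)\ll^a \overline{\mu}(S\overline{\nu_1}_{\overline{\mu}}\times S\overline{\nu_2}_{\overline{\mu}})$. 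The dominating hookup is stationary by Proposition \ref{PropositionCNSQuasiStationaryChannelProduct}, and by the hypothesis of the second part it is ergodic; the fact that a stationary probability asymptotically dominated by a stationary ergodic probability must coincide with it (invariant sets have measure $0$ or $1$ and asymptotic dominance transfers this via the stationary mean) yields $\overline{\overline{\mu}(\nu_1\times\nu_2)} = \overline{\mu}(S\overline{\nu_1}_{\overline{\mu}}\times S\overline{\nu_2}_{\overline{\mu}})$. Because $\mu$ is ergodic AMS, Lemma \ref{LemmaErgodicAMSIsRecurrent} makes $\mu$ recurrent, so $\mu\ll\overline{\mu}$; this transports all $\overline{\mu}$-a.e.\ statements to $\mu$-a.e.\ statements and identifies the four objects in the displayed equality.

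For the ergodicity conclusion I would strengthen the preceding asymptotic dominance to genuine dominance by using $\mu\ll\overline{\mu}$, Lemma \ref{LemmaChannelDominanceEquivHookupDominance}, and Lemma \ref{Lemma2Fontana}, obtaining $\overline{\mu}(\nu_1\times\nu_2)\ll \overline{\mu}(S\overline{\nu_1}_{\overline{\mu}}\times S\overline{\nu_2}_{\overline{\mu}})$ and likewise for $\mu(\nu_1\times\nu_2)$. Since $T_{ABC}$-invariant sets have measure $0$ or $1$ under the ergodic dominating probability, the same holds for the dominated one, giving ergodicity of $\overline{\mu}(\nu_1\times\nu_2)$ and then of $\mu(\nu_1\times\nu_2)$. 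The final output-weakly-mixing clause then follows by invoking Proposition \ref{PropositionOWMChannelProduct} to conclude that $S\overline{\nu_1}_{\overline{\mu}}\times S\overline{\nu_2}_{\overline{\mu}}$ is output weakly mixing w.r.t.\ $\overline{\mu}$, then Lemma \ref{LemmaOWMImpliesErgordicity} to obtain the ergodicity hypothesis required by the main part, which completes the proposition.

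The main obstacle will be the careful bookkeeping between the two reference measures $\mu$ and $\overline{\mu}$ and their associated stationary means $\overline{\nu_i}_\mu$ and $\overline{\nu_i}_{\overline{\mu}}$: the various lemmas on channel dominance are stated with specific measure qualifiers, and the upgrade from $\ll^a$ to $\ll$ (needed to transfer ergodicity rather than merely identify stationary means) relies essentially on $\mu$ being recurrent, which in turn uses that ergodicity plus AMS forces recurrence. Keeping these qualifiers aligned, while showing the four stationary means in the displayed identity genuinely coincide $\overline{\mu}$-a.e., is the delicate step.
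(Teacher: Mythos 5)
Your plan follows essentially the same route as the paper's proof: AMS for $\mu\nu_1$, $\mu\nu_2$ and $\mu(\nu_1\times\nu_2)$ via tail-$\sigma$-field dominance chains and Proposition \ref{PropositionCNSAMSChannelProduct}, ergodicity transferred from the stationary ergodic hookup $\overline{\mu}(S\overline{\nu_1}_{\overline{\mu}}\times S\overline{\nu_2}_{\overline{\mu}})$ through (asymptotic) dominance together with $\mu\ll\overline{\mu}$ from Lemma \ref{LemmaErgodicAMSIsRecurrent}, identification of the stationary means by equality of stationary ergodic probabilities that are non-singular (dominate a common probability), and the final clause via Proposition \ref{PropositionOWMChannelProduct} and Lemma \ref{LemmaOWMImpliesErgordicity}. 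One minor correction to your third stage: the genuine dominance $\overline{\mu}(\nu_1\times\nu_2)\ll\overline{\mu}(S\overline{\nu_1}_{\overline{\mu}}\times S\overline{\nu_2}_{\overline{\mu}})$ does not follow from $\mu\ll\overline{\mu}$ with Lemmas \ref{Lemma2Fontana} and \ref{LemmaChannelDominanceEquivHookupDominance} (those only give $\mu(\nu_1\times\nu_2)\ll\overline{\mu}(\nu_1\times\nu_2)$); as in the paper it comes from first deducing ergodicity of $\overline{\mu}(\nu_1\times\nu_2)$ from the asymptotic dominance, whence recurrence and the kernel-level dominances $\nu_i(x,\cdot)\ll S\overline{\nu_i}_{\overline{\mu}}(x,\cdot)$ $\overline{\mu}$-a.e. — and in any case this upgrade is not needed for the ergodicity conclusions, which your other steps already deliver.
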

\begin{proof}
$\mu$ is AMS then $\mu \ll^a \overline{\mu}$. By Lemma \ref{LemmaDominanceOnTailSigmaField} (p. \pageref{LemmaDominanceOnTailSigmaField}), $\mu_\infty \ll \overline{\mu}_\infty$. Then, by Lemma \ref{Lemma2Fontana} (p. \pageref{Lemma2Fontana}), $\mu_\infty \nu_{1_\infty} \ll \overline{\mu}_\infty\nu_{1_\infty}$. Since $\nu_1$ is AMS w.r.t. $\overline{\mu}$, and thanks to Lemma \ref{LemmaDominanceOnTailSigmaField} (p. \pageref{LemmaDominanceOnTailSigmaField}), $\overline{\mu}_\infty\nu_{1_\infty} \ll \overline{\mu}_\infty \overline{\nu}_{1_{\overline{\mu}_\infty}}$. Then $\mu_\infty \nu_{1_\infty} \ll \overline{\mu}_\infty \overline{\nu}_{1_{\overline{\mu}_\infty}}$. This implies that $\nu_1$ is AMS w.r.t. $\mu$. The same holds for $\nu_2$.

$\nu_1$ and $\nu_2$ are AMS w.r.t. the stationary probability $\overline{\mu}$ then, thanks to Proposition \ref{PropositionCNSAMSChannelProduct} (p. \pageref{PropositionCNSAMSChannelProduct}), $\nu_1 \times \nu_2$ is AMS w.r.t. $\overline{\mu}$ and (from above) w.r.t. $\mu$.

Assume now that, in addition, $\mu$ (or equivalently $\overline{\mu}$) is ergodic. Thanks to Lemma \ref{LemmaErgodicAMSIsRecurrent} (p. \pageref{LemmaErgodicAMSIsRecurrent}), $\mu$ is R-AMS and then $\mu \ll \overline{\mu}$. By Lemma \ref{Lemma2Fontana} (p. \pageref{Lemma2Fontana}), this implies that $\mu(\nu_1 \times \nu_2) \ll \overline{\mu}(\nu_1 \times \nu_2)$. 

Tanks to Proposition \ref{PropositionCNSAMSChannelProduct} (p. \pageref{PropositionCNSAMSChannelProduct}), 
$$
\mu(\nu_1 \times \nu_2)  \ll \overline{\mu}(\nu_1 \times \nu_2) \ll^a \overline{\mu}(S\overline{\nu}_{1_{\overline{\mu}}} \times S\overline{\nu}_{2_{\overline{\mu}}})
$$
Then, if $\overline{\mu}(S\overline{\nu}_{1_{\overline{\mu}}} \times S\overline{\nu}_{2_{\overline{\mu}}})$ is ergodic, $\mu(\nu_1 \times \nu_2)$ and $\overline{\mu}(\nu_1 \times \nu_2) $ are ergodic.

$\overline{\mu}(\nu_1 \times \nu_2) $ is ergodic then $\overline{\mu}\nu_1 $ is ergodic: $\nu_1$ is AMS and ergodic w.r.t $\overline{\mu}$. This implies that $\overline{\mu}\nu_1 \ll \overline{\mu}\overline{\nu}_{1_{\overline{\mu}}}$ i.e.
$$
\nu_1(x,.) \ll \overline{\nu}_{1_{\overline{\mu}}}(x,.) \text{   }\overline{\mu}\text{-a.e.}
$$
But $\overline{\nu}_{1_{\overline{\mu}}}(x,.) \ll S\overline{\nu}_{1_{\overline{\mu}}}(x,.) \text{   }\overline{\mu}\text{-a.e.}$. Then
$$
\nu_1(x,.) \ll S\overline{\nu}_{1_{\overline{\mu}}}(x,.) \text{   }\overline{\mu}\text{-a.e.}
$$
The same holds for $\nu_2$:
$$
\nu_2(x,.) \ll S\overline{\nu}_{2_{\overline{\mu}}}(x,.) \text{   }\overline{\mu}\text{-a.e.}
$$
Then, by Lemma \ref{LemmaProductOfAMSSources} (p. \pageref{LemmaProductOfAMSSources}), since $S\overline{\nu}_{2_{\overline{\mu}}}(x,.)$ and $S\overline{\nu}_{2_{\overline{\mu}}}(x,.)$ are stationary:
$$
\nu_1(x,.) \times \nu_2(x,.) \ll S\overline{\nu}_{1_{\overline{\mu}}}(x,.) \times S\overline{\nu}_{2_{\overline{\mu}}}(x,.)  \text{   }\overline{\mu}\text{-a.e.}
$$
This implies that:
$$
\overline{\mu}(\nu_1 \times \nu_2) \ll^a \overline{\mu}(S\overline{\nu}_{1_{\overline{\mu}}} \times S\overline{\nu}_{2_{\overline{\mu}}})
$$
and thus 
$$
\mu(\nu_1 \times \nu_2)  \ll \overline{\mu}(\nu_1 \times \nu_2) \ll \overline{\mu}(S\overline{\nu}_{1_{\overline{\mu}}} \times S\overline{\nu}_{2_{\overline{\mu}}})
$$

From \cite{Gray09}, an  AMS probability is  ergodic  if and only if its stationary mean is ergodic. Then $\overline{\mu}\overline{(\nu_1 \times \nu_2)}_{\overline{\mu}}$ and  $\overline{\mu}(S\overline{\nu_1}_{\overline{\mu}} \times S\overline{\nu_2}_{\overline{\mu}})$  are two ergodic stationary probabilities on the same space. They are thus equal or mutually singular. Since they both dominate the same probability (i.e. $\mu(\nu_1 \times \nu_2)$), they are equal. Then 
$$
\overline{(\nu_1 \times \nu_2)}_{\overline{\mu}}=S\overline{\nu_1}_{\overline{\mu}} \times S\overline{\nu_2}_{\overline{\mu}} \text{   }\overline{\mu}\text{-a.e.}
$$

Similarly, $\overline{\mu}\ \overline{\nu_1}_{\mu}$, $\overline{\mu}S\overline{\nu_1}_{\mu}$, $\overline{\mu}\ \overline{\nu_1}_{\overline{\mu}}$, $\overline{\mu}S\overline{\nu_1}_{\overline{\mu}}$ are stationary and ergodic and they dominate the same probability $\mu\nu_1$ (which is AMS ergodic and thus recurrent). They are thus equal. The same holds for $\nu_2$. Then 
$$
\overline{\nu_1}_{\mu}=S\overline{\nu_1}_{\mu}=\overline{\nu_1}_{\overline{\mu}}=S\overline{\nu_1}_{\overline{\mu}} \text{   }\overline{\mu}\text{-a.e.}
$$
and
$$
\overline{\nu_2}_{\mu}=S\overline{\nu_2}_{\mu}=\overline{\nu_2}_{\overline{\mu}}=S\overline{\nu_2}_{\overline{\mu}} \text{   }\overline{\mu}\text{-a.e.}
$$

Finally, if  $[A,S\overline{\nu_1}_{\overline{\mu}},B]$ and $[A,S\overline{\nu_2}_{\overline{\mu}}, C]$ are output weakly mixing w.r.t. $\overline{\mu}$ then, by Proposition \ref{PropositionOWMChannelProduct} (p. \pageref{PropositionOWMChannelProduct}),  $[A,S\overline{\nu_1}_{\overline{\mu}} \times S\overline{\nu_2}_{\overline{\mu}},B\times C]$ is output weakly mixing w.r.t. $\overline{\mu}$ and thus ergodic w.r.t. $\overline{\mu}$. This implies, from above, that $[A, \nu_1 \times \nu_2, B\times C]$ is ergodic w.r.t. $\mu$. 
\end{proof}

\subsection{Noisy computations}

The next corollary  immediately follows  from  Proposition \ref{PropositionCSErgodicProduct} (p. \pageref{PropositionCSErgodicProduct}).
\begin{corollary}\label{CorollaryErgodicNoisyComputation}
Let $[A,f\times F,B\times C]$ be a noisy computation AMS w.r.t.  the stationary mean $\overline{\mu}$ of an ergodic AMS source distribution $\mu$.

 If the channel product $[A,S\overline{f}_{\overline{\mu}} \times S\overline{F}_{\overline{\mu}},B\times C]$ is ergodic w.r.t. $\overline{\mu}$ then the noisy computation $[A,f\times F,B\times C]$ is ergodic w.r.t. $\mu$ and w.r.t $\overline{\mu}$.

If  $[A,S\overline{f}_{\overline{\mu}},B]$ and $[A,S\overline{F}_{\overline{\mu}},B]$ are output weakly mixing w.r.t. $\overline{\mu}$ then the noisy computation $[A,f\times F,B\times C]$ is ergodic w.r.t. $\mu$.
\end{corollary}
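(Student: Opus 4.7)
The plan is to directly specialize Proposition~\ref{PropositionCSErgodicProduct} to the case $\nu_1 = f$ (viewed as the deterministic probability kernel $P_{f(X)\mid X}$ induced by the measurable function $f$) and $\nu_2 = F$. Under this identification the noisy computation $[A,f\times F,B\times C]$ is literally the channel product $\nu_1\times \nu_2$ in the sense of Definition~\ref{DefinitionChannelProduct}, so the entire machinery of Part~\ref{PartErgodicAMSNoisyComputations} applies verbatim. The assumption that $[A,f\times F,B\times C]$ is AMS w.r.t.\ $\overline{\mu}$ is, by Corollary~\ref{CorollaryQSAMSNoisyComputation}, equivalent to the joint AMS-ness of $[A,f,B]$ and $[A,F,C]$ w.r.t.\ $\overline{\mu}$, which is exactly the standing hypothesis needed to apply Proposition~\ref{PropositionCSErgodicProduct}.

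For the first assertion, the hypothesis that $[A,S\overline{f}_{\overline{\mu}}\times S\overline{F}_{\overline{\mu}},B\times C]$ is ergodic w.r.t.\ $\overline{\mu}$ is precisely the ergodicity hypothesis of the central clause of Proposition~\ref{PropositionCSErgodicProduct}. Invoking that proposition yields ergodicity of $[A,f\times F,B\times C]$ w.r.t.\ both $\mu$ and $\overline{\mu}$, which is what is asserted.

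For the second assertion, suppose $[A,S\overline{f}_{\overline{\mu}},B]$ and $[A,S\overline{F}_{\overline{\mu}},C]$ are output weakly mixing w.r.t.\ $\overline{\mu}$. Since these are stationary channels w.r.t.\ the stationary $\overline{\mu}$ (by Lemma~\ref{LemmaStationaryChannelAMSkernel}), Proposition~\ref{PropositionOWMChannelProduct} gives that $[A,S\overline{f}_{\overline{\mu}}\times S\overline{F}_{\overline{\mu}},B\times C]$ is output weakly mixing w.r.t.\ $\overline{\mu}$. Ergodicity of $\overline{\mu}$ (equivalent to that of $\mu$) together with Lemma~\ref{LemmaOWMImpliesErgordicity} then promotes this to ergodicity of the product channel w.r.t.\ $\overline{\mu}$. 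We are thus back in the situation of the first assertion, and Proposition~\ref{PropositionCSErgodicProduct} delivers ergodicity of $[A,f\times F,B\times C]$ w.r.t.\ $\mu$. No real obstacle arises — this is a pure bookkeeping specialization, and the only thing to be careful about is consistently reading $f$ as its associated deterministic kernel so that the channel-product lemmas apply without modification.
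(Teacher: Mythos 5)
Your proposal is correct and matches the paper's own route: the paper derives this corollary as an immediate specialization of Proposition~\ref{PropositionCSErgodicProduct} with $\nu_1=f$ (the deterministic kernel) and $\nu_2=F$, and your second assertion merely unfolds the last clause of that proposition (via Proposition~\ref{PropositionOWMChannelProduct} and Lemma~\ref{LemmaOWMImpliesErgordicity}), which is exactly what its proof does internally. The appeal to Corollary~\ref{CorollaryQSAMSNoisyComputation} to translate the AMS hypothesis is a harmless bookkeeping step consistent with the paper.
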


\pagebreak

\section{Noisy computations and computable functions} \label{SectionNoisyComputationsAndComputableFunctions}

In this section, a way to instantiate the noisy computation model to the case of computable functions is proposed. First, it is shown that  a function on finite length symbol sequences (thus which can be a computable function) $f$ induces a deterministic channel $f^*$ (equivalently a measurable function on infinite length sequences). This deterministic channel $f^*$ has the obvious property that, for any input sequence $x$, the output sequence $f^*(x)$ is made the concatenation of a finite length sequence and of an infinite sequence of blank symbols. 

Secondly, it is shown that any measurable function $g$ on infinite length sequences (i.e., deterministic channel) possesses  stability properties (AMS, mixing) with respect to sources which produce, almost surely,  input sequences $x$ such that $g(x)$ is made the concatenation of a finite length sequence and of an infinite sequence of blank symbols. This allows to conclude that the deterministic channel $f^*$, induced by a computable function $f$, is AMS and mixing w.r.t any AMS and ergodic source $\mu$.

Then, given an AMS and ergodic source distribution $\mu$, for any random channel $[A,F,C]$ AMS w.r.t. $\overline{\mu}$ such that $[A, \overline{F}_{\overline{\mu}}, C]$ is output weakly mixing w.r.t. $\overline{\mu}$, the noisy computation $[A, f^*\times F, B\times C]$ is AMS and ergodic w.r.t $\mu$.
In other words, for a computable function $f$, it is sufficient to focus on stability properties of the noisy realization $F$ under concern in order to apply the coding results given above.

This section ends with very simple examples.

\subsection{Linking the noisy computation model to a formal computational model and its noisy version} \label{SubSectionInducedDeterministicChannel}

When dealing with a function $f$ computable according to a formal computational model such as Turing machines, Moore automata, \ldots, it will be necessary to associate to $f$, which is a function from a set of finite length sequences $A^+$ to a set of finite length sequences $B^+$, a function $f^*$ from $A^\mathbb{N}$ to $B^\mathbb{N}$ defining a deterministic channel $[A,f^*,B]$. A finite length sequence of $u \in A^+$, on which $f$ is defined, will be associated to an infinite length sequence $x$ by concatenating an infinite sequence of blank symbols to $u$. This association is obviously a measurable function $\phi$. This can be viewed as  writing  a finite length string on a semi-infinite tape with cells initially containing blank symbols. Reading the output of such a computing formal device can be viewed as extracting the prefix $u$ of an infinite length sequence made of $u$ concatenated to an infinite length sequence of blank symbols. This can also be associated to a measurable function $\psi$.

Given two measurable finite alphabets $A$ and $B$, let $(A^+,\mathcal{P}(A^+))$ and $(B^+,\mathcal{P}(B^+))$ be the corresponding measurable spaces of finite length sequences,  $\mathcal{P}(A^+)$ being the set of subsets of $A^+$. If $u$ is a finite string of symbols and $v$ is a finite or countable string of symbols, $uv$ denotes the concatenation of $u$ and $v$, $|u|$ is the length of $u$. If $s$ is a symbol, $s^\infty$ denotes the countable string $s\cdots s \cdots$. $A^+$ and $B^+$ are assumed to contain an empty string denoted $\epsilon$ for both sets ($|\epsilon|=0$).

Let $L$ be a subset of $A^+$ such that $\forall x \in L, x_{|x|} \neq \alpha$ and $f: L \to B^+$ be a function. By convention, $f(\epsilon)=\epsilon$. Let $\alpha \in A$ called the blank symbol of $A$ and $\beta \in B$ called the blank symbol of $B$.

Let $\psi^A_\alpha : A^\mathbb{N} \to A^+$ be the function such that, $\forall x=x_1\cdots x_n\cdots \in A^\mathbb{N}$,
$$
\psi^A_\alpha(x)= \left\{ 
		\begin{array}{ll}
			x_1\cdots x_n 	&	\text{if }x=x_1\cdots x_n \alpha^\infty \text{ and } x_1\cdots x_n  \in L \setminus \{\epsilon\}\\
			\epsilon			& 	\text{else }
		\end{array}
	\right.
$$

Let $\phi^A_\alpha : A^+ \to A^\mathbb{N}$ be the function such that, $\forall x=x_1\cdots x_{|x|}\in A^+$,

$$
\phi^A_\alpha(x)=x\alpha^\infty
$$

Let $\psi^B_\beta : B^\mathbb{N} \to B^+$ be the function such that, $\forall y=y_1\cdots y_n\cdots \in B^\mathbb{N}$,
$$
\psi^B_\beta(y)= \left\{ 
		\begin{array}{ll}
			y_1\cdots y_n 	&	\text{if }y=y_1\cdots y_n \beta^\infty , y_n \neq \beta\\
			\epsilon			& 	\text{else }
		\end{array}
	\right.
$$

Let $\phi^B_\beta : B^+ \to B^\mathbb{N}$ be the function such that, $\forall y=y_1\cdots y_{|y|}\in B^+$,
$$
\phi^B_\beta(y)=y\beta^\infty
$$

\begin{lemma}\label{LemmaFiniteToInfiniteFunction}
The function $f^*= \phi^B_\beta \circ f \circ \psi^A_\alpha$ is $(\mathcal{B}_{A^\mathbb{N}},\mathcal{B}_{B^\mathbb{N}})$-measurable and thus induces a deterministic channel $[A,f^*,B]$. Moreover:
\begin{itemize}
	\item if $x= x_0 \ldots x_n \alpha^\infty$ where $x_0 \ldots x_n \in L$ then $f^*(x) = f(x_0 \ldots x_n)\beta^\infty$ , else $f^*(x)=\beta^\infty$
	\item for any $x_0 \ldots x_n \in L$, $f(x_0 \ldots x_n)=\psi^B_\beta \circ f^* \circ \phi^A_\alpha (x_0 \ldots x_n)$
	\item for any $x_0 \ldots x_n \notin L$, $\psi^B_\beta \circ f^* \circ \phi^A_\alpha (x_0 \ldots x_n)=\epsilon$
	\item $\forall x \in A^\mathbb{N}, \exists N_x \text{ such that  } \forall n > N_x, (f^*(x))_n=\beta$
\end{itemize}
\end{lemma}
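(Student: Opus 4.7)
The plan is to prove measurability of $f^*$ by showing it is a composition of three measurable maps, and then verify the four bulleted statements by straightforward unwinding of the definitions of $\psi^A_\alpha$, $\phi^A_\alpha$, $\psi^B_\beta$ and $\phi^B_\beta$. Since $A$ and $B$ are finite, both $A^+$ and $B^+$ are countable, and the relevant $\sigma$-fields on them are the full power sets, so any function with a countable domain is automatically measurable. This reduces the whole measurability question to the single map $\psi^A_\alpha: A^\mathbb{N} \to A^+$.

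First I would dispose of $f: L \to B^+$ and $\phi^B_\beta: B^+ \to B^\mathbb{N}$. The former is measurable by the previous remark; for the latter, it suffices to check that preimages of generating rectangles $\{y \in B^\mathbb{N} : y_i \in B_i, i \in \mathcal{J}\}$ (with $\mathcal{J}$ finite) are measurable in $\mathcal{P}(B^+)$, which is automatic since the preimage is simply a subset of the countable set $B^+$.

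For $\psi^A_\alpha$, the key observation is that it suffices to check $(\psi^A_\alpha)^{-1}(\{u\}) \in \mathcal{B}_{A^\mathbb{N}}$ for every $u \in A^+$. For $u = u_1 \cdots u_n \in L \setminus \{\epsilon\}$, this preimage is exactly
\[
\bigl\{ x \in A^\mathbb{N} : x_i = u_i \text{ for } 1 \leq i \leq n \bigr\} \;\cap\; \bigcap_{k > n} \bigl\{ x \in A^\mathbb{N} : x_k = \alpha \bigr\},
\]
a countable intersection of thin cylinders, hence a member of $\mathcal{B}_{A^\mathbb{N}}$. The preimage of $\{\epsilon\}$ is the complement of the countable union of the previous sets taken over $u \in L\setminus\{\epsilon\}$, hence also measurable. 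Since singletons exhaust $A^+$ and measurability is preserved by countable unions, $\psi^A_\alpha$ is $(\mathcal{B}_{A^\mathbb{N}},\mathcal{P}(A^+))$-measurable. Consequently $f^* = \phi^B_\beta \circ f \circ \psi^A_\alpha$ is $(\mathcal{B}_{A^\mathbb{N}},\mathcal{B}_{B^\mathbb{N}})$-measurable, and defines a deterministic channel $[A,f^*,B]$ via its graph.

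The four explicit formulas then follow directly from the definitions. If $x = x_0 \cdots x_n \alpha^\infty$ with $x_0 \cdots x_n \in L$, then $\psi^A_\alpha(x) = x_0 \cdots x_n$, whence $f^*(x) = \phi^B_\beta(f(x_0\cdots x_n)) = f(x_0\cdots x_n)\beta^\infty$; otherwise $\psi^A_\alpha(x) = \epsilon$ and the convention $f(\epsilon) = \epsilon$ gives $f^*(x) = \phi^B_\beta(\epsilon) = \beta^\infty$. For $u \in L$, $\phi^A_\alpha(u) = u\alpha^\infty$, so applying the first bullet and then $\psi^B_\beta$ returns $f(u)$ (using that $f$ produces strings whose last symbol is not $\beta$, consistent with the symmetric role of the blank symbols). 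For $u \notin L$, $u\alpha^\infty$ is not of the admissible form, so $f^*(\phi^A_\alpha(u)) = \beta^\infty$ and $\psi^B_\beta(\beta^\infty) = \epsilon$. The final statement is immediate since in either branch above $f^*(x)$ has only $\beta$ symbols beyond index $N_x = |f(\psi^A_\alpha(x))|$. The only step requiring any real care is the measurability of $\psi^A_\alpha$, and there the only subtlety is recognising the tail condition $x_k = \alpha$ for all $k > n$ as a countable intersection rather than an uncountable one, which I have already addressed.
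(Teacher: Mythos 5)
Your proof is correct and follows essentially the same route as the paper's: measurability is reduced to $\psi^A_\alpha$, whose singleton preimages are countable intersections of thin cylinders (with the power-set $\sigma$-fields making $f$, $\phi^A_\alpha$, $\phi^B_\beta$ trivially measurable), and the four bullets are read off from the definitions. You are in fact a bit more careful than the paper, explicitly handling the preimage of $\{\epsilon\}$ and flagging the implicit convention that outputs of $f$ do not end in the blank symbol $\beta$, which the paper's "immediate from the definitions" step silently uses.
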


\begin{figure}[htbp]
\xymatrix{
(A^+,\mathcal{P}(A^+))  \ar [r]^f \ar@<1ex> [d]^{\phi^A_\alpha}				&	(B^+,\mathcal{P}(B^+)) \ar@<1ex> [d]^{\phi^B_\beta} \\
(A^\mathbb{N}, \mathcal{B}_{A^\mathbb{N}}) \ar [r]^{f^*} \ar@<1ex> [u]^{\psi^A_\alpha}	& 	(B^\mathbb{N}, \mathcal{B}_{B^\mathbb{N}}) \ar@<1ex> [u]^{\psi^B_\beta}
}
\caption{A function $f$ on finite sequences and the induced deterministic channel $[A,f^*,B]$}
\end{figure}
\begin{proof}
Obviously a function $f:L \to B^+$ is $(\mathcal{P}(A^+),\mathcal{P}(B^+))$-measurable, $\phi^A_\alpha$ is $(\mathcal{P}(A^+),\mathcal{B}_{A^\mathbb{N}})$-measurable and $\phi^B_\beta$ is $(\mathcal{P}(B^+),\mathcal{B}_{B^\mathbb{N}})$-measurable.

 It remains to prove that $\psi^A_\alpha$ is $(\mathcal{B}_{A^\mathbb{N}}, \mathcal{P}(A^+),)$-measurable (the proof also applies to $\psi^B_\beta$).

Let $x_1\cdots x_n \in A^+$. Then
$$
\psi_\alpha^{-1}(x_1\cdots x_n)=\{ x_1\cdots x_n \alpha^\infty \}=c_1^n(x_1\cdots x_n) \bigcap \cap_{i>n}c_i^i(\alpha)
$$
or
$$
\psi_\alpha^{-1}(x_1\cdots x_n)= \emptyset
$$
where $c_1^n(x_1\cdots x_n)=\{a \in A^\mathbb{N}, a_i=x_i, i=1,\cdots, n\}$ is the thin cylinder induced by $(x_1\cdots x_n)$.

$\psi_\alpha^{-1}(x_1\cdots x_n)$ is thus a countable intersection of thin cylinders which are measurable. Then $\psi_\alpha^{-1}(x_1\cdots x_n)$ is measurable. Any subset $E$ of $A^+$ is countable then, $\forall E \neq \{ \alpha \}$, $\psi_\alpha^{-1}(E)=\cup_{x \in E} \psi^{-1}(x)$ is measurable. 

The last statements of the proposition are immediate from the definition of the functions $\phi^A_\alpha$, $\phi^B_\beta$, $\psi^A_\alpha$ and  $\psi^B_\beta$
\end{proof}

Let $D_\beta(f^*)$ be the set:
$$
D_\beta(f^*)=\{ x\in A^{\mathbb{N}} / \exists N_x \in \mathbb{N} \text{ such that } \forall n> N_x, (f^*(x))_n = \beta \}
$$
From Lemma \ref{LemmaFiniteToInfiniteFunction} (p. \pageref{LemmaFiniteToInfiniteFunction}), $D_\beta(f^*)=A^{\mathbb{N}}$. Thus, for any probability $\mu$ on $(A^{\mathbb{N}},\mathcal{B}_{A^{\mathbb{N}}})$, $\mu(D_\beta(f^*))=1$.

Associated to a formal computational model (e.g., circuits made of binary gates from a complete basis), a noisy computational model can be considered in order to capture unreliability of an actual implementation of a function (e.g., circuits made of noisy binary gates, \cite{VonNeumann56}). It is also relevant to consider a noisy implementation of a computable function as a channel taking infinite length sequences as inputs and producing infinite length sequences as outputs. An input finite sequence $u$ can be considered as an infinite one i.e.  $u\alpha^\infty$ (see above). It the noisy device halts on an input, the finite output sequence $v$ can also be considered as the infinite one $v\beta^\infty$. If the noisy device doest not halt on the input $u\alpha^\infty$ then its output is an infinite sequence $z$. 

The choice of a peculiar noisy computational model is in fact the choice of probabilistic assumptions intended to capture the nature of faults of a technology in use. A noisy implementation of a function, in such a noisy computational model, can also be described as a random channel whose probabilistic properties are a consequence of the probabilistic assumptions at the base of the noisy computational model.  For example, in the case of circuits based on noisy boolean gates, if the gates behave independently and have constant probability of failure, a circuit made of such noisy gates will behave as a stationary memoryless random channel.

To summarize, the realization $F$ of a computable function $f$  built according to a noisy computational model induces a noisy computation $[A, f^*\times F , B\times C]$ where the characteristics of the random channel $[A, F, C]$ depends on the noisy computational model. Moreover, for any input source with distribution $\mu$, $\mu(D_\beta(f^*))=1$.

\subsection{Stability properties of a deterministic channel $f^*$ w.r.t a source $\mu$ such that $\mu(D_\beta(f^*))=1$}

It is shown below that, if $f^*$ is a $(\mathcal{B}_{A^\mathbb{N}}, \mathcal{B}_{B^\mathbb{N}})$-measurable function, the corresponding deterministic channel $[A,f^*,B]$ possesses some stability properties w.r.t probabilities $\mu$ such that $\mu(D_\beta(f^*))=1$ ($\beta$ is a fixed symbol of $B$).

In the following, the shift $T_B$ is assumed non-invertible: $\forall G \in \mathcal{B}_{B^\mathbb{N}}$, $T_B^{-1}(G) = B \times G$.

\begin{lemma}\label{LemmaStabilityofDeterministicChannel}
 Let $f^*:A^{\mathbb{N}} \rightarrow B^{\mathbb{N}}$ be a $(\mathcal{B}_{A^{\mathbb{N}}},\mathcal{B}_{B^{\mathbb{N}}})$-measurable function such that  $D_\beta(f^*)$ is non-empty. Let $\{ f^*(x,.), x\in A^{\mathbb{N}}\}$ denote the probability kernel defining the deterministic channel $[A,f^*,B]$. Then
\begin{enumerate}
	\item \label{LemmaStabilityStationarity} for any $x\in D_\beta(f^*)$, $(f^*(x,.))_\infty$ is a stationary probability on  the space $(B^{\mathbb{N}}, (\mathcal{B}_{B^{\mathbb{N}}})_\infty, T_B)$
	\item \label{LemmaStabilityAMSwrtStationary} if $\mu$ is a stationary probability on  $(A^{\mathbb{N}}, \mathcal{B}_{A^{\mathbb{N}}},T_A)$ such that $\mu(D_\beta(f^*))=1$ then the channel $[A,f^*,B]$ is AMS w.r.t. $\mu$
	\item \label{LemmaStabilityAMSwrtAMS} if $\mu$ is a R-AMS probability on   $(A^{\mathbb{N}}, \mathcal{B}_{A^{\mathbb{N}}},T_A)$ such that $\overline{\mu}(D_\beta(f^*))=1$ then the channel $[A,f^*,B]$ is AMS w.r.t. $\mu$
	\item \label{LemmaStabilityRecurrence} if $\mu$ is a recurrent probability on   $(A^{\mathbb{N}}, \mathcal{B}_{A^{\mathbb{N}}})$ such that $\mu(D_\beta(f^*))=1$ then the channel $[A,f^*,B]$ is recurrent w.r.t. $\mu$
	\item \label{LemmaStabilityRAMS} if $\mu$ is an R-AMS probability on   $(A^{\mathbb{N}}, \mathcal{B}_{A^{\mathbb{N}}})$ such that $\overline{\mu}(D_\beta(f^*))=1$ then the channel $[A,f^*,B]$ is R-AMS w.r.t. $\mu$
	\item \label{LemmaStabilityAMS} if $\mu$ is an AMS probability on   $(A^{\mathbb{N}}, \mathcal{B}_{A^{\mathbb{N}}},T_A)$ such that $\overline{\mu}(D_\beta(f^*))=1$ then the channel $[A,f^*,B]$ is AMS w.r.t. $\mu$
\end{enumerate}
\end{lemma}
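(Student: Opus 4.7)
The common thread across the six statements is that for $x \in D_\beta(f^*)$, the sequence $f^*(x)$ equals $\beta$ from coordinate $N_x$ onward, so every forward shift $T_B^k f^*(x)$ coincides with $\beta^\infty$ for $k \geq N_x$; this collapses the asymptotic behaviour of the Dirac kernel $\delta_{f^*(x)}$ to $\delta_{\beta^\infty}$. For statement (1), fix $x \in D_\beta(f^*)$ and any $E \in (\mathcal{B}_{B^\mathbb{N}})_\infty$; since $E \in T_B^{-n}\mathcal{B}_{B^\mathbb{N}}$ for every $n$, membership in $E$ is invariant under changes of finitely many coordinates, so $f^*(x)$ and $T_B f^*(x)$---which agree from coordinate $N_x$ onward---satisfy $1_E(f^*(x)) = 1_E(T_B f^*(x))$, i.e.\ $(\delta_{f^*(x)})_\infty$ is $T_B$-stationary. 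For statement (2) I upgrade this to AMS: the Ces\`aro average $\frac{1}{n}\sum_{k=0}^{n-1} 1_G(T_B^k f^*(x))$ tends to $1_G(\beta^\infty) = \delta_{\beta^\infty}(G)$ because only finitely many terms differ, so $\delta_{f^*(x)}$ is AMS with stationary mean $\delta_{\beta^\infty}$ for every $x \in D_\beta(f^*)$; stationarity of $\mu$, the hypothesis $\mu(D_\beta(f^*))=1$, and Lemma \ref{LemmaNecessarySufficientConditionAMSChannel2} then conclude that $[A,f^*,B]$ is AMS w.r.t.\ $\mu$.

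Statements (3) and (6) reduce to (2) through dominance. In both, $\overline{\mu}(D_\beta(f^*))=1$, so (2) applied to the stationary $\overline{\mu}$ yields $\overline{\mu} f^*$ AMS. For (3) the R-AMS hypothesis gives $\mu \ll \overline{\mu}$; Lemma \ref{Lemma2Fontana} lifts this to $\mu f^* \ll \overline{\mu} f^*$, and chaining with $\overline{\mu} f^* \ll^a \overline{\overline{\mu} f^*}$ produces $\mu f^* \ll^a \overline{\overline{\mu} f^*}$, so $\mu f^*$ is AMS. For (6) only $\mu \ll^a \overline{\mu}$ is available; Lemma \ref{LemmaDominanceOnTailSigmaField} turns this into $\mu_\infty \ll \overline{\mu}_\infty$, Lemma \ref{Lemma2Fontana} applied on the tail $\sigma$-fields combined with Lemma \ref{LemmaChannelRestriction} gives $(\mu f^*)_\infty \ll (\overline{\mu} f^*)_\infty$, and a final pass through Lemma \ref{LemmaDominanceOnTailSigmaField} returns the asymptotic dominance needed for $\mu f^*$ to be AMS.

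For (4) and (5) I use the rectangle characterization in Lemma \ref{LemmaRecurrentChannel2} and aim to show $\mu f^*(F \times G \setminus \bigcup_{k\geq 1} T_{AB}^{-k}(F \times G)) = 0$ on every measurable rectangle, splitting on whether $\beta^\infty \in G$. If $\beta^\infty \in G$, the equality $T_B^k f^*(x) = \beta^\infty \in G$ holds for every $k \geq N_x$, and $\mu$-recurrence of $F$ supplies infinitely many returns $T_A^k x \in F$ for $\mu$-a.e.\ $x \in F$, hence a simultaneous return $(T_A^k x, T_B^k f^*(x)) \in F \times G$ that empties the bad set modulo $\mu$-null. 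The case $\beta^\infty \notin G$ is the main obstacle: the second-coordinate constraint is automatically satisfied for $k \geq N_x$, leaving only the finite window $1 \leq k < N_x$, and closing this subcase requires carefully exploiting how $\mu(D_\beta(f^*))=1$ together with $\mu$-recurrence constrain the slices $\{x \in F \cap (f^*)^{-1}(G) : N_x = n\}$---this is where the argument is most delicate. Statement (5) then follows by combining (3) and (4), since R-AMS is exactly AMS together with recurrence.
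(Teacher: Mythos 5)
Parts (1), (2), (3) and (6) of your argument are correct and essentially the paper's: (1) is the same tail-field observation, and your direct Ces\`aro computation for (2) --- showing $\delta_{f^*(x)}$ is AMS with stationary mean $\delta_{\beta^\infty}$ for every $x\in D_\beta(f^*)$ and then invoking Lemma \ref{LemmaNecessarySufficientConditionAMSChannel2} --- is if anything tidier than the paper's detour through asymptotic dominance; (3) and (6) follow the paper's dominance chains essentially verbatim. The problem is part (4), and with it (5). You have correctly isolated the hard case, a rectangle $F\times G$ with $\beta^\infty\notin G$, but you do not close it, and under the stated hypotheses it cannot be closed. Take $f^*$ to be the constant map $f^*(x)=b\beta^\infty$ with $b\neq\beta$; then $D_\beta(f^*)=A^{\mathbb{N}}$, so $\mu(D_\beta(f^*))=1$ for every recurrent $\mu$, yet for $E=A^{\mathbb{N}}\times\{b\beta^\infty\}$ one has $\mu f^*(E)=1$ while $T_B^{k}(b\beta^\infty)=\beta^\infty\neq b\beta^\infty$ for all $k\geq 1$, so $\mu f^*(T_{AB}^{-k}E)=0$ for every $k$ and the hookup fails recurrence on $E$. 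Hence no amount of care about the ``finite window $1\leq k<N_x$'' will rescue the argument; some additional hypothesis is needed (e.g.\ that $f^*(x)=\beta^\infty$ for $\mu$-a.e.\ $x$, or that the output distribution charges no point $y$ with $T_B^k y\neq y$ for all $k\geq 1$). For what it is worth, the paper's own proof of (4) takes a different route --- it claims every $G$ with $T_B^{-1}G\subset G$ lies in the tail $\sigma$-field and then applies (1) --- but the set $G=\bigcup_{k\geq 0}T_B^{-k}\{b\beta^\infty\}$ satisfies $T_B^{-1}G\subset G$ without being tail-measurable, so the same example is an obstruction there as well.

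Two smaller points. In your $\beta^\infty\in G$ case you use that recurrence of $\mu$ yields infinitely many (hence arbitrarily late) returns of a.e.\ $x\in F$ to $F$; this is true but not immediate from the definition --- it follows by applying the recurrence hypothesis to each set $D_m=\{x\in F:\ T_A^kx\notin F\ \ \forall k\geq m\}$ --- and deserves a line. And your reduction of (5) to ``(3) plus (4)'' matches the paper, so (5) stands or falls with (4).
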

\begin{proof}
\begin{enumerate}
	\item Let  $x\in D_\beta(f)$. Then $f^*(x)=y_1\cdots y_{N_x} \beta \cdots \beta \cdots$. 

	Let $G\in (\mathcal{B}_{B^{\mathbb{N}}})_\infty$. Then, $\forall i \geq 0$, there exists $G_i \in \mathcal{B}_{B^{\mathbb{N}}}$ such that $G=T_B^{-i} G_i$. Since $T_B$ is the non-invertible shift, $T_B^{-1}G_i=B\times G_i$. Then $G=B^i \times G_i$. 

	In particular, there exists $G_{N_x} \in \mathcal{B}_{B^{\mathbb{N}}}$ such that $G=B^{N_x}\times G_{N_x}$ and $T_B^{-1}G=B^{N_x+1}\times G_{N_x}$. In fact $G_{N_x+1}=B\times G_{N_x}$ i.e. $G_{N_x+1}=T_B^{-1} G_{N_x}$.

	$f^*$ defines a deterministic channel then $f^*(x,G)=1$ if and only if $f^*(x)=y_1\cdots y_{N_x} \beta \cdots \beta \cdots \in G=B^{N_x}\times G_{N_x}$. $y_1\cdots y_{N_x} \in B^{N_x}$ is obviously always true, then $f^*(x,G)=1$ if and only if $\beta^\infty= \beta \cdots \beta \cdots \in  G_{N_x}$.	Moreover $\beta^\infty \in G_{N_x} \Rightarrow \beta^\infty \in B\times G_{N_x}$.Thus $f^*(x,G)=1$ implies $\beta^\infty \in B^{N_x+1}\times G_{N_x}=G_{N_x+1}$ which implies that $f^*(x) \in T_B^{-1}G$ then $f^*(x,T_B^{-1}G)=1$.

	If $f^*(x,G)=0$ then $f^*(x)=y_1\cdots y_{N_x} \beta \cdots \beta \cdots \notin G=B^{N_x}G_{N_x}$. This implies that $\beta^\infty \notin G_{N_x}$ and thus $\beta^\infty \notin B\times G_{N_x}$. Hence $f^*(x) \notin T_B^{-1}G$ which is equivalent to $f^*(x,T_B^{-1}G)=0$.

	Thus, in any case,$\forall G\in (\mathcal{B}_{B^{\mathbb{N}}})_\infty$, $f^*(x,G)=f^*(x,T_B^{-1}G)$.

	\item by \ref{LemmaStabilityStationarity}), $\forall x\in D_\beta(f^*)$, $f^*(x,.)_\infty$ is stationary thus R-AMS. Then, there exits a stationnary probability $\eta$ such that $f^*(x,.)_\infty \ll \eta_\infty$ (e.g. let $\eta'$ be a stationary probability on $(B^\mathbb{N}, \mathcal{B}_{B^\mathbb{N}})$, set $\eta=\eta'$ on  $\mathcal{B}_{B^\mathbb{N}} \setminus (\mathcal{B}_{B^\mathbb{N}})_\infty$ and $\eta =f^*(x,.)_\infty$ on $(\mathcal{B}_{B^\mathbb{N}})_\infty$) . Then, by Lemma \ref{LemmaDominanceOnTailSigmaField} (p. \pageref{LemmaDominanceOnTailSigmaField}), $f^*(x,.)$ is AMS. Since $\mu(D_\beta(f^*))=1$, $f^*(x,.)$ is AMS $\mu$-a.e. Thanks to Lemma \ref{LemmaNecessarySufficientConditionAMSChannel2} (p. \pageref{LemmaNecessarySufficientConditionAMSChannel}), this implies that $f^*$ is AMS w.r.t. $\mu$.

	\item $\mu$ is R-AMS thus $\mu$ is dominated by its stationary mean: $\mu \ll \overline{\mu}$. Then, by Lemma \ref{Lemma2Fontana} (p. \pageref{Lemma2Fontana}), $\mu f^* \ll \overline{\mu}f^*$. By \ref{LemmaStabilityAMSwrtStationary}),  $f^*$ is AMS w.r.t $\overline{\mu}$: $\overline{\mu}f^* \ll^a \overline{\mu}\overline{f^*}_{\overline{\mu}}$. Hence $\mu f^*   \ll^a \overline{\mu}\overline{f^*}_{\overline{\mu}}$ i.e. $\mu f^*$ is AMS. $f^*$ is thus AMS w.r.t. $\mu$.

	\item  $\forall G \in \mathcal{B}_{B^\mathbb{N}}$, $T_B^{-i}G=B^i \times G$. Let $ G \in \mathcal{B}_{B^\mathbb{N}}$ be such that $T_B^{-1}G \subset G$. Then $\forall i$, $T_B^{-i}G=B^i \times G \subset G$. This implies that $\forall i$ there exists $G_i \in \mathcal{B}_{B^\mathbb{N}}$ such that $G=B^i \times G_i$, then $G \in (\mathcal{B}_{B^\mathbb{N}})_\infty$. Thus, by \ref{LemmaStabilityStationarity}), $f^*(x, G)=f^*(x,T_B^{-1}G)$. Then $f^*(x,.)$ is incompressible, equivalently recurrent.

	\item obvious from \ref{LemmaStabilityAMSwrtAMS}) and \ref{LemmaStabilityRecurrence})

	\item $\mu$ is AMS then $\mu \ll^a \overline{\mu}$. Then by Lemma \ref{LemmaDominanceOnTailSigmaField} (p. \pageref{LemmaDominanceOnTailSigmaField}), $\mu_\infty \ll \overline{\mu}_\infty$. By Lemma \ref{Lemma2Fontana} (p. \pageref{Lemma2Fontana}), $\mu_\infty f^*_\infty \ll \overline{\mu}_\infty f^*_\infty$. By \ref{LemmaStabilityAMSwrtStationary}), $\overline{\mu} f^*$ is AMS thus $\overline{\mu}_\infty f^*_\infty \ll \overline{\mu}_\infty \overline{f^*}_{\overline{\mu}_\infty}$. This implies that   $\mu_\infty f^*_\infty \ll  \overline{\mu}_\infty \overline{f^*}_{\overline{\mu}_\infty}$ and thus  (since is $\overline{\mu} \overline{f^*}_{\overline{\mu}}$ is stationary), by Lemma \ref{LemmaDominanceOnTailSigmaField} (p. \pageref{LemmaDominanceOnTailSigmaField}),  $\mu f^* \ll^a \overline{\mu} \overline{f^*}_{\overline{\mu}}$. Hence, $f^*$ is AMS w.r.t. $\mu$
\end{enumerate}

\end{proof}

\begin{lemma}\label{LemmaQSMeanIsOWMStationarySource}
Let $\mu$ be the distribution of a stationary and ergodic  source $[A,X]$.  Let $f^*:A^{\mathbb{N}} \rightarrow B^{\mathbb{N}}$ be a $(\mathcal{B}_{A^{\mathbb{N}}},\mathcal{B}_{B^{\mathbb{N}}})$-measurable function  such that $\mu(D_\beta(f^*))=1$. Since, by Lemma \ref{LemmaStabilityofDeterministicChannel} (p. \pageref{LemmaStabilityofDeterministicChannel}), the stationary mean $[A, \overline{f^*}_{\mu}, B]$ exists, then
\begin{enumerate}
	\item $\forall G \in \mathcal{B}_{B^\mathbb{N}}$, $\overline{f^*}_{\mu}(x,G) = 1$  if and only if $\beta^\infty \in G$ and $\overline{f^*}_{\mu}(x,G) = 0$  if and only if $\beta^\infty \notin G$, $\mu$-a.e.
	\item $\forall G, G' \in \mathcal{B}_{B^\mathbb{N}}$, $\overline{f^*}_{\mu}(x, G\cap G') = \overline{f^*}_{\mu}(x, G).\overline{f^*}_{\mu}(x, G')$ $\mu$-a.e.
	\item $\overline{f^*}_{\mu}(x, .)$ is stationary $\mu$-a.e. and  consequently $\overline{f^*}_{\mu}(x, .)=S\overline{f^*}_{\mu}(x, .)$ $\mu$-a.e. and $[A,\overline{f^*}_{\mu},B]=[A,S\overline{f^*}_{\mu},B]$
	\item  $\overline{f^*}_{\mu}(x, .)$ is strongly  mixing $\mu$-a.e.
\end{enumerate}
\end{lemma}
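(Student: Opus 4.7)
The strategy is to show that, under the hypotheses, the stationary mean channel is the constant Dirac channel $\overline{f^*}_\mu(x,\cdot) = \delta_{\beta^\infty}$ for $\mu$-a.e.\ $x$; once this identification is made, all four claims reduce to elementary properties of a Dirac mass at the $T_B$-fixed point $\beta^\infty$.

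First I would compute the pointwise Ces\`aro averages explicitly. By hypothesis $\mu(D_\beta(f^*))=1$, so for $\mu$-a.e.\ $x$ there exists $N_x$ with $f^*(x)=y_1\cdots y_{N_x}\beta^\infty$, hence $T_B^i f^*(x)=\beta^\infty$ for every $i>N_x$. Because $[A,f^*,B]$ is deterministic, $f^*(x,T_B^{-i}G)=\mathbf 1_G(T_B^i f^*(x))$, so the sum $\tfrac{1}{n}\sum_{i=0}^{n-1} f^*(x,T_B^{-i}G)$ has all but finitely many terms equal to $\mathbf 1_G(\beta^\infty)$, and its limit equals $\delta_{\beta^\infty}(G)$. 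In the notation of Lemma \ref{LemmaStationaryChannelAMSkernel} this means $Sf^*(x,\cdot)=\delta_{\beta^\infty}$ $\mu$-a.e.

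Next I would identify this Dirac-valued channel with $\overline{f^*}_\mu$. The constant channel $x\mapsto \delta_{\beta^\infty}$ is stationary w.r.t.\ the stationary $\mu$ by Lemma \ref{LemmaStationaryChannel}, because $T_B\beta^\infty=\beta^\infty$ gives $\delta_{\beta^\infty}(T_B^{-1}G)=\mathbf 1_G(T_B\beta^\infty)=\delta_{\beta^\infty}(G)$. Applying the pointwise ergodic theorem to $\mathbf 1_E(T_A^i x)\,\mathbf 1_G(T_B^i f^*(x))$ on a rectangle $E\times G$, splitting the sum at $i=N_x$ and using ergodicity of $\mu$ together with $T_B^i f^*(x)=\beta^\infty$ for $i>N_x$, gives
$$
\overline{\mu f^*}(E\times G)=\mu(E)\cdot \mathbf 1_G(\beta^\infty)=(\mu\times \delta_{\beta^\infty})(E\times G).
$$
Equality on the generating semi-algebra of rectangles yields $\overline{\mu f^*}=\mu\,\delta_{\beta^\infty}$, and by $\mu$-a.e.\ uniqueness of the stationary mean channel (Lemma \ref{LemmaChannelDominanceEquivHookupDominance} transferring this back to kernels) we conclude $\overline{f^*}_\mu(x,\cdot)=\delta_{\beta^\infty}$ $\mu$-a.e.

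Given this identification, the four claims are essentially one-line verifications: (1) $\delta_{\beta^\infty}(G)=\mathbf 1_G(\beta^\infty)$ equals $1$ iff $\beta^\infty\in G$; (2) $\delta_{\beta^\infty}(G\cap G')=\mathbf 1_G(\beta^\infty)\mathbf 1_{G'}(\beta^\infty)$ splits multiplicatively; (3) stationarity was already established above, and since the Ces\`aro mean of a stationary probability is itself, $S\overline{f^*}_\mu=\overline{f^*}_\mu$; (4) $\delta_{\beta^\infty}(T_B^{-n}G\cap G')=\mathbf 1_G(T_B^n\beta^\infty)\mathbf 1_{G'}(\beta^\infty)=\delta_{\beta^\infty}(G)\delta_{\beta^\infty}(G')$ is in fact constant in $n$, so strong mixing is immediate. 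I expect the main delicate point to be the identification step: one must be careful to invoke uniqueness of the stationary mean at the level of channels rather than merely hookups, which is handled by first exhibiting the stationary candidate $x\mapsto\delta_{\beta^\infty}$ and then using Lemma \ref{LemmaChannelDominanceEquivHookupDominance} to pass from the rectangle computation of $\overline{\mu f^*}$ back to the kernel itself.
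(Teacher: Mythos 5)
Your proposal is correct and follows essentially the same route as the paper's proof: both arguments hinge on splitting the Ces\`aro sum at $N_x$, using the pointwise ergodic theorem (together with bounded convergence) to evaluate $\overline{\mu f^*}$ on rectangles as $\mu(E)\cdot \mathbf{1}_G(\beta^\infty)$, and concluding that $\overline{f^*}_{\mu}(x,\cdot)$ is the point mass at the $T_B$-fixed point $\beta^\infty$ for $\mu$-a.e.\ $x$, after which statements 1--4 are immediate. The one imprecision is your citation for the identification step: Lemma \ref{LemmaChannelDominanceEquivHookupDominance} only gives mutual absolute continuity of kernels, not equality; what you actually need is the $\mu$-a.s.\ uniqueness of the channel inducing a given hookup on standard spaces (equivalently, as in the paper, the defining identity $\int_E \overline{f^*}_{\mu}(x,G)\,d\mu=\overline{\mu f^*}(E\times G)$ holding for every $E$ forces $\overline{f^*}_{\mu}(x,G)=\mathbf{1}_G(\beta^\infty)$ $\mu$-a.e.\ for each $G$).
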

\begin{proof}
From Lemma \ref{LemmaStabilityofDeterministicChannel} (p. \pageref{LemmaStabilityofDeterministicChannel}),  the deterministic channel $[A,f^*,B]$ is R-AMS w.r.t. $\mu$. Then the channels $[A,\overline{f^*}_{\mu},B]$ and $[A,S\overline{f^*}_{\mu},B]$ exist.
\begin{enumerate}
	\item 

Since $\overline{\mu f^*}=\mu \overline{f^*}_\mu$, $\forall E \in \mathcal{B}_{A^\mathbb{N}}$ and $\forall G \in \mathcal{B}_{B^\mathbb{N}}$
\begin{equation}\label{Eq1}
	\int_E \overline{f^*}_\mu(x,G) d\mu = \lim_{n \to \infty} \frac{1}{n} \sum_{i=0}^{n-1} \int f^*(x, T_B^{-i}G) 1_E(T_A^i x) d\mu
\end{equation}
$\forall i \in \mathbb{N}$, $f^*(x, T_B^{-i}G) = 1_G(T_B^i f^*(x))$ and  $\forall x \in D_\beta(f^*)$, $f^*(x)=y_1\cdots y_{N_x} \beta^\infty$. Then $\forall x\in D_\beta(f^*)$ and $\forall i>N_x$, 
$$
f^*(x, T_B^{-i}G) = 1 \text{   if and only if  }\beta^\infty \in G
$$
and 
$$
f^*(x, T_B^{-i}G) = 0 \text{  if and only if  } \beta^\infty \notin G
$$

Moreover 
$$
\forall i,j > N_x, f^*(x, T_B^{-i}G) = f^*(x, T_B^{-j}G)
$$ 

For $n> N_x+1$, let 
$$
\Delta_n(x,E,G)= \frac{1}{n} \sum_{i=0}^{n-1} f^*(x,T_B^{-i}G) 1_E(T_A^i x)
$$
 Then
\begin{multline}
\Delta_n(x,E,G)=\\
	\frac{1}{n} \sum_{i=0}^{N_x} f^*(x,T_B^{-i}G) 1_E(T_A^i x) +  \frac{1}{n} \sum_{i=N_x+1}^{n-1} f^*(x,T_B^{-i}G) 1_E(T_A^i x) \nonumber
\end{multline}

$\forall x\in D_\beta(f^*)$, $\forall n>N_x$ and $\forall G$ such that $\beta^\infty \notin G$
$$
\Delta_n(x,E,G)= \frac{1}{n} \sum_{i=0}^{N_x} f^*(x,T_B^{-i}G) 1_E(T_A^i x) 
$$
Thus 
$$
\forall x\in D_\beta(f^*), \forall G \text{  such that  }\beta^\infty \notin G,\text{      }\lim_{n \to \infty} \Delta_n(x,E,G)=0
$$

$\forall x\in D_\beta(f^*)$, $\forall n>N_x$ and $\forall G$ such that $\beta^\infty \in G$
$$
\Delta_n(x,E,G)= \frac{1}{n} \sum_{i=0}^{N_x} f^*(x,T_B^{-i}G) 1_E(T_A^i x) +  \frac{1}{n} \sum_{i=N_x+1}^{n-1} 1_E(T_A^i x) 
$$
Thus 
$$
\lim_{n \to \infty} \Delta_n(x,E,G)=\lim_{n \to \infty}\frac{1}{n} \sum_{i=N_x+1}^{n-1} 1_E(T_A^i x) = \lim_{n \to \infty}\frac{1}{n} \sum_{i=0}^{n-1} 1_E(T_A^i x)
$$
 Since $\mu$ is stationary and ergodic:
$$
\lim_{n \to \infty}\frac{1}{n} \sum_{i=0}^{n-1} 1_E(T_A^i x)  = \mu(E) \text{  }\mu\text{-a.e.}
$$
Thus, since  $\mu( D_\beta(f^*))=1$, $\forall G$ such that  $\beta^\infty \in G$,
$$
\lim_{n \to \infty} \Delta_n(x,E,G)=\mu(E)  \text{  }\mu\text{-a.e.}
$$

Thanks to the Bounded Convergence Theorem and (\ref{Eq1})
$$
\int_E \overline{f^*}_\mu(x,G) d\mu = \int \lim_{n \to \infty} \Delta_n(x,E,G) d\mu
$$
$\mu(D_\beta(f^*))=1$, then
$$
\int_E \overline{f^*}_\mu(x,G) d\mu = \int_{D_\beta(f^*))} \lim_{n \to \infty} \Delta_n(x,E,G) d\mu
$$
Thus, if $\beta^\infty \in G$, $\forall E \in \mathcal{B}_{A^\mathbb{N}}$
$$
\int_E \overline{f^*}_\mu(x,G) d\mu = \int_{D_\beta(f^*))}  \mu(E) d\mu = \mu(E)\mu(D_\beta(f^*))=\mu(E)
$$
Then $ \overline{f^*}_\mu(x,G) =1 $ $\mu$-a.e.

If  $\beta^\infty \notin G$, $\forall E \in \mathcal{B}_{A^\mathbb{N}}$
$$
\int_E \overline{f^*}_\mu(x,G) d\mu  = 0
$$
Then $ \overline{f^*}_\mu(x,G) =0 $ $\mu$-a.e.

	\item obvious from 1).

	\item Let $x \in D_\beta(f^*)$. Let $G \in \mathcal{B}_{A^\mathbb{N}}$. Since $T_B^{-1} G = B \times G$, $\beta^\infty \in G \Leftrightarrow \beta^\infty \in T_B^{-1} G$. Thus, from  1)
$$
\overline{f^*}_\mu(x,G)=1 \Leftrightarrow \overline{f^*}_\mu(x,T_B^{-1} G)=1
$$
And $\beta^\infty \notin G \Leftrightarrow \beta^\infty \notin T_B^{-1} G$. Thus
$$
\overline{f^*}_\mu(x,G)=0 \Leftrightarrow \overline{f^*}_\mu(x,T_B^{-1} G)=0
$$
Then, $\forall x \in D_\beta(f^*)$, $\overline{f^*}_\mu(x,.)$ is stationary and thus is equal to its stationary mean $S\overline{f^*}_\mu(x,.)$. Since $\mu(D_\beta(f^*))=1$, $\overline{f^*}_\mu(x,.)=S\overline{f^*}_\mu(x,.)$ $\mu$-a.e.

	\item 
By 2), $|\overline{f^*}_\mu(x, G \cap T_B^{-n}G')-\overline{f^*}_\mu(x, G).\overline{f^*}_\mu(x, T_B^{-n}G')|=0$ $\mu$-a.e. Then, obviously, $\overline{f^*}_\mu(x,.)$ is strongly mixing, $\mu$-a.e.
\end{enumerate}

\end{proof}

\begin{proposition}\label{PropositionQSMeanIsOWMRAMSSource}
Let $\mu$ be the distribution of an AMS and ergodic  source $[A,X]$.  Let $f^*:A^{\mathbb{N}} \rightarrow B^{\mathbb{N}}$ be a $(\mathcal{B}_{A^{\mathbb{N}}},\mathcal{B}_{B^{\mathbb{N}}})$-measurable function  such that $\overline{\mu}(D_\beta(f^*))=1$. Then
\begin{itemize}
	\item the channel $[A,f^*,B]$ is R-AMS and ergodic w.r.t. $\mu$
	\item $[A,\overline{f^*}_{\mu},B]=[A,\overline{f^*}_{\overline{\mu}},B]=[A,S\overline{f^*}_{\mu},B]=[A,S\overline{f^*}_{\overline{\mu}},B]$ mod. $\overline{\mu}$ and mod. $\mu$
	\item $\overline{f^*}_{\mu}(x, .)$ is an R-AMS and strongly  mixing probability on $(B^\mathbb{N},\mathcal{B}_{B^\mathbb{N}})$ $\overline{\mu}$-a.e. and $\mu$-a.e.
\end{itemize}
\end{proposition}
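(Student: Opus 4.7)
The plan is to mirror the proof of Proposition \ref{PropositionCSErgodicProduct}: work first at the level of the stationary mean $\overline{\mu}$, where Lemma \ref{LemmaQSMeanIsOWMStationarySource} applies directly, and then transfer properties back down to $\mu$ through the R-AMS dominance $\mu \ll \overline{\mu}$.

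First I would note that $\mu$, being AMS and ergodic, is R-AMS by Lemma \ref{LemmaErgodicAMSIsRecurrent}, hence $\mu \ll \overline{\mu}$ and in particular $\mu(D_\beta(f^*))=1$ follows from the hypothesis $\overline{\mu}(D_\beta(f^*))=1$. Items 5 and 6 of Lemma \ref{LemmaStabilityofDeterministicChannel} then immediately give that $f^*$ is R-AMS w.r.t. $\mu$, settling the R-AMS part of bullet one.

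For the ergodicity statement I would apply Lemma \ref{LemmaQSMeanIsOWMStationarySource} to the stationary ergodic source $\overline{\mu}$, obtaining that $\overline{f^*}_{\overline{\mu}}(x,\cdot)$ is stationary and strongly mixing $\overline{\mu}$-a.e.; in particular the channel $[A,\overline{f^*}_{\overline{\mu}},B]$ is output weakly mixing w.r.t. $\overline{\mu}$. Lemma \ref{LemmaOWMImpliesErgordicity} then makes $\overline{\mu}\,\overline{f^*}_{\overline{\mu}}$ a stationary ergodic probability. Chaining Lemma \ref{Lemma2Fontana} with the definition of the stationary mean of a channel gives $\mu f^* \ll \overline{\mu} f^* \ll^a \overline{\mu}\,\overline{f^*}_{\overline{\mu}}$, and the same device used in Proposition \ref{PropositionCSErgodicProduct} --- an AMS probability asymptotically dominated by a stationary ergodic one is itself ergodic --- closes bullet one for both $\mu f^*$ and $\overline{\mu} f^*$.

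For bullet two the four probabilities $\overline{\mu}\,\overline{f^*}_\mu$, $\overline{\mu}\,\overline{f^*}_{\overline{\mu}}$, $\overline{\mu}\,S\overline{f^*}_\mu$, $\overline{\mu}\,S\overline{f^*}_{\overline{\mu}}$ are all stationary and ergodic on the same product space and all dominate the non-null probability $\mu f^*$; by the dichotomy that two stationary ergodic probabilities either coincide or are mutually singular, they must be pairwise equal, and Lemma \ref{LemmaChannelDominanceEquivHookupDominance} transfers the equality to the underlying channels modulo $\overline{\mu}$, and hence modulo $\mu$ since $\mu \ll \overline{\mu}$. Bullet three then follows at once: $\overline{f^*}_\mu(x,\cdot)$ coincides $\overline{\mu}$-a.e.\ (and so also $\mu$-a.e.) with the stationary strongly mixing probability $\overline{f^*}_{\overline{\mu}}(x,\cdot)$ furnished by Lemma \ref{LemmaQSMeanIsOWMStationarySource}, and a stationary probability is automatically R-AMS. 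The main obstacle is the asymptotic-dominance-implies-ergodicity step in the second paragraph: one must keep careful track of the uniqueness of the stationary mean of an AMS probability and verify that $\ll$ and $\ll^a$ compose correctly along the full chain $\mu f^* \ll \overline{\mu} f^* \ll^a \overline{\mu}\,\overline{f^*}_{\overline{\mu}}$ before invoking the ergodic-dichotomy step.
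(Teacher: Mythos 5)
Your proposal is correct and follows essentially the same route as the paper's proof: R-AMS of $f^*$ via Lemma~\ref{LemmaStabilityofDeterministicChannel}, ergodicity of $\overline{\mu}\,\overline{f^*}_{\overline{\mu}}$ via Lemmas~\ref{LemmaQSMeanIsOWMStationarySource} and~\ref{LemmaOWMImpliesErgordicity}, transfer down to $\mu f^*$ by a dominance chain, and the equal-or-mutually-singular device from Proposition~\ref{PropositionCSErgodicProduct} for the channel identities and the mixing statement. The composition worry you flag at the end disappears because $f^*$ is R-AMS w.r.t.\ $\overline{\mu}$, so the second link can be taken as the strict dominance $\overline{\mu}f^* \ll \overline{\mu}\,\overline{f^*}_{\overline{\mu}}$, which is exactly what the paper uses.
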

\begin{proof}
$\mu$ is AMS and ergodic thus, by Lemma \ref{LemmaErgodicAMSIsRecurrent} (p. \pageref{LemmaErgodicAMSIsRecurrent}), also recurrent. Since $\overline{\mu}(D_\beta(f^*))=1$, thanks to Lemma \ref{LemmaStabilityofDeterministicChannel} (p. \pageref{LemmaStabilityofDeterministicChannel}), $f^*$ is R-AMS w.r.t. $\mu$ and w.r.t. $\overline{\mu}$.

$\overline{\mu}$ is stationary and ergodic, then, by Lemma \ref{LemmaQSMeanIsOWMStationarySource} (p. \pageref{LemmaQSMeanIsOWMStationarySource}), $\overline{f^*}_{\overline{\mu}}(x,.)$ is strongly (then weakly) mixing w.r.t. $\overline{\mu}$. By Lemma \ref{LemmaOWMImpliesErgordicity} (p. \pageref{LemmaOWMImpliesErgordicity}), this implies that $\overline{\mu}\overline{f^*}_{\overline{\mu}}$ is ergodic.

Meanwhile $\mu \ll \overline{\mu}$. Then, by Lemma \ref{LemmaChannelDominanceEquivHookupDominance} (p. \pageref{LemmaChannelDominanceEquivHookupDominance}), $\mu f^* \ll \overline{\mu} f^*$. Since $f^*$ is R-AMS w.r.t. $\overline{\mu}$, $\overline{\mu} f^* \ll \overline{\mu}\overline{f^*}_{\overline{\mu}}$. This implies that
$$
\mu f^* \ll \overline{\mu}\overline{f^*}_{\overline{\mu}}
$$
Then $\mu f^*$ is ergodic i.e. $f^*$ is ergodic w.r.t. $\mu$.

From Proposition \ref{PropositionCSErgodicProduct} (p. \pageref{PropositionCSErgodicProduct}) and its proof, $\overline{f^*}_\mu=S\overline{f^*}_\mu=\overline{f^*}_{\overline{\mu}}=S\overline{f^*}_{\overline{\mu}}$.

This implies that $\overline{f^*}_{\mu}(x, .)$ is strongly  mixing probability on $(B^\mathbb{N},\mathcal{B}_{B^\mathbb{N}})$ $\overline{\mu}$-a.e. and $\mu$-a.e.
\end{proof}

\subsection{An AMS and output weakly mixing random channel makes a noisy computation AMS and ergodic.}

Proposition  \ref{PropositionQSMeanIsOWMRAMSSource} (p. \pageref{PropositionQSMeanIsOWMRAMSSource}) straightforwardly leads to the following corollary which gives a sufficient condition of ergodicity for noisy computations  relying only on properties of the input source and of the random channel $[A,F,C]$.

\begin{corollary}\label{CorollaryErgodicAMSNoisyComputation}
Let $[A, f^*\times F, B\times C]$ be a noisy computation and $\mu$ be the distribution of an AMS and ergodic source. If
\begin{itemize}
	\item  $\overline{\mu}(D_\beta(f^*))=1$
	\item and $[A, F , C]$ is AMS w.r.t. $\overline{\mu}$
	\item and $[A, \overline{F}_{\overline{\mu}}, C]$ is output weakly mixing w.r.t. $\overline{\mu}$
\end{itemize}
then the noisy computation $[A, f^*\times F, B\times C]$ is AMS and ergodic w.r.t $\mu$.
\end{corollary}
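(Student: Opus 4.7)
The plan is to apply Corollary \ref{CorollaryErgodicNoisyComputation}, the ergodicity criterion for noisy computations, by verifying its hypotheses using Proposition \ref{PropositionQSMeanIsOWMRAMSSource} to handle the deterministic factor $[A, f^*, B]$ and combining it with the assumptions on $F$.

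First, since $\mu$ is AMS and ergodic with $\overline{\mu}(D_\beta(f^*)) = 1$, Proposition \ref{PropositionQSMeanIsOWMRAMSSource} yields that $[A, f^*, B]$ is R-AMS (hence AMS) and ergodic both w.r.t. $\mu$ and w.r.t. $\overline{\mu}$, that the stationary-mean kernels coincide, $\overline{f^*}_{\mu}(x,\cdot) = \overline{f^*}_{\overline{\mu}}(x,\cdot) = S\overline{f^*}_{\overline{\mu}}(x,\cdot)$ $\overline{\mu}$-a.e., and that these kernels are strongly mixing, hence output weakly mixing. This supplies one of the two output-weak-mixing hypotheses required downstream.

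Next, combining the fact that $f^*$ is AMS w.r.t. the stationary $\overline{\mu}$ (step 1) with the hypothesis that $[A, F, C]$ is AMS w.r.t. $\overline{\mu}$, Proposition \ref{PropositionCNSAMSChannelProduct} gives that the channel product $[A, f^* \times F, B\times C]$ is AMS w.r.t. $\overline{\mu}$. Proposition \ref{PropositionCSErgodicProduct} then propagates asymptotic mean stationarity from $\overline{\mu}$ to $\mu$, settling the AMS half of the conclusion.

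For ergodicity, I would invoke Corollary \ref{CorollaryErgodicNoisyComputation}: granted that the noisy computation is AMS w.r.t. $\overline{\mu}$ and that $S\overline{f^*}_{\overline{\mu}}$ and $S\overline{F}_{\overline{\mu}}$ are output weakly mixing w.r.t. $\overline{\mu}$, ergodicity w.r.t. $\mu$ follows. The first output-weak-mixing hypothesis is provided by step 1. For the second, the hypothesis gives output weak mixing of $\overline{F}_{\overline{\mu}}$ itself, which I need to transfer to its kernel-wise stationary mean $S\overline{F}_{\overline{\mu}}$. The hard part will be precisely this transfer: output weak mixing is a property of the probability kernel, and since $\overline{F}_{\overline{\mu}}(x,\cdot)$ is AMS but in general not stationary, the two output-weak-mixing conditions are a priori distinct. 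The cleanest route is to argue directly at the product level: Corollary \ref{CorollaryOWMChannelProduct} gives that $\overline{f^*}_{\overline{\mu}} \times \overline{F}_{\overline{\mu}}$ is output weakly mixing w.r.t. $\overline{\mu}$, Lemma \ref{LemmaOWMImpliesErgordicity} then yields ergodicity of the stationary hookup $\overline{\mu}(\overline{f^*}_{\overline{\mu}} \times \overline{F}_{\overline{\mu}})$, and finally the dominance chain $\mu(f^* \times F) \ll \overline{\mu}(f^* \times F) \ll \overline{\mu}(\overline{f^*}_{\overline{\mu}} \times \overline{F}_{\overline{\mu}})$ (using that $\mu$, being ergodic AMS, is recurrent by Lemma \ref{LemmaErgodicAMSIsRecurrent}, so $\mu \ll \overline{\mu}$, together with Lemma \ref{Lemma2Fontana}) transfers ergodicity to $\mu(f^* \times F)$.
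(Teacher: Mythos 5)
Your AMS half and your handling of the deterministic factor coincide with the paper's proof: it too gets $f^*$ R-AMS w.r.t.\ $\overline{\mu}$ (Lemma~\ref{LemmaStabilityofDeterministicChannel}), applies Proposition~\ref{PropositionCNSAMSChannelProduct} for asymptotic mean stationarity of the product, and takes output weak mixing of $\overline{f^*}_{\mu}$ from Proposition~\ref{PropositionQSMeanIsOWMRAMSSource}; it then simply cites Corollary~\ref{CorollaryErgodicNoisyComputation}, reading the hypothesis on $\overline{F}_{\overline{\mu}}$ as the hypothesis on $S\overline{F}_{\overline{\mu}}$ that the corollary formally requires. Your choice to re-run the product-level argument with $\overline{F}_{\overline{\mu}}$ itself (Corollary~\ref{CorollaryOWMChannelProduct} applied to the two channels stationary w.r.t.\ $\overline{\mu}$, then Lemma~\ref{LemmaOWMImpliesErgordicity}) is a legitimate variant which in fact addresses the $\overline{F}_{\overline{\mu}}$-versus-$S\overline{F}_{\overline{\mu}}$ mismatch that the paper's citation glosses over, and your explicit appeal to Proposition~\ref{PropositionCSErgodicProduct} for AMS w.r.t.\ $\mu$ is a useful addition.

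The one genuine flaw is the second link of your dominance chain: $\overline{\mu}(f^*\times F) \ll \overline{\mu}(\overline{f^*}_{\overline{\mu}}\times\overline{F}_{\overline{\mu}})$ is neither justified nor true in general. Passing to the $(A,C)$-marginals, it would force $\overline{\mu}F \ll \overline{\mu}\,\overline{F}_{\overline{\mu}}$, i.e.\ dominance of the hookup $\overline{\mu}F$ by its stationary mean, which is equivalent to recurrence of $\overline{\mu}F$ (Theorem 7.4 of Gray, as recalled in Section~\ref{SectionSources}); but $F$ is only assumed AMS w.r.t.\ $\overline{\mu}$. For instance the constant channel $F(x,\cdot)=\delta_{z_0}$ with $z_0=1\,0\,0\,0\cdots$ is AMS w.r.t.\ $\overline{\mu}$ and has stationary mean kernel $\delta_{0^\infty}$, which is trivially output weakly mixing, yet $\overline{\mu}F$ puts full mass on $\{z=z_0\}$, a null set for $\overline{\mu}\,\overline{F}_{\overline{\mu}}$. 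What you can prove, and what suffices, is asymptotic dominance: from $\overline{\mu}f^* \ll^a \overline{\mu}\,\overline{f^*}_{\overline{\mu}}$ and $\overline{\mu}F \ll^a \overline{\mu}\,\overline{F}_{\overline{\mu}}$ you get kernel-level tail dominance $f^*(x,\cdot)_\infty \ll \overline{f^*}_{\overline{\mu}}(x,\cdot)_\infty$ and $F(x,\cdot)_\infty \ll \overline{F}_{\overline{\mu}}(x,\cdot)_\infty$ $\overline{\mu}$-a.e.\ (Lemmas~\ref{LemmaDominanceOnTailSigmaField} and~\ref{LemmaChannelDominanceEquivHookupDominance}), then tail dominance of the product kernels (Lemmas~\ref{LemmaProductOfAMSSources} and~\ref{LemmaProductOfTailIsTailOfProduct}), hence $\overline{\mu}(f^*\times F) \ll^a \overline{\mu}(\overline{f^*}_{\overline{\mu}}\times\overline{F}_{\overline{\mu}})$ since the dominating hookup is stationary --- the same scheme as the proof of Proposition~\ref{PropositionCNSAMSChannelProduct}. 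The composition $\mu(f^*\times F)\ll\overline{\mu}(f^*\times F)\ll^a$ (ergodic stationary hookup) then yields ergodicity of $\mu(f^*\times F)$ exactly as in the proof of Proposition~\ref{PropositionCSErgodicProduct}, so your argument closes once the offending $\ll$ is weakened to $\ll^a$ and justified along these lines.
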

\begin{proof}
From Lemma \ref{LemmaStabilityofDeterministicChannel} (p. \pageref{LemmaStabilityofDeterministicChannel}), $f^*$ is R-AMS w.r.t. $\overline{\mu}$. Since $F$ is AMS w.r.t. $\overline{\mu}$, by Proposition \ref{PropositionCNSAMSChannelProduct} (p. \pageref{PropositionCNSAMSChannelProduct}), the noisy computation $[A, f^*\times F, B\times C]$ is AMS w.r.t. $\overline{\mu}$.

From Proposition \ref{PropositionQSMeanIsOWMRAMSSource} (p. \pageref{PropositionQSMeanIsOWMRAMSSource}), $[A,\overline{f^*}_{\mu},B]$ is output weakly mixing w.r.t. $\overline{\mu}$. Since $[A, \overline{F}_{\overline{\mu}}, C]$ is output weakly mixing w.r.t. $\overline{\mu}$, by Corollary \ref{CorollaryErgodicNoisyComputation} (p. \pageref{CorollaryErgodicNoisyComputation}), the noisy computation $[A, f^*\times F, B\times C]$ is ergodic w.r.t $\mu$

\end{proof}

Let $f:L\to B^+$ be a computable function. Let $[A, f^*, B]$ be the induced deterministic channel  (c.f. \ref{SubSectionInducedDeterministicChannel}) and $\mu$ be an AMS and ergodic source. Since $D_\beta(f^*)=A^\mathbb{N}$, $\mu(D_\beta(f^*))=\overline{\mu}(D_\beta(f^*))=1$. For any random channel $[A,F,C]$ AMS w.r.t. $\overline{\mu}$ such that $[A, \overline{F}_{\overline{\mu}}, C]$ is output weakly mixing w.r.t. $\overline{\mu}$, the noisy computation $[A, f^*\times F, B\times C]$ is AMS and ergodic w.r.t $\mu$.

In other words, for a function on finite length sequences (thus a computable function) $f$, it is sufficient to focus on stability properties of the noisy realization $F$ under concern in order to apply the coding results given above.

\subsection{Examples}

\subsubsection{Design error}

Let $f: L_f \to B^+$ be a Turing computable function. A design error (e.g., software bug) in the realization of $f$ can be viewed as a discrepancy between the algorithm specified by $f$ and the algorithm obtained through the (imperfect) design process. The obtained algorithm is also a Turing computable function  $g: L_g \to  B^+$. Indeed, $g$ is based on the same computational model as $f$. 

According to section \ref{SubSectionInducedDeterministicChannel}), $f$ and $g$ both determine deterministic channels $[A,f^*,B]$ and $[A,g^*,B]$. The computation of $f$ by $g$ is thus modeled by the noisy computation $[A, f^* \times g^* , B\times B]$. 

By Proposition \ref{PropositionQSMeanIsOWMRAMSSource} (p. \pageref{PropositionQSMeanIsOWMRAMSSource}), $[A,g^*,B]$ is AMS w.r.t  $\mu$ and  $[A,\overline{g^*_{\overline{\mu}}},B]$ is output weakly mixing w.r.t $\overline{\mu}$ for any AMS ergodic probability $\mu$.
According to  Corollary \ref{CorollaryErgodicAMSNoisyComputation} (p. \pageref{CorollaryErgodicAMSNoisyComputation}),  the noisy computation $[A, f^* \times g^* , B\times B]$ is AMS and ergodic w.r.t. any AMS and ergodic probability $\mu$.
 
\subsection{Noisy circuit}

Let $A=\{0,1\}^k$ and $B=\{0,1\}^{k'}$. Let $C$ be a circuit made of boolean gates and implementing a function $\phi : A \to B$. Let $f : A^+ \to B^+$ be the function such that $f(x_0 x_1 \cdots x_{n-1}) = \phi(x_0) \phi( x_1) \cdots \phi(x_{n-1})$. $f$ induces a deterministic channel $[A,f^*,B]$.

Let $\tilde{C}$ be a noisy realization of $C$ made of noisy boolean gates. Each noisy gate produces a correct output with probability $1-\xi$. Errors of different gates are assumed to be independent and the noisy circuit is assumed to be memoryless. 
Then successive uses of $\tilde{C}$ induces a memoryless and stationary random channel $[A,F,B]$. The computation of $f$ by successive uses of $\tilde{C}$ is thus modeled by the noisy computation $[A, f^*\times F, B\times B]$.

The channel $[A,F,B]$ is stationary and memoryless thus, for any AMS and ergodic source with distribution $\mu$, $\overline{F_{\overline{\mu}}}=F$. Moreover, being stationary and memoryless,  $[A,F,B]$ is output weakly mixing w.r.t. $\overline{\mu}$. Then, according to  Corollary \ref{CorollaryErgodicAMSNoisyComputation} (p. \pageref{CorollaryErgodicAMSNoisyComputation}),  the noisy computation $[A, f^* \times F , B\times B]$ is AMS and ergodic w.r.t. any AMS and ergodic probability $\mu$.

Assuming that $\tilde{C}$ is exactly $C$ made from noisy gates (they have the same structure), $\tilde{C}$ and $C$ have the same size (number of gates). Let apply the model of reliable computation of Section \ref{SectionJointSourceComputationCodingTheorem} to $f^*$ ($g=f^*$) and to the noisy computation $[A, f^*\times F, B\times B]$. Then, for large enough $k$ and $n$, it is possible to obtain $f(x_0 x_1 \cdots x_{k-1}) = \phi(x_0) \phi( x_1) \cdots \phi(x_{k-1})$ from a sequence of $n$ uses (or copies) of $\tilde{C}$ with an arbitrarily small error probability under the condition that:
$$
R=\frac{k}{n} \overline{H}(X') < C_{f^*}(F)
$$
where $\overline{H}(X')$ is the entropy of the input source.

Then instead of $k$ uses or copies of $C$ (i.e., a circuit of size $k.size(C)$), the reliable computation relies on $n$ uses or copies of $\tilde{C}$ (i.e., a circuit of size $n.size(\tilde{C})$). In addition to the encoder and decoder sizes, the "cost of reliability" is given by an increasing factor of circuit size $\Lambda$ which is 
$$
\Lambda = \frac{n.size(\tilde{C})}{k.size(C)}=\frac{n}{k}
$$
Assuming that the rate $R$ is close to capacity $C_{f^*}(F)$ ($C_{f^*}(F)$ assumed strictly positive) then 
$$
\frac{C_{f^*}(F)- \epsilon  }{\overline{H}(X')} \leq \frac{k}{n}< \frac{C_{f^*}(F)}{\overline{H}(X')}
$$
Then, for any $\epsilon >0$, it is possible to build a noisy circuit $F^n$ (i.e., $n$ copies of $\tilde{C}$) such that
$$
\frac{\overline{H}(X')}{C_{f^*}(F)} < \Lambda \leq \frac{\overline{H}(X')}{C_{f^*}(F)- \epsilon} 
$$
and giving an arbitrarily low error probability for large $k$ and $n$. 

This very simple example also complies with the general reliable computation model given in \cite{Romashchenko06}. Although the model of reliable computation  given in the present paper and the one of \cite{Romashchenko06} have different goals, they possess the common characteristic to clearly identify an encoding phase, a (noisy) computation phase and a decoding phase. Connections between these two approaches are certainly worth to be further investigated.

\pagebreak 

\section{Further work} \label{SectionFurtherWork}

In addition to try to extend other classical results from the channel case to the computation case (e.g., related to error probability), some specific questions regarding noisy computation could be addressed.

Relevant models of noisy Turing machines remain to be explored and analyzed regarding stability properties (ergodicity, asymptotic mean stationarity, ...). To the author's best knowledge, very few models of noisy Turing machines have been proposed and studied (\cite{Asarin05}, \cite{CapuniGacs12}). The examples given above are very simple and involve a noisy computing device $F$ which is stationary and memoryless.  If $F$ is a Turing machine whose behavior is altered by (e.g.) random perturbations of the transition function, stationarity and memorylessness will not hold anymore. 

Instantiation of the noisy and reliable computation models to peculiar formal computational models and their noisy counterpart deserves some work, one goal being to assess key parameters such as the cost of reliability. In particular, connections with the domain of reliable circuits built from noisy gates could be of significant interest. As sketch in the very simple example given above, it might be possible to obtain asymptotics of parameters such as the blow-up of circuits. 

The question of noisy encoding  has only been skimmed over but deserves in depth analysis through the point of view of cascaded noisy computations. This has probably some links with asymptotic properties  of restoring organs (\cite{Gacs05}).

\pagebreak 

\appendix

\section{Proof of  Lemma \ref{LemmaMaximalFeinsteinFamily} \label{SectionProofsOfLemmas}}

Proof of Lemma \ref{LemmaMaximalFeinsteinFamily} is based on the following lemma.

\begin{lemma}\label{LemmaMaximalFeinsteinFamily1}
Let $A$ and $C$ be standard alphabets and $B$ be a countable standard alphabet. Let $[A,f \times F,B\times C]$ be a noisy computation. Let $[A,X]$ be a source. Let $\epsilon>0$ and $\lambda>0$. Let $\tilde{B}$ be a measurable subset of $B^{\mathcal{I}}$. Assume there exist a set $\left\{ y_k \in \tilde{B} , k=1\cdots M \right\}$, a collection of measurable disjoint sets $\Gamma_k, k=1,...,M$ members of $\mathcal{B}_{C^{\mathcal{I}}}$ and a collection of measurable sets $A_k \subset f^{-1}(y_k), k=1,\cdots, M$ members of $\mathcal{B}_{A^{\mathcal{I}}}$ such that:
$$
\forall k=1, \cdots, M \text{  } P_{X|f(X)}(A_k|y_k)>1-\lambda \text{  and  } \forall x\in A_k \text{  }  P_{F(X)|X}\left( \Gamma_k^c | x  \right) \leq \epsilon 
$$
If the set $\left\{ y_k \in \tilde{B} , k=1\cdots M \right\}$ is maximal, meaning there do not  exist any other $y_{M+1}\in \tilde{B}$, $\Gamma_{M+1}$ and $A_{M+1}$ complying with the given properties, then $\forall y_0 \in \tilde{B}\setminus \{y_1,\cdots,y_M\} $ there exits $A_0 \subset f^{-1}(y_0)$ such that:
\begin{equation}
P_{X|f(X)}(f^{-1}(y_0) \setminus A_0|y_0)>\lambda \label{EqGreaterLambda}
\end{equation}
and
\begin{equation}
\forall x\in f^{-1}(y_0) \setminus A_0 \text{  }  P_{F(X)|X}\left( \bigcup_{k=1}^M \Gamma_k | x  \right) > \epsilon \label{EqGreaterEpsilon}
\end{equation}
\end{lemma}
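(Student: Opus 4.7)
The plan is a direct argument from the maximality hypothesis. For a fixed $y_0 \in \tilde{B} \setminus \{y_1,\ldots,y_M\}$, I will construct an explicit candidate set
\[
A_0 = \{ x \in f^{-1}(y_0) : P_{F(X)|X}(\bigcup_{k=1}^{M}\Gamma_k \mid x) \leq \epsilon \},
\]
which is measurable because $x \mapsto P_{F(X)|X}(G\mid x)$ is measurable for each fixed $G \in \mathcal{B}_{C^{\mathcal{I}}}$, and because $f^{-1}(\{y_0\}) \in \mathcal{B}_{A^{\mathcal{I}}}$ (the singleton $\{y_0\}$ being measurable since $B$ is countable, as in the lemma preceding the extension of Feinstein's lemma). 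With this choice, property (\ref{EqGreaterEpsilon}) holds by definition: every $x \in f^{-1}(y_0) \setminus A_0$ satisfies $P_{F(X)|X}(\bigcup_{k=1}^{M}\Gamma_k \mid x) > \epsilon$.

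To obtain (\ref{EqGreaterLambda}), I will consider the tentative extension of the family by
\[
(y_{M+1},\, A_{M+1},\, \Gamma_{M+1}) \;=\; \bigl(y_0,\, A_0,\, (\textstyle\bigcup_{k=1}^{M}\Gamma_k)^{c}\bigr).
\]
The set $\Gamma_{M+1}$ is by construction disjoint from every $\Gamma_k$, $k=1,\ldots,M$, and for every $x \in A_0$ the defining inequality of $A_0$ gives $P_{F(X)|X}(\Gamma_{M+1}^{c}\mid x) = P_{F(X)|X}(\bigcup_{k=1}^{M}\Gamma_k \mid x) \leq \epsilon$. Thus the only requirement of the Feinstein family that can possibly fail for this candidate is the inequality $P_{X|f(X)}(A_0\mid y_0) > 1-\lambda$. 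Since the family $\{(y_k,A_k,\Gamma_k)\}_{k=1}^{M}$ is assumed maximal in $\tilde{B}$, this requirement must indeed fail, i.e.\ $P_{X|f(X)}(A_0\mid y_0) \leq 1-\lambda$, which is equivalent to $P_{X|f(X)}(f^{-1}(y_0)\setminus A_0 \mid y_0) \geq \lambda$.

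The main delicate point is the strictness of the conclusion $>\lambda$ rather than $\geq\lambda$. The maximality hypothesis, combined with the strict inequality $P_{X|f(X)}(A_k\mid y_k) > 1-\lambda$ in the definition of the Feinstein family, yields only $\geq\lambda$ at the boundary. To promote this to strict inequality I plan to argue that either the above inequality is already strict, in which case we are done, or equality holds, in which case we may replace $A_0$ by a slight thinning $A_0' \subsetneq A_0$ on a set of positive $P_{X|f(X)}(\cdot\mid y_0)$-measure (available because $P_{X|f(X)}(A_0\mid y_0) = 1-\lambda < 1$ leaves room to remove mass); the enlarged complement $f^{-1}(y_0)\setminus A_0'$ then satisfies $P_{X|f(X)}(f^{-1}(y_0)\setminus A_0'\mid y_0) > \lambda$, while (\ref{EqGreaterEpsilon}) on $f^{-1}(y_0)\setminus A_0 \subset f^{-1}(y_0)\setminus A_0'$ is preserved on the original strict portion, and I will only need to keep the relevant inequality on a subset of positive conditional probability to drive the downstream product bound $>\lambda\epsilon$ used in Lemma \ref{LemmaMaximalFeinsteinFamily}. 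This boundary case is the only real obstacle; everything else is bookkeeping around the contrapositive to maximality.
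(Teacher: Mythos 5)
Your core argument is the same as the paper's: take the canonical set $A_0=\{x\in f^{-1}(y_0)\,:\,P_{F(X)|X}(\bigcup_{k=1}^M\Gamma_k\mid x)\le\epsilon\}$, note that (\ref{EqGreaterEpsilon}) holds by construction, and play the candidate triple $\bigl(y_0,A_0,(\bigcup_{k=1}^M\Gamma_k)^c\bigr)$ against maximality to bound $P_{X|f(X)}(A_0\mid y_0)$. This is exactly the contradiction argument in the paper, so the main line of your proposal is fine.

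Where you diverge is the strictness discussion, and there your patch does not work as stated. If $P_{X|f(X)}(A_0\mid y_0)=1-\lambda$ exactly, thinning $A_0$ to $A_0'\subsetneq A_0$ does push the complement's conditional measure above $\lambda$, but every removed point $x\in A_0\setminus A_0'$ satisfies $P_{F(X)|X}(\bigcup_k\Gamma_k\mid x)\le\epsilon$ and now lies in $f^{-1}(y_0)\setminus A_0'$, so (\ref{EqGreaterEpsilon}) fails for $A_0'$. In fact no choice of $A_0$ can rescue this boundary case: (\ref{EqGreaterEpsilon}) forces $A_0$ to contain the canonical set, hence $P_{X|f(X)}(f^{-1}(y_0)\setminus A_0\mid y_0)\le\lambda$, so your thinning argument proves a weakened statement rather than the lemma as written. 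That said, your instinct is sound: the paper's own proof silently passes from $P_{X|f(X)}(f^{-1}(y_0)\setminus A_0\mid y_0)\le\lambda$ to $P_{X|f(X)}(A_0\mid y_0)>1-\lambda$, i.e. it too only delivers $\ge\lambda$ in (\ref{EqGreaterLambda}) at the boundary. And, as you observe, the discrepancy is harmless downstream: in Lemma \ref{LemmaMaximalFeinsteinFamily} one integrates a function that is strictly greater than $\epsilon$ pointwise over a set of conditional measure at least $\lambda>0$, and such an integral is strictly greater than $\lambda\epsilon$, so the bound $P_{F(X)|f(X)}(\bigcup_{k=1}^M\Gamma_k\mid y_0)>\lambda\epsilon$ survives. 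The cleaner repair is therefore to state (\ref{EqGreaterLambda}) with $\ge$ (or to keep the paper's implicit reading), not to thin $A_0$.
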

\begin{proof}[Proof of Lemma \ref{LemmaMaximalFeinsteinFamily1}]
If $\tilde{B}\setminus \{y_1,\cdots,y_M\}=\emptyset$, the result is trivial.

Let $y_0  \in \tilde{B}\setminus \{y_1,\cdots,y_M\}$. Assume that there is no $A_0 \subset f^{-1}(y_0)$ for which the two statements (\ref{EqGreaterLambda}) and (\ref{EqGreaterEpsilon}) both hold. This implies that for {\em any} subset $A_0$ of $f^{-1}(y_0)$:
\begin{equation}
	P_{X|f(X)}(f^{-1}(y_0) \setminus A_0|y_0) \leq  \lambda \label{EqLowerLambda}
\end{equation}
or
\begin{equation}
\exists x\in f^{-1}(y_0) \setminus A_0 \text{  }  P_{F(X)|X}\left( \bigcup_{k=1}^M \Gamma_k | x  \right)  \leq \epsilon \label{EqLowerEpsilon}
\end{equation}
$A_0=\{ x\in f^{-1}(y_0) /  P_{F(X)|X}\left( \bigcup_{k=1}^M \Gamma_k | x  \right)  \leq \epsilon \}$ is a subset of $f^{-1}(y_0)$ for which (\ref{EqLowerEpsilon}) does not hold, then (\ref{EqLowerLambda}) hold: $P_{X|f(X)}(f^{-1}(y_0) \setminus A_0|y_0) \leq  \lambda$. Then 
$$
P_{X|f(X)}(A_0|y_0) > 1- \lambda \text{  and  } \forall x\in  A_0 \text{  }  P_{F(X)|X}\left( \bigcup_{k=1}^M \Gamma_k | x  \right)  \leq \epsilon
$$
Thus setting $\Gamma_0=C_\mathcal{I} \setminus \bigcup_{k=1}^M \Gamma_k$, $y_0$ can be added to the set $\{y_1,\cdots,y_M\}$, contradicting the maximality of this set.
\end{proof}

\begin{proof}[Proof of Lemma \ref{LemmaMaximalFeinsteinFamily}]
Since $f(X)\rightarrow X \rightarrow F(X)$ is a Markov Chain (from \cite{Gray11}, Chapter 2):
\begin{eqnarray}
P_{F(X)|fX)}\left( \bigcup_{k=1}^M \Gamma_k | y_0  \right) & = &  \int P_{F(X)|X}\left( \bigcup_{k=1}^M \Gamma_k | x  \right) dP_{X|f(X)}(x|y_0) \nonumber \\
								  & \geq & \int_{f^{-1}(y_0) \setminus A_0} P_{F(X)|X}\left( \bigcup_{k=1}^M \Gamma_k | x  \right) dP_{X|f(X)}(x|y_0) \nonumber
\end{eqnarray}
Then, by Lemma \ref{LemmaMaximalFeinsteinFamily1}
\begin{equation}\label{EqLemmaMaximalFeinsteinFamily}
P_{F(X)|fX)}\left( \bigcup_{k=1}^M \Gamma_k | y_0  \right) >  \lambda \epsilon 
\end{equation}
proving statement \ref{Statement1LemmaMaximalFeinsteinFamily}).
\begin{eqnarray}
P_{F(X)}\left( \bigcup_{k=1}^M \Gamma_k  \right)  & = &  \int_{\{y_1,\cdots,y_M\}} P_{F(X)|f(X)}\left( \bigcup_{k=1}^M \Gamma_k  | y \right) dP_{f(X)} \nonumber \\
							&  & + \int_{\{y_1,\cdots,y_M\}^c} P_{F(X)|f(X)}\left( \bigcup_{k=1}^M \Gamma_k  | y \right) dP_{f(X)} \nonumber
\end{eqnarray}
Since $f(X)\rightarrow X \rightarrow F(X)$ is a Markov chain:
\begin{multline}
P_{F(X)|f(X)}\left( \bigcup_{k=1}^M \Gamma_k  | y \right)=\int P_{F(X)|X}\left( \bigcup_{k=1}^M \Gamma_k  | x \right) dP_{X| f(X)} \nonumber \\
		\geq \int_{A_k} P_{F(X)|X}\left(  \bigcup_{k=1}^M \Gamma_k  | x \right) dP_{X| f(X)} \nonumber
\end{multline}
 By \ref{Statement1LemmaMaximalFeinsteinFamily}), $P_{F(X)|f(X)}\left( \bigcup_{k=1}^M \Gamma_k  | y \right) \geq \lambda\epsilon$ for any $y \notin \{y_1,\cdots,y_M\}$. Then 
\begin{eqnarray}
P_{F(X)}\left( \bigcup_{k=1}^M \Gamma_k  \right)	& > &  \int_{\{y_1,\cdots,y_M\}} \int_{A_k} P_{F(X)|X}\left(  \Gamma_k  | x \right) dP_{X| f(X)} dP_{f(X)} \nonumber \\
							& &  + \lambda \epsilon \int_{\tilde{B}\setminus \{y_1,\cdots,y_M\}}  dP_{f(X)} \nonumber \\
							& > &  \int_{\{y_1,\cdots,y_M\}} (1-\epsilon) P_{X| f(X)}(A_k|y_k)  dP_{f(X)} \nonumber \\
							& & + \lambda \epsilon \int_{\tilde{B}\setminus \{y_1,\cdots,y_M\}}  dP_{f(X)} \nonumber \\
							& > &  \int_{\{y_1,\cdots,y_M\}} (1-\epsilon) (1-\lambda)  dP_{f(X)} \nonumber \\
							& & + \lambda \epsilon \int_{\tilde{B}\setminus \{y_1,\cdots,y_M\}}  dP_{f(X)} \nonumber
\end{eqnarray}
$$
\Rightarrow P_{F(X)}\left( \bigcup_{k=1}^M \Gamma_k  \right) >min( (1-\lambda)(1-\epsilon) , \lambda\epsilon).P_{f(X)}(\tilde{B})
$$
proving statement \ref{Statement2LemmaMaximalFeinsteinFamily}).

Finally $y_1$, $A_1= f^{-1}(y_1)$ and $C_{\mathcal{I}}$ fulfill the properties. Thus $M\geq 1$.
\end{proof}

\begin{proof}[Proof of Lemma \ref{LemmaStationaryErgodicNoisyInputFunction}]

A cascade of stationary channels is a stationary channel (see \cite{Gray11}). Then it remains to prove that $F=\nu f$ is ergodic.
Let  $[A,X]$ be a stationary and ergodic source.  Let $O \in \mathcal{B}_{A^\mathcal{I} \times B^\mathcal{I} }$ be an invariant event: $(T_A\times T_B)^{-1} O = O$. 

Let $\tilde{O}=\left\{ (x,y) \in A^\mathcal{I} \times A^\mathcal{I} / (x, f(y)) \in O\right\}$.

$\tilde{O}_x= f^{-1}(O_x)$ since :
\begin{eqnarray}
\tilde{O}_x	& = &	\left\{ y\in A^\mathcal{I} / (x,y) \in \tilde{O} \right\} \nonumber \\
		& = &	\left\{ y\in A^\mathcal{I} / (x,f(y)) \in O \right\} \nonumber \\
		& = &	\left\{ y\in A^\mathcal{I} / f(y) \in O_x \right\} \nonumber
\end{eqnarray}

This implies $P_{XF(X)}(O) = P_{XY}(\tilde{O})$ since:
\begin{eqnarray}
P_{XF(X)}(O)	& = &	\int P_{F(X)|X}(O_x|x) dP_{X} \nonumber \\
		& = &	\int \int P_{f(Y)|Y}(O_x|y) dP_{Y|X} dP_{X} \nonumber \\
		& = &	\int \int 1_{O_x}(f(y)) dP_{Y|X} dP_{X} \nonumber \\
		& = & \int P_{Y|X}(f^{-1}(O_x)|x)  dP_{X} \nonumber \\
		& = & \int P_{Y|X}(\tilde{O}_x)|x)  dP_{X} \nonumber \\
		& = &  P_{XY}(\tilde{O}) \nonumber
\end{eqnarray}
$O$ is $(T_A\times T_B)$-invariant then $\tilde{O}$ is $(T_A\times T_A)$-invariant since
\begin{multline}
(T_A\times T_A)^{-1}\tilde{O} = 	\\
		\left\{ (x,y) \in A^\mathcal{I} \times A^\mathcal{I} / (T_A x, T_A y) \in \tilde{O} \right\} = \nonumber \\
						\left\{ (x,y) \in A^\mathcal{I} \times A^\mathcal{I} / (x, f(T_A y)) \in O \right\} \nonumber
\end{multline}
By stationarity of $f$:
\begin{multline}
(T_A\times T_A)^{-1}\tilde{O}  =	\\
		\left\{ (x,y) \in A^\mathcal{I} \times A^\mathcal{I} / (x, T_B f(y)) \in O \right\} =  \\
					 \left\{ (x,y) \in A^\mathcal{I} \times A^\mathcal{I} /   (x,f(y)) \in (T_A \times T_B)^{-1}(O) \right\} \nonumber 
\end{multline}
By invariance of $O$:
\begin{eqnarray}
(T_A\times T_C)^{-1}\tilde{O}	& = &	\left\{ (x,y) \in A^\mathcal{I} \times A^\mathcal{I} / (x, f(y)) \in O \right\} \nonumber \\
					& = &	\tilde{O} \nonumber
\end{eqnarray}
$P_{XY}$ is ergodic then $P_{XF(X)}(O) = P_{XY}(\tilde{O})$ is either 1 or 0. Thus $(X,F(X))$ is an ergodic process.
\end{proof}

\begin{proof}[Proof of Lemma \ref{LemmaStationaryErgodicFunctionChannelProduct}]

Let $[A,X]$ be a stationary source.  $f$ is a stationary function  then $f$ induces a stationary deterministic channel $[A,f,B]$.   $[A,f,B]$ is thus an ergodic channel (a deterministic stationary channel is ergodic, see \cite{Gray11}). $[A,F,C]$ is a stationary and ergodic channel. Then, all these assumptions imply that, for any stationary source $[A,X]$, $(X,f(X))$ and  $(X,F(X))$ are stationary and ergodic processes. To show that $[A, f\times F, B \times C]$ is stationary and ergodic, it remains to prove that $(f(X),F(X))$ is stationary and ergodic.

$f$ being a function, $F(X)\rightarrow X \rightarrow f(X)$ is a Markov chain, then $\forall G \in \mathcal{B}_{B^\mathcal{I}}$:
$$
P_{f(X)|XF(X)}(G|xz) = P_{f(X)|X}(G|x)  \text{  }P_{XF(X)}\text{-a.e.}
$$
then
$$
P_{f(X)|XF(X)}(T_B^{-1}G|xz)  =  P_{f(X)|X}(T_B^{-1}G|x)   \text{  }P_{XF(X)}\text{-a.e.}
$$
By stationarity of the deterministic channel $[A,f,B]$:
$$
P_{f(X)|XF(X)}(T_B^{-1}G|xz)  =  P_{f(X)|X}(G|T_A x)   \text{  }P_{XF(X)}\text{-a.e.}
$$
Using once more the fact that $F(X)\rightarrow X \rightarrow f(X)$ is a Markov chain:
$$
P_{f(X)|XF(X)}(T_B^{-1}G|xz)  =  P_{f(X)|XF(X)}(G|T_A xT_Bz)   \text{  }P_{XF(X)}\text{-a.e.}
$$
Then the channel $(X,F(X))\rightarrow f(X)$ is stationary. Since $(X,F(X))$ is a stationary process, this implies that $(X,f(X),F(X))$ is a stationary process and so is $(f(X),F(X))$.
 
Assume in addition that $[A,X]$ is ergodic. Let $O \in \mathcal{B}_{B^\mathcal{I} \times C^\mathcal{I} }$ be an invariant event: $(T_B\times T_C)^{-1} O = O$. Let $\tilde{O}=\left\{ (x,z) \in A^\mathcal{I} \times C^\mathcal{I} / (f(x),z) \in O\right\}$.

$\tilde{O}_z= f^{-1}(O_z)$ since :
\begin{eqnarray}
\tilde{O}_z	& = &	\left\{ x\in A^\mathcal{I} / (x,z) \in \tilde{O} \right\} \nonumber \\
		& = &	\left\{ x\in A^\mathcal{I} / (f(x),z) \in O \right\} \nonumber \\
		& = &	\left\{ x\in A^\mathcal{I} / f(x) \in O_z \right\} \nonumber
\end{eqnarray}

This implies $P_{f(X)F(X)}(O) = P_{XF(X)}(\tilde{O})$ since:
\begin{eqnarray}
P_{f(X)F(X)}(O)	& = &	\int P_{f(X)|F(X)}(O_z/z) dP_{F(X)} \nonumber \\
		& = &	\int P_{X|F(X)}(f^{-1}(O_z)/z) dP_{F(X)} \nonumber \\
		& = &	\int P_{X|F(X)}(\tilde{O}_z/z) dP_{F(X)} \nonumber
\end{eqnarray}
$O$ is $(T_B\times T_C)$-invariant then $\tilde{O}$ is $(T_A\times T_C)$-invariant since
\begin{multline}
(T_A\times T_C)^{-1}\tilde{O} = 	\\
		\left\{ (x,z) \in A^\mathcal{I} \times C^\mathcal{I} / (T_A x, T_C z) \in \tilde{O} \right\} = \nonumber \\
						\left\{ (x,z) \in A^\mathcal{I} \times C^\mathcal{I} / (f(T_A x), T_C z) \in O \right\} \nonumber
\end{multline}
By stationarity of $f$:
\begin{multline}
(T_A\times T_C)^{-1}\tilde{O}  =	\\
		\left\{ (x,z) \in A^\mathcal{I} \times C^\mathcal{I} / (T_B f(x), T_C z) \in O \right\} =  \\
					 \left\{ (x,z) \in A^\mathcal{I} \times C^\mathcal{I} /   (f(x), z) \in (T_B \times T_C)^{-1}(O) \right\} \nonumber 
\end{multline}
By invariance of $O$:
\begin{eqnarray}
(T_A\times T_C)^{-1}\tilde{O}	& = &	\left\{ (x,z) \in A^\mathcal{I} \times C^\mathcal{I} / (f(x), z) \in O \right\} \nonumber \\
					& = &	\tilde{O} \nonumber
\end{eqnarray}
$P_{X F(X)}$ is ergodic then $P_{f(X)F(X)}(O) = P_{XF(X)}(\tilde{O})$ is either 1 or 0. Thus $(f(X),F(X))$ is an ergodic process.
\end{proof}

\pagebreak

\bibliographystyle{elsarticle-harv}

\bibliography{AMSChannels,NoisyComputations}

\end{document}